\def\eqref#1{equation~\ref{#1}}
\def\1{\bm{1}}
\DeclareMathAlphabet{\mathsfit}{\encodingdefault}{\sfdefault}{m}{sl}
\SetMathAlphabet{\mathsfit}{bold}{\encodingdefault}{\sfdefault}{bx}{n}
\newcommand{\UMA}{\mathcal{A}^{\texttt{UM}}}
\newcommand{\stable}{\texttt{stable}}
\newcommand{\E}{\mathbb{E}}
\newcommand{\R}{\mathbb{R}}
\DeclareMathOperator*{\argmax}{arg\,max}
\newlength{\BoxH}
\newtheorem{theorem}{Theorem}
\newtheorem{definition}[theorem]{Definition}
\newtheorem{proposition}[theorem]{Proposition}
\newtheorem{lemma}[theorem]{Lemma}
\newtheorem{example}[theorem]{Example}
\newtheorem{corollary}[theorem]{Corollary}
\title{Optimal Algorithms for Bandit Learning in \\Matching Markets}
\author{%
  Tejas Pagare\thanks{Correspondence to Tejas Pagare: tejaspagare2002@gmail.com.} \\
  Carnegie Mellon University \\
  Pittsburgh \\
  \And
  Agniv Bandyopadhyay \\
  TIFR \\
  Mumbai \\
  \And
  Sandeep Juneja \\
  Ashoka University \\
  Sonipat \\
}
\begin{document}

\maketitle

\begin{abstract}
We study the problem of pure exploration in matching markets under uncertain preferences, where the goal is to identify a stable matching with confidence parameter $\delta$ and minimal sample complexity. Agents learn preferences via stochastic rewards, with expected values indicating preferences. This finds use in labor market platforms like Upwork, where firms and freelancers must be matched quickly despite noisy observations and no prior knowledge, in a stable manner that prevents dissatisfaction. We consider markets with unique stable matching and establish information-theoretic lower bounds on sample complexity for (1) one-sided learning, where one side of the market knows its true preferences, and (2) two-sided learning, where both sides are uncertain. We propose a computationally efficient algorithm and prove that it asymptotically ($\delta\to 0$) matches the lower bound to a constant for one-sided learning. Using the insights from the lower bound, we extend our algorithm to the two-sided learning setting and provide experimental results showing that it closely matches the lower bound on sample complexity. Finally, using a system of ODEs, we characterize the idealized fluid path that our algorithm chases.
\end{abstract}

\section{Introduction}
With the advent of data in online marketplaces, the design of efficient matching algorithms incorporating statistical uncertainty has become increasingly important. We study matching in a two-sided market problem where preferences are uncertain and need to be inferred through interaction. Uncertainty in preferences is inherent in online labor markets, where the central decision maker must match workers and employers despite incomplete information. For workers, this uncertainty arises from varying incentives, personal familiarity with tasks, and differences in job descriptions, and employers face uncertainty about the actual skill levels and suitability of workers. Abstracting this uncertainty using a multi-armed bandit framework has been a recent study of interest \cite{liu2020competing,soda,shah}. Therein, agents on either side of the market compete with agents on the same side to match with the other side, receiving stochastic reward signaling the preference. Each agent prefers an agent on the other side, which, in expectation, provides larger utility. The objective is to achieve a notion of equilibria in the matching market, a stable matching,  where no two agents from different sides prefer each other over their respective match. 

Recent studies have studied this problem in a minimization of regret \cite{liu2020competing}, where the objective is to minimize the general regret incurred by each agent by not matching to the stable match. However, algorithms devised in the regret minimization setting may lack practical applicability, as these algorithms constantly alternate between matching, which is often impractical and can lead to instability. We thus study the problem of pure exploration in this setting, where the goal is to identify a stable matching as quickly as possible. The pure exploration problem, more formally, aims to minimize the interaction rounds, so that the estimated stable match is incorrect with, at most $\delta$, a pre-specified confidence parameter. 

In the pure-exploration literature, algorithms which are asymptotically optimal, i.e., as $\delta$ tends to 0, the algorithm samples exactly as specified by the information-theoretic lower bound on the sample complexity, have been of wide interest, especially in the specialized best-arm-identification (\texttt{BAI}) setting. Here, asymptotically optimal algorithms also tend to perform well in practice for practically reasonable $\delta$ values. However, existing popular algorithms \cite{garivier2016optimal} involve repeated solving an optimization problem and tend to be computationally expensive. Thus, we ask the question \textit{ can we design a computationally efficient and asymptotically optimal algorithm for stable matching identification with pure exploration?}

\textit{Pure exploration problem}:  This problem has been studied in various forms in the bandit literature. \texttt{BAI} \cite{garivier2016optimal} and top best-arm identification  \cite{you2023information} involves identifying the arm(s) with largest expected reward. The partition-identification problem studied in \cite{juneja2019sample} generalizes this to identifying if the uncertain parameter lies in some region. A substantial amount of interest in the learning theory community is in designing algorithms that match the lower-bound sample complexity. As indicated earlier, \cite{garivier2016optimal} first introduced the \textit{Track-and-Stop} algorithm that iteratively obtains the optimal proportion by solving the lower bound optimization problem, and aims to track it. Although it achieves the desired sample complexity for small $\delta$, it is computationally inefficient. Algorithms referred to as $\beta$ top-two algorithms are computationally efficient \cite{russo2020simple,jourdan2022top}, which at any time randomizes between the empirically best arm and the challenger arm, with the pre-specified probability $\beta$. Although computationally faster, these algorithms are typically only order-optimal in that they do not match the constant in the lower bound. \cite{bandyopadhyay2024optimal} recently proposed a deterministic scheme, which decides on selecting the best or the challenger arm based on the sign of an anchor function. The algorithm is shown to match even the constant term in the lower bound as $\delta \rightarrow 0$.

\textit{Matching Bandit problem}: \cite{liu2020competing} aimed at developing a regret minimization scheme to achieve player-optimal matching. In regret minimization, several extensions have been studied, decentralized algorithms in which agents make decisions only based on local information \cite{liu2021decentralized,maheshwari2022decentralized,pmlr-v139-basu21a}, two-sided learning in which both agents are unaware of preferences \cite{pagaretwosided}, and time-varying preferences \cite{min2022learn,ghosh}. Although logarithmic regret performance similar to the lower bound has been obtained, these algorithms tend to have large sample complexity. Recently, \cite{hosseini2024putting} designed an algorithm that proceeds as in the Deferred Acceptance algorithm, which is heavily based on the knowledge of preferences of one side, and their complexity does not match the lower bound and involves additional logarithmic factors. 

\textit{One-Sided and Two-Sided learning}: As in the existing literature, we study two problem settings: one-sided learning in which one side knows the preferences, and two-sided learning where both sides are unaware of their preferences. Based on practical applicability, either a one-sided or a two-sided learning model is valid. For example, in school admissions with specialized programs, universities have historical data on student performance, but students may not know their preferences for different universities or majors before enrollment. On the other hand, in online dating, both individuals are initially unaware of their compatibility unless they start interacting. 

\textit{Pure exploration vs Regret minimization}: Online labor markets like UpWork facilitate the matching of freelancers and employers, typically for short-term contracts. However, employers can extend contracts if the employer gains confidence in a particular worker. In the pure exploration setting, the employer aims to efficiently identify the best worker as quickly as possible, allowing for more extended or more frequent contracts once confidence is established. In contrast, the regret minimization setting involves a trade-off between exploiting the best-known option and continuing exploration to gather more information, leading to frequent match changes.

Our key {\bf contributions} are as follows
\begin{itemize}[leftmargin=*, nosep,noitemsep, topsep=0pt]
\item \textbf{Theoretical Foundations:} We establish the first information-theoretic lower bounds on the sample complexity of stable matching market identification problem. The bound characterizes the difficulty of markets with a unique stable matching, and we discuss the additional challenges posed by multiple stable matchings in the appendix~\ref{app:multistable}.

\item \textbf{Novel Algorithms:} We design two practical Top-Two-style algorithms: (i) the \texttt{Anchored Top-Two (ATT)} algorithm, which leverages the Karush-Kuhn-Tucker (\texttt{KKT}) conditions of the optimal allocation, and (ii) the $\beta$-Top-Two algorithm, applicable to both one-sided and two-sided learning settings.

\item \textbf{Algorithmic Insights:} We provide a rigorous analysis of our methods, including a fluid-dynamic interpretation that offers intuition into their behavior. We also prove that the \texttt{ATT} algorithm is asymptotically optimal for one-sided learning.

\item \textbf{Empirical Validation:} We conduct extensive experiments across various market instances with unique stable matchings. Our results validate theoretical guarantees and highlight performance gains of \texttt{ATT} over $\beta$-Top-Two algorithms.
\end{itemize}

\textit{Relevance of fluid dynamics}: Analyzing the proposed algorithms requires dealing with the noise in the samples and also the granularity of the samples at the same time. Convergence to the optimal allocation has to be shown considering all these difficulties. To get a clearer intuition of how the allocation evolves for the algorithm, we consider the fluid analysis where:~1)~ignore the noise present in the samples by assuming that all the means are known, and~2)~allocating samples across different player-arm pair is treated as filling up sand in different buckets. In this paper, we do not prove convergence of the non-fluid dynamics of the algorithm to the fluid dynamics, which might be technically challenging and is also beyond the scope of our paper. Instead, we use the fluid dynamics to derive intuitions which later help us in proving important results concerning the dynamics of the algorithm in the non-fluid regime. For instance, in the one-sided setting, we use intuitions obtained from the fluid dynamics to prove convergence of minimum indexes of all the players. In the two-sided setting, the fluid dynamics provide a heuristic argument for asymptotic optimality of the proposed algorithm. The essence of the argument is that if we ignore the noise in the empirical means and granularity in the sample allocations, then the algorithm does converge to the optimal allocation in a well-defined manner following the solution of a system of ODEs pasted together.

\textit{Roadmap}: In Section 2, we provide preliminaries to the pure exploration problem and details on the market instances with unique stable matching. We cover one-sided learning in Section 3, where we first describe the lower bound, our Top-Two algorithm, fluid dynamics, and a proof overview of asymptotic optimality. Section 4 covers Two-Sided learning, describing the lower bound, our algorithm, and a prelude to the fluid dynamics. In Section 5, we provide empirical experiments, and in Section 6, we conclude with a summary and open problems.

\textit{Limitations}:~The algorithms proposed in this paper are designed specifically for instances having unique stable matching, which is generally not true. Furthermore, our analysis is asymptotic. As is standard in the literature, we assume the rewards between different pairs of players' arms to be independent. Another limitation is our assumption that reward distributions are stationary, which is standard when long-horizons are involved. 

\section{Preliminaries}\label{sec:preliminaries}

We consider a two-sided matching market in which on one side of the market we have players $\mathcal{P}=\{p_1,\ldots,p_M\}$ and on the other side arms $\mathcal{A}=\{a_1,\ldots,a_K\}$.
A preference profile, denoted by $\zeta$, describes the preference of each player over arms and each arm over players. A stable matching under $\zeta$ is a bijective function $m:\mathcal{P}\to\mathcal{A}$ such that there exists no pair $(p_i,a_k), a_k\neq a_{m(i)}$ such that $a_k\succ_{p_i}a_{m(i)}$ and $p_i\succ_{a_k}p_{m^{-1}(k)}$, i.e. $p_i$ and $a_k$ prefer each other over their respective match, called a blocking pair.

We consider the setting when $\zeta$ is unknown and must be learned, for which we propose two learning mechanisms (1) one-sided learning where the arms are aware of their preference profile and (2) two-sided learning, where both the players and the arms are unaware of their preference profile. Each player learns the preference of each arm through a specific reward distribution, which signals the arms' preference for this player. In particular, let $X^{(i)}_k,X^{(i)}_l$ denote the random reward variable corresponding to $(p_i,a_k)$ and $(p_i,a_{l})$, respectively, we say $a_k\succ_{p_{i}}a_l\iff \E[X^{(i)}_k]>\E[X^{(i)}_l]$. Similarly, for arms, let $Y^{(i)}_k,Y^{(j)}_k$ be reward a random variable corresponding to $(p_i,a_k)$ and $(p_j,a_{k})$, respectively, we say $p_i\succ_{a_k}p_j\iff \E[Y^{(i)}_k]>\E[Y^{(j)}_k]$. We assume for simplicity that $X^{(i)}_k,Y^{(i)}_k,X^{(i)}_{l},Y^{(j)}_{k}\ \forall i\neq j, k\neq l$ are independent leading to a private learning mechanism for each player and arm. More sophisticated learning rules with dependence of learning across players and arms are left for future discussion. 

We assume reward distributions belong to a Single-Parameter Exponential Family (SPEF), with density $f_{\theta}(x)=\exp(\theta x-b(\theta))$ for a parameter $\theta \in \Theta \subseteq \mathbb{R}$, where $b(\theta)$ is the log-partition function. A key property of this family is the unique mapping between the parameter $\theta$ and the mean reward $\mu = \dot{b}(\theta)$. We thus define the distribution with mean $\mu$ by $\texttt{SPEF}(\mu)$. For any two means $\mu_1, \mu_2$, the Kullback-Leibler (KL) divergence, which quantifies the difficulty of distinguishing between the two underlying distributions, is given by $D(\mu_1,\mu_2)=(\theta_{\mu_1}-\theta_{\mu_2})\mu_1- b(\theta_{\mu_1})+b(\theta_{\mu_2})$. While we focus on this family, our algorithms can be extended to any distribution with bounded support, as detailed in \cite[Appendix I]{bandyopadhyay2024optimal}.

We now describe the interaction protocol for the central decision maker, which at each instant makes the decision of which player $p_i\in \mathcal{P}$ and arm $a_k\in\mathcal{A}$ to match based on history up to that time denoted by the $\sigma$-algebra $\mathcal{F}_{t-1}:=\sigma\{m_{1},\ldots,m_{t-1},X^{(i)}_{k,1},Y^{(i)}_{k,1},\ldots,X^{(i)}_{k,t-1},Y^{(i)}_{k,t-1}\}$, which contains reward feedback and match taken. After matching $(p_i,a_k)$, the decision maker receives $X^{(i)}_k\sim\texttt{SPEF}(\mu^{(i)}_k)$ and if two-sided learning, it also receives $Y^{(i)}_k\sim\texttt{SPEF}(\eta^{(i)}_k)$. This constitutes a one-round and the goal is to design a $\delta$-correct algorithm, with minimum number of rounds. 
\begin{definition}
    An algorithm is said to be $\delta$-correct for the matching bandit problem, if for every preference profile $\zeta$, for any specified $\delta\in(0,1)$, it restricts the probability of announcing the unstable match to at most $\delta$, i.e. $\mathbb{P}_{\zeta}(m\not\in\mathcal{M}_{\zeta})\leq \delta$ where $m$ is the announced match, and $\mathcal{M}_{\zeta}$ is the set of stable matching under the instance $\zeta$. 
\end{definition}
We restrict the scope of our current work to preference profiles $\zeta$ with unique stable matching. Finding a necessary and sufficient condition for an instance that has unique stable matching is an open problem and has been provided for a specific case in \cite{KARPOV201963}, so we do not impose a property on $\zeta$. The well-known conditions are (a) Serial Dictatorship: if agents on some side have the same preference ordering for all agents on the other side, (b)  Sequential preference condition: there exists a ordering of players and arms s.t. $p_i$ prefers $a_i$ over $a_{i+1},\ldots,a_K$ and $a_j$ prefers $p_j$ over $p_{j+1},\ldots,p_N$ and (c) $\alpha$-condition: there exists a ordering of players and arms s.t. $p_i$ prefers its match over $a_{i+1},\ldots,a_K$ and possibly different ordering of player and arms such that $a_j$ prefers its match over $p_{j+1},\ldots,p_N$. \cite{KARPOV201963} presents an algorithm to verify the $\alpha-$condition.
We emphasize that the decision maker is only aware that $\zeta$ has a unique stable match and does not know if it has any of the above mentioned properties. For markets with multiple stable matching, one could consider a different objective that aims at finding a specific stable matching, e.g. the player/arm-optimal or fair matching. We comment more on the lower bound in this case in the Appendix J.

Note that we assume unique stable matching as compared to several works which assume stronger properties such as serial dictatorship \cite{sankararaman2021dominate}, $\alpha$-condition \cite{pmlr-v139-basu21a}, $\alpha$-reducibility \cite{maheshwari2022decentralized}. Such assumption is merely required to simplify the lower bound, as in multiple stable matching the lower bound involves outer maximization over all stable matchings, which is a combinatorial hard discrete problem and additionally depends on the objective of finding stable matching or a specific matching. This does not arise in the regret minimization setting where the regret is benchmarked only against the unique stable matching e.g. pessimal/optimal stable matching. There are tradeoffs, of course, assuming unique stable matching allowed us to exactly match the constant. Morever, such assumption is also motivated by the result of \cite{ashlagi2017unbalanced} where unbalanced markets, i.e. market with unequal number of players and arms yield unique stable matching with high probability under random preferences. Our algorithms works even for unbalanced setting.

\section{One-sided Learning}
Recall, in one-sided learning, each arm $a_k\in \mathcal{A}$ knows the preference over the players and denote $p_i\succ_{a_k}p_j$ if $a_k$ prefers $p_i$ over $p_j$. Let $\mathcal{S}_1$ denote the set of instances $\mu := (\mu^{(i)}_k)_{(p_i,a_k) \in \mathcal{P} \times \mathcal{A}}$ such that the preference profile induced by $(\mu, \succ)$ admits a unique stable matching, denoted by $m$. We first derive the lower bound of any $\delta$-correct algorithm  in one-sided learning.  Using the data-processing inequality \cite{kaufmann2020contributions}, we have
\begin{align}
\sum_{p_i\in\mathcal{P}}\sum_{a_k\in\mathcal{A}}\E_{\mu}[N^{(i)}_k] D\left(\mu^{(i)}_k,\lambda^{(i)}_k\right)\geq D(\delta,1-\delta)\geq \log(1/(2.4\delta))
\end{align}
where $N^{(i)}_k$ denotes the number of times $p_i$ and $a_k$ are matched and $\lambda\in \texttt{Alt}_{\mu}:=\{\lambda:m\not\in\mathcal{M}_{\lambda}\}$, the alternate set. For $t^{(i)}_k = \E_{\mu}[N^{(i)}_k]/\log(1/(2.4\delta)) $, the lower bound sample complexity problem can be equivalently modeled by the following problem
\begin{align*}
  \texttt{\textbf{LO1}:} \quad \min_{\mathbf{t}=(t^{(i)}_k)_{(p_i,a_k)\in\mathcal{P}\times\mathcal{A}}} \sum_{p_i\in\mathcal{P}}\sum_{a_k\in\mathcal{A}} t^{(i)}_k \quad\text{s.t.}\ t^{(i)}_k\geq 0\ \text{and}\ 
\inf_{\lambda\in\texttt{Alt}_{\mu}}\sum_{p_i\in\mathcal{P}}\sum_{a_k\in\mathcal{A}}t^{(i)}_kD\left(\mu^{(i)}_k,\lambda^{(i)}_k\right)\geq 1
\end{align*}

To characterize the alternate instance, we refer to the definition of a blocking pair under matching stable $m$ from the preliminaries where the alternate instance is characterized as an instance under which $m$ creates a blocking pair. Thus, the alternate set simplifies to $\neg m:=\cup_{p_i}\cup_{a_k\in\mathcal{D}^{(i)}_m\cup \UMA_m}\left\{\lambda: \lambda^{(i)}_k>\lambda^{(i)}_{m(i)}\right\}$ where for each player $p_i\in\mathcal{P}$, $\mathcal{D}^{(i)}_m:=\{a_k:p_i\succ_{a_k}m^{-1}(a_k)\}$ is the set of arms that prefer $p_i$ over their current match and $\UMA_m = \{a_k:m^{-1}(a_k)=\emptyset\}$ is the set of unmatched arms under stable matching $m$. We further call $\mathcal{E}_{m}^{(i)}:=\mathcal{D}^{(i)}_m\cup\UMA_m$ the set of probable blocking arms for the player $p_i$ w.r.t. the stable matching $m$. Note that $\lambda\in \neg m$ need not have unique stable matching. Using \texttt{LO1} and defining $w^{(i)}_k:=t^{(i)}_k/\sum_{i,k}t^{(i)}_k$, we derive the following lower bound for one-sided learning.
\begin{theorem}
    Any $\delta$-correct algorithm for unique stable matching market  instance $\mu\in \mathcal{S}_1$  under the one-sided learning model satisfies
    \begin{align*}
        \lim\inf_{\delta\to 0}\dfrac{\E_{\mu}[\tau_{\delta}]}{\log(1/\delta)}&\geq T^{\star}(\mu):=D(\mu)^{-1} \ \text{where}\\
        D(\mu) =\max_{w\in\Delta_{|\mathcal{P}|\times|\mathcal{A}}|}\min_{p_i\in\mathcal{P}}\min_{a_k\in\mathcal{E}_{m}^{(i)}}&\left\{w^{(i)}_{k}D\left(\mu^{(i)}_{k},x^{(i)}_{m(i),k}\right)+w^{(i)}_{m(i)}D\left(\mu^{(i)}_{m(i)},x^{(i)}_{m(i),k}\right)\right\}
    \end{align*}
   and $\Delta_{\mathcal{P}\times\mathcal{A}}:=\{(w^{(i)}_k)_{(p_i,a_k)\in\mathcal{P}\times\mathcal{A}}: \sum_{p_i\in\mathcal{P}}\sum_{a_k\in\mathcal{A}}w^{(i)}_k=1\}$, $w^{(i)}_k$ represent the proportion of samples given to player, arm pair $(p_i,a_k)$ and $x^{(i)}_{m(i),k}:=(w^{(i)}_{m(i)}\mu^{(i)}_{m(i)}+w^{(i)}_{k}\mu^{(i)}_{k})/(w^{(i)}_{m(i)}+w^{(i)}_{k})$. 
   Moreover, the maximizing optimal proportion denoted by $w^{\star}=(w^{\star(i)}_k)_{(p_i,a_k)\in\mathcal{P}\times\mathcal{A}}$ is unique. 
   \label{thm:optimalonesided}
\end{theorem}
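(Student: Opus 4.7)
The plan is to follow the classical change-of-measure argument, specialized to the combinatorial structure of stable matchings, and to establish uniqueness via a concavity analysis of the max-min program.

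Building on the transportation inequality already recorded in the excerpt, I first pin down the alternate set $\texttt{Alt}_\mu$. Under one-sided learning the arms' preferences are known and fixed across instances, so $m$ fails to be stable under an alternate $\lambda$ iff some pair $(p_i, a_k)$ is blocking, which requires $\lambda^{(i)}_k > \lambda^{(i)}_{m(i)}$ (player $p_i$ prefers $a_k$ over its match under $\lambda$) and $p_i \succ_{a_k} p_{m^{-1}(k)}$, or $a_k$ unmatched. These two arm-side conditions are exactly captured by $a_k \in \mathcal{E}_m^{(i)} = \mathcal{D}_m^{(i)} \cup \UMA_m$, giving $\texttt{Alt}_\mu = \neg m = \bigcup_{p_i}\bigcup_{a_k \in \mathcal{E}_m^{(i)}} \{\lambda: \lambda^{(i)}_k > \lambda^{(i)}_{m(i)}\}$.

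Next I evaluate the inner infimum in \texttt{LO1} block-by-block. The union structure means the infimum over $\neg m$ equals the minimum over blocks of the infimum restricted to each block. Within $\{\lambda: \lambda^{(i)}_k \geq \lambda^{(i)}_{m(i)}\}$, only the two coordinates $\lambda^{(i)}_{m(i)}, \lambda^{(i)}_k$ are constrained, so all other coordinates are optimally set to $\mu^{(j)}_l$ (zero KL contribution). The remaining two-variable problem minimizes $t^{(i)}_k D(\mu^{(i)}_k, \lambda^{(i)}_k) + t^{(i)}_{m(i)} D(\mu^{(i)}_{m(i)}, \lambda^{(i)}_{m(i)})$ subject to $\lambda^{(i)}_k \geq \lambda^{(i)}_{m(i)}$; since stability of $m$ under $\mu$ forces $\mu^{(i)}_k < \mu^{(i)}_{m(i)}$ for $a_k \in \mathcal{E}_m^{(i)}$, the unconstrained minimum violates the constraint, so by strict convexity of $D(\mu, \cdot)$ in the SPEF the constraint binds and $\lambda^{(i)}_{m(i)} = \lambda^{(i)}_k = x^{(i)}_{m(i),k}$, the $t$-weighted mean. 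After normalizing $t^{(i)}_k = T\, w^{(i)}_k$ with $\sum w = 1$, \texttt{LO1} reduces to $T \geq \log(1/(2.4\delta))/D(\mu)$; dividing by $\log(1/\delta)$ and letting $\delta \to 0$ yields $T^\star(\mu) = D(\mu)^{-1}$.

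For uniqueness of $w^\star$, I work with $G_{i,k}(w) := w^{(i)}_k D(\mu^{(i)}_k, x^{(i)}_{m(i),k}) + w^{(i)}_{m(i)} D(\mu^{(i)}_{m(i)}, x^{(i)}_{m(i),k})$, which is concave in $w$ as an infimum over $x$ of a linear functional. Suppose $w^\star, w^{\star\star}$ are two maximizers; concavity implies every point on the connecting segment is also optimal, and for each constraint $(i,k)$ active on the whole segment $G_{i,k}$ must be linear there. Since $G_{i,k}$ depends only on $(w^{(i)}_{m(i)}, w^{(i)}_k)$ and is strictly concave except along rays through the origin in this two-dimensional subspace, linearity forces $(w^{\star(i)}_{m(i)}, w^{\star(i)}_k)$ to be a positive scalar multiple of $(w^{\star\star(i)}_{m(i)}, w^{\star\star(i)}_k)$. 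Coupling through the shared coordinate $w^{(i)}_{m(i)}$ across all $a_k \in \mathcal{E}_m^{(i)}$ collapses this to a single scaling per player, and the simplex normalization $\sum_{i,k} w^{(i)}_k = 1$ eliminates the remaining freedom, giving $w^\star = w^{\star\star}$.

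The main obstacle is uniqueness: because $G_{i,k}$ is only concave (and, being homogeneous of degree one, never strictly concave in the radial direction), strict concavity at the optimum must be extracted from the overlap among active constraints. The delicate points are (i) verifying that the active-constraint graph is connected enough to propagate a single scaling across all nonzero coordinates of $w^\star$, and (ii) handling coordinates $w^{(i)}_k$ with $a_k \notin \mathcal{E}_m^{(i)} \cup \{m(i)\}$, which do not enter the objective and must vanish at the optimum. A cleaner alternative is to analyze the equivalent primal convex program $\min \sum_{i,k} t^{(i)}_k$ subject to $G_{i,k}(t) \geq 1$ via its KKT conditions, where strict concavity of each $G_{i,k}$ along non-radial directions together with the rank of the active-constraint Jacobian imply primal uniqueness.
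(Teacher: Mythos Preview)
Your derivation of the lower bound is correct and matches the paper's argument: the change-of-measure inequality, the characterization of $\neg m$ via blocking pairs in $\mathcal{E}_m^{(i)}$, and the two-coordinate infimum landing at the weighted mean $x^{(i)}_{m(i),k}$ are exactly what the paper does (the paper states the outcome of the block infimum without spelling out the computation, so your version is if anything more explicit).

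On uniqueness, the paper takes the KKT route you mention at the end: it writes the Lagrangian of \texttt{LO1}, reads off that all index constraints are tight and that the anchor functions vanish, and then invokes \cite{bandyopadhyay2024optimal} for uniqueness of the solution to that system. Your primary concavity argument is a legitimate alternative, but the last step as written has a gap. After you reduce to a single scaling $c_i$ per player, the simplex constraint $\sum_{i,k} w^{(i)}_k = 1$ is one linear equation in the $|\mathcal{P}|$ unknowns $c_1,\dots,c_{|\mathcal{P}|}$ and cannot by itself force $c_i=1$ for every $i$. What does the job is degree-one homogeneity of $G_{i,k}$ together with the fact (which you should state and prove, e.g.\ by the mass-shifting argument you allude to) that \emph{every} constraint is active at \emph{every} optimum: then $G_{i,k}(w^\star)=G_{i,k}(w^{\star\star})=D(\mu)$ and $G_{i,k}(w^\star)=c_i\,G_{i,k}(w^{\star\star})$ give $c_i=1$ directly, player by player, with the simplex constraint playing no role. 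With that correction your concavity proof is complete and arguably more self-contained than the paper's, which outsources the final step to a cited reference.
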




Based on \texttt{LO1}, we have a set of necessary and sufficient first-order conditions obtained using Karush–Kuhn–Tucker (\texttt{KKT}) conditions \cite{Boyd_Vandenberghe_2004}, which uniquely characterizes the optimal allocation. 
\begin{lemma}
\label{lemma:firstone_sided}
    The solution $\bm{t}$ to the \texttt{\textbf{LO1}} problem satisfies the following first order conditions $\forall p_i\in\mathcal{P}$
    \begin{align*}
    C^{(i)}_{m(i),k}(\mu,\bm{t})&:=t^{(i)}_{m(i)}D\left(\mu^{(i)}_{m(i)},x^{(i)}_{m(i),k}\right) +t^{(i)}_{k}D\left(\mu^{(i)}_{k},x^{(i)}_{m(i),k}\right)= \log(1/2.4\delta)\ \forall a_k\in\mathcal{E}^{(i)}_m \ \text{and}\\
    g^{(i)}_{m(i)}(\mu,\mathcal{E}_{m}^{(i)},\bm{t})&:=\sum_{a_k\in\mathcal{E}_{m}^{(i)}}\dfrac{D\left(\mu^{(i)}_{m(i)},x^{(i)}_{m(i),k}\right)}{D\left(\mu^{(i)}_{k},x^{(i)}_{m(i),k}\right)} -1=0
\end{align*}
where $C^{(i)}_{m(i),k}$ is the index corresponding to $(p_i,a_k)$ defined when the player $p_i$ and $a_k$ can form a blocking pair if $p_i$ mistakes the preference order of arms $a_k$ and $a_{m(i)}$, $g^{(i)}_{m(i)}$ as the anchor function corresponding to the pair $(p_i,m(i))$ and $x^{(i)}_{m(i),k}:=(t^{(i)}_{m(i)}\mu^{(i)}_{m(i)}+t^{(i)}_{k}\mu^{(i)}_{k})/(t^{(i)}_{m(i)}+t^{(i)}_{k})$.
\end{lemma}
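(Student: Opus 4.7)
My strategy is to reformulate \texttt{LO1} as a standard convex program by analytically performing the inner infimum over $\neg m$, and then to apply Karush-Kuhn-Tucker, computing the partial derivatives of the resulting constraint functions via the envelope theorem.

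\emph{Step 1 (Reformulating the constraint).} First I would exploit the structure of the alternate set $\neg m=\bigcup_{p_i,\,a_k\in\mathcal{E}^{(i)}_m}\{\lambda:\lambda^{(i)}_k>\lambda^{(i)}_{m(i)}\}$ as a finite union of open half-spaces, each depending on only two coordinates of $\lambda$. Since $\sum t^{(i')}_{k'} D(\mu^{(i')}_{k'},\lambda^{(i')}_{k'})$ is separable in $\lambda$, the infimum over $\neg m$ equals the minimum, over pairs $(p_i,a_k)$ with $a_k\in\mathcal{E}^{(i)}_m$, of the two-variable infimum on the closed half-space $\{\lambda^{(i)}_k\ge\lambda^{(i)}_{m(i)}\}$. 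On each such half-space I would set all other coordinates to $\mu^{(i')}_{k'}$ (killing their contribution) and then minimize a convex function of a single common value $y=\lambda^{(i)}_k=\lambda^{(i)}_{m(i)}$, obtaining $y=x^{(i)}_{m(i),k}$ with value $C^{(i)}_{m(i),k}(\mu,\bm t)$. This turns \texttt{LO1} into
\[
\min_{\bm t\ge 0}\ \sum_{i,k} t^{(i)}_k \quad \text{subject to} \quad C^{(i)}_{m(i),k}(\mu,\bm t)\ge 1,\ \ \forall\, p_i,\ a_k\in\mathcal{E}^{(i)}_m.
\]
Each $C^{(i)}_{m(i),k}$ is an infimum of affine functions of $\bm t$, hence concave, so the feasible set is convex; the objective is linear; Slater holds by scaling any strictly positive $\bm t$, since strict positivity forces $C^{(i)}_{m(i),k}>0$.

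\emph{Step 2 (KKT via the envelope theorem).} I would next attach multipliers $\gamma^{(i)}_k\ge 0$ and use the envelope theorem at the interior minimizer $x^{(i)}_{m(i),k}$ to compute
\[
\frac{\partial C^{(i)}_{m(i),k}}{\partial t^{(i)}_{m(i)}}=D(\mu^{(i)}_{m(i)},x^{(i)}_{m(i),k}), \qquad
\frac{\partial C^{(i)}_{m(i),k}}{\partial t^{(i)}_{k}}=D(\mu^{(i)}_{k},x^{(i)}_{m(i),k}).
\]
Stationarity in $t^{(i)}_k$ for $a_k\in\mathcal{E}^{(i)}_m$ then yields $\gamma^{(i)}_k=1/D(\mu^{(i)}_k,x^{(i)}_{m(i),k})>0$, with positivity forced by $\mu^{(i)}_{m(i)}>\mu^{(i)}_k$ (a consequence of $a_{m(i)}\succ_{p_i} a_k$ for every $a_k\in\mathcal{E}^{(i)}_m$, which is in turn needed for $m$ to be stable). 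Since $\gamma^{(i)}_k>0$, complementary slackness forces each constraint to be tight, $C^{(i)}_{m(i),k}(\mu,\bm t)=1$; rescaling $\bm t$ back to $\E_\mu[N^{(i)}_k]$ (using that $C$ is positively homogeneous of degree one in $\bm t$) replaces the right-hand side by $\log(1/(2.4\delta))$, which is the first displayed equation of the lemma. Coordinates $t^{(i)}_k$ for $a_k\notin\mathcal{E}^{(i)}_m\cup\{a_{m(i)}\}$ appear in no constraint and hence vanish at the optimum.

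\emph{Step 3 (Anchor condition).} Finally, the shared coordinate $t^{(i)}_{m(i)}$ appears in every constraint indexed by $a_k\in\mathcal{E}^{(i)}_m$, so its stationarity reads
\[
1=\sum_{a_k\in\mathcal{E}^{(i)}_m}\gamma^{(i)}_k\, D(\mu^{(i)}_{m(i)},x^{(i)}_{m(i),k})=\sum_{a_k\in\mathcal{E}^{(i)}_m}\frac{D(\mu^{(i)}_{m(i)},x^{(i)}_{m(i),k})}{D(\mu^{(i)}_{k},x^{(i)}_{m(i),k})},
\]
which is exactly $g^{(i)}_{m(i)}(\mu,\mathcal{E}^{(i)}_m,\bm t)=0$. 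The most delicate step will be Step 1: I need to verify carefully that the minimization on each closed half-space is attained on the boundary line $\lambda^{(i)}_k=\lambda^{(i)}_{m(i)}$, that this boundary minimizer $x^{(i)}_{m(i),k}$ lies strictly between $\mu^{(i)}_k$ and $\mu^{(i)}_{m(i)}$ (so the envelope-theorem computation in Step 2 is valid with a nonzero Jacobian), and that replacing the open half-spaces in $\neg m$ by their closures does not change the infimum. The remaining pieces (convexity, KKT necessity and sufficiency, positive homogeneity rescaling) are standard once this reduction is in place.
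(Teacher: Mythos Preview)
Your proposal is correct and follows essentially the same route as the paper: both reduce the inner infimum over $\neg m$ to the finite family of concave index functions $C^{(i)}_{m(i),k}$ and then apply KKT stationarity and complementary slackness to obtain tightness of all index constraints and the anchor identity for $t^{(i)}_{m(i)}$. The only cosmetic differences are that the paper writes the Lagrangian with separate multipliers for the nonnegativity constraints (arguing $t^{(i)}_k>0$ directly from feasibility, hence those multipliers vanish), whereas you phrase the derivative computation via the envelope theorem and deduce positivity of the index multipliers from $\mu^{(i)}_{m(i)}>\mu^{(i)}_k$; both arrive at $\gamma^{(i)}_k D(\mu^{(i)}_k,x^{(i)}_{m(i),k})=1$ and the anchor equation in the same way.
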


Motivated by the first-order conditions, we devise an iterative algorithm that performs finite exploration and exploits the pair $(p_i,a_k)$ so as to \textit{chase} the these conditions using a Top-Two rule. 
\paragraph{Anchored-Top-Two Algorithm} Denote the empirical index and anchor function as $\Tilde{C}^{(i)}_{\hat{m}(i),k}$, and $\Tilde{g}^{(i)}_{\hat{m}(i)}$ evaluated at the empirical estimates of $\mu$ denoted by $\hat{\mu}$ and $\hat{m}$ is a stable match with $(\hat{\mu},\succ)$. Denote the subroutines $\texttt{DA}_{\texttt{Arm}},\texttt{DA}_{\texttt{Player}}$ which outputs arm-optimal and player-optimal stable matching using Gale-Shapley algorithm described in the appendix, and can be performed off-line. For instances with $|\mathcal{P}|<|\mathcal{A}|$, given an exploration parameter $\gamma\in(0,1)$, and a stopping rule, the algorithm at iteration $t$ is hierarchically divided as player and arm choosing rule, which proceeds till the stopping rule is satisfied. Denote $N^{(i)}:=\sum_{k\in\mathcal{A}}N^{(i)}_k$
\begin{enumerate}[leftmargin=*, nosep,noitemsep, topsep=0pt]
    \item $\hat{m}\leftarrow \texttt{DA}_{\texttt{Arm}}(\hat{\mu},\succ)$ and
    construct the sets $\mathcal{E}_{\hat{m}}^{(i)}:=\mathcal{D}_{\hat{m}}^{(i)}\cup \UMA_{\hat{m}}\ \forall p_i\in\mathcal{P}$
    \item Choose $i_t$ from the \textcolor{blue}{player choosing rule}
\begin{itemize}[leftmargin=*, nosep,noitemsep, topsep=0pt,label=\textcolor{blue}{\rule{0.6ex}{0.6ex}}]
\item If $\min_{i\in\mathcal{P}} N^{(i)}<N^{\gamma}: i_t\leftarrow \arg\min_{i\in\mathcal{P}} N^{(i)}$, Else: $i_t\leftarrow \arg\min_{i\in\mathcal{P}} \min_{k\in\mathcal{E}_{m}^{(i)}} \tilde{C}^{(i)}_{m(i),k}$
\end{itemize}
\item Choose $k_t$ from the \textcolor{red}{arm choosing rule} for player $p_{i_t}$: 
\begin{itemize}[leftmargin=*, nosep,noitemsep, topsep=0pt,label=\textcolor{red}{\rule{0.6ex}{0.6ex}}]
\item If $\min_{k\in\mathcal{A}} N_k^{(i_t)}<\left(N^{(i_t)}\right)^{\gamma}: k_t\leftarrow \arg\min_{k\in\mathcal{A}} N^{(i_t)}_k$
\item Else: if $g^{(i_t)}_{\hat{m}_{i_t}}>0$ $k_t\leftarrow  \hat{m}(i_t) $ else $k_t\leftarrow\arg\min_{k\in\mathcal{E}_{\hat{m}}^{(i_t)}} \tilde{C}^{(i_t)}_{\hat{m}(i_t),k}$
\end{itemize}
    \item Match $p_{i_t}$ with $a_{k_t}$ and observe $X^{(i)}_k\sim\texttt{SPEF}(\mu^{(i)}_k)$, update $N^{(i)}_k$ and $\hat{\mu}^{(i)}_k$
\end{enumerate}

\textbf{Stopping Rule}  Our algorithm has a two-fold stopping rule, (1) if the estimated instance has unique matching (2) the minimum index is at least the threshold $\beta(t,\delta) = \log((|\mathcal{M}|-1)/\delta)+3|\mathcal{P}||\mathcal{A}|\log(1+\log(t))$ where $\mathcal{M}$ is the set of all matching possible with $|\mathcal{P}|$ players and $|\mathcal{A}|$ arms. We show in the appendix that any matching rule with this stopping rule satisfies the $\delta$-correctness property. The stopping rule (2) is derived through the traditionally used Generalized Likelihood Ratio test to rule any alternate hypothesis corresponding to different stable matching, see Chap. 3 \cite{kaufmann2020contributions} for more details.

\textbf{Remarks} (1)  Trivial case: When $\succ$ is such that all arms prefer distinct player, the matching is already known as we assumed that the instance has unique stable matching and running Arm-Proposing Deferred Acceptance with $\succ$ irrespective of any preference profile of players, will always lead to a same matching. (2) The estimated preference profile $\hat{\mu}$ can have multiple stable matchings of the estimated instance $(\hat{\mu},\succ)$. For our algorithm, we use $\texttt{DA}_{\texttt{Arm}}$ instead of $\texttt{DA}_{\texttt{Player}}$ as we can observe that the set $\mathcal{D}_{\hat{m}}^{(i)}$ is decreasing as we move down the distributive lattice formed by stable matchings \cite{Echenique_Immorlica_Vazirani_2023} and saves computation in the later stages of the algorithm. From Rural-Hospital Theorem \cite{Echenique_Immorlica_Vazirani_2023}, the set of unmatched arms $\UMA_{\hat{m}}$ remain unchanged for any stable matching  under the estimated instance.

\textbf{Fluid Dynamics}: The fluid dynamics closely approximates the algorithms' behavior with high probability.  To derive the dynamics, we consider the behavior of the algorithm in an idealized situation where the exact means are known and the samples are allocated as infinitesimal continuous objects instead of a discrete multiple of integers. In addition, we do not consider exploration since the means are assumed to be known. Further, \texttt{ATT1} stops exploration after a finite time. We denote by $t$ the continuous quantity interpreted as the budget that must be allocated between different pairs, that is, $\sum_{i}\sum_{k}t^{(i)}_k(t)=t$ where $t^{(i)}_k$ is the allocation to pair $(p_i,a_k)$. We analyze the dynamics on the global time scale, defined using $t$, and the local time scale for each player $p_i$ defined using $t^{(i)}=\sum_{k}t^{(i)}_k$. Throughout this section, the indexes and anchor functions are evaluated with $\mu$ and only change with the allocations.
Let $C_{\min}(t):=\min_{p_i,a_k}C^{(i)}_{m(i),k}(t)$. Define the set of players with minimum index as $\mathcal{P}_{\min}(t)$
and the set of arms for each minimum indexed player $p_i\in \mathcal{P}_{\min}(t)$ as  $\mathcal{A}_{\min}^{(i)}(t)$
    . We drop $t$, when it is clear. Based on the sign of the anchor function, we partition the set $\mathcal{P}_{\min}$ into $\mathcal{P}_0\cup\mathcal{P}_{-}\cup\mathcal{P}_{+}$, defined as the sets of players with anchor function being $0$, $-$ ve, and $+$ ve, respectively.
    
We first describe the fluid dynamics in words.   For players $p_i\in  \mathcal{P}_{+}$, $t^{(i)}_{m(i)}$ increases, as a result $g^{(i)}_{m(i)}$ decreases. For players $p_i\in \mathcal{P}_{-}$, samples are given to challenger arms with minimum index, i.e. $a_k\in\mathcal{A}^{(i)}_{\min}$ thus $t^{(i)}_{k}$ increases, as a result $g^{(i)}_{m(i)}$ increases. For players $p_i\in  \mathcal{P}_{0}$, the samples are given to both the stable matched arm and the challenger arm with a minimum index such that $g^{(i)}_{m(i)}$ remains zero. Furthermore, the index for $(p_i,a_k):a_k\in\mathcal{A}^{(i)}_{\min}(t)$ increases sublinearly for $p_i\in\mathcal{P}_{-}\cup \mathcal{P}_{+}$, linearly for $p_i\in\mathcal{P}_{0}$. In addition, these indexes stay together for $p_i\in \mathcal{P}_{-}\cup\mathcal{P}_{0}$. Thus, $p_i\in\mathcal{P}_{-}\cup \mathcal{P}_{+}$  are prioritized by algorithm until the anchor function is zero.  If the minimum index for two pairs meet, they remain equal i.e. if a player is once given a sample, it continues to receive samples. For $p_i\not\in \mathcal{P}_{\min}(t)$, $t^{(i)}_{k}(t)$ is constant with $t$ for all $k$. In the following, we provide a result that allows us to formulate the fluid dynamics, mainly the existence and uniqueness of the solution of the fluid path.

\begin{proposition}
There exists $N_{\min}$ such that $\E_{\mu}[N_{\min}]<\infty$ and for every positive $N\geq N_{\min}$, for a subset of players $p_i\in \mathcal{P}_{1}\subseteq\mathcal{P}$, and a set of arms $\mathcal{B}^{(i)}\subseteq\mathcal{E}_{m}^{(i)}$ there is a unique set of variables $\bm{N}^{(i)}_{\mathcal{B}} = \left(N^{(i)}_k: a_k\in\mathcal{B}^{(i)}\cup\{a_{m(i)}\}\right)\ \forall p_i\in \mathcal{P}_{1}$ and $I(N)$ satisfying the conditions $g^{(i)}_{m(i)} = 0$ for all  $p_i\in\mathcal{P}_{1}$, $\sum_{p_i\in\mathcal{P}}\sum_{a_k\in\mathcal{A}}N^{(i)}_k = N$ and $N^{(i)}_{k}D\left(\mu^{(i)}_{k},x^{(i)}_{m(i),k}\right)+N^{(i)}_{m(i)}D\left(\mu^{(i)}_{m(i)},x^{(i)}_{m(i),k}\right)=I(N)$ $\forall ~ p_i\in\mathcal{P}_{1}, ~a_k\in\mathcal{B}^{(i)}$.
\end{proposition}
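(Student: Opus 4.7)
The plan is to exhibit the solution via a hierarchical reduction: first fix a candidate common index value $I$ and solve for the per-player allocations, then determine $I$ uniquely from the total-budget constraint $\sum_{i,k} N^{(i)}_k = N$. Throughout, the argument rests on smoothness, strict monotonicity, and convexity of the single-player index $C^{(i)}_{m(i),k}$ and the anchor $g^{(i)}_{m(i)}$ viewed as functions of the real-valued allocations $(N^{(i)}_k)_{k}$, together with standard \texttt{SPEF} concentration to control $N_{\min}$.

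First, I would construct $N_{\min}$ as the first time at which the forced-exploration schedule of \texttt{ATT1} has produced empirical means $\hat\mu$ that lie in a neighbourhood of $\mu$ small enough to guarantee $\texttt{DA}_{\texttt{Arm}}(\hat\mu,\succ)=m$ and $\mathcal{E}_{\hat m}^{(i)}=\mathcal{E}_m^{(i)}$ for every $p_i$. Since exploration enforces $N^{(i)}_k\geq N^\gamma$ for each pair, a union bound of Chernoff tails for the exponential family, summed over $|\mathcal{P}||\mathcal{A}|$ pairs, together with $\sum_{n\geq 1}n\cdot\mathbb{P}(N_{\min}>n)<\infty$, gives $\E_\mu[N_{\min}]<\infty$. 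For $N\geq N_{\min}$ the sets $\mathcal{P}_1$ and $\mathcal{B}^{(i)}$ are unambiguously defined (as the set of currently active players and their minimum-index challengers), and the problem reduces to a purely analytic one.

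Next, for each $p_i\in\mathcal{P}_1$ and fixed target $I$, I would solve the per-player subsystem. Writing $s:=N^{(i)}_{m(i)}$ and $t_k:=N^{(i)}_k$, the map $(s,t_k)\mapsto C^{(i)}_{m(i),k}(s,t_k)$ is smooth, strictly increasing in each argument, homogeneous of degree one, and tends to $\infty$ along any ray; in particular for each $s>0$ the equation $C^{(i)}_{m(i),k}(s,t_k)=I$ admits a unique smooth solution $t_k=f_k(s,I)$, strictly decreasing in $s$ and strictly increasing in $I$. Substituting into the anchor $g^{(i)}_{m(i)}$ produces a scalar equation
\begin{equation*}
G^{(i)}(s,I)\;:=\;\sum_{k\in\mathcal{B}^{(i)}}\frac{D\!\left(\mu^{(i)}_{m(i)},x_{m(i),k}(s,f_k(s,I))\right)}{D\!\left(\mu^{(i)}_{k},x_{m(i),k}(s,f_k(s,I))\right)}-1\;=\;0,
\end{equation*}
where each $x_{m(i),k}$ moves monotonically along the level curve $\{C^{(i)}_{m(i),k}=I\}$ since $C^{(i)}_{m(i),k}$ is increasing in both arguments. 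Because $D(\mu,\cdot)$ is strictly monotone on each side of $\mu$, every ratio in the sum moves in the same direction as $s$, making $G^{(i)}(\cdot,I)$ strictly monotone; the endpoint behaviour $G^{(i)}\to\pm\infty$ as $s\to 0^+$ and $s\to\infty$ yields a unique zero $s^{(i)}(I)$ by the intermediate value theorem.

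Finally, define $T^{(i)}(I):=s^{(i)}(I)+\sum_{k\in\mathcal{B}^{(i)}}f_k(s^{(i)}(I),I)$ and $T(I):=\sum_{p_i\in\mathcal{P}_1}T^{(i)}(I)+R$, where $R$ denotes the residual allocation on players outside $\mathcal{P}_1$ (frozen by the algorithm after $N_{\min}$). Each $T^{(i)}$ is continuous and strictly increasing by the implicit-function theorem applied to the smooth KKT system of Lemma~\ref{lemma:firstone_sided}, and $T^{(i)}(I)\to\infty$ as $I\to\infty$, so $T$ is a continuous strictly increasing surjection onto $[T(0^+),\infty)$. Uniqueness and existence of $I(N)$ for all $N\geq N_{\min}$ follow, and the corresponding $(N^{(i)}_k)$ are the unique variables in the statement. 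The main obstacle I expect is the strict monotonicity of $G^{(i)}(\cdot,I)$ in Step~2: the cleanest route is to recognise that the KKT system arises from the strictly convex maximin program \texttt{LO1}, so uniqueness of primal and dual variables is automatic; otherwise a direct calculation tracking the sign of $\partial_s x_{m(i),k}(s,f_k(s,I))$ along the level curve is required, which is the only genuinely technical piece.
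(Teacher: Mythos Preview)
Your approach is correct and shares the same two-level structure as the paper's proof: first solve the per-player system, then pin down the common index from the budget constraint via monotonicity and the inverse function theorem. The difference is in how the local step is parameterised. The paper fixes the \emph{player total} $N^{(i)}=\sum_k N^{(i)}_k$ and invokes \cite[Proposition~2.1]{bandyopadhyay2024optimal} as a black box to obtain a unique local allocation and a strictly increasing, smooth index map $N^{(i)}\mapsto I^{(i)}_{\mathcal{B}}(N^{(i)})$; the global step then inverts this map and solves $\sum_{i}(I^{(i)}_{\mathcal{B}})^{-1}(I)=N$. You instead fix $I$ from the outset, eliminate the challengers via $t_k=f_k(s,I)$, and solve the anchor equation $G^{(i)}(s,I)=0$ for the leader allocation $s^{(i)}(I)$.

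Both routes arrive at the same monotone function $I\mapsto \sum_i T^{(i)}(I)$ to invert. The paper's route is shorter because the cited result absorbs exactly the ``main obstacle'' you flag, namely the strict monotonicity needed for local uniqueness; your self-contained treatment is fine, and your fallback (uniqueness of KKT points for the strictly convex program \texttt{LO1}) is precisely how that cited proposition is proved. Your handling of $N_{\min}$ via forced exploration and Chernoff tails is more explicit than the paper's proof, which simply takes $N_{\min}$ as given.
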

Based on the arguments described above, we derive the dynamics using Ordinary Differential Equations (ODEs). To ease the presentation, we use the following notations. 
For each player $p_i$, define $f^{(i)}_k := -\frac{\partial}{\partial x} D(\mu^{(i)}_{m(i)},x)/D(\mu^{(i)}_{k},x)\big|_{x=x^{(i)}_{m(i),k}}$.  Let $\Delta^{(i)}_{k}:=\mu^{(i)}_{m(i)}-\mu^{(i)}_{k}$ and $h^{(i)}_{k} := f^{(i)}_k (t^{(i)}_{m(i)})^2\Delta^{(i)}_{k}/{(t^{(i)}_{m(i)}+t^{(i)}_k)^2}$. Further denote $D^{(i)}_{m(i),k}:=D(\mu^{(i)}_{m(i)},x^{(i)}_{m(i),k})$ and $D^{(i)}_{k,k}:=D(\mu^{(i)}_{k},x^{(i)}_{m(i),k})$. Let $h^{(i)}_{\min}:= \sum_{k\in\mathcal{A}^{(i)}_{\min}}h^{(i)}_{k}/D^{(i)}_{k,k}$, $h^{(i)}_{\texttt{w}} = \sum_{k\not\in\mathcal{A}^{(i)}_{\min}\cup\{m(i)\}}h^{(i)}_{k}t^{(i)}_k$, $D^{(i)}_{\min} = (\sum_{k\in\mathcal{A}^{(i)}_{\min}}1/D^{(i)}_{k,k})^{-1}$, and $\cdot$ on top denotes derivative w.r.t $t$.

\begin{theorem}[One-Sided Fluid]
\label{thm:onesidedgeneral}
    If at some time $t_0>0$, suppose the minimum index player set $\mathcal{P}_{\min}(t_0)$ partitions into $\mathcal{P}_{+}(t_0)\cup\mathcal{P}_{-}(t_0)\cup \mathcal{P}_{0}(t_0)$. We consider the ODEs till $t_1$ defined as the smallest time after $t_0$ such that either minimum index at $t_1$ becomes equal to the index of player, arm pair for which the index at $t_0$ was strictly larger
     or anchor function of some player becomes equal to zero at $t_1$. This gives  the following for all $t\in [t_0,t_1)$.

\begin{enumerate}[leftmargin=*, nosep,noitemsep, topsep=0pt]
\item The minimum index evolves using the following ODE \vspace{-0.1in}\begin{align*}
    \left(\dot{C}_{\min}(t)\right)^{-1}&=\sum_{p_i\in\mathcal{P}_{+}}\dfrac{1}{D^{(i)}_{m(i),k(t)}}+\sum_{\scriptstyle \substack{p_i \in \mathcal{P}_{-} \\ a_k \in \mathcal{A}^{(i)}_{\min}}}
\dfrac{1}{D^{(i)}_{k,k}}+\sum_{p_i\in\mathcal{P}_{0}}\dfrac{h^{(i)}_{\min}\sum_{k\in\bar{\mathcal{A}}^{(i)}_{\min}}t^{(i)}_k+h^{(i)}_{\texttt{w}}/D^{(i)}_{\min}}{C_{\min}h^{(i)}_{\min}+h^{(i)}_{\texttt{w}}}
\end{align*}
where $\forall p_i\in \mathcal{P}_{+}$, $k(t) := \arg\min_{a_k\in\mathcal{A}^{(i)}_{\min}(t)}C^{(i)}_{m(i),k}(t)$.
    \item for $p_i\in \mathcal{P}_{-}, a_k\in\mathcal{A}^{(i)}_{\min}$, we have $\dot{t}^{(i)}_k = (D^{(i)}_{k,k})^{-1}\dot{C}_{\min}(t)$ and for $p_i\in \mathcal{P}_{+}$, we have $\dot{t}^{(i)}_{m(i)} = (D^{(i)}_{m(i),k(t)})^{-1}\dot{C}_{\min}(t)$. Further, for  $p_i\in \mathcal{P}_{0}, a_k\in\mathcal{A}^{(i)}_{\min}$, we have
    \begin{align*}
     \dot{t}^{(i)}_{m(i)}(t) &= \dot{C}_{\min}\dfrac{t^{(i)}_{m(i)}h^{(i)}_{\min}}{C_{\min }h^{(i)}_{\min}+h^{(i)}_{\texttt{w}}}\ \ \text{and}\ \ 
      \dot{t}^{(i)}_{k}(t) = \dot{C}_{\min}\dfrac{h^{(i)}_{\texttt{w}}/D^{(i)}_{k,k}+t^{(i)}_k h^{(i)}_{\min}}{{C_{\min} h^{(i)}_{\min}}+h^{(i)}_{\texttt{w}}}
\end{align*}
 
        \item for $p_{i}\in\mathcal{P}\backslash\mathcal{P}_{\min}(t)$ and $a_{k}\in\mathcal{E}_{m}^{(i)}$, we have $\dot{C}_{i,m(i),k}(t)=0$

\item When $\mathcal{P}_{0}\neq \emptyset$, there exists a constant $\beta>0$ independent of $t$ such that $\dot{C}_{\min}>\beta$ and in addition, other larger indexes for players $p_i\in\mathcal{P}_{\min}$ are bounded from above. 
When $\mathcal{P}_{0}= \emptyset$, indexes for $p_i\in\mathcal{P}_{\min}$ increase with $N$. Indexes for player $p_i\not\in\mathcal{P}_{\min}$ are constant. Thus, eventually all indexes catch up. 
\end{enumerate}
\end{theorem}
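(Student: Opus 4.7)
The strategy is to translate each of the algorithm's fluid rules into a conservation law and then differentiate the defining equations of $C^{(i)}_{m(i),k}$ and $g^{(i)}_{m(i)}$ with respect to the global time $t$. Three structural invariants drive everything. First, the budget equation $\sum_i \sum_k \dot{t}^{(i)}_k(t) = 1$ holds because at every instant the infinitesimal sample mass is split among pairs. Second, the algorithm only puts mass on (a) $t^{(i)}_{m(i)}$ for $p_i \in \mathcal{P}_{+}$, (b) $t^{(i)}_k$ for $a_k \in \mathcal{A}^{(i)}_{\min}$ and $p_i \in \mathcal{P}_{-}$, (c) both $t^{(i)}_{m(i)}$ and $t^{(i)}_k$ ($a_k \in \mathcal{A}^{(i)}_{\min}$) for $p_i \in \mathcal{P}_{0}$ in proportions that preserve the equation $g^{(i)}_{m(i)} = 0$. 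Third, any two indexes that currently equal $C_{\min}$ must grow at the same rate (otherwise one would drop below the minimum, contradicting the definition of $\mathcal{P}_{\min}$ on the interval $[t_0,t_1)$). Item~3 of the theorem is immediate from invariant (b): if $p_i \notin \mathcal{P}_{\min}$, then no mass is allocated to any arm of $p_i$, so every $t^{(i)}_k$ is frozen and hence $C^{(i)}_{m(i),k}$ is constant.

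Next I would compute the per-player contribution to $\dot{C}_{\min}$. Fix $p_i \in \mathcal{P}_{+}$ and let $k(t) \in \mathcal{A}^{(i)}_{\min}(t)$. Only $t^{(i)}_{m(i)}$ moves, so differentiating $C^{(i)}_{m(i),k(t)} = t^{(i)}_{m(i)} D^{(i)}_{m(i),k(t)} + t^{(i)}_{k(t)} D^{(i)}_{k(t),k(t)}$ and invoking the envelope identity for $x^{(i)}_{m(i),k}$ (its derivative terms cancel because $\partial_x[t^{(i)}_{m(i)}D(\mu^{(i)}_{m(i)},x)+t^{(i)}_{k}D(\mu^{(i)}_k,x)]$ vanishes at $x=x^{(i)}_{m(i),k}$) gives $\dot{C}_{\min} = \dot{t}^{(i)}_{m(i)}\, D^{(i)}_{m(i),k(t)}$, hence $\dot{t}^{(i)}_{m(i)} = \dot{C}_{\min}/D^{(i)}_{m(i),k(t)}$. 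For $p_i \in \mathcal{P}_{-}$ and $a_k \in \mathcal{A}^{(i)}_{\min}$, only $t^{(i)}_k$ moves and the same envelope argument yields $\dot{t}^{(i)}_k = \dot{C}_{\min}/D^{(i)}_{k,k}$. These two cases already account for the first two summations in the formula for $\dot{C}_{\min}^{-1}$ after I sum the allocated rates and plug into the budget equation.

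The $\mathcal{P}_{0}$ block is the main computational obstacle. There, both $t^{(i)}_{m(i)}$ and every $t^{(i)}_k$ ($a_k \in \mathcal{A}^{(i)}_{\min}$) evolve, and I have two sets of constraints: the common-growth condition $\dot{C}^{(i)}_{m(i),k} = \dot{C}_{\min}$ for every $a_k \in \mathcal{A}^{(i)}_{\min}$ and the anchor-preservation condition $\dot{g}^{(i)}_{m(i)} = 0$. Differentiating $g^{(i)}_{m(i)}$ term-by-term, the quantity $f^{(i)}_k$ (which is $-\partial_x[D(\mu^{(i)}_{m(i)},x)/D(\mu^{(i)}_k,x)]$ at $x=x^{(i)}_{m(i),k}$) is exactly the factor that appears when propagating $\dot{x}^{(i)}_{m(i),k}$ through the ratio $D^{(i)}_{m(i),k}/D^{(i)}_{k,k}$; combined with $\dot{x}^{(i)}_{m(i),k} = h^{(i)}_k \cdot(\dot{t}^{(i)}_{m(i)}/t^{(i)}_{m(i)} - \dot{t}^{(i)}_k/t^{(i)}_k)/\Delta^{(i)}_k$ (after routine algebra from the definition of $x^{(i)}_{m(i),k}$) one gets a linear system in the unknowns $\dot{t}^{(i)}_{m(i)}$ and $(\dot{t}^{(i)}_k)_{a_k \in \mathcal{A}^{(i)}_{\min}}$. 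Solving this linear system explicitly, using the equal-growth relation to eliminate $\dot{t}^{(i)}_k$ in terms of $\dot{t}^{(i)}_{m(i)}$, produces the stated expressions with the composite quantities $h^{(i)}_{\min}$, $h^{(i)}_{\texttt{w}}$, and $D^{(i)}_{\min}$; the coefficient of $\dot{C}_{\min}$ then gives the third summand in $\dot{C}_{\min}^{-1}$ after summing over $p_i \in \mathcal{P}_{0}$ and re-applying the budget equation.

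Finally, for item~4 I would argue as follows. When $\mathcal{P}_{0} = \emptyset$, every term in the sum for $\dot{C}_{\min}^{-1}$ is a bounded ratio of KL-divergences evaluated at the fixed instance $\mu$, so $\dot{C}_{\min}$ is bounded below by a positive constant; consequently all indexes of players in $\mathcal{P}_{\min}$ grow linearly in $t$, while those outside are constant, and so catch-up happens in finite additional time. When $\mathcal{P}_{0} \neq \emptyset$, the expression for the $\mathcal{P}_0$ contribution remains bounded because $C_{\min}$, $h^{(i)}_{\min}$ and $h^{(i)}_{\texttt{w}}$ are continuous functions of the allocations, and the non-degeneracy on $[t_0,t_1)$ follows from the proposition stated just before the theorem (existence and uniqueness of the anchored allocation), which guarantees that the linear system solved above is non-singular; this gives the uniform lower bound $\dot{C}_{\min} > \beta$. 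The boundedness of the other (non-minimum but same-player) indexes follows because $\dot{t}^{(i)}_k$ for those arms is zero, so those indexes are frozen on $[t_0,t_1)$. The main subtle step in the whole argument is the explicit inversion of the linear system in the $\mathcal{P}_0$ case and the verification that the denominator $C_{\min} h^{(i)}_{\min} + h^{(i)}_{\texttt{w}}$ does not vanish, which is where the referenced proposition does the heavy lifting.
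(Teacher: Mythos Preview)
Your derivation of items 1--3 is essentially the paper's own proof: you set up the budget constraint $\sum_{i,k}\dot t^{(i)}_k=1$, use the envelope identity $\partial_x\bigl[t^{(i)}_{m(i)}D(\mu^{(i)}_{m(i)},x)+t^{(i)}_kD(\mu^{(i)}_k,x)\bigr]\big|_{x=x^{(i)}_{m(i),k}}=0$ to differentiate $C^{(i)}_{m(i),k}$, handle $\mathcal P_+,\mathcal P_-$ directly, and for $\mathcal P_0$ impose $\dot C^{(i)}_{m(i),k}=\dot C_{\min}$ together with $\dot g^{(i)}_{m(i)}=0$ to solve the resulting linear system. This matches the paper step for step, and the $\mathcal P_0$ system is indeed the crux.

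There is, however, a genuine gap in your treatment of item~4. You write that ``the boundedness of the other (non-minimum but same-player) indexes follows because $\dot t^{(i)}_k$ for those arms is zero, so those indexes are frozen on $[t_0,t_1)$.'' That is false: for $p_i\in\mathcal P_0\cup\mathcal P_+$ the leader allocation $t^{(i)}_{m(i)}$ is strictly increasing, and every index $C^{(i)}_{m(i),k}=t^{(i)}_{m(i)}D^{(i)}_{m(i),k}+t^{(i)}_kD^{(i)}_{k,k}$ depends on $t^{(i)}_{m(i)}$ as well as on $t^{(i)}_k$. So for $a_k\notin\mathcal A^{(i)}_{\min}$ the index is \emph{increasing}, not frozen, and showing it remains bounded above while $C_{\min}$ grows at a uniform rate is precisely the non-trivial content of the statement. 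The paper does not argue this directly either; it invokes the single-player lemmas E.1--E.3 of \cite{bandyopadhyay2024optimal}, which control the gap between minimum and non-minimum indexes under the anchored dynamics. Similarly, your claim that in the $\mathcal P_0=\emptyset$ case the terms $1/D^{(i)}_{k,k}$ are ``bounded ratios of KL-divergences evaluated at the fixed instance $\mu$'' overlooks that $D^{(i)}_{k,k}=D(\mu^{(i)}_k,x^{(i)}_{m(i),k})$ depends on the moving weighted mean $x^{(i)}_{m(i),k}$; for $p_i\in\mathcal P_-$ one has $x^{(i)}_{m(i),k}\to\mu^{(i)}_k$ as $t^{(i)}_k$ grows, so $D^{(i)}_{k,k}\to 0$ and no uniform lower bound on $\dot C_{\min}$ is available. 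This is why the theorem only asserts $\dot C_{\min}>0$ (not $>\beta$) in that case.
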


By an argument similar to \cite{bandyopadhyay2024optimal}, for $t^{\star}$ defined as the smallest time after $t_{0}$ such that the first order condition of Lemma 3 is satisfied, we have $t^{\star}\leq \bigl(\min_{p_i\in\mathcal{P},a_k\in\mathcal{E}_{m}^{(i)}}w^{\star(i)}_k\bigr)^{-1}t_0$.
Now, we analyze the relationship between the fluid dynamics for each player $p_i\in\mathcal{P}_{\min}$ evolving on its local time scale $t^{(i)}(t)$ and global time scale $t$. Note that locally, each player is solving a best arm identification \texttt{BAI} problem. Thus, locally, the fluid dynamics will follow as in Theorem 4.1~\cite{bandyopadhyay2024optimal}. 
\begin{theorem}[Relating the local time scale with the global time scale]
\label{thm:onesidedlocalglobal}
    If $p_i$ correspond to the minimum index player at time $t_0$, then its minimum index defined as $C^{(i)}_{\min}(t_0)$ follows the following ODE 
    \begin{align*}
 \dot{C}^{(i)}_{\min}(t_0)  = \dot{C}_{\min}(t_0)= \left({\sum_{p_j\in\mathcal{P}_{\min}(t_0)}\left(\dfrac{d}{dt^{(j)}}C^{(j)}_{\min}(t_0^{(j)})\right)^{-1}}\right)^{-1}\ \ \forall p_i\in\mathcal{P}_{\min}(t_0)
\end{align*}
where the player $p_i$'s allocation defined as $t^{(i)}(t_0):=\sum_{a_k\in\mathcal{E}^{(i)}_m}t^{(i)}_k(t_0)+t^{(i)}_{m(i)}(t_0)$ evolves as follows
\begin{align*}
    \dot{t}^{(i)}(t_0) &= 
    \dot{C}_{\min}(t_0)\left(\dfrac{d}{dt^{(i)}}C^{(i)}_{\min}(t_0^{(i)})\right)^{-1}.
\end{align*}

\end{theorem}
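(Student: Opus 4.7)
The plan is to derive Theorem \ref{thm:onesidedlocalglobal} from Theorem \ref{thm:onesidedgeneral} via a chain-rule identification matching each term in the global ODE for $\dot{C}_{\min}$ with the inverse of a per-player local best-arm-identification (BAI) rate. The starting observation is that for $p_i \in \mathcal{P}_{\min}(t_0)$ the index $C^{(i)}_{\min}$ depends on the global time $t$ only through $p_i$'s local time $t^{(i)}(t) := \sum_{a_k \in \mathcal{E}^{(i)}_m\cup\{m(i)\}} t^{(i)}_k(t)$, since allocations for other players leave $p_i$'s indexes untouched. The chain rule then gives
\begin{align*}
\dot{C}^{(i)}_{\min}(t_0) \;=\; \left(\tfrac{d}{dt^{(i)}} C^{(i)}_{\min}(t_0^{(i)})\right)\cdot \dot{t}^{(i)}(t_0).
\end{align*}
Combining this with the fact that all players in $\mathcal{P}_{\min}(t_0)$ share the same minimum index along the ODE flow (the "these indexes stay together" property from Theorem \ref{thm:onesidedgeneral}), we get $\dot{C}^{(i)}_{\min}(t_0)=\dot{C}_{\min}(t_0)$ for each such $p_i$, yielding at once $\dot{t}^{(i)}(t_0)=\dot{C}_{\min}(t_0)\bigl(\tfrac{d}{dt^{(i)}} C^{(i)}_{\min}(t_0^{(i)})\bigr)^{-1}$, i.e.\ the second display of the theorem.

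For the first display, I would invoke the global budget constraint $\sum_{p_i,a_k} \dot{t}^{(i)}_k(t) = 1$ obtained by differentiating $\sum_{i,k} t^{(i)}_k(t) = t$. Theorem \ref{thm:onesidedgeneral}~(3) forces $\dot{t}^{(j)}_k = 0$ for every $p_j \notin \mathcal{P}_{\min}(t_0)$, so the constraint collapses to $\sum_{p_i \in \mathcal{P}_{\min}(t_0)} \dot{t}^{(i)}(t_0) = 1$. Substituting the chain-rule expression just derived yields
\begin{align*}
1 \;=\; \dot{C}_{\min}(t_0) \sum_{p_i \in \mathcal{P}_{\min}(t_0)} \left(\tfrac{d}{dt^{(i)}} C^{(i)}_{\min}(t_0^{(i)})\right)^{-1},
\end{align*}
which is the first claim after inversion. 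A consistency check with Theorem \ref{thm:onesidedgeneral}~(1) is that the right-hand side should split as a sum of local inverse rates, one per player type. For $p_i \in \mathcal{P}_+$ the player allocates only to $a_{m(i)}$ on its local clock, so $\tfrac{d}{dt^{(i)}} C^{(i)}_{\min} = D^{(i)}_{m(i),k(t)}$; for $p_i \in \mathcal{P}_-$ summing $\dot{t}^{(i)}_k = \dot{C}_{\min}/D^{(i)}_{k,k}$ over $a_k \in \mathcal{A}^{(i)}_{\min}$ gives $(\tfrac{d}{dt^{(i)}} C^{(i)}_{\min})^{-1} = \sum_{a_k \in \mathcal{A}^{(i)}_{\min}} 1/D^{(i)}_{k,k}$. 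Both match term-by-term with the first two summands in Theorem \ref{thm:onesidedgeneral}~(1).

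The main obstacle I anticipate is the $\mathcal{P}_0$ case: checking that the bracketed expression
\begin{align*}
\frac{h^{(i)}_{\min}\sum_{a_k \in \bar{\mathcal{A}}^{(i)}_{\min}} t^{(i)}_k + h^{(i)}_{\texttt{w}}/D^{(i)}_{\min}}{C_{\min}\, h^{(i)}_{\min}+h^{(i)}_{\texttt{w}}}
\end{align*}
coincides with $\bigl(\tfrac{d}{dt^{(i)}} C^{(i)}_{\min}\bigr)^{-1}$ for a player solving BAI locally in the boundary regime $g^{(i)}_{m(i)}=0$. The approach will mirror Theorem 4.1 of \cite{bandyopadhyay2024optimal}: differentiate the constraint $g^{(i)}_{m(i)}(\mu,\mathcal{E}_m^{(i)},\bm{t})=0$ with respect to $t^{(i)}$ to pin down the split between $\dot{t}^{(i)}_{m(i)}$ and the $\dot{t}^{(i)}_k$'s on the local clock, then substitute into the index relation $t^{(i)}_{m(i)} D^{(i)}_{m(i),k}+t^{(i)}_k D^{(i)}_{k,k} = C^{(i)}_{\min}$ and invert. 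This is a direct algebraic reduction once the local analogue of the ODE system is written out, but it is where the heavy bookkeeping with the $h$-terms lives, and I expect the $\mathcal{P}_0$ verification to dominate the length of the proof.
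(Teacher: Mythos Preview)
Your proposal is correct and follows essentially the same route as the paper: the paper's proof is precisely the chain rule $\dot{C}_{\min}(t_0)=\dot{t}^{(i)}\cdot\tfrac{d}{dt^{(i)}}C^{(i)}_{\min}(t_0^{(i)})$ combined with the budget identity $\sum_{p_i\in\mathcal{P}_{\min}}\dot{t}^{(i)}=1$, after which it defers the per-player local dynamics to \cite{bandyopadhyay2024optimal}. Your anticipated $\mathcal{P}_0$ verification is a consistency check rather than a required step---the theorem as stated follows already from the two ingredients above, and the paper does not carry out that algebraic matching in this proof.
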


\paragraph{Theoretical Guarantees} We first show that $\texttt{ATT1}$ follows first-order conditions after a finite time, at a rate $N^{-\xi}$ for $\xi=3\gamma/8$. Choosing $T_{\texttt{stable}}$ such that $\forall N$ after we have $|\hat{\mu}^{(i)}_k - \mu^{(i)}_k| \leq N^{-\xi}$, and $N^{(i)}_k=\Theta(N),\forall p_i\in \mathcal{P},a_k\in \mathcal{E}_{\hat{m}}^{(i)}$ implies convergence of the anchor function. In addition, the convergence of the indexes follows as after $T_{\texttt{stable}}$, each pair is pulled at least once in the $N^{1-\xi}$ rounds.
\begin{proposition}[Convergence of first order conditions]
There exists a random time $T_{\texttt{stable}}$ such that $\E[T_{\texttt{stable}}]<\infty$ then for all $N>T_{\texttt{stable}}$ we have, for instance-dependent constant $D>0$ that $ \max_{p_i} |\tilde{g}^{(i)}_{m(i)}|<DN^{-\xi}$ and $
        \max_{(p_i,a_k),(p_j,a_l)}|\Tilde{C}^{(i)}_{m(i),k}-\Tilde{C}^{(j)}_{m(j),l}|<DN^{1-\xi}$.
\end{proposition}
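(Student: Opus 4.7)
The plan is to reduce everything to concentration of the empirical means enforced by the two-layer forced exploration, and then to chain this through the two parts of the statement.

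\emph{Step 1 (mean concentration and matching recovery).} The nested ``$N^{(i)}<N^{\gamma}$'' and ``$N^{(i)}_k<(N^{(i)})^{\gamma}$'' exploration clauses force every pair count to grow at least as $N^{(i)}_k\geq N^{\gamma^{2}}$ after a deterministic warm-up. Combining the SPEF subgaussian tail bound $\mathbb{P}(|\hat\mu^{(i)}_k-\mu^{(i)}_k|>\epsilon)\leq 2\exp(-c N^{(i)}_k \epsilon^{2})$ with a union bound over pairs and over $N$, followed by Borel--Cantelli, I would obtain a random $T_{1}$ with $\E[T_{1}]<\infty$ such that $\max_{i,k}|\hat\mu^{(i)}_k-\mu^{(i)}_k|\leq N^{-\xi}$ for all $N\geq T_{1}$, with $\xi=3\gamma/8$. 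Since $\mu\in\mathcal{S}_{1}$, the true stable matching is separated from every other matching by a positive margin in $\mu$-space; once the perturbation is smaller than this margin, $\hat m=m$ and $\mathcal{E}^{(i)}_{\hat m}=\mathcal{E}^{(i)}_m$ for every $p_{i}$.

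\emph{Step 2 (anchor convergence).} For $N\geq T_{1}$, the arm rule is a discrete sign-controlled feedback law on $\tilde g^{(i)}_{m(i)}$: whenever $\tilde g^{(i)}_{m(i)}>0$ we pull $a_{m(i)}$, which by the closed form in Lemma~3 strictly decreases the true $g^{(i)}_{m(i)}$, and symmetrically for the negative case. Along this trajectory both $t^{(i)}_{m(i)}$ and at least one challenger allocation are $\Theta(N)$, so a single pull changes $g$ by $\Theta(1/N)$ in the correcting direction, while Step~1 together with Lipschitzness of $D$ gives a plugin error $|\tilde g-g|=O(N^{-\xi})$. A standard crossing-time argument for one-dimensional sign-driven recursions then yields $|\tilde g^{(i)}_{m(i)}|\leq D_{1}N^{-\xi}$ for some instance-dependent $D_{1}>0$.

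\emph{Step 3 (index equalization).} After $T_{1}$, outside forced-exploration rounds the player rule selects the argmin of the empirical indexes $\tilde C^{(i)}_{m(i),k}$ over the supported set. Consequently, any two such indexes can differ only by the jump produced by one pull plus the cumulative effect of the forced-exploration pulls, the latter contributing $O(N^{\gamma})=o(N^{1-\xi})$. The plugin gap between $\tilde C$ and $C$ (same allocations, true means) is controlled by the Lipschitz constant of $C$ in $\mu$, which is $\Theta(N)$ on allocations of order $N$; combined with $\|\hat\mu-\mu\|_{\infty}\leq N^{-\xi}$ from Step~1, this produces an $O(N^{1-\xi})$ error. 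Summing the two contributions yields $\max|\tilde C^{(i)}_{m(i),k}-\tilde C^{(j)}_{m(j),l}|\leq D_{2}N^{1-\xi}$, and taking $D=\max(D_{1},D_{2})$ with $T_{\texttt{stable}}$ absorbing the warm-up closes the proof.

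The hard part will be pinning down the catch-up dynamics in Step~3 for pairs whose index lies strictly below the current global minimum. While such a pair waits to be selected, the global minimum can advance through its neighbors, so one must invoke Theorem~4's lower bound $\dot C_{\min}>\beta$ when $\mathcal{P}_{0}\neq\emptyset$ (and the monotone-growth statement otherwise) to argue that between two consecutive selections of a lagging pair the global minimum advances only by $\Theta(1)$, which prevents persistent lags larger than the claimed $O(N^{1-\xi})$ band. Carrying this through in the presence of empirical-mean noise and the forced-exploration perturbations is where most of the technical care is required.
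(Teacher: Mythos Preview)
Your high-level decomposition is reasonable, but the proof differs from the paper's in its architecture, and Step~3 contains a genuine gap whose proposed fix will not work as stated.

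\textbf{The paper's route.} The paper exploits a local-to-global reduction that you never make explicit. On the local timescale $N^{(i)}=\sum_k N^{(i)}_k$, each player $p_i$ is running exactly the single-agent anchored top-two of \cite{bandyopadhyay2024optimal}, so their Proposition~5.1 immediately yields $|g^{(i)}_{m(i)}|\leq D(N^{(i)})^{-\xi}$ and within-player index equality at rate $(N^{(i)})^{1-\xi}$. Separately, the paper shows $N^{(i)}(N)=\Theta(N)$ by a pigeonhole-plus-comparison argument: some player has $N^{(i)}\geq N/M$, and whenever any player is pulled, the index comparison forces all other $N^{(j)}\geq\kappa N^{(i)}$. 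Composing gives the anchor bound and within-player index bound in global $N$. Cross-player index equality is then obtained from a \emph{periodicity lemma}: after $T_{\texttt{conv}}$, any player that is selected is reselected within $O(N^{1-\xi})$ rounds (proved directly, using the already-established local convergence to write $\tilde C^{(i)}_{\min}(N)=c_i N^{(i)}\pm O(N^{1-\xi})$ and comparing to the most-sampled player).

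\textbf{Your gap.} In Step~3 you assert that two supported indexes ``can differ only by the jump produced by one pull plus forced exploration''. This is false: a non-minimal player can be skipped for many consecutive rounds, and the gap to the global minimum can accumulate over all of them. You recognize this in the final paragraph, but your proposed remedy---invoking Theorem~4's fluid bound $\dot C_{\min}>\beta$---does not work: the paper explicitly states it does \emph{not} prove that the stochastic dynamics track the fluid ODE, so fluid rates cannot be transferred to the algorithm. The correct non-fluid argument is the periodicity lemma above, which bounds the inter-selection time of each player directly and requires the local convergence result as input. Relatedly, your Step~2 silently assumes $t^{(i)}_{m(i)},t^{(i)}_k=\Theta(N)$ in global time; this is precisely what the local-to-global step establishes, and without it your crossing-time argument for $g$ must be run in local time $N^{(i)}$, after which you still need $N^{(i)}=\Theta(N)$ to convert.
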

Due to the uniqueness of the optimal allocation, \texttt{ATT1} therefore follows the optimal proportion up to a perturbation of $N^{-\xi}$. The following theorem follows as at stopping time $\tau_{\delta}$, $\min C^{(i)}_{m(i),k}$ crosses $\beta(\tau_{\delta},\delta)=\log(1/\delta)+o(\log(1/\delta))$, thus we have $\Tilde{C}^{(i)}_{m(i),k}\approx \tau_{\delta}T^{\star{-1}}(\mu)\underset{\delta\to 0}{=}\log(1/\delta)$ a.s. in $\mathbb{P}_{\mu}$. 
\begin{theorem}[Asymptotic optimality]\label{thm:one_sided_assymp_opt}
    \texttt{ATT1} is $\delta$-correct and is asymptotically optimal for one-sided learning over market instances $\mu\in \mathcal{S}_{1}$ i.e. the corresponding stopping time satisfy 
\begin{align*}
    \limsup_{\delta\to 0}\frac{\E_{\mu}[\tau_{\delta}]}{\log(1/\delta)}\leq T^{\star}(\mu) \ \ \text{and}\ \ \limsup_{\delta\to 0}\frac{\tau_{\delta}}{\log(1/\delta)}\leq T^{\star}(\mu)\ \text{a.s. in}\ \mathbb{P}_\mu.
\end{align*}

\end{theorem}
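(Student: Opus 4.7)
I would split the argument into two independent pieces: (i) $\delta$-correctness, which depends only on the stopping rule and not on the sampling rule; and (ii) the asymptotic sample complexity upper bound, which is where the first-order conditions and uniqueness of $w^{\star}$ enter.

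For $\delta$-correctness, my plan is to invoke the standard Generalized Likelihood Ratio (GLR) argument outlined after the stopping rule. The stopping threshold $\beta(t,\delta)=\log((|\mathcal{M}|-1)/\delta)+3|\mathcal{P}||\mathcal{A}|\log(1+\log t)$ is exactly the one derived for partition identification of the (at most $|\mathcal{M}|-1$) alternate partitions corresponding to a wrong declared matching. Concretely, I would bound $\mathbb{P}_{\mu}(\tau_{\delta}<\infty,\ \hat{m}\neq m)$ by union-bounding over the alternate hypotheses $\{\lambda\in\texttt{Alt}_{\mu}\}$, on each piece using a mixture/deviation inequality of the Kaufmann-Koolen type for exponential families. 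This reduces the whole matter to the chi-square/log-log deviation bounds and yields $\mathbb{P}_{\mu}(\hat{m}\neq m)\leq \delta$, independently of the particular sampling rule.

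For the sample-complexity upper bound, the strategy is: (a) use Proposition 5 to conclude that after a random but integrable time $T_{\texttt{stable}}$, the first-order conditions of Lemma 3 hold uniformly at rate $N^{-\xi}$; (b) invoke the uniqueness of the maximizer $w^{\star}$ from Theorem 1 to translate near-satisfaction of the KKT conditions into $N^{(i)}_{k}/N = w^{\star(i)}_{k}+O(N^{-\xi})$; and (c) conclude that all empirical indexes share a common value $\tilde{C}^{(i)}_{m(i),k}=N/T^{\star}(\mu)+o(N)$. Step (b) is the step I expect to require the most care: I would formalize it by observing that the map from KKT-residuals to the allocation $w$ is locally Lipschitz at $w^{\star}$ (this is exactly the uniqueness/strict-concavity statement in Theorem 1), and then push the perturbation $N^{-\xi}$ through. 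Once (c) is in hand, the stopping condition $\min_{i,k}\tilde{C}^{(i)}_{m(i),k}\geq \beta(t,\delta)=\log(1/\delta)+o(\log(1/\delta))$ forces
\begin{equation*}
\tau_{\delta}\big/T^{\star}(\mu)\;\leq\; \log(1/\delta)+o(\log(1/\delta))+O(\tau_{\delta}^{1-\xi}),
\end{equation*}
from which the a.s.\ statement $\limsup_{\delta\to 0}\tau_{\delta}/\log(1/\delta)\leq T^{\star}(\mu)$ follows directly on the event $\{N\geq T_{\texttt{stable}}\}$, which has full probability as $\delta\to 0$.

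The last step, moving from the a.s.\ bound to the bound on $\E_{\mu}[\tau_{\delta}]$, is the other technical point I would have to handle carefully. The standard recipe is to establish uniform integrability of $\tau_{\delta}/\log(1/\delta)$ by a deviation bound: on the event that some empirical mean has deviated by more than $N^{-\xi}$, Proposition 5's proof already supplies exponential tails for $T_{\texttt{stable}}$, so $\E_{\mu}[T_{\texttt{stable}}]<\infty$; combining this with a crude upper bound on $\tau_{\delta}$ that uses the forced exploration phase ($N^{\gamma}$ pulls on every pair guarantees a worst-case polynomial-in-$1/\delta$ bound on $\tau_{\delta}$), reverse Fatou gives the expectation bound. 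The main obstacle throughout is step (b), because the set $\mathcal{E}_{\hat{m}}^{(i)}$ of probable blocking arms is itself a function of $\hat{\mu}$ and may change during the run; I would handle this by arguing that after $T_{\texttt{stable}}$ we have $\hat{m}=m$ and $\mathcal{D}^{(i)}_{\hat{m}}=\mathcal{D}^{(i)}_{m}$ for all players, so that the active KKT system stabilizes to the one in Lemma 3, after which the Lipschitz/uniqueness argument applies cleanly.
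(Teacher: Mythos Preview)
Your proposal is correct and follows essentially the same route as the paper: $\delta$-correctness via a GLR/Kaufmann--Koolen concentration argument for the stopping rule, and asymptotic optimality by combining Proposition~5 (near-satisfaction of the KKT conditions at rate $N^{-\xi}$ after $T_{\texttt{stable}}$) with uniqueness of $w^{\star}$ to deduce $N^{(i)}_{k}/N\to w^{\star(i)}_{k}$ and hence $\tilde{C}^{(i)}_{m(i),k}=N/T^{\star}(\mu)+O(N^{1-\xi})$. The one place the paper is slightly simpler is the expectation bound: rather than uniform integrability and reverse Fatou, it defines the deterministic $t_{\max,\delta}:=\min\{N:N/T^{\star}(\mu)-CN^{1-\xi}>\beta(N,\delta)\}$, observes $\tau_{\delta}\leq\max\{t_{\max,\delta},T_{\texttt{stable}}\}$ pathwise, and uses $\E_{\mu}[T_{\texttt{stable}}]<\infty$ (which is $\delta$-free) to conclude directly.
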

\section{Two-sided Learning}\label{sec:two_sided}
Recall, in this setting, both players and arms are unaware of their preferences. Let $\mu:=(\mu^{(i)}_k),\eta:=(\eta^{(i)}_k)$ be such that $(\mu,\eta)\in \mathcal{S}_2$ the set of instances such that the preference profile induced by $(\mu,\eta)$ has a unique stable match, denote as $m$. Let $\neg m=\{(\lambda,\nu):m\not\in {\mathcal{M}}_{\lambda,\nu}\}$ be the alternate set where $\mathcal{M}_{\lambda,\nu}$ is the set of stable matching induced by the preference profiles $(\lambda,\nu)$. 
\begin{theorem}
    Any $\delta$-correct algorithm for finding the stable matching of any market instance $(\mu,\eta)\in\mathcal{S}_2$ for the two-sided learning model satisfies
    \begin{align*}
    \lim\inf_{\delta\to 0}\frac{\E_{\mu,\eta}[\tau_{\delta}]}{\log(1/\delta)}&\geq T^{\star}(\mu,\eta):=D(\mu,\eta)^{-1}\quad \text{where}\\
        D(\mu,\eta) = \max_{w\in\Delta_{|\mathcal{P}|\times|\mathcal{A}}|}&\inf_{(\lambda,\nu)\in\neg m}\sum_{p_i\in\mathcal{P}}\sum_{a_k\in\mathcal{A}} w^{(i)}_k \left[d(\mu^{(i)}_k,\lambda^{(i)}_k)+d(\eta^{(i)}_k,\nu^{(i)}_k)\right].
    \end{align*}
\end{theorem}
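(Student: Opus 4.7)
The plan is to mirror the change-of-measure argument used for the one-sided lower bound \texttt{LO1}, while exploiting the fact that each match $(p_i,a_k)$ now reveals two independent reward samples $X^{(i)}_k\sim\texttt{SPEF}(\mu^{(i)}_k)$ and $Y^{(i)}_k\sim\texttt{SPEF}(\eta^{(i)}_k)$. Since these samples are conditionally independent given the match, the log-likelihood ratio between the trajectory laws under $(\mu,\eta)$ and any alternate $(\lambda,\nu)\in\neg m$ decomposes additively across both coordinates. Fixing such an alternate and invoking the standard transportation inequality (Kaufmann, Cappé and Garivier; cf.\ the one-sided derivation preceding \texttt{LO1}) applied to the event $\{\hat m = m\}$, which under a $\delta$-correct algorithm has probability $\geq 1-\delta$ under $(\mu,\eta)$ and $\leq \delta$ under $(\lambda,\nu)$, yields
\begin{align*}
\sum_{p_i\in\mathcal{P}}\sum_{a_k\in\mathcal{A}}\mathbb{E}_{\mu,\eta}[N^{(i)}_k]\Bigl[d(\mu^{(i)}_k,\lambda^{(i)}_k)+d(\eta^{(i)}_k,\nu^{(i)}_k)\Bigr]\geq D(\delta,1-\delta)\geq \log(1/(2.4\delta)).
\end{align*}

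Next, I would normalize by setting $w^{(i)}_k:=\mathbb{E}_{\mu,\eta}[N^{(i)}_k]/\mathbb{E}_{\mu,\eta}[\tau_\delta]$, which gives $w\in\Delta_{|\mathcal{P}|\times|\mathcal{A}|}$ via Wald's identity $\sum_{i,k}\mathbb{E}_{\mu,\eta}[N^{(i)}_k]=\mathbb{E}_{\mu,\eta}[\tau_\delta]$. Dividing the displayed inequality by $\mathbb{E}_{\mu,\eta}[\tau_\delta]$ and taking the infimum over $(\lambda,\nu)\in\neg m$ on the left-hand side (which the right-hand side does not depend on) gives
\begin{align*}
\mathbb{E}_{\mu,\eta}[\tau_\delta]\cdot \inf_{(\lambda,\nu)\in\neg m}\sum_{p_i\in\mathcal{P}}\sum_{a_k\in\mathcal{A}} w^{(i)}_k\bigl[d(\mu^{(i)}_k,\lambda^{(i)}_k)+d(\eta^{(i)}_k,\nu^{(i)}_k)\bigr]\geq \log(1/(2.4\delta)).
\end{align*}
Since the realized proportion vector $w$ produced by the algorithm is some element of $\Delta_{|\mathcal{P}|\times|\mathcal{A}|}$, the left factor is upper bounded by $\mathbb{E}_{\mu,\eta}[\tau_\delta]\cdot D(\mu,\eta)$, so $\mathbb{E}_{\mu,\eta}[\tau_\delta]\geq \log(1/(2.4\delta))/D(\mu,\eta)$. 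Dividing by $\log(1/\delta)$ and letting $\delta\to 0$ produces the claimed asymptotic lower bound.

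The main technical ingredient is the additive KL decomposition across the two independent reward streams, which is immediate from the chain rule for KL divergence together with the independence assumption $X^{(i)}_k\perp Y^{(i)}_k$ stipulated in Section~\ref{sec:preliminaries}. The only genuinely new modelling concern compared with the one-sided case is that the alternate set $\neg m=\{(\lambda,\nu):m\notin\mathcal{M}_{\lambda,\nu}\}$ is strictly larger: a blocking pair $(p_i,a_k)$ against $m$ can be produced by flipping either the player-side inequality (perturbing $\mu$) or the arm-side inequality (perturbing $\nu$), or by splitting the perturbation between the two coordinates in a way that minimizes cost. I do not need to resolve this inner infimum in closed form at the lower-bound stage, since keeping $\inf_{(\lambda,\nu)\in\neg m}$ in the statement suffices; this is why, unlike Theorem~\ref{thm:optimalonesided}, the two-sided expression cannot be collapsed to a simple weighted-coupling form using $x^{(i)}_{m(i),k}$. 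Consequently, the hardest part is not the proof itself (which is essentially boilerplate once the additive decomposition is in hand) but rather the downstream step, not required here, of characterizing the optimal perturbation direction in order to later design a matching algorithm.
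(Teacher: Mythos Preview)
Your proof is correct and follows essentially the same approach as the paper, which (as in the one-sided case) appeals directly to the data-processing/transportation inequality of Kaufmann et al.\ and then normalizes by $\E_{\mu,\eta}[\tau_\delta]$; the only additional observation, which you make explicit, is that independence of the two reward streams $X^{(i)}_k$ and $Y^{(i)}_k$ gives the additive KL term $d(\mu^{(i)}_k,\lambda^{(i)}_k)+d(\eta^{(i)}_k,\nu^{(i)}_k)$. The paper does not spell out more than this before moving on to the structural characterization of $\neg m$ and the program \texttt{LO2}, so your write-up is in fact more detailed than what appears there.
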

Note that in the definition of $\neg m$, the instance $(\lambda,\nu)$ may have multiple stable matchings. For a pair $(p_i,a_{m(j)})$ to form a blocking pair under matching $m$, the conditions must be $\lambda^{(i)}_{m(i)} < \lambda^{(i)}_{m(j)}$ and $\eta^{(j)}_{m(j)} < \eta^{(i)}_{m(j)}$. In the true instance $(\mu,\eta)$, any pair $(p_i,a_{m(j)})$ can be categorized by the four logical possibilities shown below:
\begin{align*}
    \text{(1)} \quad \mu^{(i)}_{m(i)} > \mu^{(i)}_{m(j)} \ &\text{and} \  \eta^{(j)}_{m(j)} < \eta^{(i)}_{m(j)} \quad &&\text{(2)} \quad \mu^{(i)}_{m(i)} < \mu^{(i)}_{m(j)} \ \text{and} \  \eta^{(j)}_{m(j)} > \eta^{(i)}_{m(j)} \\
    \text{(3)} \quad \mu^{(i)}_{m(i)} > \mu^{(i)}_{m(j)} \ &\text{and} \  \eta^{(j)}_{m(j)} > \eta^{(i)}_{m(j)} \quad &&\text{(4)} \quad \mu^{(i)}_{m(i)} < \mu^{(i)}_{m(j)} \ \text{and} \  \eta^{(j)}_{m(j)} < \eta^{(i)}_{m(j)}
\end{align*}
The fourth case is impossible, as it implies that $(p_i,a_{m(j)})$ is a blocking pair for the matching $m$, contradicting our assumption that $m$ is stable. We can thus partition these pairs into three sets, $\mathcal{B}_1(m)$, $\mathcal{B}_2(m)$, and $\mathcal{B}_3(m)$, corresponding to the conditions (1), (2), and (3) respectively. The alternate instance is then defined as the union of all potential blocking pairs:
\[
 \neg m = \bigcup_{p_i}\bigcup_{m(j)\neq m(i)}\{\lambda^{(i)}_{m(i)} < \lambda^{(i)}_{m(j)}\ \text{and}\ \eta^{(j)}_{m(j)} < \eta^{(i)}_{m(j)}\}.
\]
For notational simplicity, we define the player and arm indices. The \textit{player index} for $p_i$, when comparing arms $a_{m(i)}$ and $a_{m(j)}$, is:
\begin{align*}
    C^{(i)}_{m(i),m(j)} := N^{(i)}_{m(j)} d\left(\mu^{(i)}_{m(j)}, x^{(i)}_{m(i),m(j)}\right) + N^{(i)}_{m(i)} d\left(\mu^{(i)}_{m(i)}, x^{(i)}_{m(i),m(j)}\right)
\end{align*}
The \textit{arm index} for $a_{m(j)}$, when comparing players $p_j$ and $p_i$, is:
\begin{align*}
    C^{(j,i)}_{m(j)} := N^{(i)}_{m(j)} d\left(\eta^{(i)}_{m(j)}, y^{(j,i)}_{m(j)}\right) + N^{(j)}_{m(j)} d\left(\eta^{(j)}_{m(j)}, y^{(j,i)}_{m(j)}\right)
\end{align*}
where the terms $x^{(i)}_{m(i),m(j)}$ and $y^{(i,j)}_{m(j)}$ are defined as:
\begin{align*}
    x^{(i)}_{m(i),m(j)} = \frac{N^{(i)}_{m(i)}\mu^{(i)}_{m(i)} + N^{(i)}_{m(j)}\mu^{(i)}_{m(j)}}{N^{(i)}_{m(i)} + N^{(i)}_{m(j)}}, \quad 
    y^{(i,j)}_{m(j)} = \frac{N^{(j)}_{m(j)}\eta^{(j)}_{m(j)} + N^{(i)}_{m(j)}\eta^{(i)}_{m(j)}}{N^{(j)}_{m(j)} + N^{(i)}_{m(j)}}.
\end{align*}
As before the lower bound can be modeled as the following convex program
\begin{align*}
 \texttt{\textbf{LO2}:}\quad\quad \quad\quad \quad\quad \quad\quad \quad\quad  \min_{N}  \sum_{i=1}^N\sum_{k=1}^K N_k^{(i)} \quad  \text{s.t.}\ N^{(i)}_k&\geq 0\\
  \min\left\{\min_{(p_i,a_k)\in \mathcal{B}_1}C^{(i)}_{m(i),k},\min_{(p_i,a_k)\in \mathcal{B}_2}C^{(m^{-1}(k),i)}_{k},\min_{(p_i,a_k)\in \mathcal{B}_3}C^{(i)}_{m(i),k}+C^{(m^{-1}(k),i)}_{k},\right\}&\geq C_{\delta}
\end{align*}
\begin{corollary}
    There may exist a player $p_i$, such that $w^{\star (i)}_{m(i)}=0$. This player is such that $\not\exists\ a_k: p_i\succ_{a_k}m^{-1}(k)$ and $\not\exists\ p_j:a_{ m(i)}\succ_{p_j}a_{m(j)}$, in words, all arms prefer their match over $p_i$ and all players prefer their match over match of $p_i$. For other players $p_j$, we have $w^{\star (j)}_{m(j)}>0$, and for all players $p_i$, we have $w^{\star (i)}_{\ell}>0\ \forall a_{\ell}\neq a_{m(i)}$.
\end{corollary}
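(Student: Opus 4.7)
The plan is to argue by feasibility: identify where $w^{(i)}_{m(i)}$ appears in the constraints of \texttt{LO2}, observe that setting it to zero collapses the corresponding indices, and then translate the resulting feasibility condition into (A) and (B). Inspecting the definitions, $w^{(i)}_{m(i)} = N^{(i)}_{m(i)}$ enters exactly two kinds of terms: the player indices $C^{(i)}_{m(i),k}$ for $k \neq m(i)$, which feature in constraints from pairs $(p_i,a_k) \in \mathcal{B}_1 \cup \mathcal{B}_3$; and the arm indices $C^{(i,j)}_{m(i)}$ for $j \neq i$, which feature in constraints from pairs $(p_j,a_{m(i)}) \in \mathcal{B}_2 \cup \mathcal{B}_3$. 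No other constraint of \texttt{LO2} depends on $w^{(i)}_{m(i)}$.

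The key observation is a collapse at zero: when $w^{(i)}_{m(i)} = 0$, the weighted mean $x^{(i)}_{m(i),k}$ reduces to $\mu^{(i)}_k$, so $C^{(i)}_{m(i),k} = 0$ by $d(\mu,\mu)=0$; symmetrically, $y^{(i,j)}_{m(i)}$ reduces to $\eta^{(j)}_{m(i)}$, forcing $C^{(i,j)}_{m(i)} = 0$. Therefore any singleton constraint $C^{(i)}_{m(i),k} \geq C_\delta$ (from $\mathcal{B}_1$) or $C^{(i,j)}_{m(i)} \geq C_\delta$ (from $\mathcal{B}_2$) is violated at a positive threshold, whereas the $\mathcal{B}_3$ constraints are sums whose companion index does not depend on $w^{(i)}_{m(i)}$ and can be inflated using the other variables. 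Feasibility of $w^{(i)}_{m(i)} = 0$ thus forces the absence of every $\mathcal{B}_1$ pair $(p_i,a_k)$ and every $\mathcal{B}_2$ pair $(p_j,a_{m(i)})$. Since $\mathcal{B}_4$ is empty by stability of $m$, non-matched pairs partition into $\mathcal{B}_1 \cup \mathcal{B}_2 \cup \mathcal{B}_3$, and a direct comparison of the defining conditions shows that ``no $\mathcal{B}_1$ pair $(p_i,a_k)$'' is equivalent to $\eta^{(i)}_k \leq \eta^{(m^{-1}(k))}_k$ for all $k \neq m(i)$, i.e.\ condition (A); symmetrically, ``no $\mathcal{B}_2$ pair $(p_j,a_{m(i)})$'' is exactly (B). Since the optimum is feasible, $w^{\star(i)}_{m(i)} = 0$ implies (A) and (B), which yields the first claim, and the contrapositive delivers $w^{\star(j)}_{m(j)} > 0$ for players failing (A) or (B).

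For the positivity claim $w^{\star(i)}_\ell > 0$ for $\ell \neq m(i)$, the pair $(p_i,a_\ell)$ lies in exactly one of $\mathcal{B}_1, \mathcal{B}_2, \mathcal{B}_3$, and I would observe that setting $w^{(i)}_\ell = 0$ collapses $x^{(i)}_{m(i),\ell}$ to $\mu^{(i)}_{m(i)}$ and $y^{(m^{-1}(\ell),i)}_\ell$ to $\eta^{(m^{-1}(\ell))}_\ell$, driving both $C^{(i)}_{m(i),\ell}$ and $C^{(m^{-1}(\ell),i)}_\ell$ simultaneously to zero; in each of the three cases the resulting constraint (singleton or sum) fails, so $w^{(i)}_\ell > 0$ at any feasible point. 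The most subtle step is the one about $\mathcal{B}_3$ in the preceding paragraph: even though $w^{(i)}_{m(i)}$ does appear in $\mathcal{B}_3$ constraints, those are sums whose companion summand is independent of $w^{(i)}_{m(i)}$ and can be made large in isolation, so only the singleton $\mathcal{B}_1$ and $\mathcal{B}_2$ constraints genuinely obstruct $w^{(i)}_{m(i)} = 0$ --- a phenomenon that does not occur for the off-match variables $w^{(i)}_\ell$, where both summands of every relevant constraint vanish simultaneously.
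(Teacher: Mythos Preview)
Your proposal is correct and follows essentially the same route as the paper: the paper's argument (embedded in the KKT analysis of \texttt{LO2} in the appendix) deduces from feasibility that $(p_i,a_k)\in\mathcal{B}_1$ forces $t^{(i)}_{m(i)}>0$, that $(p_j,a_{m(i)})\in\mathcal{B}_2$ forces $t^{(i)}_{m(i)}>0$, and that membership in any $\mathcal{B}_\ell$ forces $t^{(i)}_k>0$ for the off-match coordinate --- exactly the collapse-of-weighted-means observations you spell out. Your write-up is in fact more explicit than the paper's about why the $\mathcal{B}_3$ constraints do not obstruct $w^{(i)}_{m(i)}=0$ (the companion summand survives) while they do obstruct $w^{(i)}_\ell=0$ (both summands vanish), which is the crux of the asymmetry.
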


\begin{example} {\emph{
    Consider a toy example of two players and two arms. The preference profile for players is $a_1\succ_{p_1}a_2$, $a_2\succ_{p_2}a_1$ and for arms is $p_1\succ_{a_1}p_2$, $p_2\succ_{a_2}p_1$. Thus, the unique stable matching is $p_1-a_1$ and $p_2-a_2$. In Fig. \ref{fig:interesting}, we show the optimal proportion of leaders when the two-sided instance is $\mu=$\(
\left( \begin{smallmatrix} \mu^{(1)}_1 & 1\\1 & 2\end{smallmatrix} \right)
\) and $\eta=$\(
\left( \begin{smallmatrix} \eta^{(1)}_1 & 1\\1 & 2\end{smallmatrix} \right)
\).  We vary $\mu^{(1)}_1,\eta^{(1)}_1\in (1,4]$ and observe instances in which optimal leader proportion is zero. Observe a transition from $w^{\star(1)}_1=0,w^{\star(2)}_2\neq 0$ to $w^{\star(1)}_1\neq 0,w^{\star(2)}_2=0$.}} 
\label{ex:toy}
\end{example} 

\begin{figure}[H]
    \centering
    \captionsetup{font=small} %
    \begin{minipage}{0.32\textwidth}
        \centering
        \includegraphics[width=\linewidth]{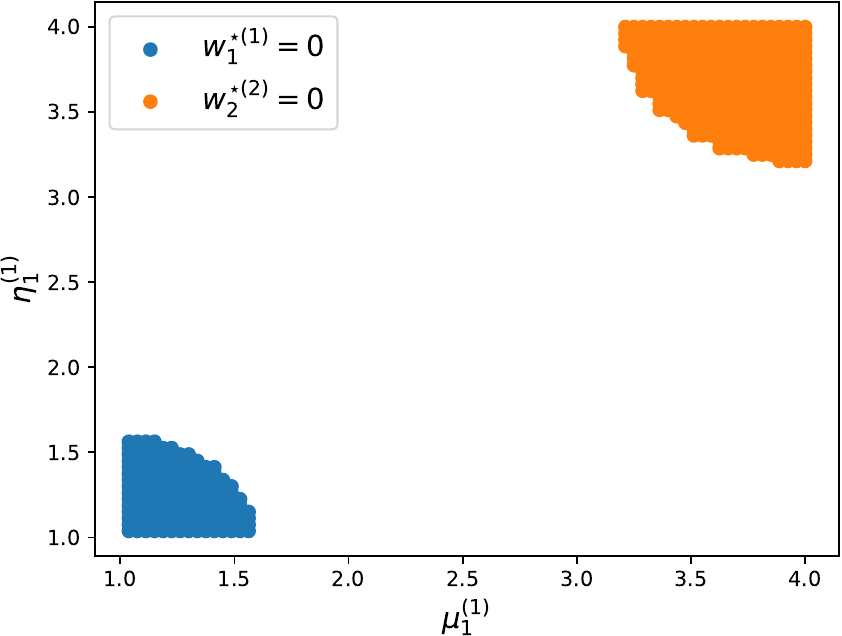}
    \end{minipage} \hfill
    \begin{minipage}{0.32\textwidth}
        \centering
        \includegraphics[width=\linewidth]{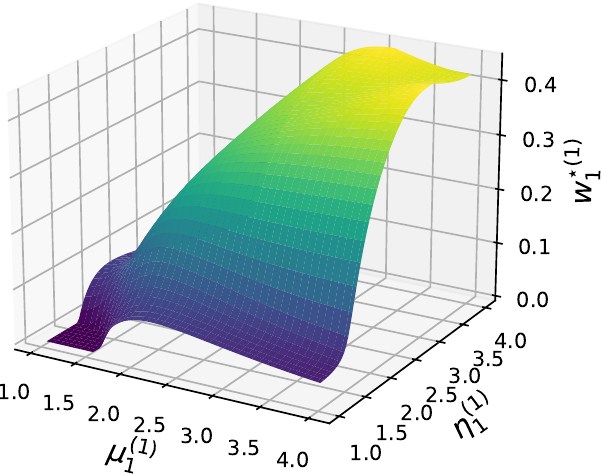}
    \end{minipage} \hfill
    \begin{minipage}{0.32\textwidth}
        \centering
        \includegraphics[width=\linewidth]{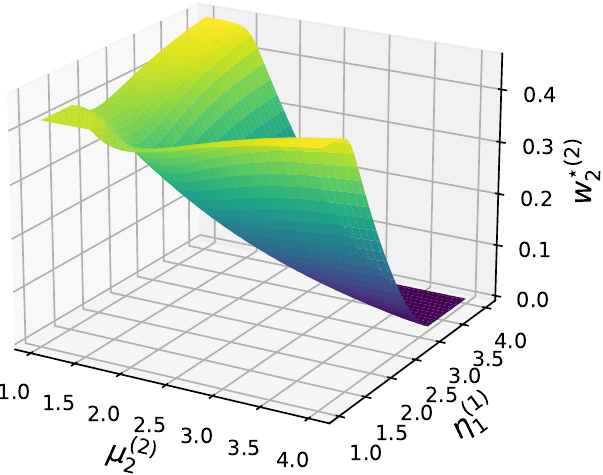}
    \end{minipage}
    \caption{Optimal Proportion of leaders in Example~\ref{ex:toy}}
    \label{fig:interesting}
\end{figure}

\begin{proposition}[First order conditions]
 The solution to the \texttt{\textbf{LO2}} problem satisfies the following first order conditions $\forall(p_i,a_k\neq a_{m(i)}),$
\[
C^{(i)}_{m(i),k}\,\mathbf{1}_{(p_i,a_k)\in\mathcal B_1}
+ C^{(i,m^{-1}(k))}_{k}\,\mathbf{1}_{(p_i,a_k)\in\mathcal B_2}
+ \bigl(C^{(i)}_{m(i),k}+C^{(i,m^{-1}(k))}_{k}\bigr)\,
  \mathbf{1}_{(p_i,a_k)\in\mathcal B_3}
= C_\delta
\]
i.e. all the index constraints are tight. For player $p_i:\mathcal{B}_1^{(i)}\cup \{p_j:a_{m(i)}\in \mathcal{B}_2^{(j)}\} \neq \emptyset$, we have $g^{(i)}_{m(i)} = 0$ and for $(p_i,a_k)\in\mathcal{B}_3$, we have $\max\left\{g^{(i)}_{m(i)},g^{(m^{-1}(k))}_{k}\right\} =0$ and $N^{(i)}_{m(i)}g^{(i)}_{m(i)}=0$ where
\begin{align*}
    &g^{(i)}_{m(i)}:=-1+\sum_{a_k\in\mathcal{B}_1^{(i)}}\dfrac{d(\mu^{(i)}_{m(i)} ,x^{(i)}_{m(i),k} )}{d(\mu^{(i)}_{k},x^{(i)}_{m(i),k} )}+\sum_{p_j:m(i)\in \mathcal{B}_2^{(j)}}\dfrac{d(\eta^{(i)}_{m(i)},y^{(j,i)}_{m(i)})}{d(\eta^{(j)}_{m(i)},y^{(j,i)}_{m(i)})}\\
   &+\sum_{k\in\mathcal{B}_3^{(i)}}\dfrac{d(\mu^{(i)}_{m(i)} ,x^{(i)}_{m(i),k} )}{d(\mu^{(i)}_{k},x^{(i)}_{m(i),k} )+d(\eta^{(i)}_{k},y^{(i,m^{-1}(k))}_{k})}+\sum_{p_j:m(i)\in \mathcal{B}_3^{(j)}}\dfrac{d(\eta^{(i)}_{m(i)},y^{(j,i)}_{m(i)})}{d(\mu^{(j)}_{m(i)},x^{(j)}_{m(j),m(i)} )+d(\eta^{(j)}_{m(i)},y^{(j,i)}_{m(i)})}.
\end{align*}
\end{proposition}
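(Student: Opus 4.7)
The plan is to derive both assertions from the Karush--Kuhn--Tucker (KKT) conditions of \texttt{\textbf{LO2}}. The program is convex: by the envelope theorem, each player index $C^{(i)}_{m(i),k}$ equals $\min_{x}\bigl(N^{(i)}_{m(i)}d(\mu^{(i)}_{m(i)},x)+N^{(i)}_{k}d(\mu^{(i)}_{k},x)\bigr)$ evaluated at the optimizer $x^{(i)}_{m(i),k}$, hence is concave in $(N^{(i)}_{m(i)},N^{(i)}_{k})$ as a pointwise infimum of linear forms; the arm index is concave analogously. Therefore each constraint $\{\cdot\ge C_\delta\}$ is a convex superlevel set, and since letting all $N^{(i)}_k\to\infty$ makes every index arbitrarily large, Slater's condition holds and the KKT system is necessary and sufficient for optimality. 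The envelope theorem also yields the clean partial derivatives $\partial C^{(i)}_{m(i),k}/\partial N^{(i)}_{m(i)}=d(\mu^{(i)}_{m(i)},x^{(i)}_{m(i),k})$, $\partial C^{(i)}_{m(i),k}/\partial N^{(i)}_{k}=d(\mu^{(i)}_{k},x^{(i)}_{m(i),k})$, and analogous formulae for $C^{(j,i)}_{m(j)}$. I attach nonnegative multipliers $\alpha^{(i)}_k,\beta^{(i)}_k,\gamma^{(i)}_k$ on the $\mathcal{B}_1,\mathcal{B}_2,\mathcal{B}_3$ constraints respectively, and $\nu^{(i)}_k$ on $N^{(i)}_k\ge 0$.

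For the tightness claim, the key observation is that for $k\ne m(i)$ the variable $N^{(i)}_k$ participates in \emph{exactly one} index constraint (the $\mathcal{B}_\ell$-constraint of the pair $(p_i,a_k)$, for $\ell\in\{1,2,3\}$). Feasibility forces $N^{(i)}_k>0$: if $N^{(i)}_k=0$, then $x^{(i)}_{m(i),k}$ (respectively $y^{(m^{-1}(k),i)}_k$) collapses to a single true mean, all KL terms vanish, and the index reduces to $0<C_\delta$ (in the $\mathcal{B}_3$ case both summands vanish simultaneously). Hence $\nu^{(i)}_k=0$ and KKT stationarity in $N^{(i)}_k$ determines the unique positive values
\[
\alpha^{(i)}_k=\frac{1}{d(\mu^{(i)}_k,x^{(i)}_{m(i),k})},\quad
\beta^{(i)}_k=\frac{1}{d(\eta^{(i)}_k,y^{(m^{-1}(k),i)}_{k})},\quad
\gamma^{(i)}_k=\frac{1}{d(\mu^{(i)}_k,x^{(i)}_{m(i),k})+d(\eta^{(i)}_k,y^{(m^{-1}(k),i)}_{k})}.
\]
Strict positivity combined with complementary slackness on the index constraint yields the first displayed equation of the Proposition.

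The anchor identities come from stationarity in the \emph{diagonal} variable $N^{(i)}_{m(i)}$. This variable enters $C^{(i)}_{m(i),k}$ for every $k\in\mathcal{B}_1^{(i)}\cup\mathcal{B}_3^{(i)}$ and the arm index at $a_{m(i)}$ for every $j$ with $m(i)\in\mathcal{B}_2^{(j)}\cup\mathcal{B}_3^{(j)}$. Writing the stationarity equation, substituting the three multiplier expressions above, and recognizing the resulting four sums as the definition of $g^{(i)}_{m(i)}+1$, gives
\[
1+g^{(i)}_{m(i)}+\nu^{(i)}_{m(i)}=1,\qquad\text{i.e.}\qquad \nu^{(i)}_{m(i)}=-g^{(i)}_{m(i)}\ge 0.
\]
So $g^{(i)}_{m(i)}\le 0$ always, and complementary slackness on $N^{(i)}_{m(i)}\ge 0$ reads $N^{(i)}_{m(i)}\,g^{(i)}_{m(i)}=0$. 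When $\mathcal{B}_1^{(i)}\cup\{p_j:m(i)\in\mathcal{B}_2^{(j)}\}\ne\emptyset$, there is a pure player- or pure arm-index constraint in which $N^{(i)}_{m(i)}$ is the sole diagonal variable; the same collapse argument as above forces $N^{(i)}_{m(i)}>0$, hence $g^{(i)}_{m(i)}=0$. For $(p_i,a_k)\in\mathcal{B}_3$ with $j=m^{-1}(k)$, the sum-constraint $C^{(i)}_{m(i),k}+C^{(j,i)}_k\ge C_\delta$ collapses to $0$ only when \emph{both} $N^{(i)}_{m(i)}$ and $N^{(j)}_{m(j)}$ vanish; so at least one of them is positive, giving $\max\{g^{(i)}_{m(i)},g^{(m^{-1}(k))}_{k}\}=0$.

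The main obstacle is the combinatorial bookkeeping in the third paragraph: a single $\mathcal{B}_3$-multiplier $\gamma^{(i)}_k$ contributes to \emph{two} different diagonal stationarity equations, one for $N^{(i)}_{m(i)}$ and one for $N^{(j)}_{m(j)}$, with asymmetric denominators that mix $x$- and $y$-divergences. Verifying that these contributions assemble into exactly the four sums appearing in the stated formula for $g^{(i)}_{m(i)}$ is mechanical but error-prone, and is the step requiring genuine care rather than routine calculation.
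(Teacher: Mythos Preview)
Your proposal is correct and follows essentially the same KKT route as the paper: set up the Lagrangian, use feasibility to force $N^{(i)}_k>0$ for $k\neq m(i)$, read off strictly positive multipliers from the off-diagonal stationarity to get tightness, and then obtain $\nu^{(i)}_{m(i)}=-g^{(i)}_{m(i)}$ from the diagonal stationarity together with complementary slackness. Your write-up is in fact a bit more explicit than the paper's (convexity via the envelope argument, Slater's condition, the ``exactly one constraint'' observation for $N^{(i)}_k$, and the closed-form multipliers), but the underlying argument is identical.
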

\paragraph{Anchored-Top-Two Algorithm} Based on the first-order conditions, we design our algorithm. For ease of exposition, we present the algorithm for $|\mathcal{P}|=|\mathcal{A}|$, and present a general algorithm in the appendix. Given an exploration parameter $\gamma\in(0,1)$, and a stopping rule, the algorithm at iteration $N$ proceeds as follows till the stopping rule is satisfied.\\ Denote $\tilde{C}^{(i)}_{\texttt{TS},\min}:=\min\Big\{\min_{a_k:(p_i,a_k)\in \mathcal{B}_1}C^{(i)}_{m(i),k},\min_{a_k:(p_i,a_k)\in \mathcal{B}_2}C^{(i,m^{-1}(k))}_k,\\
\min_{a_k:(p_i,a_k)\in \mathcal{B}_3}C^{(i)}_{m(i),k}+C^{(i,m^{-1}(k))}_k\Big\}$ and $N^{(i)}=\sum_{k}N^{(i)}_k$
 \setlength{\BoxH}{2.4cm}  
 \vspace{-10pt}
\begin{figure}[H]
    \centering
    \begin{minipage}[t]{0.6\textwidth}
        \centering
       {\setcounter{algocf}{1}
\SetAlgorithmName{Subroutine}{subroutine}{List of Subroutines}
\begin{minipage}[t]{\textwidth}
            \begin{minipage}[t]{0.43\textwidth}
                \begin{algorithm}[H]
                \small
                \uIf{$ \Tilde{g}^{(i)}_{\hat{m}(i)}>0$}{
                \textcolor{red}{Match $p_{i}$ with $\hat{m}(i)$}
                }
                \Else{
                \textcolor{orange}{Match $p_{i}$ with $a_{k}$}
                }
                \caption*{}
                \end{algorithm}
            \end{minipage}
            \hfill
            \begin{minipage}[t]{0.53\textwidth}
               {\setcounter{algocf}{2}
\SetAlgorithmName{Subroutine}{subroutine}{List of Subroutines}
\begin{algorithm}[H]
                \small
                \uIf{$ \Tilde{g}^{(\hat{m}^{-1}(k))}_{k}>0$}{
                \textcolor{blue}{Match $\hat{m}^{-1}(k)$ with $a_k$}
                }
                \Else{
                \textcolor{orange}{Match $p_{i}$ with $a_{k}$}
                }
                \caption*{}
                \end{algorithm}}
            \end{minipage}
        \end{minipage}}

        \vspace{0.5em}
        \begin{minipage}[t]{\textwidth}
            \centering
            \resizebox{0.5\linewidth}{!}{%
    
    
    
    
\centering
\begin{tikzpicture}
    \definecolor{redline}{RGB}{220, 50, 50}
    \definecolor{blueline}{RGB}{50, 100, 220}
    \node (p1) at (0,1) {\( p_i \)};
    \node (p2) at (-0.5,0) {\( m^{-1}(a_k) \)};
    \node (a1) at (3,1) {\( m(p_i) \)};
    \node (a2) at (3,0) {\( a_k \)};
    
    \draw[redline, thick] (p1) -- (a1);
    \draw[orange, dashed] (p1) -- (a2);
    \draw[blueline, thick] (p2) -- (a2);
\end{tikzpicture}
            }
        \end{minipage}
    \end{minipage}
    \hfill
    \begin{minipage}[t]{0.38\textwidth}
     {\setcounter{algocf}{3}
\SetAlgorithmName{Subroutine}{subroutine}{List of Subroutines}
        \begin{algorithm}[H]
        \small
        \uIf{$ \max\{\Tilde{g}^{(i)}_{\hat{m}(i)},\Tilde{g}^{(\hat{m}^{-1}(k))}_{k}\}<0$}{
        \textcolor{orange}{Match $p_{i}$ with $a_{k}$}
        }
        \Else{
            \uIf{$\Tilde{g}^{(i)}_{\hat{m}(i)}>\Tilde{g}^{(\hat{m}^{-1}(k))}_{k}$}{
            \textcolor{red}{Match $p_{i}$ with $\hat{m}(i)$}
            }
            \Else{
            \textcolor{blue}{Match $\hat{m}^{-1}(k)$ with $a_k$}
            }
        }
        \caption*{}
        \end{algorithm}}
    \end{minipage}
\end{figure}

\begin{enumerate}[leftmargin=*, nosep,noitemsep, topsep=0pt]
    \item $\hat{m}\leftarrow \texttt{DA}_{\texttt{Arm}}(\hat{\mu},\hat{\eta})$ and
    construct the sets $\mathcal{E}_{\hat{m}}^{(i)}:=\mathcal{D}_{\hat{m}}^{(i)}\cup \UMA_{\hat{m}}\ \forall p_i\in\mathcal{P}$
    \item Choose $i_t$ from the \textcolor{blue}{player choosing rule}
\begin{itemize}[leftmargin=*, nosep,noitemsep, topsep=0pt,label=\textcolor{blue}{\rule{0.6ex}{0.6ex}}]
\item If $\min_{i\in\mathcal{P}} N^{(i)}<N^{\gamma}: i_t\leftarrow \arg\min_{i\in\mathcal{P}} N^{(i)}$, Else: $i_t\leftarrow \arg\min_{i\in\mathcal{P}} \tilde{C}^{(i)}_{\texttt{TS},\min}$
\end{itemize}
\item Choose $k_t$ from the \textcolor{red}{arm choosing rule} for player $p_{i_t}$: 
\begin{itemize}[leftmargin=*, nosep,noitemsep, topsep=0pt,label=\textcolor{red}{\rule{0.6ex}{0.6ex}}]
\item If $\min_{k\in\mathcal{A}} N_k^{(i_t)}<\left(N^{(i_t)}\right)^{\gamma}: k_t\leftarrow \arg\min_{k\in\mathcal{A}} N^{(i_t)}_k$
\item Else: Let the minimum index correspond to $(p_{i_t},a_{k_t})$. If $(p_{i_t},a_{k_t})\in\mathcal{B}_{\texttt{I}}$ follow Subroutine \texttt{I}
\end{itemize}
    \item Match $p_{i_t}$ with $a_{k_t}$ and observe $X^{(i)}_k\sim\texttt{SPEF}(\mu^{(i)}_k)$, update $N^{(i)}_k$ and $\hat{\mu}^{(i)}_k$
\end{enumerate}

Sampling the leader corresponding to the largest anchor function is motivated by complementary slackness condition, e.g., if $g^{(i)}_{m(i)}>g^{(j)}_{m(j)}$, the leader pair $(p_j,a_{m(j)})$ receives no samples and thus $t^{(j)}_{m(j)}\approx 0$ further $g^{(i)}_{m(i)}$ is pushed towards 0 if it is positive. Let $h^{(i)}_{m(j)}:=\max\small{\{g^{(i)}_{m(i)},g_{m(j)}^{(j)}\}}$.

\textbf{Fluid Dynamics}: We now briefly discuss the fluid dynamics, which is an extension of one-sided dynamics. Suppose that the minimum index pair $(p_i,a_{m(j)})\in\mathcal{B}_3$. If $h^{(i)}_{m(j)}<0$, $t^{(i)}_{m(j)}$ increases, till $h^{(i)}_{m(j)}=0$. If $h^{(i)}_{m(j)}>0$, then the leader corresponding to the larger anchor function increases, till $h^{(i)}_{m(j)}=0$. Further, if the anchor functions are equal, depending on the allocation, they maintain together, else, they separate out. When $h^{(i)}_{m(j)}=0$, it remains 0, by increasing both the leader(s) and the challenger. The explicit ODEs can be found in the Appendix.

\section{Experimental Results}
In Fig.~\ref{fig:allthree}, we show the algoarithms' dynamics which shows the convergence to the first-order conditions. We consider a 3x3 market with a distinct preference setting. In \ref{fig:onesidedlfuid}, we present the dynamics of the anchor function and the idealized fluid setting, in addition to the dynamics of the normalized index. In \ref{fig:twosidedfluid}, we show the dynamics of the anchor function, it can be observed that \texttt{ATT2} converges to the optimal negative value of the anchor function for $p_1$, albeit slowly, but identifies it as negative quite early. 
\vspace{-10pt}
\begin{figure}[H]
    \centering

    \begin{subcaptionbox}{\texttt{One-sided learning}\label{fig:onesidedlfuid}}[0.64\linewidth]
        {
        \centering
        \begin{subfigure}[b]{0.49\linewidth}
            \includegraphics[width=\linewidth]{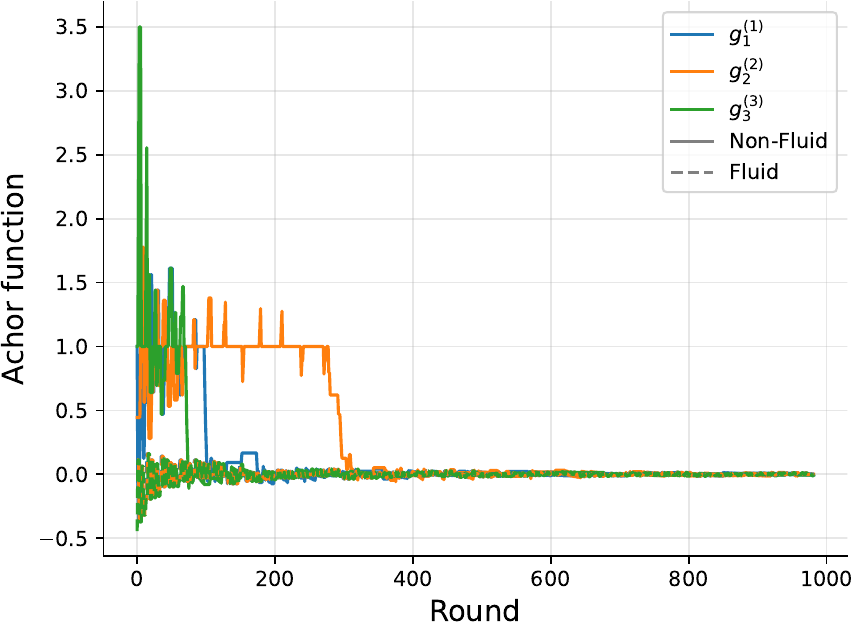}
        \end{subfigure}
        \hfill
        \begin{subfigure}[b]{0.49\linewidth}
            \includegraphics[width=\linewidth]{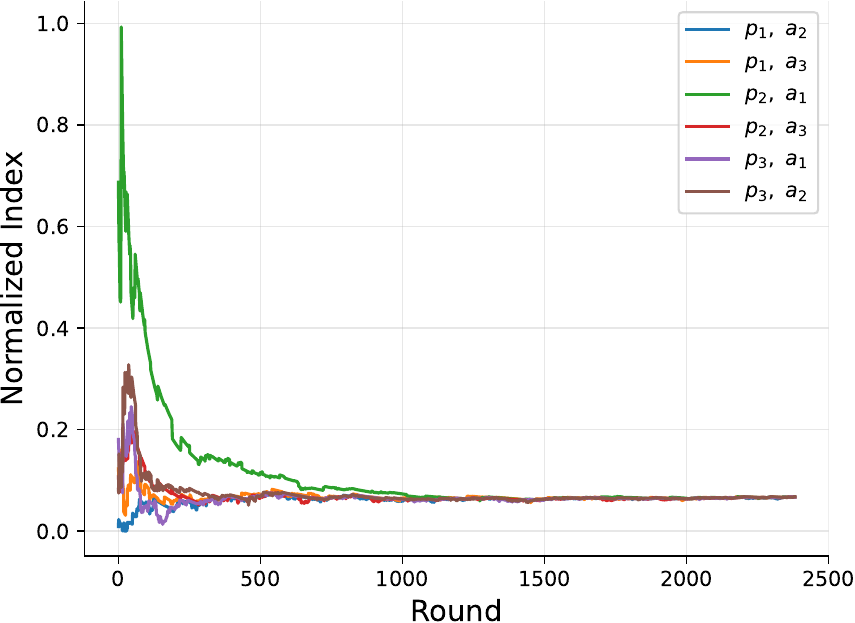}
        \end{subfigure}
        }
    \end{subcaptionbox}
    \hfill
    \begin{subcaptionbox}{\texttt{Two-sided learning}\label{fig:twosidedfluid}}[0.32\linewidth]
        {
        \includegraphics[width=\linewidth]{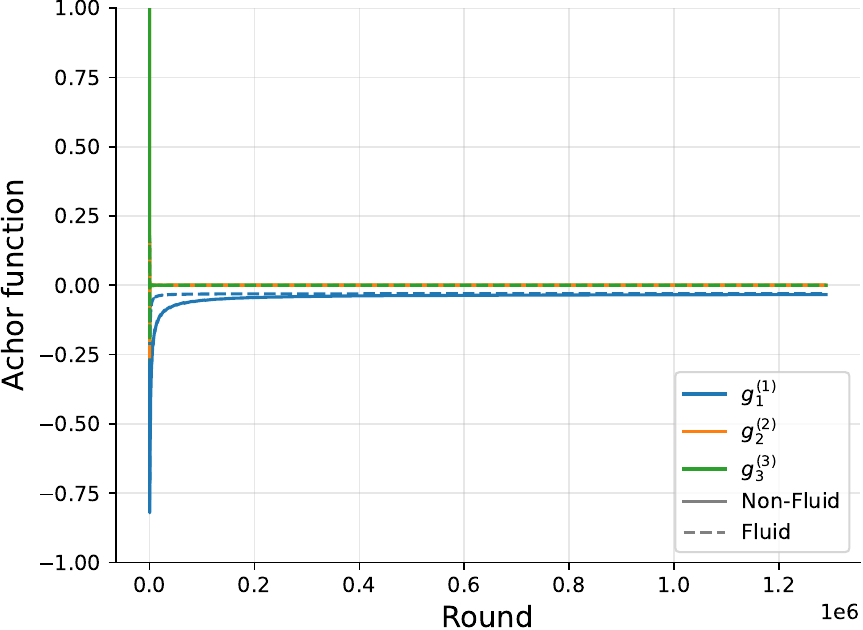}
        }
    \end{subcaptionbox}
    \caption{\texttt{ATT1} and \texttt{ATT2} dynamics}
    \label{fig:allthree}
\end{figure}
\begin{table}[H]
    \centering
  {\small\setlength{\tabcolsep}{8pt}\renewcommand{\arraystretch}{1.25}   \begin{tabular}{|c|c|c|c|c|}
    \hline
         {\small Learning Model  $\rightarrow$}&  \multicolumn{2}{c|}{One-sided}  & \multicolumn{2}{c|}{Two-Sided} \\\hline
        \diagbox[width=9em]{\scriptsize Instance $\downarrow$}{ \scriptsize
        Algorithm $\rightarrow$}   &  \texttt{ATT1} & $\bar{\beta}$-\texttt{EB-TC} &  \texttt{ATT2} & $(\bar{\alpha},\bar{\beta},\bar{\delta})$-\texttt{EB-TC} \\\hline
       \texttt{Distinct}  & $1008.31\pm 0.89$& $1029.36\pm0.89$ & $337.20\pm 0.33$ & $345.59\pm 0.32$ \\\hline
       \texttt{Serial}  & $1459.37\pm1.09$& $1518.94\pm1.16$&$1213.9\pm 1.96$& $1298.89\pm 1.26$\\\hline
        \texttt{SPC} & $1917.97\pm 1.30$& $2015.82\pm1.37$& $1433.01\pm 1.29$ & $1472.29\pm 1.29$\\\hline
    \end{tabular}}
    \caption{Stopping time for 5x5}
    \label{tab:one-sided}
\end{table}
To understand the variation in sample complexity or the stopping time, we run the proposed algorithms for 5000 runs and report the sample average and standard deviation in Table~\ref{tab:one-sided}. We set the exploration parameter $\gamma=0.25$ and the confidence parameter $\delta=0.001$. We consider a market of size 5x5 with market instances satisfying the unique stable matching properties, i.e. distinct first preferences, serial dictatorship, and sequential preference condition (SPC). We consider Gaussian instance with variance as $1$ and the means for each player and arm $\in\{2,2.5,3.5,5,7\}$ ordered according to the preferences. We compare the performance with extension of $\overline{\beta}$-\texttt{EB}-\texttt{TC} \cite{jourdan2022top}, which involves $\beta$ for each leader for one-sided learning, and for three types of leaders in two-sided learning, which we set as $0.5$. We demonstrate additional experiments in the Appendix that demonstrate superior performance of \texttt{ATT} and also show that $\beta$-algorithms maintain 2-competitiveness for $\beta=1/2$.

\section{Conclusion}
We introduced a stable matching identification problem and designed computationally efficient Top-Two algorithms. We considered two-learning setups, one-sided learning in which we presented fluid dynamics that the algorithm tracks and prove that it is also asymptotically optimal. For two-sided learning, we proposed an extension of the algorithm which we show can have surprising property and very involved dynamics, and in special cases, we also derive the fluid ODEs. We demonstrate superior performance and also experimentally show that the algorithm is asymptotically optimal in both learning setups. The current research focuses mainly on market instances with unique stable matching, we comment on our trails and open problems in the appendix for the general setting.


\clearpage
\bibliography{aux/ref}

\begin{thebibliography}{10}

\bibitem{ashlagi2017unbalanced}
I.~Ashlagi, Y.~Kanoria, and J.~D. Leshno.
\newblock Unbalanced random matching markets: The stark effect of competition.
\newblock {\em Journal of Political Economy}, 125(1):69--98, 2017.

\bibitem{bandyopadhyay2024optimal}
A.~Bandyopadhyay, S.~K. Juneja, and S.~Agrawal.
\newblock Optimal top-two method for best arm identification and fluid analysis.
\newblock In {\em The Thirty-eighth Annual Conference on Neural Information Processing Systems}, 2024.

\bibitem{pmlr-v139-basu21a}
S.~Basu, K.~A. Sankararaman, and A.~Sankararaman.
\newblock Beyond $log^2(t)$ regret for decentralized bandits in matching markets.
\newblock In M.~Meila and T.~Zhang, editors, {\em Proceedings of the 38th International Conference on Machine Learning}, volume 139 of {\em Proceedings of Machine Learning Research}, pages 705--715. PMLR, 18--24 Jul 2021.

\bibitem{Boyd_Vandenberghe_2004}
S.~Boyd and L.~Vandenberghe.
\newblock {\em Convex Optimization}.
\newblock Cambridge University Press, 2004.

\bibitem{shah}
S.~H. Cen and D.~Shah.
\newblock Regret, stability \&; fairness in matching markets with bandit learners.
\newblock In G.~Camps-Valls, F.~J.~R. Ruiz, and I.~Valera, editors, {\em Proceedings of The 25th International Conference on Artificial Intelligence and Statistics}, volume 151 of {\em Proceedings of Machine Learning Research}, pages 8938--8968. PMLR, 28--30 Mar 2022.

\bibitem{degenne2019pure}
R.~Degenne and W.~M. Koolen.
\newblock Pure exploration with multiple correct answers.
\newblock {\em Advances in Neural Information Processing Systems}, 32, 2019.

\bibitem{Echenique_Immorlica_Vazirani_2023}
F.~Echenique, N.~Immorlica, and V.~V. Vazirani, editors.
\newblock {\em Online and Matching-Based Market Design}.
\newblock Cambridge University Press, Cambridge, 2023.

\bibitem{garivier2016optimal}
A.~Garivier and E.~Kaufmann.
\newblock Optimal best arm identification with fixed confidence.
\newblock In {\em Conference on Learning Theory}, pages 998--1027. PMLR, 2016.

\bibitem{ghosh}
A.~Ghosh, A.~Sankararaman, K.~Ramchandran, T.~Javidi, and A.~Mazumdar.
\newblock Competing bandits in non-stationary matching markets.
\newblock {\em IEEE Transactions on Information Theory}, 70(4):2831--2850, 2024.

\bibitem{hosseini2024putting}
H.~Hosseini, S.~Roy, and D.~Zhang.
\newblock Putting gale \& shapley to work: Guaranteeing stability through learning.
\newblock In {\em The Thirty-eighth Annual Conference on Neural Information Processing Systems}, 2024.

\bibitem{jourdan2022top}
M.~Jourdan, R.~Degenne, D.~Baudry, R.~de~Heide, and E.~Kaufmann.
\newblock Top two algorithms revisited.
\newblock In A.~H. Oh, A.~Agarwal, D.~Belgrave, and K.~Cho, editors, {\em Advances in Neural Information Processing Systems}, 2022.

\bibitem{juneja2019sample}
S.~Juneja and S.~Krishnasamy.
\newblock Sample complexity of partition identification using multi-armed bandits.
\newblock In {\em Conference on Learning Theory}, pages 1824--1852. PMLR, 2019.

\bibitem{KARPOV201963}
A.~Karpov.
\newblock A necessary and sufficient condition for uniqueness consistency in the stable marriage matching problem.
\newblock {\em Economics Letters}, 178:63--65, 2019.

\bibitem{kaufmann2020contributions}
E.~Kaufmann.
\newblock {\em Contributions to the Optimal Solution of Several Bandit Problems}.
\newblock PhD thesis, Universit{\'e} de Lille, 2020.

\bibitem{soda}
F.~Kong and S.~Li.
\newblock {\em Player-optimal Stable Regret for Bandit Learning in Matching Markets}, pages 1512--1522.

\bibitem{liu2020competing}
L.~T. Liu, H.~Mania, and M.~Jordan.
\newblock Competing bandits in matching markets.
\newblock In {\em International Conference on Artificial Intelligence and Statistics}, pages 1618--1628. PMLR, 2020.

\bibitem{liu2021decentralized}
L.~T. Liu, F.~Ruan, H.~Mania, and M.~I. Jordan.
\newblock Bandit learning in decentralized matching markets.
\newblock {\em Journal of Machine Learning Research}, 22(211):1--34, 2021.

\bibitem{maheshwari2022decentralized}
C.~Maheshwari, S.~Sastry, and E.~Mazumdar.
\newblock Decentralized, communication-and coordination-free learning in structured matching markets.
\newblock {\em Advances in Neural Information Processing Systems}, 35:15081--15092, 2022.

\bibitem{min2022learn}
Y.~Min, T.~Wang, R.~Xu, Z.~Wang, M.~Jordan, and Z.~Yang.
\newblock Learn to match with no regret: Reinforcement learning in markov matching markets.
\newblock {\em Advances in Neural Information Processing Systems}, 35:19956--19970, 2022.

\bibitem{pagaretwosided}
T.~Pagare and A.~Ghosh.
\newblock Explore-then-commit algorithms for decentralized two-sided matching markets.
\newblock In {\em 2024 IEEE International Symposium on Information Theory (ISIT)}, pages 2092--2097, 2024.

\bibitem{pmlr-v178-open-problem-qin22a}
C.~Qin.
\newblock Open problem: Optimal best arm identification with fixed-budget.
\newblock In P.-L. Loh and M.~Raginsky, editors, {\em Proceedings of Thirty Fifth Conference on Learning Theory}, volume 178 of {\em Proceedings of Machine Learning Research}, pages 5650--5654. PMLR, 02--05 Jul 2022.

\bibitem{roth2015gets}
A.~Roth.
\newblock {\em Who Gets What--and why: The New Economics of Matchmaking and Market Design}.
\newblock An Eamon Dolan book. Houghton Mifflin Harcourt, 2015.

\bibitem{russo2020simple}
D.~Russo.
\newblock Simple bayesian algorithms for best-arm identification.
\newblock {\em Operations Research}, 68(6):1625--1647, 2020.

\bibitem{sankararaman2021dominate}
A.~Sankararaman, S.~Basu, and K.~A. Sankararaman.
\newblock Dominate or delete: Decentralized competing bandits in serial dictatorship.
\newblock In {\em International Conference on Artificial Intelligence and Statistics}, pages 1252--1260. PMLR, 2021.

\bibitem{you2023information}
W.~You, C.~Qin, Z.~Wang, and S.~Yang.
\newblock Information-directed selection for top-two algorithms.
\newblock In {\em The Thirty Sixth Annual Conference on Learning Theory}, pages 2850--2851. PMLR, 2023.

\end{thebibliography}
\bibliographystyle{abbrv}

\clearpage
\appendix
\begin{center}
    {\Large{\textbf{Appendix}}}
\end{center}

\section{Outline}
\begin{itemize}
    \item In \ref{app:pseudocode} we present pseudocodes of the \texttt{ATT1} and \texttt{ATT2} algorithm 
    \item In \ref{app:firstorder} we prove the first-order conditions of one-sided and two-sided learning corresponding to the convex program for the lower bound problem
    \item In \ref{app:stoppingrule} we prove the $\delta$-correctness of the \texttt{GLLR} stopping rule which follows majorly from \cite{kaufmann2020contributions}
    \item In \ref{app:fluid-one-sided} we prove the existence fluid dynamics and ODEs for allocations for one-sided learning
    \item In \ref{app:nonfluid-one-sided} we prove asymptotic optimality of \texttt{ATT1} algorithm for one-sided learning. The key observation here is to use the convergence from \cite{bandyopadhyay2024optimal} for the local time scale i.e. each player $p_i$ which is essentially a Best-Arm identification problem
    \item In \ref{app:fluid-two-sided} we provide the fluid dynamics ODEs for two-sided learning for 2x2 example with distinct preferences and a general case with two-sided serial dictatorship preferences
    \item In \ref{app:beta-algos} we provide the $\beta$-Top-Two algorithms and provide the competitive ratio following \cite{russo2020simple}
    \item In \ref{app:add-exps} we provide experimental details for experiments in the main paper along with additional experiments 
    \item In \ref{app:multistable} we provide insights about the problem under multiple stable matching
\end{itemize}

\textit{Comparison to recent work}: \cite{hosseini2024putting} considers a similar problem, but it distinguishes from our work as it considers the problem in a fixed-budget setting where the objective is to minimize the probability of mistake so that we stop before the given budget $T$ whereas our problem considers minimizing the expected sample complexity such that probability of mistake is less than $\delta$.  In particular, \cite{hosseini2024putting} (a) lacks a stopping rule dependent on $\delta$ but instead uses \texttt{LCB} and \texttt{UCB} estimates (b) \cite[Section 4.2]{hosseini2024putting} shows that their guarantee on $\delta-$ (or $\alpha-$) correctness depend on the choice of $\beta$ the parameter used in computing confidence bounds (c) their algorithms does not achieve optimal rate for small $\delta$ which they also list as one of their future directions, further, from Theorem 5 the ratio of sample complexity by $\log(1/\delta)$ is $\Tilde{\mathcal{O}}(\texttt{ES}(\underline{m})/\Delta^2)$ which can be large as compared to $T^{\star}$ in our paper (d) they do not consider two-sided learning as their algorithm necessarily requires one side to know the preference ranking. We would like to highlight that in bandit literature the obtaining tight lower bound and an algorithm matching it is an open problem in the fixed-budget setting \cite{pmlr-v178-open-problem-qin22a}.\\
 Regret minimizing algorithms, which rely on explore-commit, will suffer in fixed-confidence setting as the exploration is uniform and allocation to the suboptimal match can deviate a lot from the optimal allocation. Hence, we do not directly compare our derived algorithms with these works. 

\textit{Motivation for Centralized Learning}:
Economic literature on Market Design tells us that centralized mechanisms are often solutions to real-world market failures. Specifically, they can overcome problems of congestion, where the sheer volume of offers becomes unmanageable, and market thinning, which occurs when transactions unravel to be too early  \cite{roth2015gets}. A classic example is the National Resident Matching Program (NRMP), which was established to remedy the inefficiencies of the preceding decentralized market for medical residents. In that earlier system, the market unraveled as hospitals made offers progressively earlier each year. This practice forced applicants into rushed decisions on ``exploding offers'' with incomplete information, frequently leading to unstable and suboptimal matches.
     
\textit{Extension to Decentralized Learning}: Our one-sided learning algorithm can be easily extended in a decentralized mechanism as follows: Suppose we assume the existence of a preference query model for each arm, which imitates the matching process, i.e., upon multiple proposals. In that case, it gives 1 feedback to the most preferred proposal and 0 to others. As with previous literature, synchronization of all players is also assumed. Note that the Deferred Acceptance algorithm can be done in a decentralized fashion using this query model. \\
    The sets $\mathcal{D}^{(i)}_m$ can also be found in a decentralized manner using the query model as follows, under some pre-agreed ordering of players and arms. 
    \begin{itemize}[leftmargin=*, nosep,noitemsep, topsep=0pt]
        \item For $p_i\in\{p_1,\ldots,p_N\}$ and $k\in \{1,\ldots,N\}\backslash\{
    \hat{m}(i)\}$: $p_j\neq p_i$ queries $a_{\hat{m}(i)}$ and $p_i$ queries $a_k$
    \item For all players $p_{\ell}$ if it receives $1$, he adds the queried arm in $\mathcal{D}^{(\ell)}_{\hat{m}}$ if the arm is different from $a_{\hat{m}(\ell)}$
    \end{itemize}
Each player then communicates its minimum index value, and the minimum indexed player is then chosen. Based solely on its own information, player similarly decides the arm as in our paper. Note that this is not a unique extension; there may be multiple ways to design a decentralized mechanism based on various assumptions made in previous works, e.g., \cite{liu2021decentralized} assumes that after each round all players know the entire matching and also assumes that all players know, for each arm, which players are ranked higher than themselves.

$$
\begin{array}{l|l}
\hline
\multicolumn{2}{c}{\textbf{General Notation}} \\
\hline
\textbf{Symbol} & \textbf{Description} \\
\hline
\mathcal{P}, \mathcal{A} & \text{Sets of } M \text{ players and } K \text{ arms, respectively with $M\leq K$} \\
\succ_{p_i}, \succ_{a_k} & \text{A preference order for player}\ p_i\ \text{and arm}\ a_k\\  
\zeta & \text{A preference profile for players and arms which includes}\ \succ_{p_i}, \succ_{a_k}\ \forall i,k \\
m(\cdot) & \parbox[t]{12cm}{A bijective matching function from players $\mathcal{P}$ to arms $\mathcal{A}$ with inverse $m^{-1}(\cdot)$ defining a map from arms $\mathcal{A}$ to players $\mathcal{P}$}  \\
\delta & \text{The pre-specified confidence parameter} \\
\mathcal{M}_{\zeta} & \text{The set of stable matchings under preference profile } \zeta \\
\neg m & \text{set of instances (one- or two-sided) in which matching $m$ is unstable}\\
D(\mu_1,\mu_2) & \text{KL-divergence between two distributions} \\
N_k^{(i)} & \text{Number of times the pair } (p_i, a_k) \text{ is matched} \\
w_k^{(i)} & \text{Proportion of samples allocated to the pair } (p_i, a_k) \\
\tau_{\delta} & \text{The stopping time of the algorithm with confidence parameter $\delta$} \\
\hline
\multicolumn{2}{c}{\textbf{One-Sided Learning}} \\
\hline
X_k^{(i)} & \text{Random reward variables for the pair $(p_i, a_k)$ indicating $p_i$'s preference}  \\
\mu^{(i)}_k & \text{Expected reward for the pair $(p_i, a_k)$ indicating $p_i$'s preference}\ \E[X_k^{(i)}] \\
\hat{\mu}^{(i)}_k & \text{Empirical estimate of $\mu^{(i)}_k$}\\
\hat{m} & \text{The estimated stable matching based on empirical means} \\
T^{\star}(\mu) & \text{The asymptotic lower bound on sample complexity} \\
\mathcal{E}_m^{(i)} &  \parbox[t]{12cm}{Set of probable blocking arms for player $p_i$ under stable matching $m$ and is equal to $\mathcal{D}^{(i)}_m\cup \mathcal{A}^{\texttt{UM}}_m$ where $\mathcal{D}^{(i)}_m:=\{a_k:p_i\succ_{a_k}m^{-1}(a_k)\}$ and $\mathcal{A}^{\texttt{UM}}_m$ defined as the set of unmatched arms}  \\
C_{m(i),k}^{(i)} & \text{The index function for a potential blocking pair $(p_i, a_k$) defined when $a_k\in \mathcal{E}_m^{(i)}$}  \\
g_{m(i)}^{(i)} & \text{The anchor function for player } p_i \\
\Bar{\beta} & :=\parbox[t]{12cm}{$(\beta^{(1)},\ldots, \beta^{(M)})$ where $\beta^{(i)}$ is a parameter for $p_i$ in the $\Bar{\beta}$-Top-Two algorithm associated with $\frac{w^{(i)}_{m(i)}}{\sum_kw^{(i)}_k}$} \\
\hline
\multicolumn{2}{c}{\textbf{Two-Sided Learning}} \\
\hline
Y_k^{(i)} & \text{Random reward variables for the pair $(p_i, a_k)$ indicating $a_k$'s preference}  \\
\eta^{(i)}_k & \text{Expected reward for the pair $(p_i, a_k)$ indicating $a_i$'s preference}\ \E[Y_k^{(i)}] \\
\hat{\eta}^{(i)}_k & \text{Empirical estimate of $\eta^{(i)}_k$}\\
T^{\star}(\mu,\eta) & \text{The asymptotic lower bound on sample complexity} \\
\mathcal{B}_1(m) & \text{A set of elements $(p_i,a_k)$ s.t. $\mu^{(i)}_{m(i)}>\mu^{(i)}_{k}$ and $\eta^{(i)}_k>\eta^{m^{-1}(k)}_k$}\\
\mathcal{B}_2(m) & \text{A set of elements $(p_i,a_k)$ s.t. $\mu^{(i)}_{m(i)}<\mu^{(i)}_{k}$ and $\eta^{(i)}_k>\eta^{m^{-1}(k)}_k$}\\
\mathcal{B}_3(m) & \text{A set of elements $(p_i,a_k)$ s.t. $\mu^{(i)}_{m(i)}<\mu^{(i)}_{k}$ and $\eta^{(i)}_k<\eta^{m^{-1}(k)}_k$}\\
C_{m(i),k}^{(i)} & \text{The index function for a pair $(p_i, a_k$) defined when $\mu^{(i)}_{m(i)}>\mu^{(i)}_{k}$}  \\
C^{(i,m^{-1}(k))}_{m(i)} & \text{The index function for a pair $(p_i, a_k$) defined when $\eta^{(m^{-1}(k))}_{k}>\eta^{(i)}_{k}$}\\
\mathcal{B}_\ell^{(i)} & \text{A set of arms $a_k$ such that $(p_i,a_k)\in\mathcal{B}_\ell$  for $\ell\in\{1,2,3\}$}\\
\texttt{DA}_{\texttt{Arm}}(\cdot, \cdot\cdot) & \parbox[t]{12cm}{a matching output of Arm-proposing Deferred acceptance algorithm where $\cdot$ is an estimated mean matrix or a preference profile $\succ$ for players and $\cdot\cdot$ is similarly for arms} \\
h^{(i)}_{k} & := \max\{g^{(i)}_{m(i)},g^{(m^{-1}(k)}_k)\}\ \text{defined for a pair $(p_i,a_k)\in \mathcal{B}_{\ell}$ for $\ell\in\{1,2,3\}$} \\
\hline
\end{array}
$$
\section{Algorithms}
\label{app:pseudocode}
\subsection{One-Sided Learning}

\begin{algorithm}[!h]
\caption{\texttt{ATT1} with $M\leq K$}
Input: preferences of arms over players $\succ$\\
    \For{$t=1,2,\ldots$}{
    $\mathcal{E}_p=\{p_i:\sum_kN^{(i)}_k\leq N^{\gamma}\}$ and $\mathcal{E}^{(i)}_a=\{a_k:N^{(i)}_k\leq \left(\sum_kN^{(i)}_k\right)^{\gamma}\}\ \forall p_i\in \mathcal{E}_p$\\
    Let $\mathcal{E}:=\{(p_i,a_k):p_i\in \mathcal{E}_p\ \text{and}\ a_k\in\mathcal{E}_a^{(i)}\}$\\
    $\hat{m}\leftarrow \texttt{DA}_{\texttt{Arm}}(\hat{\mu},\succ)$\\
    Construct $\mathcal{D}_{\hat{m}}^{(i)}=\{k\neq \hat{m}(i): p_i\succ_{a_k}\hat{m}^{-1}(k)\} \ \forall i$ and $\UMA_{\hat{m}}=\{k: \hat{m}^{-1}(k)=\emptyset\}$\\
    \uIf{$\mathcal{E}\neq \emptyset$ }{
       Match $p_{i_t}$ with $a_{k_t}$ from $\mathcal{E}$
    }
    \Else{
    \lIf{$\mathcal{E}_p\neq \emptyset$}{Select $p_{i_t}$ from $\mathcal{E}_p$} \lElse{ Select $p_{i_t}$ with the minimum index i.e.  $\Tilde{C}^{(i_t)}_{\hat{m}(i_t),\ell}=\min_{i\in[N],k\in \mathcal{D}^{(i)}_{\hat{m}}}\Tilde{C}^{(i)}_{\hat{m}(i),k}(\hat{\mu},N)$}
    \lIf{$\mathcal{E}_a^{(i_t)}\neq \emptyset$}{Select $a_{k_t}$ from $\mathcal{E}_a^{(i_t)}$}\lElse{Select $a_{k_t}$ with the minimum index i.e. $\Tilde{C}^{(i_t)}_{\hat{m}(i_t),k_t}=\min_{k\in \mathcal{D}^{(i_t)}_{\hat{m}}}\Tilde{C}^{(i_t)}_{\hat{m}(i_t),k}(\hat{\mu},N)$}
    
 \uIf{$\Tilde{g}^{(i_t)}_{\hat{m}}>0$}{
    Match player $p_{i_{t}}$ with arm $a_{{\hat{m}}(i_t)}$
    }
    \Else{
     Match player $p_{i_{t}}$ with arm $a_{k_t}$
    }
    }

    \lIf{$\texttt{DA}_{\texttt{Arm}}(\hat{\mu},\succ)=\texttt{DA}_{\texttt{Player}}(\hat{\mu},\succ)$ and $\Tilde{C}>\beta(t,\delta)$ }{Recommend $\texttt{DA}_{\texttt{Arm}}(\hat{\mu},\succ)$}}
    \label{algo:mainATT1}
\end{algorithm}
Another possible exploration sets are $\mathcal{E}_p=\{p_i:\sum_kN^{(i)}_k\leq N^{\gamma}\}$ and $\mathcal{E}^{(i)}_a=\{a_k:N^{(i)}_k\leq \left(N\right)^{\gamma^2}\}\ \forall p_i\in \mathcal{E}_p$.

\subsection{Two-Sided Learning}
\begin{algorithm}[!h]
\caption{\texttt{ATT2} (Two-Sided Learning model with $M\leq K)$}
Input: preferences of arms over players $\pi$\\
    \For{$t=1,2,\ldots$}{
    $\mathcal{E}_p=\{p_i:\sum_kN^{(i)}_k\leq N^{\gamma}\}$ and $\mathcal{E}^{(i)}_a=\{a_k:N^{(i)}_k\leq \left(\sum_kN^{(i)}_k\right)^{\gamma}\}\ \forall p_i\in \mathcal{E}_p$\\
    Let $\mathcal{E}:=\{(p_i,a_k):p_i\in \mathcal{E}_p\ \text{and}\ a_k\in\mathcal{E}_a^{(i)}\}$\\
    $\hat{m}\leftarrow \texttt{DA}_{\texttt{Arm}}(\hat{\mu},\hat{\eta})$\\
    Construct $\mathcal{B}_1^{(i)},\mathcal{B}_2^{(i)},\mathcal{B}_3^{(i)}$ for all players $p_i$ and $\UMA_{\hat{m}}=\{k: \hat{m}^{-1}(k)=\emptyset\}$\\
    Let $\Tilde{C}^{(i)}_{\hat{m}(i),k}=\min_{i\in[M],k\in \mathcal{B}_1^{(i)}\cup \UMA_{\hat{m}}}\Tilde{C}^{(i)}_{\hat{m}(i),k}(\hat{\mu})$,\\
    \phantom{Let }$\Tilde{C}^{(i,\hat{m}^{-1}(k))}_{k}=\min_{i\in[M],k\in \mathcal{B}_2^{(i)}}\Tilde{C}^{(i,\hat{m}^{-1}(k))}_{k}(\hat{\eta})$, \\
    \phantom{Let }$ \Tilde{C}^{(i,\hat{m}^{-1}(k))}_{m(i),k}=\min_{i\in[M],m(j)\in \mathcal{B}_3^{(i)}}  \Tilde{C}^{(i,\hat{m}^{-1}(k))}_{m(i),k}(\hat{\mu},\hat{\eta})$\\

    Let $\Tilde{C}=\min\left\{\Tilde{C}^{(i)}_{\hat{m}(i),k},\Tilde{C}^{(i,\hat{m}^{-1}(k))}_{k},\Tilde{C}^{(i,\hat{m}^{-1}(k))}_{m(i),k}\right\}$\\
    \uIf{$\mathcal{E}\neq \emptyset$ }{
       Match $p_{i_t}$ with $a_{k_t}$ from $\mathcal{E}$
    }
    \Else{
    \lIf{$\mathcal{E}_p\neq \emptyset$}{Select $p_{i_t}$ from $\mathcal{E}_p$} \lElse{ Select $p_{i_t}$ with the minimum index i.e.  $\Tilde{C}^{(i_t)}_{\hat{m}(i_t),\ell}=\min_{i\in[N],k\in \mathcal{D}^{(i)}_{\hat{m}}}\Tilde{C}^{(i)}_{\hat{m}(i),k}(\hat{\mu},N)$}
    \lIf{$\mathcal{E}_a^{(i_t)}\neq \emptyset$}{Select $a_{k_t}$ from $\mathcal{E}_a^{(i_t)}$}\lElse{Select $a_{k_t}$ with the minimum index i.e. $\Tilde{C}^{(i_t)}_{\hat{m}(i_t),k_t}=\min_{k\in \mathcal{D}^{(i_t)}_{\hat{m}}}\Tilde{C}^{(i_t)}_{\hat{m}(i_t),k}(\hat{\mu},N)$}
    Call subroutine $\texttt{I}$ if the minimum index corresponding to $(p_{i_t},a_{k_t})$ belongs to $\mathcal{B}_{\texttt{I}}$ else if $a_{k_t}\in\mathcal{A}^{\texttt{UM}}_{\hat{m}}$ call subroutine 1
    }
    \uIf{$\texttt{DA}_{\texttt{Arm}}(\hat{\mu},\hat{\eta})=\texttt{DA}_{\texttt{Player}}(\hat{\mu},\hat{\eta})$ and $\Tilde{C}>\beta(t,\delta)$ }{Recommend the matching $\texttt{DA}_{\texttt{Arm}}(\hat{\mu},\hat{\eta})$}}
\end{algorithm}

 \setlength{\BoxH}{2.4cm}  
 \vspace{-10pt}
\begin{figure}[H]
    \centering
    \begin{minipage}[t]{0.6\textwidth}
        \centering
       {\setcounter{algocf}{1}
\SetAlgorithmName{Subroutine}{subroutine}{List of Subroutines}
\begin{minipage}[t]{\textwidth}
            \begin{minipage}[t]{0.43\textwidth}
                \begin{algorithm}[H]
                \small
                \uIf{$ \Tilde{g}^{(i)}_{\hat{m}(i)}>0$}{
                \textcolor{red}{Match $p_{i}$ with $\hat{m}(i)$}
                }
                \Else{
                \textcolor{orange}{Match $p_{i}$ with $a_{k}$}
                }
                \caption*{}
                \end{algorithm}
            \end{minipage}
            \hfill
            \begin{minipage}[t]{0.53\textwidth}
               {\setcounter{algocf}{2}
\SetAlgorithmName{Subroutine}{subroutine}{List of Subroutines}
\begin{algorithm}[H]
                \small
                \uIf{$ \Tilde{g}^{(\hat{m}^{-1}(k))}_{k}>0$}{
                \textcolor{blue}{Match $\hat{m}^{-1}(k)$ with $a_k$}
                }
                \Else{
                \textcolor{orange}{Match $p_{i}$ with $a_{k}$}
                }
                \caption*{}
                \end{algorithm}}
            \end{minipage}
        \end{minipage}}

        \vspace{0.5em}
        \begin{minipage}[t]{\textwidth}
            \centering
            \resizebox{0.5\linewidth}{!}{%
    
    
    
    
\centering
\begin{tikzpicture}
    \definecolor{redline}{RGB}{220, 50, 50}
    \definecolor{blueline}{RGB}{50, 100, 220}
    \node (p1) at (0,1) {\( p_i \)};
    \node (p2) at (-0.5,0) {\( m^{-1}(a_k) \)};
    \node (a1) at (3,1) {\( m(p_i) \)};
    \node (a2) at (3,0) {\( a_k \)};
    
    \draw[redline, thick] (p1) -- (a1);
    \draw[orange, dashed] (p1) -- (a2);
    \draw[blueline, thick] (p2) -- (a2);
\end{tikzpicture}
            }
        \end{minipage}
    \end{minipage}
    \hfill
    \begin{minipage}[t]{0.38\textwidth}
     {\setcounter{algocf}{3}
\SetAlgorithmName{Subroutine}{subroutine}{List of Subroutines}
        \begin{algorithm}[H]
        \small
        \uIf{$ \max\{\Tilde{g}^{(i)}_{\hat{m}(i)},\Tilde{g}^{(\hat{m}^{-1}(k))}_{k}\}<0$}{
        \textcolor{orange}{Match $p_{i}$ with $a_{k}$}
        }
        \Else{
            \uIf{$\Tilde{g}^{(i)}_{\hat{m}(i)}>\Tilde{g}^{(\hat{m}^{-1}(k))}_{k}$}{
            \textcolor{red}{Match $p_{i}$ with $\hat{m}(i)$}
            }
            \Else{
            \textcolor{blue}{Match $\hat{m}^{-1}(k)$ with $a_k$}
            }
        }
        \caption*{}
        \end{algorithm}}
    \end{minipage}
\end{figure}
\section{Lower Bound}
\label{app:firstorder}
In this section, we prove the first order conditions using the lower bound optimization problems introduced in the paper. Recall, the first order conditions for a convex program from \cite{bandyopadhyay2024optimal}. The uniqueness of the solution is straightforward due to Slater's condition.
\subsection{One-Sided Learning}
Recall the lower bound optimization problem 
   \begin{align*}
  \texttt{\textbf{LO1}:}\quad\quad \quad\quad \quad\quad \quad\quad \quad\quad \quad\quad \quad\quad \min_{t} \sum_{i=1}^t\sum_{k=1}^K t_k^{(i)}\\
    \text{s.t.}\ t^{(i)}_k&\geq 0\\
t^{(i)}_{k}D\left(\mu^{(i)}_{k},x^{(i)}_{m(i),k}\right)+t^{(i)}_{m(i)}D\left(\mu^{(i)}_{m(i)},x^{(i)}_{m(i),k}\right)&\geq 1 \ \forall i\in[M],\ k\in \mathcal{D}_m^{(i)}\\
t^{(i)}_{k}D\left(\mu^{(i)}_{k},x^{(i)}_{m(i),k}\right)+t^{(i)}_{m(i)}D\left(\mu^{(i)}_{m(i)},x^{(i)}_{m(i),k}\right)&\geq 1 \ \forall i\in [M],\ k\in\UMA_m
\end{align*}

Let us introduce KKT variables: dual variables $\{\gamma^{(i)}_k\}_{k=1}^{K}\phantom{}_{i=1}^M$ and $\alpha_{i,k}$ for the second constraint and $\beta_{i,k}$ for the third constraint. The Lagrangian is as follows
\begin{align*}
    \mathcal{L}(t,\alpha,\gamma) 
    &=  \sum_{i=1}^t t_{m(i)}^{(i)}(1-\gamma^{(i)}_{m(i)})+ \sum_{i=1}^t\sum_{k\neq m(i): k\not\in\mathcal{A}^{\texttt{UM}}_m} t_{k}^{(i)}(1-\gamma^{(i)}_{k})+ \sum_{i=1}^t\sum_{k\in\mathcal{A}^{\texttt{UM}}_m} t_{k}^{(i)}(1-\gamma^{(i)}_{k})\\
    &-\sum_{i=1}^t\sum_{k\in\mathcal{D}^{(i)}_{m}}\alpha_{i,k}\left( t^{(i)}_{k}D\left(\mu^{(i)}_{k},x^{(i)}_{m(i),k}\right)+t^{(i)}_{m(i)}D\left(\mu^{(i)}_{m(i)},x^{(i)}_{m(i),k}\right)-1\right)\\
    &-\sum_{i=1}^t\sum_{k\in\mathcal{A}^{\texttt{UM}}_m}\beta_{i,k}\left( t^{(i)}_{k}D\left(\mu^{(i)}_{k},x^{(i)}_{m(i),k}\right)+t^{(i)}_{m(i)}D\left(\mu^{(i)}_{m(i)},x^{(i)}_{m(i),k}\right)-1\right)\\
    &= \sum_{i=1}^t t_{m(i)}^{(i)}(1-\gamma^{(i)}_{m(i)})+ \sum_{i=1}^t\sum_{k\in \mathcal{D}_m^{(i)}} t_{k}^{(i)}(1-\gamma^{(i)}_{k})+ \sum_{i=1}^t\sum_{k\neq m(i): k\not\in\{\mathcal{A}^{\texttt{UM}}_m\cup \mathcal{D}_m^{(i)}\}} t_{k}^{(i)}(1-\gamma^{(i)}_{k})\\
    &+ \sum_{i=1}^t\sum_{k\in\mathcal{A}^{\texttt{UM}}_m} t_{k}^{(i)}(1-\gamma^{(i)}_{k})\\
     &-\sum_{i=1}^t\sum_{j\in\mathcal{D}^{(i)}_{m}}\alpha_{i,k}\left( t^{(i)}_{k}D\left(\mu^{(i)}_{k},x^{(i)}_{m(i),k}\right)+t^{(i)}_{m(i)}D\left(\mu^{(i)}_{m(i)},x^{(i)}_{m(i),k}\right)-1\right)\\
    &-\sum_{i=1}^t\sum_{k\in\mathcal{A}^{\texttt{UM}}_m}\beta_{i,k}\left( t^{(i)}_{k}D\left(\mu^{(i)}_{k},x^{(i)}_{m(i),k}\right)+t^{(i)}_{m(i)}D\left(\mu^{(i)}_{m(i)},x^{(i)}_{m(i),k}\right)-1\right)
\end{align*}
For the stationarity condition, we have the following. 
\begin{align*}
    1-\gamma^{(i)}_{m(i)}-\sum_{j\in\mathcal{D}^{(i)}_{m}}\alpha_{i,k}D\left(\mu^{(i)}_{m(i)},x^{(i)}_{m(i),k}\right)-\sum_{k\in\mathcal{A}^{\texttt{UM}}_m}\beta_{i,k}D\left(\mu^{(i)}_{m(i)},x^{(i)}_{m(i),k}\right)&=0 \tag{$t^{(i)}_{m(i)}$}\\
     1-\gamma^{(i)}_{k}-\alpha_{i,k}D\left(\mu^{(i)}_{k},x^{(i)}_{m(i),k}\right)&=0 \tag{$t^{(i)}_{k}\ k\in \mathcal{D}^{(i)}_{m}$}\\
     1-\gamma^{(i)}_{k}-\beta_{i,k}D\left(\mu^{(i)}_{k},x^{(i)}_{m(i),k}\right)&=0 \tag{$t^{(i)}_{k}\ k\in\mathcal{A}^{\texttt{UM}}_m$}\\
1-\gamma^{(i)}_{k}&=0 \tag{$t^{(i)}_{k}\ k\neq m(i): k\not\in\{\mathcal{A}^{\texttt{UM}}_m\cup \mathcal{D}_m^{(i)}\}$}
\end{align*}

For $p_i,a_k$, using complementary slackness we have $t^{(i)}_k\gamma^{(i)}_k$. From stationary conditions, we have $1-\gamma^{(i)}_{k}=0$  $\forall k\neq m(i): k\not\in\{\mathcal{A}^{\texttt{UM}}_m\cup \mathcal{D}_m^{(i)}\}$, thus to minimize the objective we have $t^{(i)}_k=0$ for each of these $k$. \\
Further, for $k\in \{m(i)\}\cup \mathcal{E}_{m}$ where recall $\mathcal{E}_m:=\mathcal{D}_m^{(i)}\cup\mathcal{A}^{\texttt{UM}}_m$, we have from the second and third constraint that $t^{(i)}_k\neq 0$. This along with complementary slackness condition implies that $\gamma^{(i)}_k=0 \ \forall i,k\in \{m(i)\}\cup \mathcal{E}_{m}$. Substituting this in the above equation gives the following first order conditions
\begin{align}
\sum_{j\in\mathcal{D}^{(i)}_{m}}\dfrac{D\left(\mu^{(i)}_{m(i)},x^{(i)}_{m(i),k}\right)}{D\left(\mu^{(i)}_{k},x^{(i)}_{m(i),k}\right)}+\sum_{k\in\mathcal{A}^{\texttt{UM}}_m}\dfrac{D\left(\mu^{(i)}_{m(i)},x^{(i)}_{m(i),k}\right)}{D\left(\mu^{(i)}_{k},x^{(i)}_{m(i),k}\right)}&=1 \ \forall i\in[M]\\
     t^{(i)}_{k}D\left(\mu^{(i)}_{k},x^{(i)}_{m(i),k}\right)+t^{(i)}_{m(i)}D\left(\mu^{(i)}_{m(i)},x^{(i)}_{m(i),k}\right)&=1 \ \forall i\in[M]\ \text{and}\ k\in\mathcal{D}^{(i)}_{\mu}\\
t^{(i)}_{k}D\left(\mu^{(i)}_{k},x^{(i)}_{m(i),k}\right)+t^{(i)}_{m(i)}D\left(\mu^{(i)}_{m(i)},x^{(i)}_{m(i),k}\right)&= 1 \ \forall i\in[M]\ \text{and}\ k\in\mathcal{A}^{\texttt{UM}}_m
\end{align}

\subsection{Two-Sided Learning}

Recall that for $(\lambda,\nu)$ to create a blocking pair $(p_i,a_k)$ under the market instance $(\mu,\eta)$ we need $\lambda^{(i)}_{m(i)}<\lambda^{(i)}_{k}$ and $\eta^{(m^{-1}(k))}_k<\eta^{(i)}_k$. For any $(p_i,a_k)$ pair following three cases can occur under the instance $\mu,\eta$, 
\begin{enumerate}
    \item $\mu^{(i)}_{m(i)}>\mu^{(i)}_{k}$ and $\eta^{(m^{-1}(k))}_{k}<\eta^{(i)}_{k}$
    \item $\mu^{(i)}_{m(i)}<\mu^{(i)}_{k}$ and $\eta^{(m^{-1}(k))}_{k}>\eta^{(i)}_{k}$
    \item $\mu^{(i)}_{m(i)}>\mu^{(i)}_{k}$ and $\eta^{(m^{-1}(k))}_{k}>\eta^{(i)}_{k}$
\end{enumerate}
Consider $\mathcal{B}_1,\mathcal{B}_2,\mathcal{B}_3$ as the set of such $(p_i,a_k)$ pair which satisfy above condition 1, 2 and 3 resp. Further for each player $p_i$ denote by $\mathcal{B}^{(i)}_1,\mathcal{B}^{(i)}_2,\mathcal{B}^{(i)}_3$ as the set of arms satisfying condition 1, 2 and 3 resp.\\
The alternate instance is now defined as 
\begin{align*}
    \neg m = \cup_{p_i}\cup_{a_k\neq a_i}\{\lambda^{(i)}_{i}<\lambda^{(i)}_{k}\ \text{and}\ \nu^{(k)}_{k}<\nu^{(i)}_{k}\}
\end{align*}

   \begin{align*}
        D(\mu,\eta) &= \max_{w\in\Delta_{t\times K}}\inf_{(\lambda,\nu)\in\neg m}\sum_{i=1}^t\sum_{k=1}^K w^{(i)}_k \left[d(\mu^{(i)}_k,\lambda^{(i)}_k)+d(\eta^{(i)}_i,\nu^{(i)}_k)\right]\\
      &= \max_{w\in\Delta_{t\times K}}\min\Bigg\{\min_{(p_i,a_k)\in \mathcal{B}_1} w^{(i)}_k d(\mu^{(i)}_k,x^{(i)}_{m(i),k} )+w^{(i)}_{m(i)} d(\mu^{(i)}_{m(i)} ,x^{(i)}_{m(i),k} ),\\
      &\min_{(p_i,a_k)\in \mathcal{B}_2} w^{(i)}_k d(\eta^{(i)}_k,y^{(i,m^{-1}(k))}_{k})+w^{(m^{-1}(k))}_{k}d(\eta^{(m^{-1}(k))}_{k},y^{(i,m^{-1}(k))}_{k}),\\
      &\min_{(p_i,a_k)\in \mathcal{B}_3} w^{(i)}_k d(\mu^{(i)}_k,x^{(i)}_{m(i),k} )+w^{(i)}_{i} d(\mu^{(i)}_{m(i)} ,x^{(i)}_{m(i),k} )+\\
      &w^{(i)}_k d(\eta^{(i)}_k,y^{(i,m^{-1}(k))}_{k})+w^{(m^{-1}(k))}_{k}d(\eta^{(m^{-1}(k))}_{k},y^{(i,m^{-1}(k))}_{k})\Bigg\}
    \end{align*}

where $x^{(i)}_{m(i),k}= \dfrac{\mu^{(i)}_{m(i)}w^{(i)}_{i}+\mu^{(i)}_{k}w^{(i)}_{k}}{w^{(i)}_{i}+w^{(i)}_{k}}$ and  $y^{(i,m^{-1}(k))}_{k}= \dfrac{\eta^{(k)}_{k}w^{(k)}_{k}+\eta^{(i)}_kw^{(i)}_k}{w^{(k)}_{k}+w^{(i)}_k}$.

Recall the 
  Lower bound optimization problem: 
    \begin{align*}
 \texttt{\textbf{LO2}:}\quad\quad \quad\quad \quad\quad \quad\quad \quad\quad  \min_{t}  \sum_{i=1}^t\sum_{k=1}^K t_k^{(i)}\\
    \text{s.t.}\ t^{(i)}_k&\geq 0\\
  C^{(i)}_{m(i),k}&\geq 1 \ \forall (p_i,a_k)\in \mathcal{B}_1\\
C^{(m^{-1}(k))}_{k,i}&\geq 1 \ \forall (p_i,a_k)\in \mathcal{B}_2\\
C^{(i,m^{-1}(k))}_{m(i),k}&\geq 1 \ \forall (p_i,a_k)\in \mathcal{B}_3
\end{align*}
where the player index 
$C^{(i)}_{m(i),k}:=t^{(i)}_k d(\mu^{(i)}_k,x^{(i)}_{m(i),k} )+t^{(i)}_{m(i)} d(\mu^{(i)}_{m(i)} ,x^{(i)}_{m(i),k})$ defined when arm $a_k$ likes player $p_i$ over its current match
and arm index as 
$C^{(m^{-1}(k))}_{k,i}:=t^{(i)}_k d(\eta^{(i)}_k,y^{(i,m^{-1}(k))}_{k})+t^{(m^{-1}(k))}_{k}d(\eta^{(m^{-1}(k))}_{k},y^{(i,m^{-1}(k))}_{k})$ defined when player $p_i$ likes arm $a_k$ over its current match.

KKT conditions: dual variables $\{\gamma^{(i)}_k\}_{k=1}^{K}\phantom{}_{i=1}^t$ and $\alpha_{i,k},\beta_{i,k},\nu_{i,k}$ for the constraint of $\mathcal{B}_1,\mathcal{B}_2$ and $\mathcal{B}_3$ resp.

\begin{align*}
    \mathcal{L}(t,\alpha,\gamma) &= \sum_{i=1}^t\sum_{j=1}^t t_k^{(i)}(1-\gamma^{(i)}_k)-\sum_{(p_i,a_k)\in\mathcal{B}_1}\alpha_{i,k}\left( C^{(i)}_{m(i),k}-1\right)-\sum_{(p_i,a_k)\in\mathcal{B}_2}\beta_{i,k}\left( C^{(m^{-1}(k))}_{k,i}-1\right)\\
    &-\sum_{(p_i,a_k)\in\mathcal{B}_3}\nu_{i,k}\Big( C^{(i,m^{-1}(k))}_{m(i),k}-1\Big)
\end{align*}

For each $i$, denote by $\mathcal{B}_1^{(i)}=\{a_k:(p_i,a_k)\in\mathcal{B}_1\}$,  $\mathcal{B}_2^{(i)}=\{a_k:(p_i,a_k)\in\mathcal{B}_2\}$ and  $\mathcal{B}_3^{(i)}=\{a_k:(p_i,a_k)\in\mathcal{B}_3\}$\\
tote that $a_i\not\in \mathcal{B}_1^{(i)},\mathcal{B}_2^{(i)},\mathcal{B}_3^{(i)}$.

Stationary condition
\begin{align}
    1-\gamma^{(i)}_{m(i)}-\sum_{a_k\in\mathcal{B}_1^{(i)}}\alpha_{i,k}d(\mu^{(i)}_{m(i)} ,x^{(i)}_{m(i),k} )-\sum_{p_j:m(i)\in \mathcal{B}_2^{(j)}}\beta_{j,m(i)}d(\eta^{(i)}_{m(i)},y^{(j,i)}_{m(i)})&\nonumber\\
    -\sum_{a_k\in\mathcal{B}_3^{(i)}}\nu_{i,k} d(\mu^{(i)}_{m(i)} ,x^{(i)}_{m(i),k} )-\sum_{p_j:m(i)\in \mathcal{B}_3^{(j)}}\nu_{j,m(i)} d(\eta^{(i)}_{m(i)},y^{(j,i)}_{m(i)})&=0 \label{eq:statleader}\\
     1-\gamma^{(i)}_{k}-\alpha_{i,k}d(\mu^{(i)}_k,x^{(i)}_{m(i),k} )&=0 \quad \forall k\in \mathcal{B}_1^{(i)} \label{eq:statB1}\\
     1-\gamma^{(i)}_{k}-\beta_{i,k}d(\eta^{(i)}_k,y^{(i,m^{-1}(k))}_{k})&=0  \forall k\in \mathcal{B}_2^{(i)}\label{eq:statB2}\\
     1-\gamma^{(i)}_{k}-\nu_{i,k}\left[d(\mu^{(i)}_k,x^{(i)}_{m(i),k} )+d(\eta^{(i)}_k,y^{(i,m^{-1}(k))}_{k})\right]&=0  \forall k\in \mathcal{B}_3^{(i)}\label{eq:statB3}
\end{align}

Complementary slackness 
\begin{align}
 \gamma^{(i)}_{k}t^{(i)}_k&= 0\ \ \forall\ i,k \label{eq:complslack}\\
 \alpha_{i,k}\left(C^{(i)}_{m(i),k}-1\right)&=0  \ \forall (p_i,a_k)\in \mathcal{B}_1\\
 \beta_{i,k}\left(C^{(m^{-1}(k))}_{k,i}-1\right)&=0\ \forall (p_i,a_k)\in \mathcal{B}_2\\
\nu_{i,k}\Big(C^{(i)}_{m(i),k}
+C^{(m^{-1}(k))}_{k,i}-1\Big)&=0  \ \forall (p_i,a_k)\in \mathcal{B}_3
\end{align}
Feasibility 
   \begin{align}
   t^{(i)}_k&\geq 0\\
  C^{(i)}_{m(i),k}&\geq 1 \ \forall (p_i,a_k)\in \mathcal{B}_1\label{eq:feasb1}\\
C^{(m^{-1}(k))}_{k,i}&\geq 1 \ \forall (p_i,a_k)\in \mathcal{B}_2\label{eq:feasb2}\\
C^{(i,m^{-1}(k))}_{m(i),k}&\geq 1 \ \forall (p_i,a_k)\in \mathcal{B}_3\label{eq:feasb3}
\end{align}

For $(p_i,a_k)\in\mathcal{B}_1$, from Eq. \ref{eq:feasb1} we need $t^{(i)}_k>0$ and $t^{(i)}_{m(i)}>0$, this along with Eq.~\ref{eq:complslack} imply that $\gamma^{(i)}_k=0$\\
For $(p_i,a_k)\in\mathcal{B}_2$, from Eq. \ref{eq:feasb2} we need $t^{(i)}_k>0$ and $t^{(m^{-1}(k))}_{k}>0$, this along with Eq.~\ref{eq:complslack} imply that $\gamma^{(i)}_k=0$\\
For $(p_i,a_k)\in\mathcal{B}_3$, from Eq. \ref{eq:feasb3} we need $t^{(i)}_k>0$ and $t^{(i)}_{m(i)}>0$ OR $t^{(m^{-1}(k))}_{k}>0$, thus, this along with Eq.~\ref{eq:complslack} imply that $\gamma^{(i)}_k>0$ and $\min\{\gamma^{(i)}_{m(i)},\gamma^{(m^{-1}(k))}_{k}\}=0$.

Substituting $\gamma^{(i)}_k=0$ for $k\in \mathcal{B}_1^{(i)}\cup\mathcal{B}_2^{(i)}\cup\mathcal{B}_3^{(i)} \ \forall i$, we get $\alpha_{i,k},\beta_{i,k},\nu_{i,k}>0$ which from complementary slackness implies the equality of the indexes.
Defining
\begin{align*}
    &g^{(i)}_{m(i)}:=-1+\sum_{a_k\in\mathcal{B}_1^{(i)}}\dfrac{d(\mu^{(i)}_{m(i)} ,x^{(i)}_{m(i),k} )}{d(\mu^{(i)}_{k},x^{(i)}_{m(i),k} )}+\sum_{p_j:m(i)\in \mathcal{B}_2^{(j)}}\dfrac{d(\eta^{(i)}_{m(i)},y^{(j,i)}_{m(i)})}{d(\eta^{(j)}_{m(i)},y^{(j,i)}_{m(i)})}\\
   &+\sum_{k\in\mathcal{B}_3^{(i)}}\dfrac{d(\mu^{(i)}_{m(i)} ,x^{(i)}_{m(i),k} )}{d(\mu^{(i)}_{k},x^{(i)}_{m(i),k} )+d(\eta^{(i)}_{k},y^{(i,m^{-1}(k))}_{k})}+\sum_{p_j:m(i)\in \mathcal{B}_3^{(j)}}\dfrac{d(\eta^{(i)}_{m(i)},y^{(j,i)}_{m(i)})}{d(\mu^{(j)}_{m(i)},x^{(j)}_{m(j),m(i)} )+d(\eta^{(j)}_{m(i)},y^{(j,i)}_{m(i)})}.
\end{align*}
we have $\gamma^{(i)}_{m(i)}=-g^{(i)}_{m(i)}\ \forall i$ from the stationary condition Eq.~\ref{eq:statleader}. As we derived  $\min\{\gamma^{(i)}_{m(i)},\gamma^{(m^{-1}(k))}_{k}\}=0$, we get that $\max\{g^{(i)}_{m(i)},g^{(m^{-1}(k))}_{k}\}=0$ and $g^{(i)}_{m(i)}N^{(i)}_{m(i)}=0$ follows from the complementary slackness conditions.

\section{Stopping Rule}
\label{app:stoppingrule}
Denote the set of all possible matching for $N$ player and $N$ arm as $\mathcal{M}$.\\
Define the event that the estimated matching is unique 
\begin{align*}
     \mathcal{E}_{\texttt{UM}} &:= \{ \exists\ N\in \mathbb{N}: \texttt{DA}_{\texttt{Player}}(\hat{\bm{\mu}}(N),\succ)=\texttt{DA}_{\texttt{Arm}}(\hat{\bm{\mu}}(N),\succ) )\}
\end{align*}

Our stopping time is defined as
\begin{align*}
    \tau_{\delta} &= \inf\left\{N\in\mathbb{N}:\max_{m\in\mathcal{M}} \inf_{\lambda\in\neg m}\sum_{i=1}^P\sum_{k=1}^{K}N^{(i)}_kD(\hat{\mu}^{(i)}_k,\lambda^{(i)}_k) >\beta(N,\delta)\ \text{and}\ \texttt{DA}_{\texttt{Player}}(\hat{\bm{\mu}}(N),\succ)=\texttt{DA}_{\texttt{Arm}}(\hat{\bm{\mu}}(N),\succ) \right\}\\
    &= \inf\Bigg\{N\in\mathbb{N}:\max_{m\in \mathcal{M}}\min_{p_i}\min_{k\in \mathcal{D}_m^{(i)}}N^{(i)}_{m(i)} D(\hat{\mu}_{m(i)}^{(i)},\hat{x}_{m(i),k}^{(i)})+ N^{(i)}_{k} D(\hat{\mu}_{k}^{(i)},\hat{x}_{m(i),k}^{(i)})>\beta(N,\delta)\\
    &\ \  \ \text{and}\ \texttt{DA}_{\texttt{Player}}(\hat{\bm{\mu}}(N),\succ)=\texttt{DA}_{\texttt{Arm}}(\hat{\bm{\mu}}(N),\succ) \Bigg\}
\end{align*}

Also
\begin{align*}
    \hat{m}_{\tau_{\delta}} &= \argmax_{m\in{\mathcal{M}}} \inf_{\lambda\in\neg m}\sum_{i=1}^P\sum_{k=1}^{K}N^{(i)}_kD(\hat{\mu}^{(i)}_k,\lambda^{(i)}_k)\\
   \overset{(1)}{\implies} \hat{m}_{\tau_{\delta}}  &=\argmax_{m\in\mathcal{M}_{\hat{\mu}}} \inf_{\lambda\in\neg m}\sum_{i=1}^P\sum_{k=1}^{K}N^{(i)}_kD(\hat{\mu}^{(i)}_k,\lambda^{(i)}_k)\tag{$\mathcal{M}_{\hat{\mu}}$ :set of all stable matching under $\hat{\mu},\succ$}\\
   & =\min_{p_i} \min_{k\in \mathcal{D}_m^{(i)}}N^{(i)}_{m(i)} D(\hat{\mu}_{m(i)}^{(i)},\hat{x}_{m(i),k}^{(i)})+ N^{(i)}_{k} D(\hat{\mu}_{k}^{(i)},\hat{x}_{m(i),k}^{(i)})\ \tag{$\because$ $\mathcal{M}_{\hat{\mu}}$ is singeton by definition of $\tau_{\delta}$}
\end{align*}
(1) follows since, suppose $\hat{m}_{\tau_{\delta}}\not\in\mathcal{M}_{\hat{\mu}}$ i.e. $\hat{m}_{\tau_{\delta}}$ is unstable under $\hat{\mu},\succ$. Since $\neg m$ is the set of all instances in which $m$ is an unstable match, thus we have $\hat{\mu}\in \neg m$, which evaluates the inner infimum to be 0.

Recall when $A\implies B$, then $\mathbb{P}(A)\leq \mathbb{P}(B)$. Denote $\overline{m}$ as the unique stable matching under $\mu,\succ$, we have
\begin{align*}
    &
    \hat{m}_{\tau_{\delta}}\neq \overline{m}\implies \ \exists m\neq \overline{m}\ \text{s.t.}\\
    &\inf_{\lambda\in\neg m}\sum_{i=1}^P\sum_{k=1}^{K}N^{(i)}_kD(\hat{\mu}^{(i)}_k,\lambda^{(i)}_k)>\beta(N,\delta)\ \text{and}\ \texttt{DA}_{\texttt{Player}}(\hat{\bm{\mu}}(N),\succ)=\texttt{DA}_{\texttt{Arm}}(\hat{\bm{\mu}}(N),\succ) 
\end{align*}
\begin{align*}
    \mathbb{P}_{\mu}\left(\tau<\infty, \hat{m}_{\tau_{\delta}}\neq \overline{m}\right)&\leq \mathbb{P}_{\mu}\left(\exists t\in \mathbb{N}^{\star}, \exists m\neq \overline{m}\ \text{s.t.}\ \inf_{\lambda\in\neg m}\sum_{i=1}^P\sum_{k=1}^{K}N^{(i)}_kD(\hat{\mu}^{(i)}_k(t),\lambda^{(i)}_k)>\beta(N,\delta)\cap \mathcal{E}_{\texttt{UM}}\right)\\
    &= \mathbb{P}_{\mu}\left(\exists t\in \mathbb{N}^{\star}, \exists m: \mu\in \neg m,\inf_{\lambda\in\neg m}  \sum_{i=1}^P\sum_{k=1}^KN^{(i)}_kD(\hat{\mu}^{(i)}_k(t), \lambda^{(i)}_k)>\beta(t,\delta)\right)\\
      &= \mathbb{P}_{\mu}\left(\exists t\in \mathbb{N}^{\star}, \exists m: \mu\in \neg m\sum_{i=1}^P\sum_{k=1}^KN^{(i)}_kD(\hat{\mu}^{(i)}_k(t), \mu^{(i)}_k)>\beta(t,\delta)\cap \mathcal{E}_{\texttt{UM}}\right)\\
    &\leq \sum_{m\neq \overline{m}}\mathbb{P}_{\mu}\left(\exists t\in \mathbb{N}^{\star},  \sum_{i=1}^P\sum_{k=1}^KN^{(i)}_kD(\hat{\mu}^{(i)}_k(t), \mu^{(i)}_k)>\beta(t,\delta)\cap \mathcal{E}_{\texttt{UM}}\right)\\
      &\leq  \sum_{m\neq \overline{m}}\mathbb{P}_{\mu}\left(\exists t\in \mathbb{N}^{\star},  \sum_{i=1}^P\sum_{k=1}^KN^{(i)}_kD(\hat{\mu}^{(i)}_k(t), \mu^{(i)}_k)>\beta(t,\delta)\cap \mathcal{E}_{\texttt{UM}}\right)
\end{align*}

\begin{theorem}
    Let $\mu$ be an exponential family market model. Under any matching rule $(M_t)$, for every subset of players $\mathcal{P}\subseteq [P]$ and subset of arms $\mathcal{A}\subseteq [K]$.
    Define
    \begin{align*}
        \mathcal{E}_1 &:= \left\{ \exists\ t\in \mathbb{N}:\ \sum_{i\in\mathcal{P}}\sum_{k\in\mathcal{A}} N_{k}^{(i)}(t)d\left(\hat{\mu}_{k}^{(i)}(t), \mu_{k}^{(i)}\right)\geq \sum_{i\in\mathcal{P}}\sum_{k\in\mathcal{A}} 3\log\left(1+\log\left(N^{(i)}_k(t)\right)\right)+|\mathcal{P}||\mathcal{A}|\mathcal{T}\left(\dfrac{x}{|\mathcal{P}||\mathcal{A}|}\right)\right\}
    \end{align*}
    we have
        \begin{align*}
        \mathbb{P}_{\mu}\left(\mathcal{E}_1\right)\leq e^{-x}
    \end{align*}
    where $\mathcal{T}(\cdot)$ is a non-explicit function defined in \cite{kaufmann2020contributions} (3.7) which scales as $T(x)\sim x$ for large $x$ values.
\end{theorem}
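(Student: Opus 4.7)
The statement is a multivariate, time-uniform concentration inequality for the sum of weighted KL divergences across player-arm pairs. My plan is to reduce to the single-pair exponential family deviation inequality of Kaufmann and Koolen, which, for each $(p_i,a_k)$ and every $y>0$, gives
\begin{equation*}
\mathbb{P}_\mu\!\left(\exists t:\ N^{(i)}_k(t)\,d(\hat{\mu}^{(i)}_k(t),\mu^{(i)}_k)\geq 3\log(1+\log N^{(i)}_k(t))+\mathcal{T}(y)\right)\leq e^{-y}.
\end{equation*}
This single-pair statement is established by constructing a mixture likelihood-ratio supermartingale indexed by the natural parameter of the SPEF (Robbins' method of mixtures) and applying Ville's maximal inequality; the function $\mathcal{T}$ arises from the Laplace approximation needed to convert the supermartingale's log-value back to a KL divergence.

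To lift this to the multi-pair setting, I would exploit the cross-pair independence of rewards assumed in Preliminaries. Concretely, I would form the joint mixture supermartingale
\begin{equation*}
L_t\ =\ \prod_{(p_i,a_k)\in\mathcal{P}\times\mathcal{A}} L^{(i,k)}_t,
\end{equation*}
where each $L^{(i,k)}_t$ is the single-pair Kaufmann-Koolen mixture martingale evaluated along the adaptive sub-sequence of rounds at which $(p_i,a_k)$ is pulled. Since the drawn samples are mutually independent across pairs and the sampling rule $M_t$ is $\mathcal{F}_{t-1}$-measurable, $L_t$ is a non-negative $\mathcal{F}_t$-supermartingale with $\E[L_0]=1$. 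Applying Ville's inequality at level $e^x$ and then unwinding each factor via its Laplace approximation would yield the stated bound: the sum of $\log\log$ corrections $\sum 3\log(1+\log N^{(i)}_k(t))$ corresponds to one correction per factor, and the term $|\mathcal{P}||\mathcal{A}|\,\mathcal{T}(x/|\mathcal{P}||\mathcal{A}|)$ arises from distributing the single $e^{-x}$ tail budget evenly across the $|\mathcal{P}||\mathcal{A}|$ independent mixture factors.

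The main obstacle is handling the adaptive character of $M_t$: the counts $N^{(i)}_k(t)$ are random and only one factor increments per round, so I must verify that each $L^{(i,k)}_t$ is a supermartingale with respect to the global filtration and not merely the pair's local filtration. The resolution is a standard time-change: conditioning on $\mathcal{F}_{t-1}$ and on the chosen pair $(p_{i_t},a_{k_t})$, the new sample is an independent SPEF($\mu^{(i_t)}_{k_t}$) draw, so the increment of $L^{(i_t,k_t)}$ has conditional mean equal to its current value while all other factors are unchanged, preserving the supermartingale property. If one instead prefers a strictly elementary proof bypassing the joint mixture construction, the weaker bound $\mathbb{P}_\mu(\mathcal{E}_1)\leq |\mathcal{P}||\mathcal{A}|\, e^{-x/|\mathcal{P}||\mathcal{A}|}$ follows immediately from applying the single-pair inequality with $y=x/|\mathcal{P}||\mathcal{A}|$, observing that the sum exceeds its threshold only when at least one summand does, and union bounding over the $|\mathcal{P}||\mathcal{A}|$ pairs.
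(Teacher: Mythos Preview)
Your overall architecture---form a product (super)martingale across pairs, invoke Ville, and handle the adaptive sampling by a time-change argument---matches the paper. Your discussion of why the product remains a supermartingale under the global filtration is more explicit than what the paper writes, and is correct.

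Where your plan diverges from the paper is in \emph{which} martingales you multiply. You propose multiplying the full per-pair \emph{mixture} martingales $L^{(i,k)}_t=\int M^{\lambda}_{i,k}\,\pi(d\lambda)$ and then ``unwinding each factor via its Laplace approximation.'' The paper instead works in the $g$-VCC framework: it fixes a \emph{single common} $\lambda\in\Lambda$, multiplies the test martingales $M^{\lambda}_{i,k}$ (so that $\prod_{i,k} M^{\lambda}_{i,k}\geq \exp(\lambda\sum_{i,k}X_{i,k}-|\mathcal{P}||\mathcal{A}|\,g(\lambda))$), applies Ville, and only \emph{then} optimizes over $\lambda$. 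This is exactly what produces the threshold $|\mathcal{P}||\mathcal{A}|\,C^g(x/|\mathcal{P}||\mathcal{A}|)$, which is then upper bounded by $|\mathcal{P}||\mathcal{A}|\,\mathcal{T}(x/|\mathcal{P}||\mathcal{A}|)$ via Kaufmann's Lemma~3.4.

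This distinction is not cosmetic. With your product-of-mixtures, the per-factor Laplace approximation selects a \emph{different} maximizer $\lambda^\star_{i,k}$ for each pair, so you obtain $\log L_t\gtrsim\sum_{i,k}\mathcal{T}^{-1}(X_{i,k}(t))$ rather than a bound directly on $\sum_{i,k}X_{i,k}(t)$. To pass from $\sum X_{i,k}\geq |\mathcal{P}||\mathcal{A}|\,\mathcal{T}(x/|\mathcal{P}||\mathcal{A}|)$ to $\sum\mathcal{T}^{-1}(X_{i,k})\geq x$ you would need Jensen in the \emph{wrong} direction ($\mathcal{T}^{-1}$ is concave). So your ``unwinding'' step, as written, does not deliver the stated threshold; the sentence ``distributing the single $e^{-x}$ tail budget evenly across the factors'' describes neither your product-of-mixtures argument nor the paper's, and hides this gap. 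The fix is simple: replace each $L^{(i,k)}_t$ by the single-parameter martingale $M^{\lambda}_{i,k}(t)$ with a shared $\lambda$, and optimize over $\lambda$ \emph{after} Ville---this is exactly the paper's Lemma. Your fallback union-bound argument is correct but, as you note, only gives the weaker $|\mathcal{P}||\mathcal{A}|\,e^{-x/|\mathcal{P}||\mathcal{A}|}$.
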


Thus taking the threshold as $\beta(t,\delta)= |\mathcal{P}||\mathcal{A}|\mathcal{T}\left(\dfrac{\log((\mathcal{M}-1)/\delta)}{|\mathcal{P}||\mathcal{A}|}\right)+3|\mathcal{P}||\mathcal{A}|\log(1+\log(t))$ where $\mathcal{M}$ is the number of all stable matching with $|\mathcal{P}|$ players and $|\mathcal{A}|$ arms. 

From above we have 
\begin{align*}
        &\mathbb{P}_{\mu}\left(\left\{\exists\ t\in \mathbb{N}:\ \sum_{i\in\mathcal{P}}\sum_{k\in\mathcal{A}} N_{k}^{(i)}(t)d\left(\hat{\mu}_{k}^{(i)}(t), \mu_{k}^{(i)}\right)\geq \beta(t,\delta)\right\}\cap  \mathcal{E}_{\texttt{UM}} \right)\\
        &\leq \mathbb{P}_{\mu}\left(\left\{\exists\ t\in \mathbb{N}:\ \sum_{i\in\mathcal{P}}\sum_{k\in\mathcal{A}} N_{k}^{(i)}(t)d\left(\hat{\mu}_{k}^{(i)}(t), \mu_{k}^{(i)}\right)\geq \beta(t,\delta)\right\} \right)\leq  \dfrac{\delta}{\mathcal{M}-1}
    \end{align*}

which from above gives
\begin{align*}
    \mathbb{P}_{\mu}\left(\tau<\infty, \hat{m}_{\tau_{\delta}}\neq \overline{m}\right) &\leq  \sum_{m\neq \overline{m}}\mathbb{P}_{\mu}\left(\exists t\in \mathbb{N}^{\star},  \sum_{i=1}^P\sum_{k=1}^KN^{(i)}_kD(\hat{\mu}^{(i)}_k(t), \mu^{(i)}_k)>\beta(t,\delta)\right) \leq  \sum_{m\neq \overline{m}}\dfrac{\delta}{\mathcal{M}-1}=\delta
\end{align*}

Before Proving Theorem 7.1. note that, we would like to derive a concentration inequality to upper bound the RHS such that it is
\begin{itemize}
    \item \textit{uniform over time} 
    \item deviations are measured simultaneously for all the players and arms 
\end{itemize}

The proof of the concentration inequality requires construction of a particular \textit{mixture martingale}.

\begin{lemma}[Ville's Inequality] 
    Let $S_t$ be a super-martingale (i.e. a sequence of random variables adapted to a filtration ($\mathcal{F}_t)_{t\in\mathbb{N}}$ such that $\E[S_{t+1}|\mathcal{F}_t]\leq S_t$) such that $S_t\geq 0$ and $\E[S_0]=1$. For all $\delta\in(0,1)$ 
    \begin{align*}
        \mathbb{P}(\exists\ t\in \mathbb{N}: S_t>1/\delta)\leq \delta
    \end{align*}
\end{lemma}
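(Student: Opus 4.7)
The plan is to prove Ville's inequality via the standard optional stopping argument applied to a carefully chosen stopping time, combined with Markov's inequality. The desired event $\{\exists t: S_t > 1/\delta\}$ is a supremum-type event, so the trick is to reduce it to a bounded-time event via stopping.

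First, I would introduce the stopping time $\tau := \inf\{t \in \mathbb{N} : S_t > 1/\delta\}$, with the convention $\inf \emptyset = \infty$. Since $S_t$ is adapted to $(\mathcal{F}_t)$, $\tau$ is a stopping time. For any fixed finite $n$, $\tau \wedge n$ is a bounded stopping time, and because $(S_t)$ is a non-negative supermartingale, the optional stopping theorem yields $\E[S_{\tau \wedge n}] \leq \E[S_0] = 1$. This is the step where the supermartingale hypothesis is used; non-negativity guarantees integrability and legitimizes the stopping step.

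Next I would invoke Markov's inequality on the non-negative random variable $S_{\tau \wedge n}$:
\begin{align*}
\mathbb{P}(S_{\tau \wedge n} > 1/\delta) \leq \delta \cdot \E[S_{\tau \wedge n}] \leq \delta.
\end{align*}
The key observation is then that $\{\tau \leq n\} \subseteq \{S_{\tau \wedge n} > 1/\delta\}$: on the event $\{\tau \leq n\}$ we have $\tau \wedge n = \tau$ and by definition of $\tau$ we have $S_\tau > 1/\delta$. Hence $\mathbb{P}(\tau \leq n) \leq \delta$ uniformly in $n$. Letting $n \to \infty$ and using continuity of probability from below along the increasing sequence of events $\{\tau \leq n\}$, we obtain $\mathbb{P}(\tau < \infty) \leq \delta$, which is exactly the claim $\mathbb{P}(\exists t \in \mathbb{N} : S_t > 1/\delta) \leq \delta$.

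There is no real obstacle here — this is a textbook result — but the only subtlety worth flagging is the handling of the strict versus non-strict inequality at the level $1/\delta$. If the definition of $\tau$ uses strict inequality, the inclusion above is immediate; if one wanted the stronger conclusion $\mathbb{P}(\sup_t S_t \geq 1/\delta) \leq \delta$, one would instead take $\tau := \inf\{t : S_t \geq 1/\delta\}$ and the same argument goes through. Apart from that, the proof is purely routine: bounded stopping $\Rightarrow$ supermartingale inequality at $\tau \wedge n$ $\Rightarrow$ Markov $\Rightarrow$ monotone limit in $n$.
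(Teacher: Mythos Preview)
Your proof is correct and is the standard optional-stopping argument for Ville's inequality. The paper itself does not prove this lemma; it is simply stated as a known result and then invoked, so there is no ``paper's own proof'' to compare against. Your write-up would serve perfectly well as the omitted proof.
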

Upper-bounding the process that we want to control by a \textit{test (super) martingale}.

The process whose deviations should be controlled is 
\begin{align*}
   X_{i,k}(t):= N^{(i)}_k(t) d(\hat{\mu}^{(i)}_k(t),\mu^{(i)}_k)-3\log\left(1+\log\left(N^{(i)}_k(t)\right)\right)
\end{align*}

\begin{definition}
  Let $g:\Lambda \to \R$ be a function defined on a non-empty interval $\Lambda\subseteq \R$. A stochastic process $X(t) = \{X_{i,k}\}_{i\in[P]}\phantom{}_{k\in[K]}$ is called $g$-VCC if it satisfies the following properties 
  \begin{enumerate}
      \item For any player $p_i$, arm $a_k$ and $\lambda\in\Lambda$
 there exists a test martingale $M^{\lambda}_{i,k}(t)$ such that 
 \begin{align*}
     \forall\ t\in\mathbb{N}, M^{\lambda}_{i,k}(t)\geq \exp\left(\lambda X_{i,k}-g(\lambda)\right).
 \end{align*}
 \item For any subset of players $\mathcal{P}\subseteq \{1,\ldots, P\}$, subset of arms $\mathcal{A}\subseteq \{1,\ldots, K\}$ and for any $\lambda\in\Lambda$, the product $\prod_{i\in \mathcal{P}}\prod_{k\in\mathcal{A}}M^{\lambda}_{i,k}(t)$ is a martingale. 
 \end{enumerate}
\end{definition}

\begin{definition}
    For $g:\Lambda\to\R^{+}$, define for all $x>0$,
    \begin{align*}
        C^{g}(x) := \min_{\lambda\in\Lambda} \dfrac{g(\lambda)+x}{\lambda}.
    \end{align*}
\end{definition}

\begin{lemma}
    Fix subset of players $\mathcal{P}\subseteq \{1,\ldots, P\}$, subset of arms $\mathcal{A}\subseteq \{1,\ldots, K\}$. Assume that $X(t) = \{X_{i,k}\}_{i\in[P]}\phantom{}_{k\in[K]}$ is a $g$-VCC stochastic process. Then 
    \begin{align}
    \label{eq:gvcc}
        \forall x>0, \qquad \mathbb{P}\left(\exists\ t\in \mathbb{N}:\ \sum_{i\in\mathcal{P}}\sum_{k\in \mathcal{A}}X_{i,k}(t)>|\mathcal{P}||\mathcal{A}|C^g\left(\dfrac{x}{|\mathcal{P}||\mathcal{A}|}\right)\right) \leq e^{-x}
    \end{align}
\end{lemma}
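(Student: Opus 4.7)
The plan is to combine the $g$-VCC property with Ville's inequality, optimizing over $\lambda$ only at the very end. First I would fix an arbitrary $\lambda \in \Lambda$ (implicitly with $\lambda > 0$ so that the resulting inequality has the correct direction). By the second defining property of $g$-VCC, the product
$$Z^{\lambda}(t) := \prod_{i\in\mathcal{P}}\prod_{k\in\mathcal{A}} M^{\lambda}_{i,k}(t)$$
is a nonnegative martingale with $\E[Z^{\lambda}(0)] = 1$, since each $M^{\lambda}_{i,k}$ is a test martingale, and hence in particular a nonnegative supermartingale to which Ville's inequality applies.

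Next, I would use the first defining property of $g$-VCC, namely $M^{\lambda}_{i,k}(t) \geq \exp(\lambda X_{i,k}(t) - g(\lambda))$ pointwise, to lower bound $Z^{\lambda}(t)$. Taking the product over $(i,k)\in \mathcal{P}\times\mathcal{A}$ yields
$$Z^{\lambda}(t) \;\geq\; \exp\!\left(\lambda \sum_{i\in\mathcal{P}}\sum_{k\in\mathcal{A}} X_{i,k}(t) - |\mathcal{P}||\mathcal{A}|\, g(\lambda)\right).$$
Applying Ville's inequality with threshold $e^x$ gives $\mathbb{P}(\exists t \in \mathbb{N}: Z^{\lambda}(t) \geq e^x) \leq e^{-x}$; chaining this with the lower bound above and rearranging (using $\lambda > 0$) yields
$$\mathbb{P}\!\left(\exists t\in\mathbb{N}: \sum_{i\in\mathcal{P}}\sum_{k\in\mathcal{A}} X_{i,k}(t) \;>\; \frac{|\mathcal{P}||\mathcal{A}|\, g(\lambda) + x}{\lambda}\right) \;\leq\; e^{-x}.$$

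Finally, since this bound holds for \emph{every} $\lambda \in \Lambda$, I would select the $\lambda^\star \in \Lambda$ minimizing the right-hand side. Pulling out the factor $|\mathcal{P}||\mathcal{A}|$ gives the algebraic identity
$$\min_{\lambda\in\Lambda} \frac{|\mathcal{P}||\mathcal{A}|\, g(\lambda) + x}{\lambda} \;=\; |\mathcal{P}||\mathcal{A}| \min_{\lambda\in\Lambda} \frac{g(\lambda) + x/(|\mathcal{P}||\mathcal{A}|)}{\lambda} \;=\; |\mathcal{P}||\mathcal{A}|\, C^{g}\!\left(\frac{x}{|\mathcal{P}||\mathcal{A}|}\right),$$
matching the bound stated in \eqref{eq:gvcc}.

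There is no substantive obstacle: the entire argument is a clean application of Ville's inequality once the joint martingale structure guaranteed by the second clause of $g$-VCC is in hand. The only conceptual subtlety worth flagging is that $\lambda$ is fixed before invoking Ville's inequality, so \textbf{no union bound over $\lambda$ is required}; the optimal $\lambda^\star$ is chosen deterministically after the probabilistic step, which is precisely why the $e^{-x}$ rate is preserved rather than inflated.
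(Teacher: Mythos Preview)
Your proposal is correct and follows essentially the same approach as the paper: fix $\lambda$, use the $g$-VCC product martingale structure, apply Ville's inequality, then optimize over $\lambda$ at the end. Your explicit remark that $\lambda$ is fixed before the probabilistic step (so no union bound is needed) is a nice clarification that the paper leaves implicit.
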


\begin{proof}
    
Fix $\lambda\in\Lambda$. As $X(t)$ is $g$-VCC, for any $u\in\R$, we can write

\begin{align*}
    \mathbb{P}\left(\exists\ t\in \mathbb{N}:\ \sum_{i\in\mathcal{P}}\sum_{k\in \mathcal{A}}X_{i,k}(t)>u\right) &= \mathbb{P}\left(\exists\ t\in \mathbb{N}:\ \exp\left(\sum_{i\in\mathcal{P}}\sum_{k\in \mathcal{A}}X_{i,k}(t)\right)>\exp(\lambda u)\right)\\
    &\leq \mathbb{P}\left(\exists\ t\in \mathbb{N}:\ \prod_{i\in\mathcal{P}}\prod_{k\in\mathcal{A}}M^{\lambda}_{i,k}(t)>\exp\left(\lambda u-(|\mathcal{P}||\mathcal{A}|)g(\lambda)\right)\right)
\end{align*}

As $\prod_{i\in\mathcal{P}}\prod_{k\in\mathcal{A}}M^{\lambda}_{i,k}(t)$, we have the following from Ville's inequality
\begin{align*}
    \mathbb{P}\left(\exists\ t\in \mathbb{N}:\ \prod_{i\in\mathcal{P}}\prod_{k\in\mathcal{A}}M^{\lambda}_{i,k}(t)>\exp\left(\lambda u-(|\mathcal{P}||\mathcal{A}|)g(\lambda)\right)\right)\leq \exp\left(-\lambda u+(|\mathcal{P}||\mathcal{A}|)g(\lambda)\right)
\end{align*}
Thus overall we get
\begin{align*}
    \mathbb{P}\left(\exists\ t\in \mathbb{N}:\ \sum_{i\in\mathcal{P}}\sum_{k\in \mathcal{A}}X_{i,k}(t)>u\right) \leq \exp\left(-\lambda u+(|\mathcal{P}||\mathcal{A}|)g(\lambda)\right)
\end{align*}
Equivalently, for all $x>0$ and for all $\lambda\in\Lambda$, we get
\begin{align*}
    \mathbb{P}\left(\exists\ t\in \mathbb{N}:\ \sum_{i\in\mathcal{P}}\sum_{k\in \mathcal{A}}X_{i,k}(t)>\dfrac{(|\mathcal{P}||\mathcal{A}|)g(\lambda)+x}{\lambda}\right) \leq e^{-x}
\end{align*}

Optimizing over $\lambda\in\Lambda$, we get

\begin{align*}
    \mathbb{P}\left(\exists\ t\in \mathbb{N}:\ \sum_{i\in\mathcal{P}}\sum_{k\in \mathcal{A}}X_{i,k}(t)>|\mathcal{P}||\mathcal{A}|C^g\left(\dfrac{x}{|\mathcal{P}||\mathcal{A}|}\right)\right) \leq e^{-x}
\end{align*}

\end{proof}

\begin{proof} From Lemma 3.4 of \cite{kaufmann2020contributions} we have that
$\forall \ x>0$ Eq.~\ref{eq:gvcc} implies the following
    \begin{align*}
    \mathbb{P}\left(\exists\ t\in \mathbb{N}:\ \sum_{i\in\mathcal{P}}\sum_{k\in \mathcal{A}}X_{i,k}(t)>|\mathcal{P}||\mathcal{A}|\mathcal{T}\left(\dfrac{x}{|\mathcal{P}||\mathcal{A}|}\right)\right) \leq e^{-x}
\end{align*}
where $\mathcal{T}(\cdot)$ is a non-explicit function defined in (3.7) \cite{kaufmann2020contributions}.  Substituting $ X_{i,k}(t):= N^{(i)}_k(t) d(\hat{\mu}^{(i)}_k(t),\mu^{(i)}_k)-3\log\left(1+\log\left(N^{(i)}_k(t)\right)\right)$, we have our result. 
\end{proof}

\begin{center}
    \fbox{The threshold scales as  $
    \beta(t,\delta) \simeq \log((\mathcal{M}-1)/\delta)+3|\mathcal{P}||\mathcal{A}|\log(1+\log(t))$}
\end{center}

For two-sided learning,  substituting $ X_{i,k}(t):= N^{(i)}_k(t) [d(\hat{\mu}^{(i)}_k(t),\mu^{(i)}_k)+d(\hat{\eta}^{(i)}_k(t),\eta^{(i)}_k)]-3\log\left(1+\log\left(N^{(i)}_k(t)\right)\right)$, gives the same threshold.

\section{Fluid One-Sided}
\begin{proposition}
For every positive $N\geq N_{\min}$, and for a player $p_i$, and a set of arms $\mathcal{B}^{(i)}\subseteq\mathcal{D}_m^{(i)}\cup\mathcal{A}_{\texttt{UM}}$ there is a unique set of variables $\bm{N}^{(i)}_{\mathcal{B}} = \left(N^{(i)}_k: a_k\in\mathcal{B}^{(i)}\cup\{m(i)\}\right)$, $I(N)$ satisfying the following conditions
\begin{align*}
    g^{(i)}_m &= 0\tag{$\forall p_i\in\mathcal{P}_{1}$}\\
    \sum_{i\in\mathcal{P}_1}\sum_{k\in\mathcal{A}}N^{(i)}_k &= N\\ 
    N^{(i)}_{k}D\left(\mu^{(i)}_{k},x^{(i)}_{m(i),k}\right)+N^{(i)}_{m(i)}D\left(\mu^{(i)}_{m(i)},x^{(i)}_{m(i),k}\right) &=I(N)\tag{$\forall a_k\in\mathcal{B}^{(i)},\forall p_i\in\mathcal{P}_{1}$}
\end{align*}

\end{proposition}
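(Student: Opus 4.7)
The plan is to reduce the system of conditions to the KKT first-order conditions of the convex program \texttt{LO1} restricted to the player set $\mathcal{P}_1$ and the arm subsets $\{\mathcal{B}^{(i)}\}_{p_i\in\mathcal{P}_1}$, and to establish existence and uniqueness in two stages: first solve the within-player subproblem for each $p_i\in\mathcal{P}_1$ separately, and then couple the players through the shared common-index value $I(N)$ together with the global budget constraint.

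\textbf{Step 1 (within-player).} Fix $p_i\in\mathcal{P}_1$. The conditions $g^{(i)}_{m(i)}=0$ together with $C^{(i)}_{m(i),k}=I(N)$ for every $a_k\in\mathcal{B}^{(i)}$ are exactly the first-order conditions for a per-player best-arm-identification subproblem on the arm set $\mathcal{B}^{(i)}\cup\{a_{m(i)}\}$, with $a_{m(i)}$ playing the role of the best arm. Invoking the \texttt{BAI} analysis of \cite{bandyopadhyay2024optimal}, these conditions admit a unique positive solution in terms of proportions $(w^{(i)\star}_k)_{a_k\in\mathcal{B}^{(i)}\cup\{a_{m(i)}\}}$ summing to one; consequently any feasible allocation satisfies $N^{(i)}_k = w^{(i)\star}_k t^{(i)}$, where $t^{(i)}:=\sum_{k} N^{(i)}_k$ is the local budget of $p_i$. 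Substituting back, the common index scales linearly with the local budget, $I(N)=\kappa^{(i)} t^{(i)}$, for an instance-dependent constant $\kappa^{(i)}>0$ determined only by $\mu$ and $\mathcal{B}^{(i)}$.

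\textbf{Step 2 (across players).} With the per-player reductions in hand, the remaining conditions collapse to the linear system $\kappa^{(i)} t^{(i)} = I(N)$ for all $p_i\in\mathcal{P}_1$, together with $\sum_{p_i\in\mathcal{P}_1} t^{(i)} = N$. Solving yields
\[
I(N) \;=\; \frac{N}{\sum_{p_i\in\mathcal{P}_1} (\kappa^{(i)})^{-1}}, \qquad t^{(i)} \;=\; \frac{I(N)}{\kappa^{(i)}},
\]
which is strictly increasing and linear in $N$, and which uniquely determines each $t^{(i)}$ and in turn each $N^{(i)}_k = w^{(i)\star}_k t^{(i)}$, establishing both existence and uniqueness.

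\textbf{Main obstacle.} The delicate part is Step 1: rigorously showing existence and uniqueness of the \texttt{BAI}-style solution with linear scaling in $t^{(i)}$. Uniqueness rests on the strict convexity of $\mu \mapsto D(\mu,x)$, which forces the anchor function $g^{(i)}_{m(i)}$ to be strictly monotone in the relative allocation between $a_{m(i)}$ and the arms of $\mathcal{B}^{(i)}$ along the equal-index manifold; existence then follows from continuity and an intermediate-value argument on the anchor equation, essentially identical to the treatment in \cite{bandyopadhyay2024optimal}. The random threshold $N_{\min}$ enters only to guarantee that along the algorithm's trajectory the local budgets are large enough, and the estimated sets $\mathcal{B}^{(i)}$ have stabilized, so that every $x^{(i)}_{m(i),k}$ stays strictly between $\mu^{(i)}_k$ and $\mu^{(i)}_{m(i)}$ and all KL terms remain strictly positive; the finiteness of $\mathbb{E}[N_{\min}]$ is inherited from the forced-exploration phase of \texttt{ATT1}.
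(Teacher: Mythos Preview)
Your two-stage decomposition---solve each player's local subproblem, then couple the players through the shared index value and the global budget---is exactly the route the paper takes, and both steps lean on \cite{bandyopadhyay2024optimal} for the per-player analysis.

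There is, however, one oversimplification in Step~1 that does not survive the general case $\mathcal{B}^{(i)}\subsetneq\mathcal{E}^{(i)}_m$. The anchor $g^{(i)}_{m(i)}$ is a sum over \emph{all} probable blocking arms in $\mathcal{E}^{(i)}_m$, not just those in $\mathcal{B}^{(i)}$; in the fluid setting the allocations $N^{(i)}_k$ for $a_k\in\mathcal{E}^{(i)}_m\setminus\mathcal{B}^{(i)}$ are frozen at their earlier positive values. Scaling the free variables $\{N^{(i)}_k:a_k\in\mathcal{B}^{(i)}\cup\{m(i)\}\}$ by a common factor then changes the ratio $N^{(i)}_{m(i)}/N^{(i)}_k$ for each frozen arm, hence changes $x^{(i)}_{m(i),k}$ and the anchor. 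So the local system is \emph{not} homogeneous, there is no budget-independent proportion vector $w^{(i)\star}$, and the map $t^{(i)}\mapsto I^{(i)}$ is not linear---your formula $I(N)=\kappa^{(i)}t^{(i)}$ and the closed-form solution in Step~2 both fail in this case.

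The paper circumvents this by using only that $N^{(i)}\mapsto I^{(i)}_{\mathcal{B}}(N^{(i)})$ is smooth and \emph{strictly increasing} (which is what \cite{bandyopadhyay2024optimal} actually delivers), and then applies the inverse function theorem twice: first to define $N^{(i)}(I):=(I^{(i)}_{\mathcal{B}})^{-1}(I)$ for each player, and second to invert the strictly increasing map $I\mapsto\sum_{p_i\in\mathcal{P}_1}N^{(i)}(I)$ at the value $N$. Your explicit linear formulas are correct in the special case $\mathcal{B}^{(i)}=\mathcal{E}^{(i)}_m$ (where homogeneity does hold); for the general statement, replacing ``linear'' by ``strictly increasing'' in Step~1 and dropping the closed form in Step~2 in favor of the monotonicity/inverse-function argument repairs the proof.
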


\begin{proof}

    Note that, the system of equations in the theorem statement has total $N\times K+1$ variables ($N_k^{(i)}$ for $i\in\{1,\dots,N\},~k\in\{1,\dots,K\}$ and $I$) and $N+1+N(K-1)=N\times K+1$ equations. We break our arguments into two separate. First we argue that, for a given no. of samples to a player $p_i$, there always exists an allocation which the player can make amongst the arms such that it's corresponding anchor function $g^{(i)}_{m(i)}(\cdot)=0$ and indexes of alternate arms are equal. We call this player specific allocation as the \emph{local allocation}. After this we argue that, the algorithm allocates samples to all the players in such a way that, if every player follow their local allocation, then their indexes will be equal. We call this the \emph{global allocation}. In the following argument, we separately show that a unique  \emph{local} and \emph{global} allocation exist for a given choice of $N$, and they are also smooth.


    \paragraph{Existence of local allocation} For every player $p_i$ we first fix  $N^{(i)}$. Then the $K+1$ variables $N^{(i)}_k$ and $I^{(i)}_{\mathcal{B}}$ specific to the samples allocated to player $p_i$ are decided by the following equations
    \begin{align*}
       g^{(i)}_m &= 0\\
       \sum_{k\in\mathcal{A}}N^{(i)}_k &= N^{(i)}\\
       N^{(i)}_{k}D\left(\mu^{(i)}_{k},x^{(i)}_{m(i),k}\right)+N^{(i)}_{m(i)}D\left(\mu^{(i)}_{m(i)},x^{(i)}_{m(i),k}\right) &= I^{(i)}_{\mathcal{B}}(N^{(i)})\tag{$\forall a_k\in\mathcal{B}^{(i)}$}.
    \end{align*}

    \cite[Proposition 2.1]{bandyopadhyay2024optimal} implies the existence of a unique solution to the above system of equations and the smoothness of these solutions as functions of $N^{(i)}$. Furthermore, \cite[Theorem 4.1]{bandyopadhyay2024optimal} provides the Anchored Top-2 fluid dynamics, which is a collection of ODE's pasted together via which the solution to the above system evolves as $N^{(i)}$ increases. We use $N_k^{(i)}(N^{(i)})$ for $k\in\{1,\dots,K\}$ and $I_{\mathcal{B}}^{(i)}(N^{(i)})$ to denote the allocation across arms and their minimum index after $N^{(i)}$ samples are given to player $p_i$.

    \paragraph{Existence of global allocation} We fix the total no. of samples $N$ allocated to all the players. For a given allocation $\left\{N^{(i)}\right\}_{i\in [N]}$ across the players, since every player follow the Anchored Top 2 fluid dynamics in their local time scale, index of every player $p_i$ will be $I^{(i)}_{\mathcal{B}}(N^{(i)})$ as defined in the argument for the local time-scale. We want this allocation to satisfy the following system of equations: 
    \begin{align*}
        \sum_{\mathcal{P}_1}N^{(i)} &= N\\
        I^{(i)}_{\mathcal{B}}(N^{(i)}) &= I(N)\tag{$\forall p_i\in\mathcal{P}_{1}$}.
    \end{align*}

    From \cite{bandyopadhyay2024optimal}, we know the uniqueness and smoothness of the local set of equations. We fix $N$ and divide the global analysis as follows 
    \begin{itemize}
        \item Fix $I$: ask if $\exists!\ N^{(i)} \ \text{s.t.}\ I^{(i)}_{\mathcal{B}}(N^{(i)})=I\ \forall p_i\in\mathcal{P}_1$. The answer is yes as
        \begin{itemize}
            \item $I^{(i)}_{\mathcal{B}}(\cdot )$ are strictly increasing, and thus $N^{(i)}(I):=\left(I^{(i)}_{\mathcal{B}}\right)^{-1}(I)$  are smooth functions of $I$, due to Inverse Function Theorem
            \item $I\to N^{(i)}(I)$ is strictly increasing, as , by the Inverse Function Theorem, derivative of $N^{(i)}(I)$ is:
            \begin{align*}
                \left(N^{(i)}\right)'(I) = \dfrac{1}{\left(I^{(i)}_{\mathcal{B}}\right)'\left(\left(I^{(i)}_{\mathcal{B}}\right)^{-1}(I)\right)}>0
            \end{align*}
        \end{itemize}
        \item We want to identify an $I$ such that $\sum_{p_i\in\mathcal{P}_1}N^{(i)}(I)=N$
        \begin{itemize}
            \item From above we know that $N^{(i)}(I)$ is strictly increasing. Thus $I(N) := \left(\sum N^{(i)}(\cdot)\right)^{-1}(N)$. Furthermore, since $I\to \sum_{p_i\in\mathcal{P}_1}N^{(i)}(I)$ is strictly increasing, Inverse function theorem implies that $N\to I(N)$ is a smooth function of $N$.  
        \end{itemize}
    \end{itemize}
    Thus we take $N^{(i)}(N):=N^{(i)}(I(N))$ following the definition of the function $N^{(i)}(I)$ in the above argument. By the above argument, the solution $N^{(i)}(N)$ for $p_i\in\mathcal{P}_1$ and $I(N)$ to the global system of equations is also unique. 
\end{proof}
\label{app:fluid-one-sided}
For each player $p_i$, define $f^{(i)}_k = -\dfrac{\partial}{\partial x} D\left(\mu^{(i)}_{m(i)},x\right)/D\left(\mu^{(i)}_{k},x\right)\Big|_{x=x^{(i)}_{m(i),k}}$, note that $f^{(i)}_k>0$.  Let $\Delta^{(i)}_{k}=\mu^{(i)}_{m(i)}-\mu^{(i)}_{k}$ and $h^{(i)}_{k} = f^{(i)}_k \dfrac{(t^{(i)}_{m(i)})^2\Delta^{(i)}_{k}}{(t^{(i)}_{m(i)}+t^{(i)}_k)^2}$. Further denote $D^{(i)}_{m(i),k}=D\left(\mu^{(i)}_{m(i)},x^{(i)}_{m(i),k}\right)$ where $k\in \mathcal{A}^{(i)}_{\min}$ and $D^{(i)}_{k}=D\left(\mu^{(i)}_{k},x^{(i)}_{m(i),k}\right)$. Let $h^{(i)}_{\min}= \sum_{k\in\mathcal{A}^{(i)}_{\min}(t)}h^{(i)}_{k}/D^{(i)}_{k,k}$, $h^{(i)}_{\texttt{w}} = \sum_{k\not\in\mathcal{A}^{(i)}_{\min}(t)\cup\{m(i)\}}h^{(i)}_{k}t^{(i)}_k$ and $D^{(i)}_{\min} = (\sum_{k\in\mathcal{A}^{(i)}_{\min}(t)}1/D^{(i)}_{k,k})^{-1}$

\begin{theorem}[One-Sided Fluid]
\label{thm:onesidedgeneral}
    If at some time $t_0$, suppose the minimum index player set $\mathcal{P}_{\min}(t_0)$ partitions into $\mathcal{P}_{+}(t_0)\cup\mathcal{P}_{-}(t_0)\cup \mathcal{P}_{0}(t_0)$. We consider the ODEs till $t_1$ defined as the smallest time after $t_0$ such that  one of the follows happens
    \begin{itemize}
    \item Minimum index at $t_1$ becomes equal to the index of player, arm pair for which the index at $t_0$ was strictly larger OR
    \item Anchor function of some player becomes equal to zero at $t_1$.
 \end{itemize}

        This gives  the following for all $t\in [t_0,t_1)$.

\begin{enumerate}
\item 
The minimum index evolves using the following ODE \vspace{-0.1in}\begin{align*}
    \left(\dot{C}_{\min}(t)\right)^{-1}&=\sum_{i\in\mathcal{P}_{+}(t_0)}\dfrac{1}{D^{(i)}_{m(i),k(t)}}+\sum_{\scriptstyle \substack{i \in \mathcal{P}_{-}(t_0) \\ k \in \mathcal{A}^{(i)}_{\min}(t_0)}}
\dfrac{1}{D^{(i)}_{k}}+\sum_{i\in\mathcal{P}_{0}(t_0)}\dfrac{h^{(i)}_{\min}\sum_{k\in\bar{\mathcal{A}}^{(i)}_{\min}}t^{(i)}_k+h^{(i)}_{\texttt{w}}/D^{(i)}_{\min}}{C_{\min}h^{(i)}_{\min}+h^{(i)}_{\texttt{w}}}
\end{align*}
where $\forall p_i\in \mathcal{P}_{+}(t_0)$, define $k(t) := \arg\min_{k\in\mathcal{A}^{(i)}_{\min}(t)}C^{(i)}_{m(i),k}(t)$.
    \item for $p_i\in \mathcal{P}_{-}(t_0), k\in\mathcal{A}^{(i)}_{\min}(t_0)$, we have $\dot{t}^{(i)}_k = (D^{(i)}_{k,k})^{-1}\dot{C}_{\min}(t)$ and for $p_i\in \mathcal{P}_{+}(t_0)$, we have $\dot{t}^{(i)}_{m(i)} = (D^{(i)}_{m(i),k(t)})^{-1}\dot{C}_{\min}(t)$. Further, for  $p_i\in \mathcal{P}_{0}(t_0), k\in\mathcal{A}^{(i)}_{\min}(t_0)$, we have
    \begin{align*}
     \dot{t}^{(i)}_{m(i)}(t) &= \dot{C}_{\min}\dfrac{t^{(i)}_{m(i)}h^{(i)}_{\min}}{C_{\min }h^{(i)}_{\min}+h^{(i)}_{\texttt{w}}}\ \ \text{and}\ \ 
      \dot{t}^{(i)}_{k}(t) = \dot{C}_{\min}\dfrac{h^{(i)}_{\texttt{w}}/D^{(i)}_{k,k}+t^{(i)}_k h^{(i)}_{\min}}{{C_{\min} h^{(i)}_{\min}}+h^{(i)}_{\texttt{w}}}
\end{align*}
 
        \item for $p_{i'}\in\mathcal{P}\backslash\mathcal{P}_{\min}(t)$ and $a_{k'}\in\mathcal{E}_{m}^{(i')}$, we have $\dot{C}^{(i')}_{m(i'),k'}(t)=0$

\item When $\mathcal{P}_{0}\neq \emptyset$, there exists a constant $\beta>0$ independent of $t$ such that $\dot{C}_{\min}>\beta$ and in addition, other larger indexes for players $p_i\in\mathcal{P}_{\min}$ are bounded from above. 
When $\mathcal{P}_{0}= \emptyset$, indexes for $p_i\in\mathcal{P}_{\min}$ increase with $N$. Indexes for player $p_i\not\in\mathcal{P}_{\min}$ are constant. Thus, eventually all indexes catch up.  
\end{enumerate}
\end{theorem}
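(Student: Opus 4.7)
The plan is to derive each equation of the fluid ODE by combining three ingredients: the algorithm's selection rule on the subclasses $\mathcal{P}_{+}$, $\mathcal{P}_{-}$, $\mathcal{P}_{0}$; envelope-theorem style expressions for the partial derivatives of $C^{(i)}_{m(i),k}$ and $g^{(i)}_{m(i)}$ with respect to allocations; and the conservation law $\sum_{i,k}\dot{t}^{(i)}_k=1$, which expresses that the fluid budget grows at unit rate. Since $x^{(i)}_{m(i),k}$ is the $x$-minimizer of the quadratic-like expression defining the index, the envelope theorem yields $\partial_{t^{(i)}_{m(i)}} C^{(i)}_{m(i),k}=D^{(i)}_{m(i),k}$ and $\partial_{t^{(i)}_k} C^{(i)}_{m(i),k}=D^{(i)}_{k,k}$. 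A chain-rule calculation on $g^{(i)}_{m(i)}$, using $\partial_x[D(\mu^{(i)}_{m(i)},x)/D(\mu^{(i)}_k,x)]|_{x^{(i)}_{m(i),k}}=-f^{(i)}_k$ together with the identities for $\partial x^{(i)}_{m(i),k}/\partial t^{(i)}_{\cdot}$, produces the quantities $h^{(i)}_k$, $h^{(i)}_{\min}$, $h^{(i)}_{\texttt{w}}$ that appear in the statement.

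Next, I translate the algorithm into rates. For $p_{i'}\notin \mathcal{P}_{\min}$, no samples are given, so $\dot{t}^{(i')}_k=0$ for every $k$, and part (3) follows immediately. For $p_i\in \mathcal{P}_{+}$ only $a_{m(i)}$ is sampled; the rate of growth of the challenger index $C^{(i)}_{m(i),k(t)}$, which equals $\dot{C}_{\min}$, is $D^{(i)}_{m(i),k(t)}\dot{t}^{(i)}_{m(i)}$, giving $\dot{t}^{(i)}_{m(i)}=\dot{C}_{\min}/D^{(i)}_{m(i),k(t)}$. For $p_i\in \mathcal{P}_{-}$ only arms $a_k\in \mathcal{A}^{(i)}_{\min}$ are sampled and must maintain a common index, forcing $\dot{t}^{(i)}_k=\dot{C}_{\min}/D^{(i)}_{k,k}$. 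For $p_i\in \mathcal{P}_{0}$ one must simultaneously preserve $g^{(i)}_{m(i)}(t)\equiv 0$ and keep the challenger indices equal; differentiating both constraints yields a linear system in $(\dot{t}^{(i)}_{m(i)},(\dot{t}^{(i)}_k)_{k\in \mathcal{A}^{(i)}_{\min}})$ whose unique solution gives the two formulas with $h^{(i)}_{\min}$ and $h^{(i)}_{\texttt{w}}$ in part (2). Writing $\dot{t}^{(i)}=\sum_k \dot{t}^{(i)}_k$, summing over $p_i\in \mathcal{P}_{\min}$, invoking the conservation law, and solving for $\dot{C}_{\min}^{-1}$ yields the sum-of-reciprocals formula in part (1).

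For part (4), the case $\mathcal{P}_{0}=\emptyset$ is easy: the right-hand side of (1) is a finite positive sum depending continuously on the state, so $\dot{C}_{\min}>0$; indices of $\mathcal{P}_{\min}$-players grow while indices of the other players stay frozen by (3), so eventually a larger index will be caught. When $\mathcal{P}_{0}\neq \emptyset$ I bound the $\mathcal{P}_{0}$-contribution to $\dot{C}_{\min}^{-1}$ uniformly on the compact window $[t_0,t_1)$: the denominator $C_{\min}h^{(i)}_{\min}+h^{(i)}_{\texttt{w}}$ is strictly positive because $h^{(i)}_k>0$ (since $\Delta^{(i)}_k>0$ and $f^{(i)}_k>0$), while the numerator is controlled by the allocations, which are bounded by $t_1$; continuity over the compact sublevel set delivers a uniform $\beta>0$. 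Upper-boundedness of the other (non-minimum) indices for $p_i\in \mathcal{P}_{\min}$ follows because their rates are $\dot{C}_{\min}$-multiples of bounded coefficients.

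The main obstacle is the $\mathcal{P}_{0}$ case. Differentiating $g^{(i)}_{m(i)}\equiv 0$ couples $\dot{t}^{(i)}_{m(i)}$ with $\dot{t}^{(i)}_k$ for $k\in \mathcal{A}^{(i)}_{\min}$ and, through $x^{(i)}_{m(i),k}$, with arms $k\notin \mathcal{A}^{(i)}_{\min}$ whose allocations are frozen on $[t_0,t_1)$; reading off the closed form in (2) requires careful bookkeeping of how the $h^{(i)}_k/D^{(i)}_{k,k}$ terms split between indices inside and outside $\mathcal{A}^{(i)}_{\min}$. A secondary subtlety is well-definedness of the ODE at $t=t_0^{+}$: if $t_0$ is a moment where a pair just joined $\mathcal{A}^{(i)}_{\min}$ or a player just crossed into $\mathcal{P}_{0}$, one must use a right-limit argument together with the uniqueness of the local allocation established in the preceding proposition to initialize the system and to justify concatenating the resulting ODE pieces across $t_0,t_1,t_2,\dots$.
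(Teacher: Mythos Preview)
Your proposal is correct and follows essentially the same approach as the paper: both derive the rates case-by-case from the selection rule, use the envelope identities $\partial_{t^{(i)}_{m(i)}} C^{(i)}_{m(i),k}=D^{(i)}_{m(i),k}$ and $\partial_{t^{(i)}_k} C^{(i)}_{m(i),k}=D^{(i)}_{k,k}$, differentiate the constraints $\dot g^{(i)}_{m(i)}=0$ and $\dot C^{(i)}_{m(i),k}=\dot C^{(i)}_{m(i),l}$ to solve the $\mathcal{P}_{0}$ linear system, and then feed everything into the conservation law $\sum_{i,k}\dot t^{(i)}_k=1$ to obtain $\dot C_{\min}^{-1}$. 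For part (4) the paper simply invokes Lemmas E.1--E.3 of \cite{bandyopadhyay2024optimal}, whereas you sketch a direct compactness/positivity argument; both routes are valid and your remark on right-limit initialization at $t_0^{+}$ is a helpful addition that the paper leaves implicit.
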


\begin{proof}
    \begin{align}
        \sum_{i\in\mathcal{P}_{0}} \dot{t}^{(i)}_{m(i)}+ \sum_{i\in\mathcal{P}_{0}} \sum_{k\in\mathcal{A}^{(i)}_{\min}}\dot{t}^{(i)}_k +\sum_{i\in\mathcal{P}_{+}}\dot{t}^{(i)}_{m(i)}+\sum_{i\in \mathcal{P}_{-}}\sum_{k\in\mathcal{A}^{(i)}_{\min}}\dot{t}^{(i)}_k   = 1 \label{eq:sumtionesided}
    \end{align}
    \begin{itemize}
        \item    For $p_i\in \mathcal{P}_{-}(t_0), k\in\mathcal{A}^{(i)}_{\min}(t_0)$, we have $\dfrac{d}{dt}t^{(i)}_k = \dfrac{1}{D^{(i)}_{k,k}}\dfrac{d}{dt}C_{\min}(t)$
        \item For $p_i\in \mathcal{P}_{0}$, we use the fact that the minimum indexes for $p_i$ also stay together. If $|\mathcal{A}^{(i)}_{\min}|=1$, then for $a_{\ell}\in \mathcal{A}^{(i)}_{\min}$, we have
    \begin{align*}
        \dot{C}_{\min}=\dot{t}^{(i)}_{m(i)}D^{(i)}_{m(i),\ell} + \dot{t}^{(i)}_kD^{(i)}_{\ell,\ell}
    \end{align*}
      Using $\dot{g}_i=0$, we get
    \begin{align*}
        \dot{t}^{(i)}_{m(i)}\sum_{} h^{(i)}_k t^{(i)}_{k}=\dot{t}^{(i)}_{\ell}h^{(i)}_{\ell}t^{(i)}_{m(i)}
    \end{align*}
    \begin{align*}
        \dot{t}^{(i)}_{m(i)} =  \dfrac{\dot{C}_{\min}h^{(i)}_{\ell}t^{(i)}_{m(i)}}{D^{(i)}_{m(i),\ell}h^{(i)}_{\ell}t^{(i)}_{m(i)}+\sum_{} h^{(i)}_k t^{(i)}_{k}D^{(i)}_{\ell,\ell}},\ \ \dot{t}^{(i)}_{\ell} = \dfrac{\dot{C}_{\min}\sum_{} h^{(i)}_k t^{(i)}_{k}}{D^{(i)}_{m(i),\ell}h^{(i)}_{\ell}t^{(i)}_{m(i)}+\sum_{} h^{(i)}_k t^{(i)}_{k}D^{(i)}_{\ell,\ell}}
    \end{align*}
    Otherwise when $|\mathcal{A}^{(i)}_{\min}|>1$, for $a_k,a_l\in\mathcal{A}^{(i)}_{\min}$, we have
    \begin{align*}
    \dot{t}^{(i)}_{m(i)}D^{(i)}_{m(i),k} + \dot{t}^{(i)}_kD^{(i)}_{k,k}= \dot{t}^{(i)}_{m(i)}D^{(i)}_{m(i),l} + \dot{t}^{(i)}_lD^{(i)}_{l,l}
    \end{align*}
    Using $C_{i,k}=C_{i,l}$, we get
    \begin{align*}
        \dot{t}^{(i)}_k = \dfrac{1}{t^{(i)}_{m(i)}D^{(i)}_{k,k}}\left(t^{(i)}_{k}D^{(i)}_{k,k} -t^{(i)}_lD^{(i)}_{l,l})\right)\dot{t}^{(i)}_{m(i)} + \dfrac{D^{(i)}_{l,l}}{D^{(i)}_{k,k}} \dot{t}^{(i)}_k
    \end{align*}
        
    Using $\dot{g}_i=0$, we get
    \begin{align*}
        \dot{t}^{(i)}_{m(i)}\sum h^{(i)}_kt^{(i)}_k- \sum \dot{t}^{(i)}_{k}f^{(i)}_kh^{(i)}_kt^{(i)}_{m(i)}=0
    \end{align*}
Finally
    \begin{align*}
          \dot{t}^{(i)}_k =   \dot{t}^{(i)}_{m(i)}\dfrac{h^{(i)}_{\texttt{w}}/D^{(i)}_{k,k}+t^{(i)}_{k}h^{(i)}_{\min}}{t^{(i)}_{m(i)}h^{(i)}_{\min}}
    \end{align*}
    By definition, we have
    \begin{align*}
        \dot{C}_{\min}&= \dot{t}^{(i)}_{m(i)}D(\mu^{(i)}_{m(i)},x^{(i)}_{m(i),k}) + \dot{t}^{(i)}_kD(\mu^{(i)}_k,x^{(i)}_{m(i),k})\\
        &=\dot{t}^{(i)}_{m(i)}\left(  \dfrac{C_{\min}h^{(i)}_{\min}+h^{(i)}_{\texttt{w}}}{t^{(i)}_{m(i)}h^{(i)}_{\min}}\right)
    \end{align*}
      \item   For $p_i\in\mathcal{P}_{+}$, we note that the algorithm does not look at the individual indexes of each challenger arm for this $p_i$, thus, the indexes may separate out. However, the algorithm ensures that the minimum index of $p_i$, remains equal to minimum index of other players $p_j$. To solve the fluid dynamics, we however need $\dot{t}^{(i)}_{m(i)}$ in terms of $\dot{C}_{\min}$. Thus define the following quantity 
    \begin{align*}
        \dot{C}_{\min}(t) = \dot{t}^{(i)}_{m(i)}D(\mu^{(i)}_{m(i)},x^{(i)}_{m(i),k(t)})\quad \text{where}\  k(t)=\arg\min_{k\in\mathcal{A}^{(i)}_{\min}(t)}C_{i,k}(t)\tag{$ \forall t\in [t_0,t_1)$}
    \end{align*}
    \end{itemize}
    Using Eq.\ref{eq:sumtionesided} we get the dynamics of $C_{\min}$ and substituting it in the dynamics of allocation derived above we get the rest of the results.

    Proof of Statement 3. For $p_i\not\in\mathcal{P}_{\min}$, allocations to $p_i$ are constant thus $\dot{t}^{(i)}_k=0 \forall k$ which implies $\dot{C}^{(i)}_{m(i),k}=0$ by definition.

    Proof of Statement 4. When $\mathcal{P}_{0}\neq \emptyset$, for $p_i\in\mathcal{P}_{0}$ we have for $C^{(i)}_{\min}:=\min C^{(i)}_{m(i),k}$ that $\dot{C}^{(i)}_{\min}>\beta$ and indexes corresponding to $k\in\mathcal{E}^{(i)}_m\backslash\mathcal{A}_{\min}^{(i)}$ the indexes are bounded from above, which follows exactly as \cite[Lemma E.1., E.2., E.3.]{bandyopadhyay2024optimal}. As $\dot{C}_{\min}=C^{(i)}_{\min}$, this extends to other players $p_i\in \mathcal{P}_+\cup\mathcal{P}_-$.\\
    When $\mathcal{P}_{0}=\emptyset$, from $\dot{C}_{\min}$ we have $\dot{C}_{\min}>0$. Statement 3 implies indexes for players $p_i\not\in\mathcal{P}_{\min}$ are constant. 
\end{proof}

\subsection{Indexes stay together (Fluid arguments)}
Following we give an argument of why indexes stay together in the fluid setting. 
Let $\mathcal{PA}_{\min}(t_0) = \{(p_i,a_k) : a_k\in\mathcal{D}^{(i)}_m\cup\mathcal{A}_{\texttt{UM}}\ \ C_{i,m(i),k}(t_0)=C_{\min}(t_0)\}$ denote the set of player, arm pair with minimum indexes.  \\

We first show that minimum index for different players in $\mathcal{P}_{\min}(t_0)$ remain together.\\
We prove this by contradiction, suppose the algorithm allocates samples to players in $ \mathcal{P}_1\subseteq P_{\min}(t_0)$ but does not allocate to players in $\mathcal{P}_2= P_{\min}(t_0)\backslash \mathcal{P}_1\neq \emptyset$.
Then we have
\begin{align*}
    C^{'(j)}_{\min} &=0 \tag{$\forall p_j\in \mathcal{P}_2$}\\
    \dot{C}^{(i)}_{\min} = \dot{C}_{i,m(i),k} &=\dot{N}^{(i)}_{m(i)}D(\mu^{(i)}_{m(i)},x^{(i)}_{m(i),k})+\dot{N}^{(i)}_{k}D(\mu^{(i)}_{k},x^{(i)}_{m(i),k}) \tag{$\forall p_i\in \mathcal{P}_1\ a_k\in\mathcal{A}^{
(i)}_{\min}(t_0)$}
\end{align*}
Thus since $^{(i)}_{\min} >C^{'(j)}_{\min}$ as at least one of $\dot{N}^{(i)}_{m(i)}$ or $\dot{N}^{(i)}_{k}$ is strictly positive, player $p_j$ again becomes a minimum index player and samples get allocated to it, contradicting the assumption that $
\mathcal{P}_2\neq \emptyset$.\\

We now show that if for a player $p_i$,  $g^{(i)}_{m(i)}\leq 0$, then different indexes for this player also remains together. Fix a player $p_i$. Suppose $\mathcal{A}^{(i)}_1\subseteq \mathcal{A}^{(i)}_{\min}(t_0)$ are the set of arms for which $p_i$ is allocated samples with, but is not allocated with arms in $\mathcal{A}^{(i)}_2=\mathcal{A}^{(i)}_{\min}(t_0)\backslash \mathcal{A}^{(i)}_1$.  If $g^{(i)}_{m(i)}<0$, we have
\begin{align*}
     \dot{C}^{(i)}_{m(i),\ell} &=0 \tag{$\forall a_{\ell}\in  \mathcal{A}^{(i)}_2$}\\
   \dot{C}_{i,m(i),k} &=\dot{N}^{(i)}_{k}D(\mu^{(i)}_{k},x^{(i)}_{m(i),k})  \tag{$\forall a_k\in  \mathcal{A}^{(i)}_1$}
\end{align*}
Thus, $\dot{C}_{i,m(i),k} >\dot{C}^{(i)}_{m(i),\ell}$, and leads to a contradiction that $a_{\ell}\in  \mathcal{A}^{(i)}$.\\
If $g^{(i)}_{m(i)}=0$, however, we have
\begin{align*}
     \dot{C}^{(i)}_{m(i),\ell} &=\dot{N}^{(i)}_{m(i)}D(\mu^{(i)}_{m(i)},x^{(i)}_{m(i),\ell}) \tag{$\forall a_{\ell}\in  \mathcal{A}^{(i)}_2$}\\
   \dot{C}_{i,m(i),k} &=\dot{N}^{(i)}_{m(i)}D(\mu^{(i)}_{m(i)},x^{(i)}_{m(i),k})+\dot{N}^{(i)}_{k}D(\mu^{(i)}_{k},x^{(i)}_{m(i),k})  \tag{$\forall a_k\in  \mathcal{A}^{(i)}_1$}
\end{align*}

If $D(\mu^{(i)}_{m(i)},x^{(i)}_{m(i),k})-D(\mu^{(i)}_{m(i)},x^{(i)}_{m(i),\ell})\geq 0$ it trivially follows as then $\dot{C}^{(i)}_{m(i),\ell}<\dot{C}_{i,m(i),k}$. \\
For $D(\mu^{(i)}_{m(i)},x^{(i)}_{m(i),k})-D(\mu^{(i)}_{m(i)},x^{(i)}_{m(i),\ell})<0$, we use the argument similar to similar to \cite{bandyopadhyay2024optimal} forward utilizing the fact that $g^{'(i)}_{m(i)}=0$.

If $g^{(i)}_{m(i)}>0$, then we have from above that
\begin{align*}
    C_{\min}^{'(i)} = \dot{N}^{(i)}_{m(i)}\min_{k\in\mathcal{A}^{(i)}_{\min}}D(\mu^{(i)}_{m(i)},x^{(i)}_{m(i),k}) = C_{\min}^{'(j)}
\end{align*}

\subsection{Relating Global and Local timescales}

\begin{theorem}[Relating the local time scale with the global time scale]
\label{thm:onesidedlocalglobal}
    If $p_i$ correspond to the minimum index player at time $t_0$, then its minimum index defined as $C^{(i)}_{\min}(t_0)$ follows the following ODE 
    \begin{align}
 \dot{C}^{(i)}_{\min}(t_0)  = \dot{C}_{\min}(t_0)= \left({\sum_{p_j\in\mathcal{P}_{\min}(t_0)}\left(\dfrac{d}{dt^{(j)}}C^{(j)}_{\min}(t_0^{(j)})\right)^{-1}}\right)^{-1}\ \ \forall p_i\in\mathcal{P}_{\min}(t_0) \label{eq:cminlocal}
\end{align}
where the player $p_i$'s allocation defined as $t^{(i)}(t_0)=\sum_{k\in\mathcal{D}^{(i)}\cup \UMA_m}t^{(i)}_k(t_0)+t^{(i)}_{m(i)}$ evolves as follows
\begin{align*}
    \dot{t}^{(i)}(t_0) &= 
    \dot{C}_{\min}(t_0)\left(\dfrac{d}{dt^{(i)}}C^{(i)}_{\min}(t_0^{(i)})\right)^{-1}.
\end{align*}
\begin{itemize}
    \item for $p_i\in \mathcal{P}_{-}(t_0), k\in\mathcal{A}^{(i)}_{\min}(t_0)$
    \begin{align*}
        \dfrac{d}{dt^{(i)}}t^{(i)}_k(t_0)  &= \dfrac{1}{D^{(i)}_{k,k}} \dfrac{d}{dt^{(i)}}C^{(i)}_{\min}(t_0^{(i)})\ \ \  \text{and}\ \ \ 
        \dfrac{d}{dt^{(i)}}C^{(i)}_{\min}(t_0^{(i)})= \dfrac{1}{\sum_{k\in\mathcal{A}^{(i)}_{\min}(t_0)}\dfrac{1}{D^{(i)}_{k,k}}}
    \end{align*}
    \item for $p_i\in \mathcal{P}_{+}(t_0), k\in\mathcal{A}^{(i)}_{\min}(t_0)$, we have $\dfrac{d}{dt^{(i)}}t^{(i)}_{m(i)}(t_0) =1, \dfrac{d}{dt^{(i)}}C^{(i)}_{\min}(t_0^{(i)})=D^{(i)}_{m(i)}$
    \item for  $p_i\in \mathcal{P}_{0}(t_0), k\in\mathcal{A}^{(i)}_{\min}(t_0)$
\begin{align*}
     \dfrac{d}{dt^{(i)}}t^{(i)}_{m(i)}(t) &= \dfrac{d}{dt^{(i)}}C^{(i)}_{\min}(t_0^{(i)})\dfrac{t^{(i)}_{m(i)}h^{(i)}_{\min}}{C_{\min h^{(i)}_{\min}}+h^{(i)}_{\texttt{w}}}\quad\text{and}\\ 
     \dfrac{d}{dt^{(i)}}t^{(i)}_{k}(t)&= \dfrac{d}{dt^{(i)}}C^{(i)}_{\min}(t_0^{(i)})\dfrac{h^{(i)}_{\texttt{w}}/D^{(i)}_{k,k}+t^{(i)}_k h^{(i)}_{\min}}{{C_{\min} h^{(i)}_{\min}}+h^{(i)}_{\texttt{w}}}
\end{align*}
where 
\begin{align*}
     \dfrac{d}{dt^{(i)}}C^{(i)}_{\min}(t_0^{(i)})&= \dfrac{{C_{\min} h^{(i)}_{\min}}+h^{(i)}_{\texttt{w}}}{h^{(i)}_{\texttt{w}}/D^{(i)}_{k,k}+ h^{(i)}_{\min}\sum_{k\in\mathcal{A}^{(i)}_{\min}\cup \{m(i)\}}t^{(i)}_k}
\end{align*}
and for players $p_i,p_j$ we have
\begin{align*}
    \dfrac{d}{dt}t^{(i)}(t_0)\dfrac{d}{dt^{(i)}}C^{(i)}_{\min}(t_0^{(i)}) &= \dfrac{d}{dt}t^{(j)}(t_0)\dfrac{d}{dt^{(j)}}C^{(j)}_{\min}(t_0^{(j)})
\end{align*}

\end{itemize}
\end{theorem}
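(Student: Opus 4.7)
The plan is to derive the relation by coupling the local Best-Arm-Identification (\texttt{BAI}) dynamics of each player with the global sample-budget constraint, and then to read off the case-by-case formulas from the local \texttt{BAI} fluid dynamics already established in \cite{bandyopadhyay2024optimal}. First I would observe that, since players $p_i\not\in\mathcal{P}_{\min}(t_0)$ receive no samples in the fluid regime (Theorem~\ref{thm:onesidedgeneral}, part~3), conservation of the budget reads
\begin{equation*}
\sum_{p_i\in\mathcal{P}_{\min}(t_0)}\dot{t}^{(i)}(t_0)\;=\;1.
\end{equation*}
Next, from the ``indexes stay together'' argument in the fluid regime, $C^{(i)}_{\min}(t)=C_{\min}(t)$ for all $p_i\in\mathcal{P}_{\min}(t_0)$ on $[t_0,t_1)$, so $\dot C^{(i)}_{\min}=\dot C_{\min}$.

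The second step is the chain rule on each player's local clock. Since locally, on $t^{(i)}$, each player $p_i$ is solving a \texttt{BAI} problem whose anchor / index dynamics are exactly those of \cite[Theorem~4.1]{bandyopadhyay2024optimal}, the quantity $C^{(i)}_{\min}(t^{(i)})$ depends on $t$ only through $t^{(i)}(t)$. Therefore
\begin{equation*}
\dot C_{\min}(t_0)\;=\;\dot C^{(i)}_{\min}(t_0)\;=\;\frac{d}{dt^{(i)}}C^{(i)}_{\min}\bigl(t_0^{(i)}\bigr)\cdot\dot t^{(i)}(t_0),
\end{equation*}
which rearranges to
$\dot t^{(i)}(t_0)=\dot C_{\min}(t_0)\bigl(\tfrac{d}{dt^{(i)}}C^{(i)}_{\min}(t_0^{(i)})\bigr)^{-1}$. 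Summing over $p_i\in\mathcal{P}_{\min}(t_0)$ and invoking the budget constraint gives the stated harmonic-mean formula for $\dot C_{\min}$.

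For the case-by-case expressions, I would specialize to the three partitions $\mathcal{P}_\pm,\mathcal{P}_0$ using the local \texttt{BAI} ODEs. For $p_i\in\mathcal{P}_-(t_0)$, locally only the challenger arms in $\mathcal{A}^{(i)}_{\min}$ are sampled, so $\tfrac{d}{dt^{(i)}}C^{(i)}_{\min}=(\sum_{k\in\mathcal{A}^{(i)}_{\min}}1/D^{(i)}_{k,k})^{-1}$ and $\tfrac{d}{dt^{(i)}}t^{(i)}_k=(D^{(i)}_{k,k})^{-1}\tfrac{d}{dt^{(i)}}C^{(i)}_{\min}$. For $p_i\in\mathcal{P}_+(t_0)$, locally only the leader $a_{m(i)}$ is sampled, so $\tfrac{d}{dt^{(i)}}t^{(i)}_{m(i)}=1$ and $\tfrac{d}{dt^{(i)}}C^{(i)}_{\min}=D^{(i)}_{m(i),k(t)}$. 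For $p_i\in\mathcal{P}_0(t_0)$, the local \texttt{BAI} dynamics with anchor $g^{(i)}_{m(i)}=0$ (kept zero by the relation $\dot g^{(i)}_{m(i)}=0$) yield the formulas for $\tfrac{d}{dt^{(i)}}t^{(i)}_{m(i)}$ and $\tfrac{d}{dt^{(i)}}t^{(i)}_k$ in terms of $h^{(i)}_{\min},h^{(i)}_{\mathtt{w}}$ given in Theorem~\ref{thm:onesidedgeneral}, and then $\tfrac{d}{dt^{(i)}}C^{(i)}_{\min}$ is obtained by substituting these into the defining identity $\dot C^{(i)}_{\min}=\dot t^{(i)}_{m(i)}D^{(i)}_{m(i),k}+\dot t^{(i)}_kD^{(i)}_{k,k}$ and dividing by $\dot t^{(i)}$; the consistency condition between players $p_i$ and $p_j$ is immediate once both sides equal $\dot C_{\min}$.

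The main obstacle is the $\mathcal{P}_0$ case: here one must simultaneously enforce (a)~$\dot g^{(i)}_{m(i)}=0$, (b)~equality of multiple minimum indexes inside $\mathcal{A}^{(i)}_{\min}$, and (c)~the global budget constraint, so the local derivative $\tfrac{d}{dt^{(i)}}C^{(i)}_{\min}$ is not simply a harmonic mean but depends on $h^{(i)}_{\min}$, $h^{(i)}_{\mathtt{w}}$, and the current allocation; verifying that the formula matches the one obtained by summing allocations across $\mathcal{A}^{(i)}_{\min}\cup\{m(i)\}$ requires a short algebraic manipulation using the identity $\sum_{k\in\mathcal{A}^{(i)}_{\min}} t^{(i)}_k h^{(i)}_k/D^{(i)}_{k,k} + h^{(i)}_{\mathtt w}=t^{(i)}_{m(i)}h^{(i)}_{\min}\cdot(\text{coefficient})$ coming from $g^{(i)}_{m(i)}=0$. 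Once this identity is in hand, matching with the global ODE in Theorem~\ref{thm:onesidedgeneral}(1) is a direct substitution, which confirms $\dot C_{\min}(t_0)=\bigl(\sum_{p_j\in\mathcal{P}_{\min}(t_0)}(\tfrac{d}{dt^{(j)}}C^{(j)}_{\min}(t_0^{(j)}))^{-1}\bigr)^{-1}$ and completes the proof.
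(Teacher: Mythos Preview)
Your proposal is correct and follows essentially the same approach as the paper: the chain rule $\dot C_{\min}=\dot t^{(i)}\cdot\tfrac{d}{dt^{(i)}}C^{(i)}_{\min}$ combined with the budget constraint $\sum_{p_i\in\mathcal{P}_{\min}}\dot t^{(i)}=1$ yields the harmonic-mean formula, and the local case-by-case dynamics are imported from the \texttt{BAI} fluid analysis in \cite{bandyopadhyay2024optimal}. The paper's proof is actually terser than yours---it simply cites \cite{bandyopadhyay2024optimal} for all the local-time-scale expressions rather than spelling out the $\mathcal{P}_{\pm},\mathcal{P}_0$ cases or the algebraic check you flag as an obstacle.
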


\begin{proof}
    Since $\dot{C}_{\min}=\dot{C}^{(i)}_{\min}$ we have
    \begin{align*}
   \dot{C}_{\min}(t_0) &= \dot{t}^{(i)}\dfrac{d}{dt^{(i)}}C^{(i)}_{\min}(t_0^{(i)})
\end{align*}
Using $\sum_i\dot{t}^{(i)}=1$, we get the Eq.~\ref{eq:cminlocal}. The dynamics for $t^{(i)}_k$ w.r.t. $t^{(i)}$ follows as in \cite{bandyopadhyay2024optimal} as in local time scale, defined by $t^{(i)}$ for each player $p_i$, the algorithm replicates the best-arm idenfitication \texttt{BAI} problem.
\end{proof}
\section{Non-Fluid One-Sided learning}
\label{app:nonfluid-one-sided}
From \cite[Proposition G.5, Statement 2]{bandyopadhyay2024optimal}, we have for $\texttt{ATT1}$, Algo.~\ref{algo:mainATT1} that $N^{(i)}(N)\geq N^{\gamma}-C_{\texttt{exp}}$ where $C_{\texttt{exp}}$ is a constant dependent on $M$ and $\gamma$. \\
The key in the analysis is to analyze after $N\geq T_{\texttt{conv}}: \forall\ p_i$ we have $N^{(i)}\geq T^{(i)}_{\stable}$. Choosing $T_{\texttt{conv}}:=\max_i\left(T_{\texttt{stable}}^{(i)}+C_{\texttt{exp}}\right)^{1/\gamma}$ implies $N^{(i)}(N) \geq T^{(i)}_{\stable} \forall i$. Introduce the constant $\epsilon(\mu^{(i)})=\frac{1}{4}\min_{k\in\mathcal{E}^{(i)}_m}|\mu^{(i)}_{m(i)}-\mu^{(i)}_k|$.
\begin{lemma}
    For all player $p_i$, after $N^{(i)}(N) \geq T^{(i)}_{\stable}$, we have the following, 
        \[\max_{k}|\hat{\mu}^{(i)}_k-\mu^{(i)}_k|\leq \epsilon(\mu^{(i)})(N^{(i)}(N))^{-\xi}\] and $N^{(i)}_k=\Theta(N^{(i)}(N))$.
        \label{lemma:localprop}
\end{lemma}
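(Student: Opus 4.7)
The plan is to prove the two assertions of the lemma separately but under a common stable time $T^{(i)}_{\stable}$, built in two stages. First I would invoke the forced-exploration branch of step~3 of \texttt{ATT1}: whenever $\min_k N^{(i)}_k < (N^{(i)})^{\gamma}$, the algorithm deterministically pulls the least-sampled arm of $p_i$. Mirroring the argument in \cite[Proposition G.5]{bandyopadhyay2024optimal}, this yields a deterministic lower bound $N^{(i)}_k(N) \geq (N^{(i)}(N))^{\gamma} - C_{\texttt{exp}}$, valid uniformly in $k$ for every sufficiently large $N$, where $C_{\texttt{exp}}$ depends only on $K$ and $\gamma$.

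Second, for the concentration assertion I would apply a standard SPEF tail bound: for a fixed arm $a_k$,
\[
\mathbb{P}\bigl(|\hat\mu^{(i)}_k - \mu^{(i)}_k| > \eta \bigr) \leq 2\exp(-c_\mu N^{(i)}_k \eta^2),
\]
with $c_\mu$ an instance-dependent positive constant. Plugging in $\eta = \epsilon(\mu^{(i)})(N^{(i)})^{-\xi}$ together with the exploration lower bound, and using $\xi = 3\gamma/8$, the exponent is at least of order $(N^{(i)})^{\gamma - 2\xi} = (N^{(i)})^{\gamma/4}$, which is summable in $N^{(i)}$. A union bound over $k$ together with a Borel--Cantelli argument shows that the random time
\[
T^{(i),1}_{\stable} := \sup\bigl\{N^{(i)} : \max_k |\hat\mu^{(i)}_k - \mu^{(i)}_k| > \epsilon(\mu^{(i)})(N^{(i)})^{-\xi}\bigr\}
\]
is a.s.\ finite, and a direct summation of the tail probabilities yields $\mathbb{E}[T^{(i),1}_{\stable}] < \infty$.

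To upgrade from $N^{(i)}_k \gtrsim (N^{(i)})^{\gamma}$ to the linear statement $N^{(i)}_k = \Theta(N^{(i)})$, I would use the Top-Two selection rule: once $\max_k |\hat\mu^{(i)}_k - \mu^{(i)}_k|$ falls below $\epsilon(\mu^{(i)})$, the set $\mathcal{E}^{(i)}_{\hat m}$ matches $\mathcal{E}^{(i)}_m$, and both the indices $\tilde C^{(i)}_{m(i),k}$ and the anchor $\tilde g^{(i)}_{m(i)}$ are continuous functions of $\hat\mu$ within $O((N^{(i)})^{-\xi})$ of their limiting values. On the local time scale of player $p_i$ the problem reduces to a best-arm-identification instance by Theorem~\ref{thm:onesidedlocalglobal}, so the convergence-of-proportions argument of \cite{bandyopadhyay2024optimal} applies: the Top-Two rule drives the empirical ratio $N^{(i)}_k / N^{(i)}$ into a small neighborhood of the strictly positive KKT proportions $w^{\star(i)}_k / \sum_\ell w^{\star(i)}_\ell$ from Lemma~\ref{lemma:firstone_sided}, with stable time $T^{(i),2}_{\stable}$ of finite expectation. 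Setting $T^{(i)}_{\stable} := \max(T^{(i),1}_{\stable}, T^{(i),2}_{\stable})$ gives both claims simultaneously.

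The main obstacle will be the linear lower bound: forced exploration only provides the polynomial rate $(N^{(i)})^\gamma$, so establishing $N^{(i)}_k = \Theta(N^{(i)})$ requires tracking how $O((N^{(i)})^{-\xi})$ perturbations of indices and anchor propagate through the deterministic Top-Two decision rule to the allocations. This is the delicate fluid-to-discrete step that is the technical core of the analysis in \cite{bandyopadhyay2024optimal}, and the plan is to port that machinery on the local scale of each player, with Theorem~\ref{thm:onesidedlocalglobal} serving as the bridge between local and global scales.
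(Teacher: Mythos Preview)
Your proposal is correct and takes essentially the same approach as the paper. The paper does not actually write out a proof for this lemma: it simply observes that on player $p_i$'s local time scale the dynamics coincide with those of a single best-arm identification instance, and then cites \cite{bandyopadhyay2024optimal} (in particular its Proposition~G.5 and Proposition~5.1) for both the forced-exploration lower bound and the resulting concentration and linear-growth statements. Your plan spells out exactly those steps in more detail, but the underlying route is the same.
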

Notation: Let $\mathcal{I}^{(i)}_{m(i),k} := N^{(i)}_kD(\mu^{(i)}_k,x^{(i)}_{m(i),k})+N^{(i)}_{m(i)}D(\mu^{(i)}_{m(i)},x^{(i)}_{m(i),k})$ and $g^{(i)}_m:=g^{(i)}_{m}(\mu,\mathcal{D}^{(i)}_m,\mathcal{A}^{\texttt{UM}}_m,N^{(i)})$, further let $\mathcal{I}^{(i)}_{\min}=\min_k \mathcal{I}^{(i)}_{m(i),k}$.
\begin{lemma}
  For all $N\geq T_{\texttt{conv}}$,  if a player $p_i$ got samples at iteration $N$ then for all other players $p_j$ we have $N^{(j)}(N)\geq \kappa N^{(i)}(N)$ for some constant $\kappa>0$.
    \label{lemma:diffplayer}
\end{lemma}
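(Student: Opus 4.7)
The plan is to split into two cases according to the player-choosing rule. In the exploration branch, $\min_{j} N^{(j)}(N) < N^{\gamma}$ forces $p_i = \arg\min_j N^{(j)}(N)$, so $N^{(j)}(N) \geq N^{(i)}(N)$ for every other $j$ and $\kappa = 1$ suffices. The nontrivial case is the exploitation branch, where $p_i$ is chosen because $\tilde{\mathcal{I}}^{(i)}_{\min} \leq \tilde{\mathcal{I}}^{(j)}_{\min}$ for every $j$. Here I would invoke Lemma~\ref{lemma:localprop}: because $N \geq T_{\texttt{conv}}$ guarantees $N^{(j)}(N) \geq T_{\stable}^{(j)}$ for every player, we obtain both the concentration bound $|\hat{\mu}^{(j)}_k - \mu^{(j)}_k| = O((N^{(j)})^{-\xi})$ and the per-arm balance $N^{(j)}_k = \Theta(N^{(j)})$ simultaneously for all $j$ and all $k$.

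The next step is to extract instance-dependent constants $0 < c^{(j)}_- \leq c^{(j)}_+$ such that $c^{(j)}_- N^{(j)} \leq \mathcal{I}^{(j)}_{\min} \leq c^{(j)}_+ N^{(j)}$. The balance $N^{(j)}_k = \Theta(N^{(j)})$ places the convex combination $x^{(j)}_{m(j),k}$ in a compact subset of the open segment between $\mu^{(j)}_{m(j)}$ and $\mu^{(j)}_k$, uniformly bounded away from both endpoints. By continuity of the KL divergence, each of $D(\mu^{(j)}_{m(j)}, x^{(j)}_{m(j),k})$ and $D(\mu^{(j)}_k, x^{(j)}_{m(j),k})$ lies in $[D_{\min}, D_{\max}]$ for some $0 < D_{\min} \leq D_{\max} < \infty$ depending only on $\mu$; multiplying by the $\Theta(N^{(j)})$ factors coming from $N^{(j)}_{m(j)}$ and $N^{(j)}_k$ gives the desired sandwich on $\mathcal{I}^{(j)}_{m(j),k}$ for every $k \in \mathcal{E}_m^{(j)}$, hence on $\mathcal{I}^{(j)}_{\min}$. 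Smoothness of $D$ together with the empirical-mean bound yields $|\tilde{\mathcal{I}}^{(j)}_{\min} - \mathcal{I}^{(j)}_{\min}| = O((N^{(j)})^{1-\xi}) = o(N^{(j)})$.

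I then chain the inequalities. Enlarging $T_{\texttt{conv}}$ if necessary so the $o(N^{(j)})$ errors are absorbed into $c^{(j)}_\pm N^{(j)}/2$, the selection rule gives $\tfrac{1}{2} c^{(i)}_- N^{(i)} \leq \tilde{\mathcal{I}}^{(i)}_{\min} \leq \tilde{\mathcal{I}}^{(j)}_{\min} \leq 2 c^{(j)}_+ N^{(j)}$, and rearranging yields $N^{(j)} \geq \kappa N^{(i)}$ with $\kappa := \min_{i',j'} c^{(i')}_- / (4 c^{(j')}_+) > 0$. The main obstacle will be ensuring that the constants $c^{(j)}_\pm$ do not degrade with $N$ or with the particular sample path; this rests on the uniform $\Theta$-balance in Lemma~\ref{lemma:localprop}, which itself traces back to the exploration forcing $N^{(j)}_k \geq (N^{(j)})^{\gamma} - C_{\texttt{exp}}$ built into \texttt{ATT1}. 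A minor secondary issue is handling players $p_j$ for which $\mathcal{E}_m^{(j)}$ is effectively empty; such players are never picked by the exploitation branch once $\hat{m} = m$ stabilizes, so only the exploration case applies to them and $\kappa = 1$ again works.
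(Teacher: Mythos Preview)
Your proposal is correct and mirrors the paper's two-case split (exploration versus exploitation): in exploitation, both arguments sandwich each player's minimum index between constant multiples of $N^{(j)}$ via $N^{(j)}_k = \Theta(N^{(j)})$ from Lemma~\ref{lemma:localprop}, then invoke the selection rule to chain $N^{(i)}$ against $N^{(j)}$. Your treatment is in fact slightly more careful than the paper's---you explicitly transfer from the empirical index $\tilde{C}$ that the algorithm actually compares to the true index $\mathcal{I}$, whereas the paper writes $\mathcal{I}^{(i)}_{\min}\leq \mathcal{I}^{(j)}_{\min}$ directly---and your exploration argument (the argmin gives $N^{(j)} \geq N^{(i)}$ up to an off-by-one) is more direct than the paper's route through the global exploration lower bound $N^{(j)} \geq N^{\gamma} - C_{\texttt{exp}}$.
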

 \begin{proof}
\textit{Case I: $N$ is an exploration}: 
\begin{align*}
    N^{(j)}(N) \geq  \sqrt{N}-C_{\texttt{exp}} \overset{(1)}{\geq} N^{(i)}(N-1)-C_{\texttt{exp}} = N^{(i)}(N)-1-C_{\texttt{exp}}
\end{align*}
where (1) follows as $N$ is exploration and $p_i$ got pulled. This gives $N^{(j)}/N^{(i)}\geq 1- (1+C_{\texttt{exp}})/N^{(i)}$.
Further using $N^{(i)}\geq  \sqrt{N}-C_{\texttt{exp}}$, we have $1- (1+C_{\texttt{exp}})/N^{(i)}\geq 1-(1+C_{\texttt{exp}})/(  \sqrt{N}-C_{\texttt{exp}})$. Introduce $M: N\geq M\implies  1-(1+C_{\texttt{exp}})/( \sqrt{N}-C_{\texttt{exp}})\geq \kappa$ equivalently, $ \sqrt{N} \geq C_{\texttt{exp}}+(1+C_{\texttt{exp}})/(1-\kappa)$. \\
\textit{Case II: $N$ is not an exploration}: As $N^{(i)}\geq  T_{\texttt{stable}}^{(i)}$ and from Lemma~\ref{lemma:localprop}, we have $N^{(i)}_k=\Theta(N^{(i)})\implies w^{(i)}_{k,\texttt{L}}:=\frac{N^{(i)}_k}{N^{(i)}}=\Theta(1)$. In addition, this holds for any other player $p_j$, as $N^{(j)}\geq T_{\texttt{stable}}^{(j)}$.

Further, we have $\mathcal{I}_{\min}^{(i)}(N-1)<\mathcal{I}_{\min}^{(j)}(N-1)$. Suppose the arm which got sample with $p_i$ be $a_k$. Then we have
    \begin{align*}
        N^{(i)}C_1\dfrac{w^{(i)}_{m(i),\texttt{L}}w^{(i)}_{k,\texttt{L}}}{w^{(i)}_{m(i),\texttt{L}}+w^{(i)}_{k,\texttt{L}}} \leq \mathcal{I}_{\min}^{(i)}(N-1)\leq \mathcal{I}_{\min}^{(j)}(N-1)\leq   N^{(j)}C_2\dfrac{w^{(j)}_{m(j),\texttt{L}}w^{(j)}_{l,\texttt{L}}}{w^{(j)}_{m(j),\texttt{L}}+w^{(j)}_{l,\texttt{L}}} 
    \end{align*}
    This gives $\kappa N^{(i)}\leq  N^{(j)}$.
\end{proof}

\begin{lemma}
    For any $p_i$ and $N\geq T_{\texttt{conv}}$ we have $N^{(i)}(N)=\Theta(N)$.
    \label{lemma:globaltime}
\end{lemma}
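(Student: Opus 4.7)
The upper bound $N^{(i)}(N)\le N$ is immediate from $\sum_j N^{(j)}(N)=N$, so the task reduces to showing $N^{(i)}(N)=\Omega(N)$. My plan is to first establish the bound for the least-sampled player and then propagate it trivially to every other player. Concretely, let $p_{i^{*}}\in\arg\min_j N^{(j)}(N)$; since $N^{(i)}(N)\ge N^{(i^{*})}(N)$ for every $i$, it suffices to prove $N^{(i^{*})}(N)\ge c\,N$ for some instance-dependent $c>0$.

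The key idea is to apply Lemma~\ref{lemma:diffplayer} once per player $p_j\neq p_{i^{*}}$, at the \emph{last} iteration in $[1,N]$ at which $p_j$ was sampled. Formally, for each $p_j\neq p_{i^{*}}$ let $t_j\le N$ denote that most recent sample time; $t_j$ exists because the exploration bound $N^{(j)}(N)\ge N^{\gamma}-C_{\texttt{exp}}>0$ (inherited from \cite{bandyopadhyay2024optimal} as quoted above) rules out players that are never sampled. Since $t_j$ is the last sample time of $p_j$ inside $[1,N]$, we have $N^{(j)}(t_j)=N^{(j)}(N)$. Applying Lemma~\ref{lemma:diffplayer} at iteration $t_j$, where $p_j$ is the sampled player, yields
\[
N^{(i^{*})}(t_j)\;\ge\;\kappa\,N^{(j)}(t_j)\;=\;\kappa\,N^{(j)}(N),
\]
and monotonicity of sample counts gives $N^{(i^{*})}(N)\ge N^{(i^{*})}(t_j)\ge \kappa\,N^{(j)}(N)$.

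Since this inequality holds for every $j\neq i^{*}$ (and trivially for $j=i^{*}$ with constant $\kappa\le 1$), summing over $j$ produces
\[
N\;=\;\sum_{j} N^{(j)}(N)\;\le\;\frac{M}{\kappa}\,N^{(i^{*})}(N),
\]
so $N^{(i^{*})}(N)\ge (\kappa/M)\,N$. Combined with $N^{(i)}(N)\le N$, this delivers $N^{(i)}(N)=\Theta(N)$ for every player $p_i$.

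The one technicality worth flagging is the requirement $t_j\ge T_{\texttt{conv}}$ needed to legally invoke Lemma~\ref{lemma:diffplayer}. Because the exploration rule forces $N^{(j)}(N)\ge N^{\gamma}-C_{\texttt{exp}}\to\infty$, every player's latest sample time grows unboundedly with $N$: in fact $t_j\ge T_{\texttt{conv}}$ as soon as $N\ge \max\{T_{\texttt{conv}},\,(T_{\texttt{conv}}+C_{\texttt{exp}})^{1/\gamma}\}$. This is the only genuine restriction on $N$, and the finite preperiod it excludes is harmlessly absorbed into the constant hidden in $\Theta(\cdot)$.
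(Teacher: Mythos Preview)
Your proof is correct and uses the same key idea as the paper---invoke Lemma~\ref{lemma:diffplayer} at the last sample time of a player, then use monotonicity of $N^{(\cdot)}(\cdot)$---but the organization differs slightly. The paper picks a \emph{single} player $p_i$ with $N^{(i)}(N)\ge N/M$ (pigeonhole) and applies Lemma~\ref{lemma:diffplayer} once at its last sample time $N'$, obtaining $N^{(j)}(N)\ge N^{(j)}(N')\ge\kappa\, N^{(i)}(N')=\kappa\, N^{(i)}(N)\ge(\kappa/M)\,N$ for every $j$ in one stroke. You instead apply the lemma $M-1$ times (once at each $t_j$) and sum. Both routes yield the same constant $\kappa/M$. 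The paper's version has the minor economy that the precondition ``last sample time $\ge T_{\texttt{conv}}$'' follows directly from $N^{(i)}(N)\ge N/M$ for the pigeonholed player, whereas you must invoke the exploration lower bound $N^{(j)}(N)\ge N^\gamma-C_{\texttt{exp}}$ to guarantee $t_j\ge T_{\texttt{conv}}$ for \emph{every} $j$, pushing your effective threshold on $N$ to roughly $(T_{\texttt{conv}}+C_{\texttt{exp}})^{1/\gamma}$ rather than order $M\cdot T_{\texttt{conv}}$; as you note, this is harmless for the $\Theta$ conclusion.
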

\begin{proof}
    Consider a player $p_i$ such that $N^{(i)}\geq N/M$. Then $N^{(i)}(N)\geq \beta N\geq T_3+1$, where $\beta:=1/M$ and $T_3:=\lfloor T_{\texttt{conv}}/\beta\rfloor-1$. This implies that $p_i$ got samples between $T_3$ and $N$. Let $N'$ be the last time before $N$ when $p_i$ got a sample, then by definition, $N'>T_3$.
    \begin{align*}
        N^{(j)}(N')\overset{(1)}{\geq} \gamma  N^{(i)}(N') \overset{(2)}{=} \gamma N^{(i)}(N) = \gamma' N
    \end{align*}
    where (1) follows from Lemma~\ref{lemma:diffplayer}, (2) follows from the definition of $N'$ and $\gamma':=\gamma\beta$.\\
    Since $N\geq N'$, we have $N^{(j)}(N)\geq N^{(j)}(N')$, which implies $ N^{(j)}(N)\geq \gamma'N$. Thus $ N^{(j)}(N)=\Theta(N)$. 
\end{proof}

\begin{lemma}
    For any player $p_i$, arm $a_k$ and $N\geq T_{\texttt{conv}}$, we have $N^{(i)}_k(N) = \Theta(N)$. 
    \label{lemma:linearN}
\end{lemma}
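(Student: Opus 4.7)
The plan is to simply chain together Lemmas~\ref{lemma:localprop} and~\ref{lemma:globaltime}, which together supply everything we need. The statement $N_k^{(i)}(N) = \Theta(N)$ is a product of two $\Theta$ estimates: one that relates $N_k^{(i)}$ to the player's local time $N^{(i)}$, and one that relates $N^{(i)}$ to the global time $N$.

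\textbf{Step 1 (Global exploration already forces us into the stable regime).} Recall the choice $T_{\texttt{conv}} := \max_i \bigl(T_{\stable}^{(i)} + C_{\texttt{exp}}\bigr)^{1/\gamma}$, together with the forced-exploration lower bound $N^{(i)}(N) \geq N^{\gamma} - C_{\texttt{exp}}$ derived from the player-exploration rule. For any $N \geq T_{\texttt{conv}}$ and any player $p_i$, this gives
\[
N^{(i)}(N) \;\geq\; N^{\gamma} - C_{\texttt{exp}} \;\geq\; T_{\stable}^{(i)}.
\]
Hence every player has passed its own local stable time, and the hypotheses of Lemma~\ref{lemma:localprop} are satisfied for each $p_i$.

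\textbf{Step 2 (Local proportionality).} Applying Lemma~\ref{lemma:localprop} to player $p_i$, we obtain $N_k^{(i)}(N) = \Theta\bigl(N^{(i)}(N)\bigr)$ for every arm $a_k$, with a constant depending only on the instance $\mu^{(i)}$ and the exploration parameter $\gamma$; since $|\mathcal{P}|$ and $|\mathcal{A}|$ are finite, a uniform constant can be chosen that works across all pairs $(p_i,a_k)$.

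\textbf{Step 3 (Global proportionality and conclusion).} Lemma~\ref{lemma:globaltime} provides $N^{(i)}(N) = \Theta(N)$ for every player $p_i$ once $N \geq T_{\texttt{conv}}$. Combining this with Step~2 via the transitivity of $\Theta$ yields
\[
N_k^{(i)}(N) \;=\; \Theta\bigl(N^{(i)}(N)\bigr) \;=\; \Theta(N),
\]
as required. The only mild subtlety is to make sure the hidden constants in the two $\Theta$'s are uniform across the (finitely many) player-arm pairs; this is immediate from finiteness, so there is no real obstacle. The substantive work has already been done in Lemmas~\ref{lemma:localprop}, \ref{lemma:diffplayer}, and \ref{lemma:globaltime}: Step~2 contributes the intra-player balance coming from the exploration rule and empirical convergence, while Step~3 contributes the inter-player balance coming from the minimum-index selection rule.
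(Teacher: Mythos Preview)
Your proposal is correct and takes essentially the same approach as the paper: the paper's own proof is the one-liner ``This follows from Lemma~\ref{lemma:localprop} and Lemma~\ref{lemma:globaltime},'' and your Steps~2--3 are exactly that chaining, with Step~1 spelling out why the hypothesis of Lemma~\ref{lemma:localprop} is met once $N\geq T_{\texttt{conv}}$. Nothing to add.
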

\begin{proof}
    This follows from Lemma~\ref{lemma:localprop} and Lemma~\ref{lemma:globaltime}.
\end{proof}
\begin{proposition}[Proposition 5.1\cite{bandyopadhyay2024optimal} ]\label{thm:local_convergence}
    For each player $p_i$,  for all $N^{(i)}(N)>T_{\texttt{stable}}^{(i)}$ we have for some instance dependent constant $D>0$ the following
    \begin{align*}
        |g^{(i)}_{m}|&<(DN^{(i)})^{-\xi}\ \quad\ 
        \max_{a_k,a_l}|\mathcal{I}^{(i)}_{m(i),k}-\mathcal{I}^{(i)}_{m(i),l}|<(DN^{(i)})^{1-\xi}
    \end{align*}
\end{proposition}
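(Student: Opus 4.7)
The plan is to reduce the statement to the single-player Best-Arm Identification (\texttt{BAI}) proposition established in \cite{bandyopadhyay2024optimal}. Fix any player $p_i$ and restrict attention to its local time scale, i.e.\ the subsequence of global iterations at which $p_i$ is matched, indexed by $N^{(i)}$. The goal is to show that, on this subsequence, the dynamics of \texttt{ATT1} are indistinguishable from those of the anchored top-two \texttt{BAI} algorithm of the cited work applied to $p_i$'s private instance $(\mu^{(i)}_k)_k$, and then invoke their proposition.

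First I would use Lemma~\ref{lemma:localprop}: after $N^{(i)}>T^{(i)}_{\stable}$ one has both $\max_k|\hat\mu^{(i)}_k-\mu^{(i)}_k|\le \epsilon(\mu^{(i)})(N^{(i)})^{-\xi}$ and $N^{(i)}_k=\Theta(N^{(i)})$. Because $\epsilon(\mu^{(i)})$ is strictly smaller than a quarter of the smallest gap among $(\mu^{(i)}_k)_k$, the empirical ranking for $p_i$ is correct, so the estimated stable match at $p_i$ agrees with the true one, i.e.\ $\hat m(i)=m(i)$ and $\mathcal E^{(i)}_{\hat m}=\mathcal E^{(i)}_m$, throughout the regime of interest.

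Next I would observe that whenever the outer player-selection rule of \texttt{ATT1} picks $p_i$, the arm-selection rule reads off only quantities that depend on $p_i$'s own data $(\hat\mu^{(i)}_\cdot, N^{(i)}_\cdot)$: the forced-exploration condition $\min_k N^{(i)}_k<(N^{(i)})^\gamma$, the anchor $\tilde g^{(i)}_{\hat m(i)}$, and the minimum challenger index $\arg\min_k \tilde C^{(i)}_{\hat m(i),k}$. With $\hat m(i)=m(i)$, these are exactly the anchored top-two \texttt{BAI} rules applied to $p_i$'s instance with best arm $m(i)$ and challenger set $\mathcal E^{(i)}_m$. Thus, when viewed in the local time scale, the sequence of pulls $p_i$ receives coincides with the anchored top-two \texttt{BAI} trajectory, and the behavior of other players is irrelevant to $p_i$'s local indexes and anchor.

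Finally, Proposition~5.1 of \cite{bandyopadhyay2024optimal} can be applied verbatim: its hypotheses, empirical concentration at rate $(N^{(i)})^{-\xi}$ and $N^{(i)}_k=\Theta(N^{(i)})$, are precisely the conclusions of Lemma~\ref{lemma:localprop} past $T^{(i)}_{\stable}$; its conclusion is the stated bound with an instance-dependent $D>0$. The main obstacle is the equivalence argument in the previous paragraph: one must carefully verify that the forms of the anchor function and of the index ratio used in \texttt{ATT1}, including the $\UMA$ contribution, collapse to the ratios appearing in the \texttt{BAI} anchored top-two algorithm once $\hat m(i)=m(i)$. Once this correspondence is in place, no additional estimates are required, and the two inequalities follow as direct corollaries of the cited proposition.
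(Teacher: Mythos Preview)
Your approach is exactly what the paper does: the proposition is stated with the citation in its header and given no separate proof, because the paper treats it as a direct import from \cite{bandyopadhyay2024optimal} once one observes that, on the local time scale $N^{(i)}$, \texttt{ATT1}'s arm-choosing rule for $p_i$ coincides with the anchored top-two \texttt{BAI} algorithm applied to the instance $(\mu^{(i)}_k)_k$ with best arm $m(i)$ and challenger set $\mathcal E^{(i)}_m$. Your write-up simply makes this reduction explicit, which is a welcome addition but not a different route.
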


\begin{lemma}
   For all $N>T_{\texttt{conv}}$ we have for some instance dependent constant $D'>0$ that $\max_{p_i} |g^{(i)}_{m}|<(D'N)^{-\xi}$ and $ \max_{p_i}\max_{a_k,a_l}|\mathcal{I}^{(i)}_{m(i),k}-\mathcal{I}^{(i)}_{m(i),l}|<(D'N)^{-\xi}$
   \label{lemma:firstorderpi}
\end{lemma}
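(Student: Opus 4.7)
The plan is to reduce this global statement to the already established local (per-player) statement by invoking the linear growth of each $N^{(i)}(N)$ with $N$. Specifically, Lemma~\ref{lemma:globaltime} tells us that for every player $p_i$ and every $N > T_{\texttt{conv}}$ there is an instance-dependent constant $\kappa > 0$ (independent of $p_i$, after taking the worst case over the finitely many players) such that
\[
N^{(i)}(N) \;\geq\; \kappa N.
\]
In particular, by the choice $T_{\texttt{conv}} := \max_i (T_{\texttt{stable}}^{(i)} + C_{\texttt{exp}})^{1/\gamma}$ from the preamble to this lemma, we also have $N^{(i)}(N) \geq T_{\texttt{stable}}^{(i)}$ for every $p_i$, so Proposition~\ref{thm:local_convergence} is applicable to each player simultaneously.

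Next, I would fix an arbitrary player $p_i$ and apply Proposition~\ref{thm:local_convergence} on its local time scale. This gives the instance-dependent constant $D>0$ and the two bounds $|g^{(i)}_{m}| < (D N^{(i)})^{-\xi}$ and $\max_{a_k,a_l}|\mathcal{I}^{(i)}_{m(i),k}-\mathcal{I}^{(i)}_{m(i),l}| < (D N^{(i)})^{1-\xi}$. Substituting the global lower bound $N^{(i)}(N) \geq \kappa N$ into each of these and using monotonicity of $x \mapsto x^{-\xi}$ (respectively $x \mapsto x^{1-\xi}$ with $1-\xi > 0$), I obtain
\[
|g^{(i)}_{m}| \;<\; (D\kappa N)^{-\xi} \qquad \text{and} \qquad \max_{a_k,a_l}|\mathcal{I}^{(i)}_{m(i),k}-\mathcal{I}^{(i)}_{m(i),l}| \;<\; (D\kappa N)^{1-\xi}.
\]
Setting $D' := D\kappa$ (or the minimum of such constants over the finitely many players, should $D$ be player-specific) and taking the maximum over $p_i \in \mathcal{P}$ yields the claimed global bounds. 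Note that $D'$ is instance-dependent because both $D$ (from Proposition~\ref{thm:local_convergence}) and $\kappa$ (from Lemma~\ref{lemma:globaltime}, which in turn uses Lemma~\ref{lemma:diffplayer} and the instance-dependent constant from Lemma~\ref{lemma:localprop}) are instance-dependent.

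The argument is essentially bookkeeping once the two ingredients are in place. The only mild subtlety I expect is bookkeeping the constants across lemmas: Lemma~\ref{lemma:diffplayer} provides $\kappa$ for any pair of players conditional on one having been sampled recently, and Lemma~\ref{lemma:globaltime} then unconditionally upgrades this to $N^{(i)}(N) = \Theta(N)$ for all $p_i$. Making sure that the same $\kappa$ works uniformly in $N \geq T_{\texttt{conv}}$ and for all $p_i$ simultaneously (rather than depending on $N$ or on which player was last matched) is the one place where the exposition has to be careful; once this is pinned down, the rest follows from direct substitution into Proposition~\ref{thm:local_convergence}.
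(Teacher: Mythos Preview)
Your approach is essentially identical to the paper's: combine the per-player bounds from Proposition~\ref{thm:local_convergence} with the linear growth $N^{(i)}(N)=\Theta(N)$ (the paper cites Lemma~\ref{lemma:linearN}, you cite Lemma~\ref{lemma:globaltime}; the former is just Lemma~\ref{lemma:localprop} plus the latter), then take the maximum over the finitely many players. One small remark: for the index gap you derive $(D\kappa N)^{1-\xi}$, which is the correct unnormalized bound and is exactly what the paper's proof establishes as well---the paper simply divides through by $N$ to write it as $\frac{1}{N}\max|\mathcal{I}^{(i)}_{m(i),k}-\mathcal{I}^{(i)}_{m(i),l}|\le (D''N)^{-\xi}$, so the exponent $-\xi$ in the lemma statement should be read as applying to the normalized index difference (or equivalently the statement should have $1-\xi$ for the unnormalized quantity, as in Proposition~7 of the main text).
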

\begin{proof}
For all players $p_i\in \mathcal{P}$, we have $|g^{(i)}_{m}|\leq (DN^{(i)})^{-\xi} \overset{(1)}{\leq} (D'N)^{-\xi}$ 
    where (1) follows from Lemma~\ref{lemma:linearN}. 
    Thus, taking maximum we get the first result. Further from 
    \begin{align*}
         \dfrac{1}{N}\max_{p_i}\max_{a_k,a_l}|\mathcal{I}^{(i)}_{m(i),k}-\mathcal{I}^{(i)}_{m(i),l}|\leq \dfrac{1}{N^{(i)}}(DN^{(i)})^{1-\xi}\leq (DN^{(i)})^{-\xi}<(D''N)^{-\xi}
    \end{align*}
    we get the second result. Re-defining $D'=\max\{D',D''\}$ we have the result.
\end{proof}
\begin{lemma}[Noise in the anchor function and indexes]
    For all $N\geq T_{\texttt{conv}}$ we have
      \begin{align*}
      |\tilde{g}^{(i)}-g^{(i)}|&\leq  \mathcal{O}(N^{-\xi})\ \ \forall \ p_i\\
        |\mathcal{I}^{(i)}_{m(i),k}(N)-\tilde{C}^{(i)}_{m(i),k}(N)|&\leq  \mathcal{O}(N^{1-\xi})
    \end{align*}
    \label{lemma:noise}
\end{lemma}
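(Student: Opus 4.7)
The plan is to combine the pointwise mean-concentration from Lemma~\ref{lemma:localprop} with the linear sample counts $N^{(i)}_k=\Theta(N)$ from Lemma~\ref{lemma:linearN}, and then propagate those perturbations through the smooth maps defining the pooled means, the pair indexes, and the anchor function via standard Lipschitz/mean-value arguments. The first move is to combine Lemmas~\ref{lemma:localprop} and \ref{lemma:globaltime}: since $N^{(i)}(N)=\Theta(N)$, one has $|\hat\mu^{(i)}_k-\mu^{(i)}_k|\le \epsilon(\mu^{(i)})(N^{(i)})^{-\xi}=\mathcal{O}(N^{-\xi})$ uniformly in $(p_i,a_k)$. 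The buffer $\epsilon(\mu^{(i)})$, chosen as a quarter of the minimum preference gap of $p_i$, then guarantees that $\hat\mu$ induces the same preference ordering as $\mu$; consequently the estimated stable matching equals $m$ and the set $\mathcal{E}^{(i)}_{\hat m}$ equals $\mathcal{E}^{(i)}_m$. The two summations defining $g^{(i)}_{m(i)}$ and $\tilde g^{(i)}_{\hat m(i)}$ therefore share an index set, and $\mathcal{I}^{(i)}_{m(i),k}$ and $\tilde C^{(i)}_{m(i),k}$ become directly comparable.

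For the index bound, note that $x^{(i)}_{m(i),k}=(N^{(i)}_{m(i)}\mu^{(i)}_{m(i)}+N^{(i)}_k\mu^{(i)}_k)/(N^{(i)}_{m(i)}+N^{(i)}_k)$ uses the same weights as $\hat x^{(i)}_{m(i),k}$; since the weights lie in a closed subinterval of $(0,1)$ by Lemma~\ref{lemma:linearN} and the component means differ by $\mathcal{O}(N^{-\xi})$, one obtains $|\hat x^{(i)}_{m(i),k}-x^{(i)}_{m(i),k}|=\mathcal{O}(N^{-\xi})$. For the SPEF family, $D(\cdot,\cdot)$ is $\mathcal{C}^1$ on a compact neighbourhood that, by Step~1, contains all perturbed arguments, so by the mean value theorem $|D(\hat\mu^{(i)}_\ell,\hat x^{(i)}_{m(i),k})-D(\mu^{(i)}_\ell,x^{(i)}_{m(i),k})|=\mathcal{O}(N^{-\xi})$ for $\ell\in\{k,m(i)\}$. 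Multiplying by $N^{(i)}_\ell=\mathcal{O}(N)$ and summing the two KL contributions yields $|\mathcal{I}^{(i)}_{m(i),k}(N)-\tilde C^{(i)}_{m(i),k}(N)|=\mathcal{O}(N^{1-\xi})$, the second claim.

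For the anchor bound, rewrite $g^{(i)}_{m(i)}=\sum_{a_k\in\mathcal{E}^{(i)}_m}D(\mu^{(i)}_{m(i)},x^{(i)}_{m(i),k})/D(\mu^{(i)}_k,x^{(i)}_{m(i),k})-1$. Each denominator is bounded below by a strictly positive instance-dependent constant, because $\mu^{(i)}_{m(i)}\neq\mu^{(i)}_k$ and $x^{(i)}_{m(i),k}$ lies strictly between them, and the $\mathcal{O}(N^{-\xi})$ perturbation of Step~1 keeps it bounded away from zero for all sufficiently large $N$. The ratio is therefore $\mathcal{C}^1$ jointly in $(\mu,x)$ near the true values, so a mean-value step gives an $\mathcal{O}(N^{-\xi})$ perturbation per summand; summing the at most $K$ terms delivers $|\tilde g^{(i)}-g^{(i)}|=\mathcal{O}(N^{-\xi})$.

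The main obstacle is the first step: certifying, after $T_{\texttt{conv}}$, that $\mathcal{D}^{(i)}_{\hat m}$ and $\UMA_{\hat m}$ coincide with $\mathcal{D}^{(i)}_m$ and $\UMA_m$. If those sets disagreed, the summation indices in $g$ and $\tilde g$ would differ and the pointwise Lipschitz comparison above would break down; the buffer inside $\epsilon(\mu^{(i)})$, combined with the $N^{-\xi}$ mean-concentration, is what closes this gap. Once Step~1 is in hand, everything downstream is a routine application of $\mathcal{C}^1$-smoothness of $D$, the uniform lower bound on denominators, and the $\Theta(N)$ growth of $N^{(i)}_k$.
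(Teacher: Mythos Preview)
Your proposal is correct and takes essentially the same approach as the paper. The paper's own proof is a two-line citation: the first bound is a direct consequence of \cite[Lemma~G.1]{bandyopadhyay2024optimal} together with Lemma~\ref{lemma:globaltime}, and the second of \cite[Lemma~G.3]{bandyopadhyay2024optimal} together with Lemma~\ref{lemma:globaltime}. Those external lemmas carry out precisely the Lipschitz/mean-value propagation you spell out---concentrating $\hat\mu$ to $\mu$ at rate $(N^{(i)})^{-\xi}$, pushing this through the $\mathcal C^1$ map $(\mu,x)\mapsto D(\mu,x)$ with bounded-below denominators for the anchor ratio, and absorbing the $N^{(i)}_k=\Theta(N)$ prefactor for the index---so your write-up is a self-contained expansion of what the paper outsources. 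The ``main obstacle'' you flag (equality of $\mathcal E^{(i)}_{\hat m}$ and $\mathcal E^{(i)}_m$) is handled in the cited development exactly as you suggest, via the buffer in $\epsilon(\mu^{(i)})$ forcing $\hat m=m$ once concentration kicks in.
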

\begin{proof}
    The first and second result is a direct consequence of \cite[Lemma G.1.]{bandyopadhyay2024optimal} and \cite[Lemma G.3.]{bandyopadhyay2024optimal} respectively in addition to Lemma~\ref{lemma:globaltime}.
\end{proof}

\begin{lemma}
For a fixed player $p_i$, if it is allocated sample at iteration $N\geq T_{\texttt{conv}}$, then it gets samples again in the next $\mathcal{O}(N^{1-\xi})$ iterations.
\label{lemma:playerfreq}
\end{lemma}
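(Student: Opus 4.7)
The plan is a pigeonhole argument. Fix $N\geq T_{\texttt{conv}}$ at which $p_i$ is sampled and let $N+T$ denote the next iteration at which $p_i$ is sampled, with $T_j$ counting the samples to $p_j$ during iterations $N+1,\ldots,N+T$ (so $T_i=1$ from the sample at $N+T$, and $\sum_{j\neq i}T_j=T-1$). Two ingredients suffice to force $T=\mathcal{O}(N^{1-\xi})$: (i) all minimum indexes stay within $\mathcal{O}(N^{1-\xi})$ of one another \emph{across} players throughout the window $[N,N+T]$; and (ii) each additional sample to a player $p_j$ raises its minimum index by a strictly positive constant, up to an $\mathcal{O}(N^{1-\xi})$ fluctuation.

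For (i), Proposition~4 of the main text (convergence of first-order conditions), combined with Lemma~\ref{lemma:noise} and Lemma~\ref{lemma:linearN}, gives for every $s\geq T_{\texttt{conv}}$ an instance-dependent constant $D_1$ such that
\[
\max_{(p_j,a_k),(p_{j'},a_l)}\bigl|\tilde C^{(j)}_{m(j),k}(s)-\tilde C^{(j')}_{m(j'),l}(s)\bigr|\leq D_1\, s^{1-\xi}.
\]
Because the claim we aim at is $T=\mathcal{O}(N^{1-\xi})$, we may take $s\leq 2N$ throughout $[N,N+T]$ and read this bound as $\mathcal{O}(N^{1-\xi})$. For (ii), locally each player $p_j$ runs the Anchored Top-Two BAI routine of \cite{bandyopadhyay2024optimal} on its own arm set; applying Proposition~\ref{thm:local_convergence} and Theorem~\ref{thm:onesidedlocalglobal} produces a constant $c>0$ (a lower bound on $\tfrac{d}{dt^{(j)}}C^{(j)}_{\min}$, which is strictly positive by Theorem~\ref{thm:onesidedgeneral}(4)) such that
\[
\tilde C^{(j)}_{\min}(N+T)-\tilde C^{(j)}_{\min}(N)\geq c\,T_j-D_2\,N^{1-\xi}.
\]

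Since $p_i$ is not sampled during $N+1,\ldots,N+T-1$, $\tilde C^{(i)}_{\min}$ is constant on that window. Any $p_j$ sampled during the window must satisfy $\tilde C^{(j)}_{\min}(N+T)\leq \tilde C^{(i)}_{\min}(N)+\mathcal{O}(N^{1-\xi})$ by (i); chaining with (ii) yields
\[
c\,T_j\leq\bigl[\tilde C^{(j)}_{\min}(N+T)-\tilde C^{(j)}_{\min}(N)\bigr]+D_2 N^{1-\xi}\leq \mathcal{O}(N^{1-\xi}),
\]
so $T_j=\mathcal{O}(N^{1-\xi})$ for each $j\neq i$, and summing over the $M-1$ other players gives $T=\mathcal{O}(N^{1-\xi})$. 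The main obstacle is ingredient (ii): the minimum of a collection does not automatically increase when only one of its elements does. Proposition~\ref{thm:local_convergence} circumvents this by keeping all arm indexes of $p_j$ within $\mathcal{O}((N^{(j)})^{1-\xi})$ of each other, so aggregated sample growth at a strictly positive local rate translates into growth of the minimum. A technicality is that the exploration branch can in principle fire for $N\geq T_{\texttt{conv}}$, but Lemma~\ref{lemma:globaltime} ensures $N^{(j)}=\Theta(N)\gg N^\gamma$ for all $j$ once $N$ is sufficiently large, so the min-index branch of the player-choosing rule governs the entire window, and any bounded initial regime can be absorbed into a slightly enlarged $T_{\texttt{conv}}$.
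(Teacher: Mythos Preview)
Your argument has a genuine circularity in ingredient~(i). The cross-player bound
\[
\max_{(p_j,a_k),(p_{j'},a_l)}\bigl|\tilde C^{(j)}_{m(j),k}(s)-\tilde C^{(j')}_{m(j'),l}(s)\bigr|\leq D_1\, s^{1-\xi}
\]
that you attribute to the main-text Proposition on convergence of first-order conditions is, in the paper's logical order, established \emph{after} Lemma~\ref{lemma:playerfreq} and \emph{using} it: the lemma immediately following Lemma~\ref{lemma:playerfreq} in the appendix derives cross-player closeness by defining $T_{\texttt{stable}}$ via the $\mathcal O(N^{1-\xi})$ return time guaranteed here, and then bounding the crossing time $\tau_{i,j}$ of two players' minimum indexes by that same quantity. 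What is legitimately available \emph{prior} to this lemma is only the per-player statement of Lemma~\ref{lemma:firstorderpi} (for a fixed $p_j$, all of $p_j$'s own arm indexes are within $\mathcal O(N^{1-\xi})$ of each other), which supports your ingredient~(ii) but not~(i). A second, smaller circularity is the sentence ``Because the claim we aim at is $T=\mathcal O(N^{1-\xi})$, we may take $s\leq 2N$'': you are assuming a consequence of the conclusion in order to control $s^{1-\xi}$.

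The paper avoids both problems by never invoking cross-player closeness. It uses only the single inequality $\tilde C^{(i)}_{\min}(N)\leq\tilde C^{(j)}_{\min}(N)$ for every $j$, which holds trivially because $p_i$ was the player selected at iteration $N$. It then applies pigeonhole to the \emph{single} player $p_j$ sampled most often in $[N+1,N+R]$, so that $\Delta N^{(j)}\geq R/M$. Combining this with the per-player linear representation $\tilde C^{(j)}_{\min}(s)=c_j\,N^{(j)}(s)\pm\mathcal O(N^{1-\xi})$ (essentially your ingredient~(ii), obtained from Proposition~\ref{thm:local_convergence}) and the constancy of $\tilde C^{(i)}_{\min}$ on the window, one gets $\tilde C^{(i)}_{\min}<\tilde C^{(j)}_{\min}$ at the last time $p_j$ is sampled once $R$ exceeds a constant multiple of $N^{1-\xi}$, contradicting the selection of $p_j$ there. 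The pigeonhole-on-one-player device is exactly what lets the paper dispense with any cross-player comparison beyond the trivial one at time $N$, and is the missing idea in your proposal.
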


\begin{proof}

We define the local proportion followed by every player $p_i$ for arm $a_k$ as: 
\[\widetilde{w}^{(i)}_{k}(N^{(i)})~=~\frac{N_k^{(i)}}{N^{(i)}}.\]
By \cite[Proposition 2.2]{bandyopadhyay2024optimal}, every player $p_i$ has a unique local proportion $w^{\star (i)}_{\texttt{L}}~=~(w^{\star (i)}_{k,\texttt{L}})_{k\in[K]}$ such that $\sum_{k}w^{\star (i)}_{k,\texttt{L}}=1$ where the anchor function $g^{(i)}_m(\cdot)$ becomes zero and index of all the arms become equal. By Theorem \ref{thm:local_convergence}, for every player $p_i$, the local proportion $\widetilde{w}^{(i)}(N)$ satisfies the conditions in the earlier statement upto a perturbation of $O((N^{(i)}(N))^{-\xi})$. Following the argument in the proof of \cite[Proposition 3.1]{bandyopadhyay2024optimal}, using the implicit function theorem, we get: 
\[\max_{k}|\widetilde{w}^{(i)}_{k,\texttt{L}}(N^{(i)})-w^{\star (i)}_{k,\texttt{L}}|~=~O((N^{(i)})^{-\xi}).\]
As a result, minimum index over all the arms for player $p_i$ behaves like: 
\[\tilde{C}_{\min}^{(i)}(N)~=~c_i N^{(i)}(N) \pm O((N^{(i)}(N))^{1-\xi}),\]
where $c_i$ is the index of all arms for player $p_i$ under the local proportion $\bm{w}^{\star (i)}_\texttt{L}$. 
We now consider the situation where some player $p_i$ is sampled at time $N$ and then not sampled between iterations $N+1$ and $N+R$ where $R\leq N$. On the other hand, let $p_j$ be the player sampled for maximum no. of times between $N+1$ and $N+R$. Then, 
\[\tilde{C}_{\min}^{(i)}(N+R)~=~\tilde{C}_{\min}^{(i)}(N)\quad \text{and}\quad \tilde{C}_{\min}^{(j)}(N+R)-\tilde{C}_{\min}^{(j)}(N)~=~c_j(N^{(j)}(N+R)-N^{(j)}(N))\pm O(N^{1-\xi}).\]
As a result, 
\begin{align}\label{eq:1sided_periodicity_1}
    \tilde{C}_{\min}^{(i)}(N+R)-\tilde{C}_{\min}^{(j)}(N+R)~&=~\tilde{C}_{\min}^{(i)}(N)-\tilde{C}_{\min}^{(j)}(N)+(\tilde{C}_{\min}^{(i)}(N+R)-\tilde{C}_{\min}^{(i)}(N))\nonumber\\ 
    &~-(\tilde{C}_{\min}^{(j)}(N+R)-\tilde{C}_{\min}^{(j)}(N))\nonumber\\
    &\leq~-(\tilde{C}_{\min}^{(j)}(N+R)-\tilde{C}_{\min}^{(j)}(N))\quad \text{since $\tilde{C}_{\min}^{(i)}(N)\leq \tilde{C}_{\min}^{(j)}(N)$}\nonumber\\
    &\leq~-c_j(N^{(j)}(N+R)-N^{(j)}(N))+O(N^{1-\xi})\nonumber\\
    &\leq~-c_j \Delta N^{(j)}(N,N+R)+O(N^{1-\xi}).
\end{align}
Let $R_j$ be the last time between $N+1$ and $N+R$ at which arm $p_j$ was pulled. By definition of $R_j$, 
\[\Delta N^{(j)}(N,N+R_j)~=~\Delta N^{(j)}(N,N+R).\]
Using \ref{eq:1sided_periodicity_1}, which is true for all choices of $R$, we have: 
\begin{align*}
    \tilde{C}_{\min}^{(i)}(N+R_j)-\tilde{C}_{\min}^{(j)}(N+R_j)~&\leq~-c_j \Delta N^{(j)}(N,N+R_j)+O(N^{1-\xi})\\
    ~&\leq ~-c_j \Delta N^{(j)}(N,N+R)+c^\prime N^{1-\xi}.
\end{align*}
By the definition of $p_j$, we have $\Delta N^{(j)}(N,N+R)\geq\frac{R}{M}$. As a result, 
\[\tilde{C}_{\min}^{(i)}(N+R_j)-\tilde{C}_{\min}^{(j)}(N+R_j)~\leq~-c_j \frac{R}{M}+c^\prime N^{1-\xi}.\]
Upon taking $R\geq \frac{2c' M}{c_j}N^{1-\xi}$, we have
\[\tilde{C}_{\min}^{(i)}(N+R_j)-\tilde{C}_{\min}^{(j)}(N+R_j)~\leq~-c^\prime N^{1-\xi} \quad \text{implying}\quad \tilde{C}_{\min}^{(i)}(N+R_j)<\tilde{C}_{\min}^{(j)}(N+R_j).\]
Thus the algorithm samples from player $p_j$, even though player $p_i$ has index lesser than that of $p_j$. Therefore, our assumption must be wrong and arm $p_i$ must have been pulled again before iteration $N+\frac{2c^\prime M}{c_j}N^{1-\xi}$. 
\end{proof}

\begin{lemma}
    For any players $p_i,p_j$ and for all $N\geq T_{\texttt{stable}}$, we have
\begin{align*}
    \max_{(p_i,a_k),(p_j,a_l)}|\mathcal{I}^{(i)}_{m(i),k}-\mathcal{I}^{(j)}_{m(j),l}|\leq \mathcal{O}(N^{1-\xi})
\end{align*}
\end{lemma}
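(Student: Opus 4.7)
The plan is to reduce the cross-player statement to a bound on the minimum indexes across players, and then exploit the minimum-index player-selection rule of \texttt{ATT1} together with the sampling-frequency guarantee of Lemma~\ref{lemma:playerfreq}.

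For any single player $p_i$ and any arms $a_k,a_l\in\mathcal{E}^{(i)}_m$, Lemma~\ref{lemma:firstorderpi} already yields $|\mathcal{I}^{(i)}_{m(i),k}-\mathcal{I}^{(i)}_{m(i),l}|\leq\mathcal{O}(N^{1-\xi})$, so in particular each per-player index lies within $\mathcal{O}(N^{1-\xi})$ of $\mathcal{I}^{(i)}_{\min}:=\min_{a_k\in\mathcal{E}^{(i)}_m}\mathcal{I}^{(i)}_{m(i),k}$. By the triangle inequality it is therefore enough to establish $|\mathcal{I}^{(i)}_{\min}(N)-\mathcal{I}^{(j)}_{\min}(N)|\leq\mathcal{O}(N^{1-\xi})$ for any two players $p_i,p_j$.

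Next I would invoke the linear-scaling relation derived inside the proof of Lemma~\ref{lemma:playerfreq}: there exists a player-specific constant $c_i>0$ (the common local index rate at the optimal local proportion $w^{\star(i)}_{\texttt{L}}$) such that
\[
\tilde{C}^{(i)}_{\min}(N)\,=\,c_i\,N^{(i)}(N)\,\pm\,\mathcal{O}(N^{1-\xi}),
\]
and by Lemma~\ref{lemma:noise} the same relation transfers to $\mathcal{I}^{(i)}_{\min}(N)$. It then suffices to prove that $|c_i N^{(i)}(N)-c_j N^{(j)}(N)|\leq\mathcal{O}(N^{1-\xi})$. Let $\hat{N}$ denote the last iteration at or before $N$ at which $p_i$ is sampled; by Lemma~\ref{lemma:playerfreq} we have $N-\hat{N}\leq\mathcal{O}(N^{1-\xi})$, and since no sample is added to $p_i$ on $(\hat{N},N]$, also $N^{(i)}(N)=N^{(i)}(\hat{N})$. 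The player-choosing rule at $\hat{N}$ forces $\tilde{C}^{(i)}_{\min}(\hat{N}-1)\leq \tilde{C}^{(j)}_{\min}(\hat{N}-1)$; converting to $\mathcal{I}$ via Lemma~\ref{lemma:noise}, applying the linear scaling, and using the monotonicity $N^{(j)}(N)\geq N^{(j)}(\hat{N}-1)$, I would obtain $c_i N^{(i)}(N)\leq c_j N^{(j)}(N)+\mathcal{O}(N^{1-\xi})$. The symmetric argument, with $\hat{N}$ replaced by the most recent sampling time of $p_j$, gives the reverse inequality, and hence the desired bound on $|c_i N^{(i)}(N)-c_j N^{(j)}(N)|$.

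The main obstacle will be ensuring that the three independent sources of $\mathcal{O}(N^{1-\xi})$ slack — the empirical-to-true index gap from Lemma~\ref{lemma:noise}, the within-player deviation from Lemma~\ref{lemma:firstorderpi}, and the drift accumulated between consecutive sampling events of the same player — combine additively without hidden dependence on the local scale $N^{(i)}$ that would inflate the final rate. This reduces to verifying that the constants $c_i$ are bounded away from zero uniformly across players, which follows because $\mu\in\mathcal{S}_1$ has a unique stable matching and hence a strictly positive optimal local proportion $w^{\star(i)}_{\texttt{L}}$ by Theorem~\ref{thm:optimalonesided}.
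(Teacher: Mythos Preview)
Your proposal is correct and follows essentially the same approach as the paper: reduce to comparing the per-player minimum indexes via Lemma~\ref{lemma:firstorderpi}, pass between $\tilde{C}$ and $\mathcal{I}$ via Lemma~\ref{lemma:noise}, use the linear scaling $\tilde{C}^{(i)}_{\min}(N)=c_iN^{(i)}(N)\pm\mathcal{O}(N^{1-\xi})$ from the proof of Lemma~\ref{lemma:playerfreq}, and then exploit the minimum-index player-selection rule together with the sampling-frequency guarantee. The only technical difference is the device for closing the two-sided bound: the paper looks \emph{forward} from $N$ to the first time the sign of $\tilde{C}^{(i)}_{\min}-\tilde{C}^{(j)}_{\min}$ flips (which must occur within $\mathcal{O}(N^{1-\xi})$ iterations since the higher-indexed player is selected by then), whereas you look \emph{backward} to the last sampling time of each player separately and use the one-sided selection inequality twice. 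Both variants rely on the definition of $T_{\texttt{stable}}$ to guarantee a recent sampling event for every player, and yield the same rate.
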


\begin{proof}
    Lemmas~\ref{lemma:firstorderpi} and \ref{lemma:noise} tells us that for $N\geq T_{\texttt{conv}}$, we have
\begin{align*}
    \tilde{C}^{(i)}_{m(i),k}(N) &=\tilde{C}^{(i)}_{\min}(N) \pm \mathcal{O}\left(N^{1-\xi}\right)\\
    \tilde{C}^{(j)}_{m(j),l}(N) &= \tilde{C}^{(j)}_{\min}(N) \pm \mathcal{O}\left(N^{1-\xi}\right).
\end{align*}
where recall $\tilde{C}^{(i)}_{\min}:=\min_{p_i}\tilde{C}^{(i)}_{m(i),k}$.

Define for any two players $p_i,p_j$ the following
\begin{align*}
    \tau_{i,j} = \{t-N: t\geq N\ \text{and}\ \tilde{C}^{(i)}_{\min}(N)-\tilde{C}^{(j)}_{\min}(N), \tilde{C}^{(i)}_{\min}(N+R)-\tilde{C}^{(j)}_{\min}(N+R)\ \text{have different signs}\}
\end{align*}

Define $T_{\texttt{stable}}$ after $T_{\texttt{conv}}$ such that all players $p_i$ receives a sample atleast once in the interval $[T_{\texttt{conv}},T_{\texttt{stable}}]$. From \ref{lemma:playerfreq}, and using $\E[T_{\texttt{conv}}]<\infty$, we get $\E[T_{\texttt{stable}}]<\infty$. By the definition of $ \tau_{i,j}$, we have $N+ \tau_{i,j}$ to be less than the iteration at which $p_i,p_j$ both got samples after $N$. As all players got samples atleast once between $[T_{\texttt{conv}},N]$, we have that $\tau_{i,j}(N)=\mathcal{O}(N^{1-\xi})$. Further, this holds for all players $p_i,p_j$. From

\begin{align*}
    |\tilde{C}^{(i)}_{\min}(N)-\tilde{C}^{(j)}_{\min}(N)|&\leq |\tilde{C}^{(i)}_{\min}(N)-\tilde{C}^{(j)}_{\min}(N)- (\tilde{C}^{(i)}_{\min}(N+\tau_{i,j})-\tilde{C}^{(j)}_{\min}(N+\tau_{i,j}))|\\
    &\leq |\tilde{C}^{(i)}_{\min}(N+\tau_{i,j})-\tilde{C}^{(i)}_{\min}(N)| + |\tilde{C}^{(j)}_{\min}(N+\tau_{i,j})-\tilde{C}^{(j)}_{\min}(N)|\\
\end{align*}
Suppose $\tilde{C}^{(i)}_{\min}(N+\tau_{i,j})=\tilde{C}^{(i)}_{m(i),k}(N+\tau_{i,j})$ and $\tilde{C}^{(i)}_{\min}(N)=\tilde{C}^{(i)}_{m(i),l}(N)$, we then have
\begin{align*}
     |\tilde{C}^{(i)}_{\min}(N+\tau_{i,j})-\tilde{C}^{(i)}_{\min}(N)| &= |\tilde{C}^{(i)}_{m(i),k}(N+\tau_{i,j})-\tilde{C}^{(i)}_{m(i),l}(N)| \\
     &\leq |\mathcal{I}^{(i)}_{m(i),k}(N+\tau_{i,j})-\mathcal{I}^{(i)}_{m(i),l}(N)|  +\mathcal{O}\left((N+\tau_{i,j})^{1-\xi}\right)
\end{align*}
further, using closeness of indexes for each player we have
\begin{align*}
    |\mathcal{I}^{(i)}_{m(i),k}(N+\tau_{i,j})-\mathcal{I}^{(i)}_{m(i),l}(N)| &=  |\mathcal{I}^{(i)}_{m(i),k}(N+\tau_{i,j})-\mathcal{I}^{(i)}_{m(i),k}(N)+\mathcal{I}^{(i)}_{m(i),k}(N)-\mathcal{I}^{(i)}_{m(i),l}(N)| \\
    &= |\mathcal{I}^{(i)}_{m(i),k}(N+\tau_{i,j})-\mathcal{I}^{(i)}_{m(i),k}(N)|+\mathcal{O}(N^{1-\xi})\\
\end{align*}
Using mean value theorem, we get $|\mathcal{I}^{(i)}_{m(i),k}(N+\tau_{i,j})-\mathcal{I}^{(i)}_{m(i),k}(N)| \leq \mathcal{O}(1)\cdot \tau_{i,j}\leq \mathcal{O}(N^{1-\xi})$, we get
\begin{align*}
     |\mathcal{I}^{(i)}_{m(i),k}(N+\tau_{i,j})-\mathcal{I}^{(i)}_{m(i),l}(N)|\leq \mathcal{O}(N^{1-\xi})
\end{align*}
 Further using $\mathcal{O}\left((N+\tau_{i,j})^{1-\xi}\right)= \mathcal{O}(N^{1-\xi})$, we have
 \begin{align*}
     |\tilde{C}^{(i)}_{\min}(N)-\tilde{C}^{(j)}_{\min}(N)| \leq \mathcal{O}(N^{1-\xi})
 \end{align*}
 Finally, letting $\tilde{C}^{(i)}_{\min}(N)=C^{(i)}_{m(i),k}$ and $\tilde{C}^{(j)}_{\min}(N)=\tilde{C}^{(j)}_{m(j),l}(N)$, we have for any $a_{k'}$ and $a_{l'}$ the following
 \begin{align*}
     |\tilde{C}^{(i)}_{m(i),k'}(N)-\tilde{C}^{(j)}_{m(j),l'}(N)|&\leq |\tilde{C}^{(i)}_{m(i),k'}(N)-\tilde{C}^{(i)}_{\min}(N)|+|\tilde{C}^{(j)}_{m(j),l'}(N)-\tilde{C}^{(j)}_{\min}(N)| \\
     &+|\tilde{C}^{(i)}_{\min}(N)-\tilde{C}^{(j)}_{\min}(N)|\\
     & \leq \mathcal{O}(N^{1-\xi})
 \end{align*}
 Using
\begin{align*}
    |\mathcal{I}^{(i)}_{m(i),k'}(N)-\mathcal{I}^{(j)}_{m(j),l'}(N)|\leq  |\tilde{C}^{(i)}_{m(i),k'}(N)-\tilde{C}^{(j)}_{m(j),l'}(N)| + \mathcal{O}(N^{1-\xi})
\end{align*}
we get
\begin{align*}
     |\mathcal{I}^{(i)}_{m(i),k'}(N)-\mathcal{I}^{(j)}_{m(j),l'}(N)| \leq  \mathcal{O}(N^{1-\xi})
\end{align*}

\end{proof}

\begin{theorem}[Convergence of proportions] For all $N\geq T_{\texttt{stable}}$, we have $ |N^{(i)}_k - N w^{\star (i)}_k | \leq \mathcal{O}(N^{1-\xi})$.
   \label{thm:convprop}
\end{theorem}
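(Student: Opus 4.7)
The plan is to translate the already-established approximate satisfaction of the KKT conditions into closeness of the empirical proportions to the unique optimal proportion $w^{\star}$, via an implicit-function-theorem argument.

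First, define the empirical proportions $\tilde{w}^{(i)}_k(N) := N^{(i)}_k / N$; note that $\tilde{w}\in \Delta_{\mathcal{P}\times\mathcal{A}}$. Combining the previous lemmas (in particular Lemma~\ref{lemma:firstorderpi} for the anchor functions together with the just-proved global closeness of indexes across players), we know that for all $N\geq T_{\texttt{stable}}$,
\begin{align*}
\max_{p_i}\bigl|g^{(i)}_{m(i)}(\mu,\tilde{w})\bigr| &\leq \mathcal{O}(N^{-\xi}), \\
\max_{(p_i,a_k),(p_j,a_l)}\bigl|C^{(i)}_{m(i),k}(\mu,\tilde{w}) - C^{(j)}_{m(j),l}(\mu,\tilde{w})\bigr| &\leq \mathcal{O}(N^{-\xi}),
\end{align*}
where in the second display I rescaled by $1/N$ using homogeneity of the index in the allocations. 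In short, $\tilde{w}$ is an $\mathcal{O}(N^{-\xi})$-approximate solution to the KKT system of Lemma~\ref{lemma:firstone_sided}, up to a common normalizing constant that the simplex constraint $\sum_{i,k}\tilde{w}^{(i)}_k=1$ determines.

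Next, write the KKT conditions as a smooth map $F:\Delta_{\mathcal{P}\times\mathcal{A}}\to\mathbb{R}^d$ whose unique zero (guaranteed by Theorem~\ref{thm:optimalonesided}) is $w^{\star}$. The uniqueness of $w^{\star}$, together with strict convexity of \texttt{LO1} on its active manifold, implies that the Jacobian $DF(w^{\star})$ is nonsingular on the tangent space of the simplex; this is the analogue of \cite[Proposition~3.1]{bandyopadhyay2024optimal} for the single-player (\texttt{BAI}) case, and here we just need to verify it carries over player-by-player and then across players using that each player's local KKT block is already known to be nonsingular and the inter-player coupling is only through the simplex constraint. By a standard quantitative implicit function theorem, there exists a constant $L>0$ and a neighbourhood $U$ of $w^{\star}$ such that for all $w\in U$,
\[
\|w - w^{\star}\|_{\infty} \leq L\,\|F(w)\|_{\infty}.
\]

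Finally, since from the bullets above we have $\|F(\tilde{w}(N))\|_{\infty}\leq \mathcal{O}(N^{-\xi})$ and, by Lemma~\ref{lemma:localprop} and Lemma~\ref{lemma:globaltime}, $\tilde{w}(N)\in U$ once $N\geq T_{\texttt{stable}}$ (enlarging $T_{\texttt{stable}}$ by a finite additive amount if needed, which preserves $\mathbb{E}[T_{\texttt{stable}}]<\infty$), we conclude $\|\tilde{w}(N)-w^{\star}\|_{\infty}\leq \mathcal{O}(N^{-\xi})$ and hence
\[
|N^{(i)}_k - N w^{\star(i)}_k| = N\bigl|\tilde{w}^{(i)}_k(N)-w^{\star(i)}_k\bigr| \leq \mathcal{O}(N^{1-\xi}).
\]

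The main obstacle is the nonsingularity of $DF(w^{\star})$, i.e.\ verifying that the KKT system is nondegenerate at the optimum. Per-player nondegeneracy follows as in \cite{bandyopadhyay2024optimal}; the only extra work is arguing that gluing the per-player KKT blocks through the single simplex-normalization scalar does not destroy invertibility, which follows because each local block fixes the within-player proportions given a player's total mass, and the anchor-plus-index system across players then has a triangular structure once one parameterizes by the per-player totals $t^{(i)}=\sum_k t^{(i)}_k$, reducing the question to invertibility of a one-dimensional derivative that is strictly positive by Statement 4 of Theorem~\ref{thm:onesidedgeneral}.
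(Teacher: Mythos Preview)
Your approach is correct in outline and is close in spirit to the paper's, but the paper takes a more explicit two-step route rather than a single global implicit-function-theorem argument. Concretely, the paper (i) invokes \cite[Proposition~3.1]{bandyopadhyay2024optimal} directly to obtain per-player local convergence $\max_k|\tilde{w}^{(i)}_{k,\texttt{L}}-w^{\star(i)}_{k,\texttt{L}}|=O(N^{-\xi})$, where $\tilde{w}^{(i)}_{k,\texttt{L}}=N^{(i)}_k/N^{(i)}$; then (ii) uses a mean-value-theorem bound on the scalar index map to get $\mathcal{I}^{(i)}(\tilde{w}^{(i)})=\mathcal{I}^{\star(i)}\pm O(N^{-\xi})$, and plugs this into the two equations $\tilde{w}^{(i)}\mathcal{I}^{(i)}=\tilde{w}^{(j)}\mathcal{I}^{(j)}\pm O(N^{-\xi})$ and $\sum_i\tilde{w}^{(i)}=1$ to solve explicitly for $\tilde{w}^{(i)}=w^{\star(i)}\pm O(N^{-\xi})$, finally multiplying local and global proportions. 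This is precisely the ``triangular'' decomposition you sketch for the Jacobian, but carried out by hand rather than packaged as an IFT step.

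One small gap in your version: you justify $\tilde{w}(N)\in U$ via Lemmas~\ref{lemma:localprop} and~\ref{lemma:globaltime}, but those only give $\tilde{w}^{(i)}_k=\Theta(1)$, i.e.\ that $\tilde{w}$ stays in a compact set away from the boundary, not that it is close to $w^{\star}$. To enter the IFT neighbourhood you still need a preliminary qualitative step (e.g.\ compactness plus uniqueness of the zero of $F$, or---as the paper effectively does---per-player local convergence first). The paper's explicit route sidesteps this because step~(ii) is a direct computation valid on the whole simplex, not a local linearization.
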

\begin{proof}
       For $w^{\star (i)}_{k,\texttt{L}}:=\frac{w^{\star (i)}_{k}}{w^{\star (i)}}$  where $w^{\star (i)}:=\sum_kw^{\star (i)}_{k}$ we have from \cite[Proposition 3.1.]{bandyopadhyay2024optimal} where $w^{\star(i)}_k$ is the optimal proportion,\[\max_{k}|\widetilde{w}^{(i)}_{k,\texttt{L}}(N^{(i)})-w^{\star (i)}_{k,\texttt{L}}|~=~O((N^{(i)})^{-\xi}).\] where $\tilde{w}^{(i)}_{k,\texttt{L}}:=\frac{\tilde{w}^{ (i)}_{k}}{\tilde{w}^{(i)}}$, $\tilde{w}^{(i)}_k=\frac{N^{(i)}_k(N)}{N^{(i)}(N)},\tilde{w}^{(i)}=\sum_k\tilde{w}^{(i)}_k$  and using $N^{(i)} = \Theta(N)$ we have $N^{(i)}_k = N^{(i)}\cdot w^{\star (i)}_{k,\texttt{L}}\pm O(N^{1-\xi})$.

      Writing $\bm{w}^{\star (i)}$ in terms of indexes: $\mathcal{I}^{\star(i)}:= w^{\star (i)}_{k,\texttt{L}}D(\mu^{(i)}_k,x^{\star(i)}_{m(i),k,\texttt{L}})+w^{\star (i)}_{m(i),\texttt{L}}D(\mu^{(i)}_{m(i)},x^{\star(i)}_{m(i),k,\texttt{L}})\ \forall k$ where $x^{\star(i)}_{m(i),k,\texttt{L}} = \frac{\mu^{(i)}_kw^{\star(i)}_{k,\texttt{L}}+\mu^{(i)}_{m(i)}w^{\star(i)}_{m(i),\texttt{L}}}{w^{\star(i)}_{k,\texttt{L}}+w^{\star(i)}_{m(i),\texttt{L}}}$, we have by the index equality the following
       \begin{align*}
           w^{\star (i)} \mathcal{I}^{\star(i)} =  w^{\star (j)} \mathcal{I}^{\star(j)}.
       \end{align*}
       Additionally, using $\sum_{i} w^{\star (i)}=1$, we have from above
       \begin{align}
           w^{\star (i)}\left(  +\sum_{j\neq i} \frac{ \mathcal{I}^{\star(i)}}{\mathcal{I}^{\star(j)}}\right)=1 \label{eq:wstar}
       \end{align}
       letting $\mathcal{I}_k^{(i)}(\tilde{\bm{w}}^{(i)})=\tilde{w}^{(i)}_{k,\texttt{L}}D(\mu^{(i)}_k,\tilde{x}^{(i)}_{m(i),k})+\tilde{w}^{(i)}_{m(i),\texttt{L}}D(\mu^{(i)}_{m(i)},\tilde{x}^{(i)}_{m(i),k})$ where $\tilde{x}^{(i)}_{m(i),k,\texttt{L}} = \frac{\mu^{(i)}_k\tilde{w}^{(i)}_{k,\texttt{L}}+\mu^{(i)}_{m(i)}\tilde{w}^{(i)}_{m(i),\texttt{L}}}{\tilde{w}^{(i)}_{k,\texttt{L}}+\tilde{w}^{(i)}_{m(i),\texttt{L}}}$ we have $\mathcal{I}^{\star(i)} = \mathcal{I}^{(i)}_k(\bm{w}^{\star(i)})\ \forall k$, thus
       \begin{align*}
           |\mathcal{I}_k^{(i)}(\tilde{\bm{w}}^{(i)})-\mathcal{I}_k^{(i)}(\bm{w}^{\star(i)})|\leq  \left(\dfrac{\partial}{\partial w^{(i)}_{m(i)}}\mathcal{I}_k^{(i)}(\Bar{\bm{w}}^{(i)})+\dfrac{\partial}{\partial w^{(i)}_{k}}\mathcal{I}_k^{(i)}(\Bar{\bm{w}}^{(i)})\right)\|\tilde{\bm{w}}^{(i)}-{\bm{w}}^{\star(i)}\|_{\infty}
       \end{align*}
       where $\Bar{\bm{w}}^{(i)}:\Bar{w}^{(i)}_{m(i)}\in[w^{(i)}_{m(i)},w^{\star (i)}_{m(i)}]$ and $\Bar{w}^{(i)}_{k}\in[w^{(i)}_{k},w^{\star (i)}_{k}]$. Recall
       \begin{align*}
           \dfrac{\partial}{\partial w^{(i)}_{m(i)}}\mathcal{I}_k^{(i)}(\Bar{\bm{w}}^{(i)}) = D(\mu^{(i)}_{m(i)},\Bar{x}^{(i)}_{m(i),k})\quad \text{and}\quad  \dfrac{\partial}{\partial w^{(i)}_{k}}\mathcal{I}_k^{(i)}(\Bar{\bm{w}}^{(i)}) = D(\mu^{(i)}_{k},\Bar{x}^{(i)}_{m(i),k})
       \end{align*}
       where $\Bar{x}^{(i)}_{m(i),k}=\frac{\bar{w}^{(i)}_{m(i)}\mu^{(i)}_{m(i)}+\bar{w}^{(i)}_{k}\mu^{(i)}_{k}}{\bar{w}^{(i)}_{m(i)}+\bar{w}^{(i)}_{k}}$. Observing that both partial derivatives are upper bounded by $\max\{D(\mu^{(i)}_{m(i)},\mu^{(i)}_{k}),D(\mu^{(i)}_{k},\mu^{(i)}_{m(i)})\}$, we have $ \mathcal{I}^{(i)}(\tilde{\bm{w}}^{(i)}) =\mathcal{I}^{(i)}(\bm{w}^{\star(i)}) \pm\mathcal{O}(N^{-\xi})$. Recall
       \begin{align*}
           \tilde{w}^{(i)}  \mathcal{I}^{(i)}(\tilde{\bm{w}}^{(i)})- \tilde{w}^{(j)}  \mathcal{I}^{(j)}(\tilde{\bm{w}}^{(j)})&=\pm \mathcal{O}(N^{-\xi})\\
           \implies  \tilde{w}^{(j)}\mathcal{I}^{(j)}({\bm{w}}^{\star(j)}) &= \tilde{w}^{(i)}\mathcal{I}^{(i)}({\bm{w}}^{\star(i)}) \pm\mathcal{O}(N^{-\xi})
       \end{align*}
Similarly, observing $\sum_j\tilde{w}^{(j)}=1$, we get
\begin{align}
   \tilde{w}^{(i)} +  \tilde{w}^{(i)} \sum_{j\neq i}\dfrac{\mathcal{I}^{(i)}({\bm{w}}^{\star(i)}) }{\mathcal{I}^{(j)}({\bm{w}}^{\star(j)}) } \pm\mathcal{O}(N^{-\xi}) = 1 \label{eq:wtilde}
\end{align}
Subtracting Eq~\ref{eq:wstar} and Eq~\ref{eq:wtilde}, we have
\begin{align*}
    (w^{\star (i)} - \tilde{w}^{(i)}) \left( \sum_{j\neq i}\dfrac{\mathcal{I}^{(i)}({\bm{w}}^{\star(i)}) }{\mathcal{I}^{(j)}({\bm{w}}^{\star(j)}) }+1\right)  = \pm\mathcal{O}(N^{-\xi})\\
    \implies  \tilde{w}^{(i)} = w^{\star (i)} \pm\mathcal{O}(N^{-\xi})
\end{align*}
Now, 
\begin{align*}
    \tilde{w}^{(i)}_{k} - w^{\star (i)}_{k}&= \tilde{w}^{ (i)}\cdot\tilde{w}^{(i)}_{k,\texttt{L}}-w^{\star (i)}\cdot w^{\star (i)}_{k,\texttt{L}}\\
    &= [w^{\star (i)} \pm\mathcal{O}(N^{-\xi})][w^{\star (i)}_{k,\texttt{L}}\pm\mathcal{O}(N^{-\xi})]-w^{\star (i)}\cdot w^{\star (i)}_{k,\texttt{L}}\\
    &= \pm\mathcal{O}(N^{-\xi})
\end{align*}
which concludes the proof.
\end{proof}

\begin{theorem}
 \texttt{ATT1} is asymptotically optimal for one-sided learning over market instances $\mu\in \mathcal{S}_{1}$, the corresponding stopping time satisfy $\limsup_{\delta\to 0}\frac{\E_{\mu}[\tau_{\delta}]}{\log(1/\delta)}\leq T^*(\mu)$ and $\limsup_{\delta\to 0}\frac{\tau_{\delta}}{\log(1/\delta)}\leq T^*(\mu)$ a.s. in $\mathbb{P}_\mu$. 
\end{theorem}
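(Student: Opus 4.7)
The argument combines the already-established $\delta$-correctness of the \texttt{GLLR} stopping rule with the rate at which the empirical indexes are driven to the lower bound value $1/T^{\star}(\mu)$. First I would dispose of $\delta$-correctness: the threshold $\beta(t,\delta)$ was constructed in Appendix~\ref{app:stoppingrule} exactly so that $\mathbb{P}_\mu(\tau_\delta<\infty,\ \hat m_{\tau_\delta}\neq m)\le\delta$, and since our rule also requires the estimated instance to admit a unique stable matching, $\delta$-correctness of \texttt{ATT1} is immediate.

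The heart of the proof is a clean approximation identity for the empirical indexes. By the convergence of proportions (Theorem~\ref{thm:convprop}), for every $N\ge T_{\texttt{stable}}$ the empirical allocation satisfies $|N^{(i)}_k-Nw^{\star(i)}_k|=\mathcal{O}(N^{1-\xi})$. By the first-order conditions of Lemma~\ref{lemma:firstone_sided}, when the index is evaluated at the optimal proportions it takes the common value $N/T^\star(\mu)$ for every player $p_i$ and every $a_k\in\mathcal{E}^{(i)}_m$. A Lipschitz expansion of $\mathcal{I}^{(i)}_{m(i),k}(\cdot)$ around $Nw^\star$, together with the noise bound of Lemma~\ref{lemma:noise} relating true and empirical indexes, then yields
\begin{align*}
\tilde{C}^{(i)}_{m(i),k}(N)\;=\;\frac{N}{T^\star(\mu)}\;+\;\mathcal{O}(N^{1-\xi})\qquad \forall\,p_i\in\mathcal{P},\ a_k\in\mathcal{E}^{(i)}_m,\ N\ge T_{\texttt{stable}}.
\end{align*}

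Substituting into the stopping criterion yields the almost-sure bound. Stopping requires $\min_{i,k}\tilde{C}^{(i)}_{m(i),k}(\tau_\delta)\ge\beta(\tau_\delta,\delta)=\log(1/\delta)+o(\log(1/\delta))$; conversely, set $T_0(\delta):=(1+\epsilon)T^\star(\mu)\log(1/\delta)$, so that whenever $T_0(\delta)\ge T_{\texttt{stable}}$ the approximation forces the minimum empirical index at $T_0(\delta)$ to exceed $\beta(T_0(\delta),\delta)$, giving $\tau_\delta\le T_0(\delta)$. Since $T_{\texttt{stable}}<\infty$ a.s., dividing by $\log(1/\delta)$ and letting $\delta\to 0$ yields $\limsup_{\delta\to 0}\tau_\delta/\log(1/\delta)\le T^\star(\mu)$ almost surely.

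For the expectation bound I would split
\begin{align*}
\E_\mu[\tau_\delta]\;=\;\E_\mu[\tau_\delta\,\mathbb{1}\{\tau_\delta\le T_0(\delta)\}]+\E_\mu[\tau_\delta\,\mathbb{1}\{\tau_\delta>T_0(\delta)\}],
\end{align*}
with $T_0(\delta)$ as above. The first term is at most $T_0(\delta)$, matching the target leading order. On the bad event $\{\tau_\delta>T_0(\delta)\}$, either $T_{\texttt{stable}}>\epsilon T_0(\delta)$ or the empirical indexes have incurred a deviation larger than the $\mathcal{O}(N^{1-\xi})$ slack, whose probability decays polynomially by \texttt{SPEF} concentration. \textbf{Main obstacle.} The delicate point is that $\E_\mu[\tau_\delta\,\mathbb{1}\{\tau_\delta>T_0(\delta)\}]$ must be $o(\log(1/\delta))$, and finite expectation of $T_{\texttt{stable}}$ alone is not enough since $\tau_\delta$ itself may be large on that event. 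I would resolve this by combining a polynomial tail bound on $T_{\texttt{stable}}$ (available from the underlying \texttt{SPEF} concentration used to construct $T_{\texttt{conv}}$ in Appendix~\ref{app:nonfluid-one-sided}) with the forced exploration guarantee $N^{(i)}(N)\ge N^{\gamma}-C_{\texttt{exp}}$ used in those lemmas, which provides a deterministic polynomial-in-$N$ worst-case bound on $\tau_\delta$ sufficient to apply Markov's inequality and close the expectation bound.
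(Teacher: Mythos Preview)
Your index approximation and the almost-sure argument match the paper's proof essentially verbatim: derive $\tilde{C}^{(i)}_{m(i),k}(N)\ge N/T^{\star}(\mu)-CN^{1-\xi}$ for $N\ge T_{\texttt{stable}}$, then note that for small $\delta$ this lower bound crosses $\beta(N,\delta)$ at a deterministic time of order $T^\star(\mu)\log(1/\delta)$.

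Where you diverge is in the expectation bound, and you make it harder than necessary. The paper does \emph{not} split into good/bad events and chase tail bounds on $T_{\texttt{stable}}$. Instead it uses the pathwise inequality
\[
\tau_\delta \;\le\; \max\{t_{\max,\delta},\,T_{\texttt{stable}}\}\;\le\; t_{\max,\delta}+T_{\texttt{stable}},
\]
where $t_{\max,\delta}:=\min\{N\ge T_{\texttt{stable}}:N/T^{\star}(\mu)-CN^{1-\xi}>\beta(N,\delta)\}$ is (up to the harmless constraint $N\ge T_{\texttt{stable}}$) a deterministic quantity. Taking expectations gives $\E_\mu[\tau_\delta]\le \E_\mu[T_{\texttt{stable}}]+t_{\max,\delta}$, and since $\E_\mu[T_{\texttt{stable}}]<\infty$ is already established, dividing by $\log(1/\delta)$ and sending $\delta\to 0$ finishes the proof with no need for polynomial tails or worst-case bounds on $\tau_\delta$. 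Your ``main obstacle'' therefore evaporates: finite first moment of $T_{\texttt{stable}}$ really is enough, because the pathwise bound decouples $\tau_\delta$ from the bad event rather than requiring you to control $\E[\tau_\delta\mathbb{1}\{\text{bad}\}]$ directly.
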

\begin{proof}
The proof exactly follows as 
\cite[Appendix H]{bandyopadhyay2024optimal}, we write it here for completeness.\\
Define $\mathcal{I}^{(i)}_k(\tilde{\bm{w}})=\tilde{w}^{(i)}_{k}D(\mu^{(i)}_k,\tilde{x}^{(i)}_{m(i),k})+\tilde{w}^{(i)}_{m(i)}D(\mu^{(i)}_{m(i)},\tilde{x}^{(i)}_{m(i),k})$ where recall $\tilde{w}^{(i)}_k=\frac{N^{(i)}_k(N)}{N^{(i)}(N)},\tilde{w}^{(i)}=\sum_k\tilde{w}^{(i)}_k$ and $\tilde{x}^{(i)}_{m(i),k} = \frac{\mu^{(i)}_k\tilde{w}^{(i)}_{k}+\mu^{(i)}_{m(i)}\tilde{w}^{(i)}_{m(i),}}{\tilde{w}^{(i)}_{k}+\tilde{w}^{(i)}_{m(i)}}$. Further notice that $\mathcal{I}^{(i)}_k({\bm{w}^{\star}})=(T^{\star}(\mu))^{-1}\ \forall i,k$. Following similar argument in the proof of Theorem~\ref{thm:convprop} using mean value theorem we have
\begin{align*}
    |\mathcal{I}^{(i)}_k(\tilde{\bm{w}}(N))-(T^{\star}(\mu))^{-1}|\leq \mathcal{O}(N^{-\xi})
\end{align*}
this implies that there exists a constant $C$ such that
\begin{align}
    \tilde{C}^{(i)}_{m(i),k}\geq \frac{N}{T^{\star}(\mu)}-CN^{1-\xi}\ \forall\ N\geq T_{\texttt{stable}},\ \forall i,k\in\mathcal{E}^{(i)}_m.\label{eq:stopplB}
\end{align}
Recall the stopping rule as $ \tilde{C}^{(i)}_{m(i),k}=\beta(N,\delta)=\mathcal{O}(N,\delta)$.

    Define for every $\delta>0$, the following deterministic quantity
    \begin{align*}
        t_{\max,\delta}:= \min\left\{N\geq T_{\texttt{stable}}\Big| \dfrac{N}{T^{\star}(u)}-CN^{1-\xi}>\beta(N,\delta)\right\}. 
    \end{align*}
    This implies that for stopping time $\tau_{\delta}$ when $\tau_{\delta}\geq T_{\texttt{stable}}$, we have $\tau_{\delta}\leq t_{\max,\delta}$. This is so because $ \tilde{C}^{(i)}_{m(i),k}$ is increasing in $N$ and having the R.H.S. of Eq.~\ref{eq:stopplB} exceed $\beta(N,\delta)$ implies $ \tilde{C}^{(i)}_{m(i),k}$ exceeding the threshold $\beta(N,\delta)$. Thus we have $\tau_{\delta}\leq \max\{ t_{\max,\delta},T_{\stable}\}$\\ 
    \begin{align*}
        \lim_{\delta\to 0}\dfrac{\E[\tau_{\delta}]}{\log(1/\delta)}\leq  \lim_{\delta\to 0}\dfrac{\E[T_{\stable}]+ t_{\max,\delta}}{\log(1/\delta)}\overset{(1)}{=} \lim_{\delta\to 0}\dfrac{ t_{\max,\delta}}{\log(1/\delta)}
    \end{align*}
    where (1) is follows as $\E[T_{\stable}]< \infty$.\\
    Defining $s_{\max,\delta}=t_{\max,\delta}-1$, and note that $\lim_{\delta\to 0}\frac{t_{\max,\delta}}{\log(1/\delta)}=\lim_{\delta\to 0}\frac{s_{\max,\delta}}{\log(1/\delta)}$. Buy definition we have
    \begin{align*}
        \dfrac{s_{\max,\delta}}{T^{\star}(u)}-Cs_{\max,\delta}^{1-\xi}<\beta(s_{\max,\delta},\delta)\ = \mathcal{O}(\log(1/\delta)) 
    \end{align*}

    As $\delta\to 0$, we have $\tau_{\max,\delta}\to\infty$, which implies $s_{\max,\delta}\to \infty$. Rearranging terms above we get
    \begin{align*}
        \frac{1}{T^{\star}(\mu)}\lim\sup_{\delta\to 0 }\dfrac{s_{\max,\delta}}{\log(1/\delta)}\leq 1
    \end{align*}
    this concludes the proof. The statement $\limsup_{\delta\to 0}\frac{\tau_{\delta}}{\log(1/\delta)}\leq T^*(\mu)$ a.s. in $\mathbb{P}_\mu$ can similarly be proved as in \cite[Appendix H.]{bandyopadhyay2024optimal}.
\end{proof}


\section{Fluid Two-sided}
\label{app:fluid-two-sided}
\begin{lemma}
    For each player $p_i, g^{(i)}_{m(i)}$ is monotonically increasing in $N^{(i)}_k\ \forall k\in \mathcal{B}_1^{(i)}\cup \mathcal{B}_3^{(i)}$, $N^{(j)}_{k}\ \forall j:k\in\mathcal{B}_2^{(j)}\cup\mathcal{B}_3^{(j)}$ and monotonically decreasing in $N^{(k)}_{m(k)}\forall k\in \mathcal{B}_3^{(i)}$ and $N^{(j)}_{m(j)}\ \forall j:k\in\mathcal{B}_3^{(j)}$
\end{lemma}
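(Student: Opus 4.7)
The plan is to verify each of the four monotonicity claims by differentiating the definition of $g^{(i)}_{m(i)}$ term-by-term. Two elementary facts suffice: (i) for any mean $\mu$ in the SPEF family, $d(\mu, z)$ is strictly convex in $z$ with minimum at $z = \mu$, so $\partial_z d(\mu, z)$ has the sign of $z - \mu$; and (ii) for a convex combination $z = (w_1 \alpha + w_2 \beta)/(w_1+w_2)$, increasing $w_2$ while holding $w_1$ fixed strictly pushes $z$ toward $\beta$ (and symmetrically for $w_1$). Together these determine the sign of the derivative of each factor $d(\cdot, z)$ in each underlying weight.

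I will fix $p_i$ and inspect the four sums separately, exploiting that each $N$-variable listed in the lemma appears in exactly one of them. For the $\mathcal{B}_1^{(i)}$-summand, $\mu^{(i)}_{m(i)} > \mu^{(i)}_k$, so $x^{(i)}_{m(i),k}$ sits strictly between them; increasing $N^{(i)}_k$ pushes $x$ toward the smaller mean $\mu^{(i)}_k$, so by (i) the numerator $d(\mu^{(i)}_{m(i)}, x)$ grows and the denominator $d(\mu^{(i)}_k, x)$ shrinks, so the ratio grows. The $\mathcal{B}_3^{(i)}$-summand is treated analogously for the $x^{(i)}_{m(i),k}$ component, and additionally increasing $N^{(i)}_k$ pushes $y^{(i,m^{-1}(k))}_k$ toward the larger mean $\eta^{(i)}_k$, shrinking $d(\eta^{(i)}_k, y)$ as well; both effects grow the ratio. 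Increasing $N^{(m^{-1}(k))}_k$ instead pushes $y$ toward the smaller $\eta^{(m^{-1}(k))}_k$, which enlarges $d(\eta^{(i)}_k, y)$ and hence shrinks the ratio, yielding the decreasing direction claimed.

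For the two sums indexed by players $p_j$ with $m(i) \in \mathcal{B}_2^{(j)}$ or $\mathcal{B}_3^{(j)}$, the inequality on the $\eta$-coordinate is $\eta^{(j)}_{m(i)} > \eta^{(i)}_{m(i)}$, so $y^{(j,i)}_{m(i)}$ lies strictly between them. Increasing $N^{(j)}_{m(i)}$ pushes $y$ toward the larger $\eta^{(j)}_{m(i)}$, so $d(\eta^{(i)}_{m(i)}, y)$ grows (numerator) and $d(\eta^{(j)}_{m(i)}, y)$ shrinks (denominator). For the $\mathcal{B}_3^{(j)}$ term the extra factor $d(\mu^{(j)}_{m(i)}, x^{(j)}_{m(j),m(i)})$ in the denominator is unaffected by $N^{(j)}_{m(i)}$ moving $x^{(j)}_{m(j),m(i)}$ toward the smaller $\mu^{(j)}_{m(i)}$ (using $\mu^{(j)}_{m(j)} > \mu^{(j)}_{m(i)}$ from the definition of $\mathcal{B}_3$), so that factor also shrinks; every effect grows the ratio. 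Finally, for $N^{(j)}_{m(j)}$ with $m(i) \in \mathcal{B}_3^{(j)}$, the only affected quantity is $x^{(j)}_{m(j),m(i)}$, which moves toward the larger $\mu^{(j)}_{m(j)}$, enlarging $d(\mu^{(j)}_{m(i)}, x)$ and thus shrinking the ratio.

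Summing the signed contributions across the four sums yields the desired monotonicity, since the constant $-1$ plays no role. The only mildly delicate step is the bookkeeping: checking that each $N$-variable mentioned in the lemma participates in exactly the sum(s) named, and that the mean-orderings built into the definitions of $\mathcal{B}_1, \mathcal{B}_2, \mathcal{B}_3$ orient the sign analysis correctly. I do not anticipate a genuine obstacle: once (i) and (ii) determine the signs of $\partial_z d$ and $\partial z/\partial w$, strict monotonicity of each summand, and hence of $g^{(i)}_{m(i)}$, is immediate from the strict convexity of $d(\mu, \cdot)$.
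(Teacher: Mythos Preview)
The paper states this lemma without proof, so there is nothing to compare against; your argument stands or falls on its own, and it stands. The approach---differentiate each of the four sums in $g^{(i)}_{m(i)}$ using the sign of $\partial_z d(\mu,z)$ together with the direction in which a weighted average moves when one weight grows---is exactly the right one, and the conclusions you reach in each case are correct.

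Two small clean-ups. First, you flip the ordering on the $\eta$-coordinates in a couple of places: for $m(i)\in\mathcal{B}_2^{(j)}\cup\mathcal{B}_3^{(j)}$ the defining inequality is $\eta^{(i)}_{m(i)}>\eta^{(j)}_{m(i)}$, not the reverse, and for $k\in\mathcal{B}_3^{(i)}$ one has $\eta^{(m^{-1}(k))}_k>\eta^{(i)}_k$. Fortunately your sign analysis is invariant to this slip, because it only uses that increasing a weight pushes the convex combination toward the corresponding mean, and hence away from the other one; the direction of ``away'' determines the sign of $\partial_z d$ regardless of which mean is larger. Second, your bookkeeping claim that each listed $N$-variable appears in exactly one sum is not literally true for the leader variables: $N^{(m^{-1}(k))}_k$ with $k\in\mathcal{B}_3^{(i)}$ can also appear (as $N^{(j')}_{m(j')}$ with $j'=m^{-1}(k)$) in the last sum whenever $m(i)\in\mathcal{B}_3^{(j')}$. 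But both contributions have the same (negative) sign, so the overall monotonicity still holds; just phrase the bookkeeping as ``each listed variable appears only in sums that move in the claimed direction.''
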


\subsection{Serial Dictatorship on both sides}
 We now analyze a case when the preference induced follows a serial dictatorship condition on both sides i.e. each player $p_i$ has same preference ordering $a_1\succ \cdots \succ a_{|\mathcal{A}|}$ and each arm $a_k$ has same preference ordering $p_1\succ \cdots \succ p_{|\mathcal{P}|}. $. The stable matching is thus $m(p_i)=a_i$. Due to the preference profile we have that $\mathcal{B}_3=\emptyset$ as $\not\exists (p_i,a_k)$ s.t. $a_i\succ_{p_i}a_k$ and $p_i\succ_{a_i}p_k$.\\
For all players $p_i\in\mathcal{P}_{\min}$, we partition the set of arms $\mathcal{A}^{(i)}_{\min}$ into two disjoint subsets, $\mathcal{B}^{(i)}_{1,\min}$ and $\mathcal{B}^{(i)}_{2,\min}$, based on whether the pair $(p_i,a_k)$ belongs to $\mathcal{B}_1$ or $\mathcal{B}_2$ for each $a_k\in\mathcal{A}^{(i)}_{\min}$.

To facilitate the analysis, we introduce the following notation.

\paragraph{Difference and Derivative Terms.} We define the mean differences and their associated derivative terms as follows:
\begin{align*}
    \Delta^{(i)}_{k} &:= \mu^{(i)}_{m(i)}-\mu^{(i)}_{k} & \bar{\Delta}^{(j)}_{m(i)} &:= \eta^{(i)}_{m(i)}-\eta^{(j)}_{m(i)} \\
    f^{(i)}_k &:= -\frac{\partial}{\partial x}\frac{d(\mu^{(i)}_{m(i)} ,x)}{d(\mu^{(i)}_{k},x)} & \bar{f}^{(j)}_{m(i)} &:= -\frac{\partial}{\partial y}\frac{d(\eta^{(i)}_{m(i)},y)}{d(\eta^{(j)}_{m(i)},y)}
\end{align*}

\paragraph{Composite Fluid Terms.} These are combined to form the composite terms $h$ and $\bar{h}$:
\begin{align*}
    h^{(i)}_k &:= f^{(i)}_k \frac{(t^{(i)}_{m(i)})^2\Delta^{(i)}_{k}}{(t^{(i)}_{m(i)}+t^{(i)}_k)^2} & \bar{h}^{(j)}_{m(i)} &:= \bar{f}^{(j)}_{m(i)} \frac{(t^{(i)}_{m(i)})^2\bar{\Delta}^{(j)}_{m(i)}}{(t^{(i)}_{m(i)}+t^{(j)}_{m(i)})^2}
\end{align*}

\paragraph{KL-Divergence Terms.} For pairs in $\mathcal{B}^{(i)}_{1,\min}$, we denote the KL-divergence components as:
\begin{align*}
    D^{(i)}_{m(i),k} &:= D(\mu^{(i)}_{m(i)},x^{(i)}_{m(i),k}) & D^{(i)}_{k,k} &:= D(\mu^{(i)}_{k},x^{(i)}_{m(i),k})
\end{align*}
And for arms $m(i)$ in $\mathcal{B}^{(j)}_{2,\min}$, the components are:
\begin{align*}
    D^{(i,j)}_{m(i)} &:= D(\eta^{(i)}_{m(i)},y^{(i,j)}_{m(i)}) & D^{(j,j)}_{m(i)} &:= D(\eta^{(j)}_{m(i)},y^{(i,j)}_{m(i)})
\end{align*}

\paragraph{Aggregated Terms.} Finally, we define several aggregated quantities used in our analysis:

\begin{align*}
     h^{(i)}_{\min} := \sum_{k\in\mathcal{B}^{(i)}_{1,\min}}\frac{h^{(i)}_{k}}{D^{(i)}_{k,k}},\ \   h_{m(i),\min} := \sum_{j:m(i)\in\mathcal{B}^{(j)}_{2,\min}}\frac{\bar{h}^{(j)}_{m(i)}}{D^{(i,j)}_{m(i)}} \ \text{and}\    h^{(i)}_{m(i),\min} := h^{(i)}_{\min}+h_{m(i),\min}
\end{align*}

\begin{align*}
h^{(i)}_{m(i),\texttt{w}} &:= \sum_{k\not\in\mathcal{B}^{(i)}_{1,\min}\cup\{m(i)\}}h^{(i)}_{k}t^{(i)}_k + \sum_{j\not\in\{j:m(i)\in\mathcal{B}^{(j)}_{2,\min}\}\cup\{i\}}h^{(j)}_{m(i)}t^{(j)}_{m(i)} \\
    D^{(i)}_{m(i),\min} &:= \left(\sum_{k\in\mathcal{B}^{(i)}_{1,\min}}\left(D^{(i)}_{k,k}\right)^{-1}+\sum_{j:m(i)\in\mathcal{B}^{(j)}_{2,\min}}\left(D^{(i,j)}_{m(i)}\right)^{-1}\right)^{-1} \\ \texttt{T}^{(i)}_{\min} &:= \sum_{k\in \mathcal{B}_{1,\min}^{(i)}}t^{(i)}_k+\sum_{p_j:m(i)\in \mathcal{B}_{2,\min}^{(j)}}t^{(j)}_{m(i)}
\end{align*}
\begin{theorem}
Consider an instant $t_0$ such that $\forall p_i\in\mathcal{P}_{\min}(t_0)$,$g^{(i)}_{m(i)}=0$.
 We consider the ODEs till $t_1$ defined as the smallest time after $t_0$ such that either Minimum index at $t_1$ becomes equal to the index of player, arm pair for which the index at $t_0$ was strictly larger
     or anchor function of some player becomes equal to zero at $t_1$. This gives  the following for all $t\in [t_0,t_1)$.
        \begin{align*}
       (\dot{C}_{\min})^{-1
        }&=  \sum_{i\in\mathcal{P}_{0}} \dfrac{ t^{(i)}_{m(i)}h^{(i)}_{m(i),\min}+h^{(i)}_{m(i),\texttt{w}}/D^{(i)}_{m(i),\min}+h^{(i)}_{m(i),\min}\texttt{T}^{(i)}_{\min}}{ C_{\min}h^{(i)}_{m(i),\min}+h^{(i)}_{m(i),\texttt{w}}} 
\end{align*}
and 
 \begin{align*}
     \dot{t}^{(i)}_{m(i)} &= \dot{C}_{\min}\dfrac{ t^{(i)}_{m(i)}(h^{(i)}_{\min}+h_{m(i),\min})}{ C_{\min}(h^{(i)}_{\min}+h_{m(i),\min})+(h^{(i)}_{\texttt{w}}+h_{m(i),\texttt{w}})}\quad \forall p_i\in\mathcal{P}_{\min}\\
       \dot{t}^{(i)}_{l} &= \dot{C}_{\min}\dfrac{(h^{(i)}_{\texttt{w}}+h_{m(i),\texttt{w}})/D^{(i)}_{l,l}+t^{(i)}_l(h^{(i)}_{\min}+h_{m(i),\min})}{ C_{\min}(h^{(i)}_{\min}+h_{m(i),\min})+(h^{(i)}_{\texttt{w}}+h_{m(i),\texttt{w}})}\quad \forall p_i\in\mathcal{P}_{\min}, a_l\in\mathcal{B}_{1,\min}^{(i)}\\
       \dot{t}^{(j)}_{m(i)} &=   \dot{C}_{\min}\dfrac{(h^{(i)}_{\texttt{w}}+h_{m(i),\texttt{w}})/D^{(j,j)}_{m(i)}+t^{(j)}_{m(i)}(h^{(i)}_{\min}+h_{m(i),\min})}{ C_{\min}(h^{(i)}_{\min}+h_{m(i),\min})+(h^{(i)}_{\texttt{w}}+h_{m(i),\texttt{w}})}\quad \forall p_j\in\mathcal{P}_{\min}, p_j:m(i)\in\mathcal{B}_{2,\min}^{(j)}
   \end{align*}
\end{theorem}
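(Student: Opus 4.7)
The plan is to mirror the derivation of the one-sided fluid ODEs (Theorem on One-Sided Fluid) while carefully tracking the additional coupling that arises in two-sided learning: the anchor function $g^{(i)}_{m(i)}$ now aggregates contributions from pairs in $\mathcal{B}_1^{(i)}$ (which involve player $p_i$'s own samples $t^{(i)}_k$) and from pairs $(p_j,a_{m(i)})$ with $m(i)\in\mathcal{B}_2^{(j)}$ (which involve the arm-side samples $t^{(j)}_{m(i)}$). Because the theorem assumes $\mathcal{B}_3=\emptyset$ under two-sided serial dictatorship, I only need to handle these two channels, which is the reason a clean closed form is available.

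First I would enforce the invariants that are stable on $[t_0,t_1)$: by the same ``indexes stay together'' argument used in the one-sided setting (\S\,``Indexes stay together'' in the appendix), for every $p_i\in\mathcal{P}_{\min}$, every $a_l\in\mathcal{B}^{(i)}_{1,\min}$ and every $p_j$ with $m(i)\in\mathcal{B}^{(j)}_{2,\min}$,
\begin{align*}
C^{(i)}_{m(i),l}(t)&=C_{\min}(t),\\
C^{(i,j)}_{m(i)}(t)&=C_{\min}(t),\\
g^{(i)}_{m(i)}(t)&=0.
\end{align*}
Differentiating the first two identities yields the local ODEs
\begin{align*}
\dot{C}_{\min}&=\dot{t}^{(i)}_{m(i)}D^{(i)}_{m(i),l}+\dot{t}^{(i)}_{l}D^{(i)}_{l,l},\\
\dot{C}_{\min}&=\dot{t}^{(i)}_{m(i)}D^{(i,j)}_{m(i)}+\dot{t}^{(j)}_{m(i)}D^{(j,j)}_{m(i)},
\end{align*}
from which $\dot{t}^{(i)}_{l}$ and $\dot{t}^{(j)}_{m(i)}$ can each be written as an affine function of $\dot{t}^{(i)}_{m(i)}$ and $\dot{C}_{\min}$.

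Next I would differentiate $g^{(i)}_{m(i)}=0$, exactly as in the one-sided proof but with the extra arm-side summands. Using the expression from the two-sided first-order Proposition with $\mathcal{B}_3=\emptyset$, the chain rule together with the definitions of $f^{(i)}_k,\bar{f}^{(j)}_{m(i)},h^{(i)}_k,\bar{h}^{(j)}_{m(i)}$ converts $\dot{g}^{(i)}_{m(i)}=0$ into a linear relation
\begin{align*}
\dot{t}^{(i)}_{m(i)}\bigl(h^{(i)}_{\mathrm{w}}+h_{m(i),\mathrm{w}}\bigr)
\;=\;t^{(i)}_{m(i)}\!\Bigl(\sum_{a_l\in\mathcal{B}^{(i)}_{1,\min}}h^{(i)}_{l}\tfrac{\dot{t}^{(i)}_{l}}{t^{(i)}_{m(i)}}
+\sum_{p_j:\,m(i)\in\mathcal{B}^{(j)}_{2,\min}}\bar{h}^{(j)}_{m(i)}\tfrac{\dot{t}^{(j)}_{m(i)}}{t^{(i)}_{m(i)}}\Bigr),
\end{align*}
so after substituting the expressions for $\dot{t}^{(i)}_{l}$ and $\dot{t}^{(j)}_{m(i)}$ obtained above, the only remaining unknowns on this block are $\dot{t}^{(i)}_{m(i)}$ and $\dot{C}_{\min}$. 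Solving gives the stated formula for $\dot{t}^{(i)}_{m(i)}$ in terms of $\dot{C}_{\min}$, and back-substitution recovers $\dot{t}^{(i)}_l$ and $\dot{t}^{(j)}_{m(i)}$.

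Finally I would close the system by the conservation law obtained from $\sum_{p_i}\sum_{a_k}\dot{t}^{(i)}_{k}(t)=1$: players outside $\mathcal{P}_{\min}$ receive no new mass (statement analogous to item~3 of the one-sided theorem), so the sum reduces to $\sum_{p_i\in\mathcal{P}_{\min}}\bigl(\dot{t}^{(i)}_{m(i)}+\sum_{l\in\mathcal{B}^{(i)}_{1,\min}}\dot{t}^{(i)}_{l}+\sum_{j:\,m(i)\in\mathcal{B}^{(j)}_{2,\min}}\dot{t}^{(j)}_{m(i)}\bigr)=1$. Substituting the expressions derived above and simplifying using the aggregated quantities $h^{(i)}_{m(i),\min}$, $h^{(i)}_{m(i),\mathrm{w}}$, $D^{(i)}_{m(i),\min}$ and $\mathtt{T}^{(i)}_{\min}$ produces the closed form for $(\dot{C}_{\min})^{-1}$.

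The main obstacle will be the bookkeeping in the second step: the anchor function for $p_i$ mixes player-side terms indexed by $a_l$ and arm-side terms indexed by $p_j$, and each $\dot{t}^{(j)}_{m(i)}$ simultaneously appears in the derivative of $g^{(j)}_{m(j)}$ (through the $\mathcal{B}_2$ channel from $p_i$'s side). I expect this to be manageable because, under serial dictatorship with $\mathcal{B}_3=\emptyset$, each cross term enters linearly and only one anchor constraint per $p_i$ is coupled, so the linear system on $(\dot{t}^{(i)}_{m(i)})_{i\in\mathcal{P}_{\min}}$ decouples block by block. The remaining technical point, verifying that no sign changes occur so that the $\mathcal{B}_{1,\min}^{(i)}$ and $\mathcal{B}_{2,\min}^{(j)}$ sets are constant on $[t_0,t_1)$, is built into the definition of $t_1$ and is handled exactly as in the one-sided theorem.
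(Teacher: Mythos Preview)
Your proposal is correct and follows essentially the same route as the paper: differentiate the equal-index constraints to express each challenger rate as an affine function of the leader rate and $\dot C_{\min}$, plug these into $\dot g^{(i)}_{m(i)}=0$ to solve for $\dot t^{(i)}_{m(i)}$ in terms of $\dot C_{\min}$, back-substitute, and finally use $\sum\dot t^{(i)}_k=1$ to pin down $\dot C_{\min}$. One small remark: under two-sided serial dictatorship the cross-coupling you flag does not actually occur, since $a_{m(i)}\in\mathcal B_2^{(j)}$ forces $a_{m(i)}\notin\mathcal B_1^{(j)}$, so $\dot t^{(j)}_{m(i)}$ never enters $\dot g^{(j)}_{m(j)}$ and the block-by-block decoupling is immediate rather than something that needs to be argued.
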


\begin{proof}

 For simplicity we consider $P_+\cup P_-=\emptyset$
    \begin{align}
        \sum_{i\in\mathcal{P}_{0}\cup\mathcal{P}_{+}} \dot{t}^{(i)}_{m(i)}  + \sum_{i\in\mathcal{P}_{0}} \sum_{k\in\mathcal{A}^{(i)}_{\min}}\dot{t}^{(i)}_k +\sum_{i\in \mathcal{P}_{-}}\sum_{k\in\mathcal{A}^{(i)}_{\min}}\dot{t}^{(i)}_k   = 1 \label{eq:sumalltitwo}
    \end{align}

 For $p_i\in \mathcal{P}_{0}$, we use the fact that the minimum indexes for $p_i$ also stay together.

          Using $\dot{g}^{(i)}_{m(i)}=0$, we get
    \begin{align*}
        &\sum_{a_k\in\mathcal{B}_1^{(i)}}f^{(i)}_k\dfrac{d}{dt^{(i)}_{m(i)}}x^{(i)}_{m(i),k}\dot{t}^{(i)}_{m(i)}+ \sum_{a_k\in\mathcal{B}_{1,\min}^{(i)}}f^{(i)}_k\dfrac{d}{dt^{(i)}_{k}}x^{(i)}_{m(i),k}\dot{t}^{(i)}_{k}  \\
        &+ \sum_{p_j:m(i)\in\mathcal{B}_2^{(j)}}\bar{f}^{(j)}_{m(i)}\dfrac{d}{dt^{(i)}_{m(i)}}y^{(j,i)}_{m(i)}\dot{t}^{(i)}_{m(i)} +\sum_{p_j:m(i)\in\mathcal{B}_{2,\min}^{(j)}}\bar{f}^{(j)}_{m(i)}\dfrac{d}{dt^{(j)}_{m(i)}}y^{(j,i)}_{m(i)}\dot{t}^{(j)}_{m(i)} =0
    \end{align*}
    Observe that $\frac{d}{dt^{(i)}_{k}}x^{(i)}_{m(i),k}=-\frac{t^{(i)}_{m(i)}}{t^{(i)}_k}\frac{d}{dt^{(i)}_{m(i)}}x^{(i)}_{m(i),k}$ and $\frac{d}{dt^{(j)}_{m(i)}}y^{(j,i)}_{m(i)} = -\frac{t^{(i)}_{m(i)}}{t^{(j)}_{m(i)}}\frac{d}{dt^{(i)}_{m(i)}}y^{(j,i)}_{m(i)}$, which gives
     \begin{align*}
        &\sum_{a_k\in\mathcal{B}_1^{(i)}}f^{(i)}_k\dfrac{d}{dt^{(i)}_{m(i)}}x^{(i)}_{m(i),k}\dot{t}^{(i)}_{m(i)}- \sum_{a_k\in\mathcal{B}_{1,\min}^{(i)}}f^{(i)}_k\dfrac{t^{(i)}_{m(i)}}{t^{(i)}_k}\dfrac{d}{dt^{(i)}_{m(i)}}x^{(i)}_{m(i),k}\dot{t}^{(i)}_{k} \\
        &+ \sum_{p_j\in:m(i)\in\mathcal{B}_{2,\min}^{(j)}}\bar{f}^{(j)}_{m(i)}\dfrac{d}{dt^{(i)}_{m(i)}}y^{(j,i)}_{m(i)}\dot{t}^{(i)}_{m(i)} -\sum_{p_j:m(i)\in\mathcal{B}_{2,\min}^{(j)}}\bar{f}^{(j)}_{m(i)}\dfrac{t^{(i)}_{m(i)}}{t^{(j)}_{m(i)}}\dfrac{d}{dt^{(i)}_{m(i)}}y^{(j,i)}_{m(i)}\dot{t}^{(j)}_{m(i)} &=0
    \end{align*}
   First we represent all $\dot{t}^{(i)}_k$ in terms of $\dot{t}^{(i)}_l$ for $k,l\in\mathcal{B}_{1,\min}^{(i)}$. Thus for $a_k\in\mathcal{B}^{(i)}_{1,\min}$, we have using $\dot{C}^{(i)}_{m(i),k}=\dot{C}^{(i)}_{m(i),l}$, the following
    \begin{align*}
    \dot{t}^{(i)}_{m(i)}D^{(i)}_{m(i),k} + \dot{t}^{(i)}_kD^{(i)}_{k,k}= \dot{t}^{(i)}_{m(i)}D^{(i)}_{m(i),l} + \dot{t}^{(i)}_lD^{(i)}_{l,l}.
    \end{align*}
    Further using ${C}^{(i)}_{m(i),k}={C}^{(i)}_{m(i),l}$, we get
    \begin{align*}
        \dot{t}^{(i)}_k = \dot{t}^{(i)}_{m(i)}\dfrac{1}{D^{(i)}_{k,k}t^{(i)}_{m(i)}}(t^{(i)}_kD^{(i)}_{k,k}-t^{(i)}_lD^{(i)}_{l,l}) + \dot{t}^{(i)}_l\dfrac{D^{(i)}_{l,l}}{D^{(i)}_{k,k}}
    \end{align*}

 For $p_j\in\mathcal{P}_{0}:m(i)\in\mathcal{B}_{2,\min}^{(j)}$, we similarly represent all $\dot{t}^{(j)}_{m(i)}$ in terms of $\dot{t}^{(i)}_l$. Using $\dot{C}^{(i)}_{m(i),l}=\dot{C}^{(i,j)}_{m(i)}$ and ${C}^{(i)}_{m(i),l}=C^{(i,j)}_{m(i)}$ we get
    \begin{align*}
        \dot{t}^{(j)}_{m(i)} =  \dot{t}^{(i)}_{m(i)}\dfrac{1}{D^{(j,j)}_{m(i)}t^{(i)}_{m(i)}}(t^{(j)}_{m(i)}D^{(j,j)}_{m(i)}-t^{(i)}_lD^{(i)}_{l,l})+ \dot{t}^{(i)}_l\dfrac{D^{(i)}_{l,l}}{D^{(j,j)}_{m(i)}}
    \end{align*}
  Substituting this in $\dot{g}^{(i)}_{m(i)}=0$, we have
  \begin{align*}
      \dot{t}^{(i)}_{m(i)} = \dot{t}^{(i)}_l \dfrac{ t^{(i)}_{m(i)}(h^{(i)}_{\min}+h_{m(i),\min})}{(h^{(i)}_{\texttt{w}}+h_{m(i),\texttt{w}})/D^{(i)}_{l,l}+t^{(i)}_l(h^{(i)}_{\min}+h_{m(i),\min})}
  \end{align*}
  Using the definition of $C_{\min}$ we get
  \begin{align*}
      \dot{C}_{\min} &= \dot{t}^{(i)}_{m(i)}D^{(i)}_{m(i),l}+\dot{t}^{(i)}_{l} D^{(i)}_{l,l}= \dot{t}^{(i)}_{l}[ \dfrac{ C_{\min}(h^{(i)}_{\min}+h_{m(i),\min})+(h^{(i)}_{\texttt{w}}+h_{m(i),\texttt{w}})}{(h^{(i)}_{\texttt{w}}+h_{m(i),\texttt{w}})/D^{(i)}_{l,l}+t^{(i)}_l(h^{(i)}_{\min}+h_{m(i),\min})}]
   \end{align*}
   which gives us the following
   \begin{align*}
       \dot{t}^{(i)}_{l} &= \dot{C}_{\min}\dfrac{(h^{(i)}_{\texttt{w}}+h_{m(i),\texttt{w}})/D^{(i)}_{l,l}+t^{(i)}_l(h^{(i)}_{\min}+h_{m(i),\min})}{ C_{\min}(h^{(i)}_{\min}+h_{m(i),\min})+(h^{(i)}_{\texttt{w}}+h_{m(i),\texttt{w}})}\\
       \dot{t}^{(i)}_{m(i)} &= \dot{C}_{\min}\dfrac{ t^{(i)}_{m(i)}(h^{(i)}_{\min}+h_{m(i),\min})}{ C_{\min}(h^{(i)}_{\min}+h_{m(i),\min})+(h^{(i)}_{\texttt{w}}+h_{m(i),\texttt{w}})}\\
       \dot{t}^{(j)}_{m(i)} &=   \dot{C}_{\min}\dfrac{(h^{(i)}_{\texttt{w}}+h_{m(i),\texttt{w}})/D^{(j,j)}_{m(i)}+t^{(j)}_{m(i)}(h^{(i)}_{\min}+h_{m(i),\min})}{ C_{\min}(h^{(i)}_{\min}+h_{m(i),\min})+(h^{(i)}_{\texttt{w}}+h_{m(i),\texttt{w}})}
   \end{align*}
   Substituting this in Eq. \ref{eq:sumalltitwo}, gives us

    \begin{align*}
       (\dot{C}_{\min})^{-1
        }&=  \sum_{i\in\mathcal{P}_{0}} \dfrac{ t^{(i)}_{m(i)}(h^{(i)}_{\min}+h_{m(i),\min})}{ C_{\min}(h^{(i)}_{\min}+h_{m(i),\min})+(h^{(i)}_{\texttt{w}}+h_{m(i),\texttt{w}})} \\
        &+\sum_{i\in\mathcal{P}_{0}} \sum_{k\in \mathcal{B}_{1,\min}^{(i)}}\dfrac{(h^{(i)}_{\texttt{w}}+h_{m(i),\texttt{w}})/D^{(i)}_{k,k}+t^{(i)}_k(h^{(i)}_{\min}+h_{m(i),\min})}{ C_{\min}(h^{(i)}_{\min}+h_{m(i),\min})+(h^{(i)}_{\texttt{w}}+h_{m(i),\texttt{w}})}\\
        &+\sum_{m(i):p_i\in \mathcal{P}_{0}}\sum_{p_j:m(i)\in \mathcal{B}_2^{(j)}} \dot{C}_{\min}\dfrac{(h^{(i)}_{\texttt{w}}+h_{m(i),\texttt{w}})/D^{(j,j)}_{m(i)}+t^{(j)}_{m(i)}(h^{(i)}_{\min}+h_{m(i),\min})}{ C_{\min}(h^{(i)}_{\min}+h_{m(i),\min})+(h^{(i)}_{\texttt{w}}+h_{m(i),\texttt{w}})}
\end{align*}
\end{proof}

\subsection{2x2 Example with Distinct Preference Profile}

Consider two players and two arms, with the preference profile:
\begin{align*}
    p_1 & : a_1 \succ a_2, & a_1 & : p_1 \succ p_2 \\
    p_2 & : a_2 \succ a_1, & a_2 & : p_2 \succ p_1
\end{align*}

Let $t_{ik} := t^{(i)}_k$ denote the continuous sampling quantity for pair $(p_i, a_k)$. Define:
\[
    \dot{g}_\ell = a_\ell \dot{t}_{12} + b_\ell \dot{t}_{21} - c_\ell \dot{t}_{11} - d_\ell \dot{t}_{22}, \quad \text{for } \ell = 1, 2,
\]
where
\[
    a_\ell = \frac{\partial g_\ell}{\partial t_{12}}, \quad
    b_\ell = \frac{\partial g_\ell}{\partial t_{21}}, \quad
    c_\ell = -\frac{\partial g_\ell}{\partial t_{11}}, \quad
    d_\ell = -\frac{\partial g_\ell}{\partial t_{22}}.
\]

Let $h := \max\{g_1, g_2\}$, and define the index values:
\[
    I_{12} := C^{(1,2)}_{1,2}, \quad I_{21} := C^{(2,1)}_{2,1}, \quad \text{with } m(i) = i.
\]
Define divergence terms:
\begin{table}[H]
    \centering
    \begin{tabular}{cccc}
       $d_{11} := D^{(1)}_{1,2}$  & $d_{12} := D^{(1)}_{2,2}$ & $d_{21} := D^{(2)}_{1,1}$ & $d_{22} := D^{(2)}_{2,1}$  \\
       \ \ \ $b_{11} := D^{(1,2)}_{1}$ &\ \ \  $b_{12} := D^{(1,1)}_{2}$  &\ \ \  $b_{22} := D^{(2,1)}_{2} $&\ \ \   $b_{21} := D^{(2,2)}_{1}$
    \end{tabular}
\end{table}

We now describe fluid ODEs for different initial configurations:

\begin{itemize}[leftmargin=*, nosep]
  \item \textbf{Case 1: $h \neq 0$ and $I_{12} < I_{21}$}
  \begin{itemize}[leftmargin=*]
    \item If $h < 0$:
      \[
        \dot{t}_{12} = 1, \quad \dot{g}_\ell = a_\ell, \quad \dot{h} = \max\{a_1, a_2\}.
      \]
    \item If $h > 0$:
      \begin{itemize}
        \item If $g_1 > g_2$: $\dot{h} = -c_1$
        \item If $g_2 > g_1$: $\dot{h} = -d_2$
        \item If $g_1 = g_2$:
          \begin{enumerate}[leftmargin=*]
            \item \textit{dynamics such that $g_1=g_2$ is maintained:} 
            \[
              \begin{aligned}
              \dot{t}_{11} &= \frac{d_1 - d_2}{\kappa}, \quad
              \dot{t}_{22} = \frac{c_2 - c_1}{\kappa}, \quad
              \dot{h} = \dot{g}_1 = \dot{g}_2 = \frac{c_1 d_2 - d_1 c_2}{\kappa}, \\
              \text{where } \kappa &= c_2 - c_1 + d_1 - d_2
              \end{aligned}
            \]
             \[
            \text{Valid iff } \frac{d_1 - d_2}{c_2 - c_1} > 0 \iff \frac{\partial (g_1 - g_2)}{\partial t_{11}} \cdot \frac{\partial (g_1 - g_2)}{\partial t_{22}} < 0.
          \]
            \item \textit{$g_1$ starts dominating:} $\dot{t}_{11} = 1$, $\dot{g}_1 = -c_1$, $\dot{g}_2 = -c_2$, $\dot{h} = \dot{g}_1 = -c_1$
            \item \textit{$g_2$ starts dominating:} $\dot{t}_{22} = 1$, $\dot{g}_1 = -d_1$, $\dot{g}_2 = -d_2$, $\dot{h} = \dot{g}_2 = -d_2$
          \end{enumerate}
      \end{itemize}
  \end{itemize}

  \item \textbf{Case 2: $h \neq 0$ and $I_{12} = I_{21}$}
  \begin{itemize}[leftmargin=*]
    \item If $h < 0$:
      \[
        \begin{aligned}
        \dot{t}_{12} &= \frac{b_{21}+d_{21}}{\Sigma}, \quad \dot{t}_{21} = \frac{b_{12}+d_{12}}{\Sigma}, \\
        \dot{I}_{12} &= \dot{I}_{21} = \frac{2(b_{21}+d_{21})(b_{12}+d_{12})}{\Sigma}, \\
        \dot{g}_\ell &= a_\ell \dot{t}_{12} + b_\ell \dot{t}_{21} > 0 \text{for } \ell = 1, 2
        \end{aligned}
      \]
      where $\Sigma = b_{21} + d_{21} + b_{12} + d_{12}$
    \item If $h > 0$: indexes may separate as the algorithms does not look at the indexes
  \end{itemize}

  \item \textbf{Case 3: $h = 0$ and  $I_{12} < I_{21}$}: fluid dynamics is derived such that $h=0$ is maintained
  \begin{itemize}[leftmargin=*]
    \item $g_1 = 0, g_2 < 0$:
      \[
        \begin{aligned}
        \dot{t}_{12} &= \frac{c_1}{a_1 + c_1}, \quad \dot{t}_{11} = \frac{a_1}{a_1 + c_1}, \\
        \dot{g}_2 &= \frac{c_1 a_2 - a_1 c_2}{a_1 + c_1}, \\
        \dot{I}_{12} &= d_{11} \dot{t}_{11} + \dot{t}_{12}(d_{12} + b_{12}), \quad \dot{I}_{21} = b_{11} \dot{t}_{11}
        \end{aligned}
      \]
    \item $g_2 = 0, g_1 < 0$: analogous
    \item $g_1 = g_2 = 0$:
      \begin{enumerate}[label=(\roman*),leftmargin=*]
        \item \textit{dynamics such that $g_1=g_2$ is maintained:}  $\dot{g}_1=0=\dot{g}_2$
          \[
            \begin{aligned}
            \dot{t}_{11} &= \frac{a_1 d_2 - d_1 a_2}{\texttt{Den}}, \quad
            \dot{t}_{22} = \frac{a_2 c_1 - a_1 c_2}{\texttt{Den}}, \quad
            \dot{t}_{12} = \frac{c_1 d_2 - d_1 c_2}{\texttt{Den}}, \\
            \texttt{Den} &= a_1 d_2 - d_1 a_2 + a_2 c_1 - a_1 c_2 + c_1 d_2 - d_1 c_2
            \end{aligned}
          \]
          Valid when ordering ratios satisfy: $\frac{c_1}{c_2} > \frac{a_1}{a_2} > \frac{d_1}{d_2} \text{ or } \frac{c_1}{c_2} < \frac{a_1}{a_2} < \frac{d_1}{d_2}$ i.e. invalid when $\frac{a_1}{a_2}>\max\left\{\frac{c_1}{c_2},\frac{d_1}{d_2}\right\},\quad \frac{a_1}{a_2}<\min\left\{\frac{c_1}{c_2},\frac{d_1}{d_2}\right\}$. \\
          Indexes evolves as
\[\dot{I}_{12}=d_{11}\dot{t}_{11}+\dot{t}_{12}(d_{12}+b_{12})+b_{22}\dot{t}_{22},\quad \dot{I}_{21}=b_{11}\dot{t}_{11}+d_{22}\dot{t}_{22}\]
 $\frac{a_1}{a_2}>\max\{\frac{c_1}{c_2},\frac{d_1}{d_2}\}$ implies that $a_2c_1-a_1c_2<0$ and $a_1d_2-a_2d_1>0$.\\
  $\frac{a_1}{a_2}<\min\{\frac{c_1}{c_2},\frac{d_1}{d_2}\}$ implies that $a_2c_1-a_1c_2>0$ and $a_1d_2-a_2d_1<0$.\\
  Combining $(a_2c_1-a_1c_2)(a_1d_2-a_2d_1)<0$
    \item  \textit{$g_1$ starts dominating:} $\dot{h}=\dot{g}_1=0$, $\dot{t}_{12}=\frac{c_1}{a_1+c_1},\dot{t}_{11}=\frac{a_1}{a_1+c_1}$, $\dot{g}_2= \frac{c_1a_2-a_1c_2}{a_1+c_1}$.\\
    \item  \textit{$g_2$ starts dominating:} $\dot{h}=\dot{g}_2=0$, $\dot{t}_{12}=\frac{d_2}{a_1+d_2},\dot{t}_{11}=\frac{a_1}{a_1+d_2}$, $\dot{g}_1= \frac{d_2a_1-a_2d_1}{a_1+d_2}$\\
      \end{enumerate}
  \end{itemize}

  \item \textbf{Case 4: $h = 0$ and $I_{12} = I_{21}$}: $\dot{t}_{ik} = w^{\star}_{ik}, \text{ i.e., follow optimal ratio}$
\end{itemize}

\paragraph{Poisson Bracket Intuition} for Case (3) $h=0$ and $I_{12}<I_{21}$ with $g_1=g_2=0$. Define:
\[
J =
\begin{pmatrix}
\partial_{t_{12}} g_1 & \partial_{t_{11}} g_1 & \partial_{t_{22}} g_1 \\
\partial_{t_{12}} g_2 & \partial_{t_{11}} g_2 & \partial_{t_{22}} g_2
\end{pmatrix} =
\begin{pmatrix}
a_1 & -c_1 & -d_1 \\
a_2 & -c_2 & -d_2
\end{pmatrix}
\]
Define the following Poisson brackets corresponding to the pair $t_{12},t_{11}$ and $t_{12},t_{22}$ as
\begin{align*}
    \{g_1,g_2\}_1&=\dfrac{\partial g_1}{\partial t_{12}}\dfrac{\partial g_2}{\partial t_{11}}-\dfrac{\partial g_1}{\partial t_{11}}\dfrac{\partial g_2}{\partial t_{12}} = -a_1c_2+c_1a_2\ \text{and}\\
        \{g_1,g_2\}_2&=\dfrac{\partial g_1}{\partial t_{12}}\dfrac{\partial g_2}{\partial t_{22}}-\dfrac{\partial g_1}{\partial t_{22}}\dfrac{\partial g_2}{\partial t_{12}} = -a_1d_2+d_1a_2
\end{align*}
respectively. 
\textit{Intuition}: $\{f,g\}>0$ it means that following $f$'s flow makes $g$ increase whereas $\{f,g\}<0$   means that following $f$'s flow makes $g$ decrease. \\

Thus $\frac{a_1}{a_1}>\max\{\frac{c_1}{c_2},\frac{d_1}{d_2}\}$ means that $  \{g_1,g_2\}_1, \{g_1,g_2\}_2<0$, implying that if $g_1$'s flow is followed, $g_2$ decreases irrespective of increasing $t_{11},t_{22}$. Further, since $\{f,g\}=-\{g,f\}$. This implies that if $g_2$'s flow is followed, $g_1$ increases irrespective of increasing $t_{11},t_{22}$.\\
Hence, $g_1$ starts to dominate and we enter sub-case (ii)\\

Thus $\frac{a_1}{a_1}<\min\{\frac{c_1}{c_2},\frac{d_1}{d_2}\}$ means that $  \{g_1,g_2\}_1, \{g_1,g_2\}_2>0$, implying that if $g_1$'s flow is followed, $g_2$ increases irrespective of increasing $t_{11},t_{22}$. Further, since $\{f,g\}=-\{g,f\}$. This implies that if $g_2$'s flow is followed, $g_1$ decreases irrespective of increasing $t_{11},t_{22}$.\\
Hence, $g_2$ starts to dominate and we enter sub-case (iii)
\section{$\beta$-Top-Two algorithms}
\label{app:beta-algos}
\subsection{One-Sided Learning}
$\bar{\beta}$ Top Two algorithm for one-sided learning: 

     

\begin{algorithm}[H]
\caption{$\bar{\beta}$-Top-Two with $M\leq K$}
Input: preferences of arms over players $\pi$\\
    \For{$t=1,2,\ldots$}{
    $\mathcal{E}_p=\{p_i:\sum_kN^{(i)}_k\leq \sqrt{N}\}$\\
    $\hat{m}\leftarrow \texttt{DA}_{\texttt{Arm}}(\hat{\mu},\succ)$\\
    Construct $\mathcal{D}_{\hat{m}}^{(i)}=\{k\neq \hat{m}(i): p_i\succ_{a_k}\hat{m}^{-1}(k)\} \ \forall i$ and $\UMA_{\hat{m}}=\{k: \hat{m}^{-1}(k)=\emptyset\}$\\
      \lIf{$\mathcal{E}\neq \emptyset$ }{
    Match any pair $(p_i,a_k)$ from $\mathcal{E}$
    }
    \Else{
     
    Let $\Tilde{C}^{(i_t)}_{\hat{m}(i_t),k_t}=\min_{i}\min_{k\in \mathcal{D}^{(i)}_{\hat{m}}}\Tilde{C}^{(i)}_{\hat{m}(i),k}(\hat{\mu},N)$ \\
    \uIf{$\beta^{(i)}<\mathcal{U}[0,1]$}{
    Match player $p_{i_{t}}$ with arm $a_{{\hat{m}}(i_t)}$
    }
    \Else{
     Match player $p_{i_{t}}$ with arm $a_{k_t}$
    }}
    \lIf{$\texttt{DA}_{\texttt{Arm}}(\hat{\mu},\succ)=\texttt{DA}_{\texttt{Player}}(\hat{\mu},\succ)$ and $\Tilde{C}>\beta(t,\delta)$ }{Recommend $\texttt{DA}_{\texttt{Arm}}(\hat{\mu},\succ)$}}
\end{algorithm}

\subsubsection{Competitive Ratio}
Every $\bar{\beta}$-top-two one-sided algorithm can be indexed by a tuple $\bar{\beta}~=~(\beta_1,\beta_2,\hdots,\beta_{|\mathcal{P}|})\in[0,1]^{|\mathcal{P}|}$. The sample complexity lower bound of $\bar{\beta}$-top-two policy is $D_{\bar{\beta}}(\mu)^{-1}\cdot \log(1/\delta)$ where
\begin{align*}
    D_{\bar{\beta}}(\mu)~&=~\max_{w\in \Sigma^{|\mathcal{P}|\cdot |\mathcal{A}|}_{\bar{\beta}}}\min_{p_i\in\mathcal{P}}\min_{a_k\neq m(i)}\left\{w^{(i)}_{m(i)} d(\mu^{(i)}_{m(i)},x^{(i)}_{m(i),k})+w^{(i)}_k d(\mu^{(i)}_k,x^{(i)}_{m(i),k})\right\},
\end{align*}
where $\Sigma^{|\mathcal{P}|\cdot |\mathcal{A}|}_{\bar{\beta}}$ is the set of allocations satisfying $\dfrac{w^{(i)}_{m(i)}}{\sum_{k\in \mathcal{A}} w_{k}^{(i)}}=\beta_i$ for every player $p_i\in\mathcal{P}$. We can rewrite the characteristic time $D_{\bar{\beta}}(\mu)$ as: 
\[D_{\bar{\beta}}(\mu)~=~\max_{w\in\Sigma^{|\mathcal{P}|}}\min_{p_i\in\mathcal{P}}w^{(i)}\cdot D_{p_i,\beta_i}(\mu^{(i)}),\]
where $w^{(i)}=\sum_{k} w^{(i)}_k$ and
\[D_{p_i,\beta_i}(\mu^i)~=~\max_{w_i\in \Sigma^{|\mathcal{A}|}_{\beta_i}}\min_{a_k\neq m(i)}\left\{\beta_i d(\mu^i_{m(i)},x^i_{m(i),k})+w_{k|i} d(\mu^i_k,x^i_{m(i),k})\right\}\]
with the transformation of variables $w^{(i)}_k=w^{(i)}\cdot w_{k|i}$ and the constraint $w_{m(i)|i}=\beta_i$. For every player $p_i$, $w_i$ represent the proportion of the global samples allocated to player $p_i$ and $w_{k|i}$ is the proportion of samples allocated by player $p_i$ to arm $a_k$ out of the total no. of samples allocated to player $p_i$. 

With this reformulation of the lower bound problem we prove the following theorem 
\begin{theorem}\label{thm:cr_of_beta_top_two}
    $\bar{\beta}$-Top-Two policy is asymptotically $\max_{p_i}\max\left\{\frac{\beta_i^\star}{\beta_i},\frac{1-\beta_i^\star}{1-\beta_i}\right\}$-competitive with respect to the multi-player Anchored-Top-Two policy, where $\beta_i^\star$ is the optimal value of $w_{m(i)|i}$.   Therefore, upon taking $\beta_i=\frac{1}{2}$ for every player $p_i$, $1/2$-Top-Two algorithm is 2-competitive. 
\end{theorem}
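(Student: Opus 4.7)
The plan is to reduce the competitive-ratio comparison to a per-player analysis by showing that both characteristic times admit a harmonic-sum decomposition across players, and then invoking the classical single-player \texttt{BAI} competitive-ratio bound of \cite{russo2020simple} player-by-player. First I would rewrite both characteristic times in the reformulated max-min form already given just above the theorem:
\[
D(\mu) \;=\; \max_{w\in\Sigma^{|\mathcal{P}|}}\,\min_{p_i\in\mathcal{P}}\,w^{(i)}\,D_{p_i}(\mu^{(i)}),
\qquad
D_{\bar{\beta}}(\mu) \;=\; \max_{w\in\Sigma^{|\mathcal{P}|}}\,\min_{p_i\in\mathcal{P}}\,w^{(i)}\,D_{p_i,\beta_i}(\mu^{(i)}),
\]
where $D_{p_i}(\mu^{(i)})$ is the unconstrained per-player characteristic time restricted to $\mathcal{E}_{m}^{(i)}\cup\{m(i)\}$ (arms outside this set receive zero weight by the complementary-slackness calculation in Appendix~\ref{app:firstorder}). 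Since $D_{p_i}(\mu^{(i)})$ and $D_{p_i,\beta_i}(\mu^{(i)})$ are positive constants in $w$, at the outer optimum the per-player indexes $w^{(i)\star} D_{p_i}(\mu^{(i)})$ (respectively $w^{(i)\star} D_{p_i,\beta_i}(\mu^{(i)})$) must all be equal; combining this with $\sum_{i} w^{(i)\star}=1$ yields the harmonic-sum identities
\[
\frac{1}{D(\mu)} \;=\; \sum_{p_i\in\mathcal{P}}\frac{1}{D_{p_i}(\mu^{(i)})},
\qquad
\frac{1}{D_{\bar{\beta}}(\mu)} \;=\; \sum_{p_i\in\mathcal{P}}\frac{1}{D_{p_i,\beta_i}(\mu^{(i)})}.
\]

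Next, each per-player subproblem is exactly a \texttt{BAI} problem on $\mu^{(i)}$ with $m(i)$ as the best arm and challengers in $\mathcal{E}_{m}^{(i)}$; the $\bar{\beta}$-constrained version simply fixes the leader proportion at $\beta_i$. Russo's competitive lemma (and its restatement as Proposition~2 of \cite{jourdan2022top}) gives, for each $p_i$,
\[
\frac{D_{p_i}(\mu^{(i)})}{D_{p_i,\beta_i}(\mu^{(i)})} \;\le\; \max\!\left\{\frac{\beta_i^{\star}}{\beta_i},\,\frac{1-\beta_i^{\star}}{1-\beta_i}\right\} \;=:\; \kappa_i,
\]
where $\beta_i^{\star}$ is the optimal leader proportion of the unconstrained per-player problem. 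Aggregating with the elementary weighted-ratio inequality $\sum_i a_i/\sum_i b_i \le \max_i a_i/b_i$ for positive $a_i, b_i$,
\[
\frac{T^{\star}_{\bar{\beta}}(\mu)}{T^{\star}(\mu)} \;=\; \frac{D(\mu)}{D_{\bar{\beta}}(\mu)} \;=\; \frac{\sum_i 1/D_{p_i,\beta_i}(\mu^{(i)})}{\sum_i 1/D_{p_i}(\mu^{(i)})} \;\le\; \max_{p_i}\frac{D_{p_i}(\mu^{(i)})}{D_{p_i,\beta_i}(\mu^{(i)})} \;\le\; \max_{p_i}\kappa_i,
\]
which is precisely the claimed competitive ratio. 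Setting $\beta_i=\tfrac{1}{2}$ gives $\kappa_i\le 2$ uniformly in $\beta_i^{\star}\in(0,1)$, yielding the $2$-competitiveness corollary.

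The main obstacle is justifying the harmonic-sum decomposition for the constrained characteristic time $D_{\bar{\beta}}(\mu)$: one must verify both that $D_{p_i,\beta_i}(\mu^{(i)})$ is strictly positive for every $p_i$ (so the identities are well-defined) and that the index-tightness property at the outer maximizer survives the additional per-player constraint $w_{m(i)|i}=\beta_i$. Positivity follows since the constrained feasible set always contains an allocation that distinguishes $m(i)$ from each challenger, and tightness is a standard convexity-continuity argument parallel to the unconstrained case; once these technicalities are in hand, the remainder of the proof is the transparent player-by-player application of Russo's bound combined with the weighted-ratio inequality.
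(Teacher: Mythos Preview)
Your proof is correct and follows essentially the same approach as the paper: decompose $D_{\bar\beta}(\mu)$ into per-player subproblems via the reformulation $D_{\bar\beta}(\mu)=\max_{w\in\Sigma^{|\mathcal{P}|}}\min_i w^{(i)}D_{p_i,\beta_i}(\mu^{(i)})$, invoke Russo's single-player competitive bound for each $p_i$, and aggregate. The only difference is cosmetic---you evaluate the outer max--min explicitly as the harmonic sum $1/D_{\bar\beta}(\mu)=\sum_i 1/D_{p_i,\beta_i}(\mu^{(i)})$ and then apply the weighted-ratio inequality, whereas the paper keeps the max--min form and factors out $1/\max_i\kappa_i$ directly; these are equivalent computations.
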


\begin{proof}
    Using \cite[Theorem 1, Statement 3]{russo2020simple}, 
\[D_{p_i,\beta_i}(\mu^i)~\geq~\max\left\{\frac{\beta_i^\star}{\beta_i},\frac{1-\beta_i^\star}{1-\beta_i}\right\}\cdot D_{p_i,\beta_i^\star}(\mu^i)\quad \text{for every}~p_i\in \mathcal{P}.\]
Therefore, 
\begin{align*}
    D_{\bar{\beta}}(\mu)~&=~\max_{w\in \Sigma^{|\mathcal{P}|}}\min_{p_i} w_i \cdot D_{p_i,\beta_i}(\mu^i)\\
    &\geq~\max_{w\in \Sigma^{|\mathcal{P}|}}\min_{p_i} w_i \cdot \max\left\{\frac{\beta_i^\star}{\beta_i},\frac{1-\beta_i^\star}{1-\beta_i}\right\}\cdot D_{p_i,\beta_i^\star}(\mu^i)\\
    ~&\geq~\left(\max_{p_i\in \mathcal{P}}\max\left\{\frac{\beta_i^\star}{\beta_i},\frac{1-\beta_i^\star}{1-\beta_i}\right\}\right)\cdot\max_{w\in \Sigma^{|\mathcal{P}|}}\min_{p_i} w_i \cdot D_{p_i,\beta_i^\star}(\mu^i)\\
    ~&=~\left(\max_{p_i\in \mathcal{P}}\max\left\{\frac{\beta_i^\star}{\beta_i},\frac{1-\beta_i^\star}{1-\beta_i}\right\}\right)\cdot D_{\bar{\beta}^\star}(\mu).
\end{align*}    
\end{proof}
\clearpage
\subsection{Two-Sided Learning}

$(\bar{\alpha},\bar{\beta},\bar{\delta})$ are defined such that $\bar{\alpha}:=\{\alpha_1,\ldots,\alpha_{|\mathcal{P}|}\}, \bar{\beta}:=\{\beta^{(1)},\ldots,\beta^{(|\mathcal{P}|)}\}$ and $\bar{\delta}:=\{\delta^{(i)}_k\}_{(p_i,a_k)\in\mathcal{P}\times \mathcal{A}}$ interpreted as
\begin{align*}
    \beta^{(i)} = \dfrac{w^{(i)}_{m(i)}}{\sum_{k\in\mathcal{A}}w^{(i)}_k}, \quad \gamma_k = \dfrac{w^{(m^{-1}(k)}_k}{\sum_{i\in\mathcal{P}}w^{(i)}_k},\quad \delta^{(i)}_k = \dfrac{\sum_{k\in\mathcal{A}}w^{(i)}_k}{\sum_{k\in\mathcal{A}}w^{(i)}_k+\sum_{i\in\mathcal{P}}w^{(i)}_k}
\end{align*}
where $w^{(i)}_k$ represent proportion of samples to pair $p_i,a_k$. Define the following terms, corresponding to the different indexes



\begin{algorithm}[H]
\caption{$(\bar{\alpha},\bar{\beta},\bar{\delta})$ (Two-Sided Learning model with $M\leq K)$}
Input: preferences of arms over players $\pi$\\
    \For{$t=1,2,\ldots$}{
    $\mathcal{E}=\{(i,k):N^{(i)}_k\leq \sqrt{N}\}$\\
    \lIf{$\mathcal{E}\neq \emptyset$}{Match any $p_i$, $a_k$ from $\mathcal{E}$}
    \Else{
    $\hat{m}\leftarrow \texttt{DA}_{\texttt{Arm}}(\hat{\mu},\hat{\eta})$\\
    Construct $\mathcal{B}_1^{(i)},\mathcal{B}_2^{(i)},\mathcal{B}_3^{(i)}$ for all players $p_i$ and $\UMA_{\hat{m}}=\{k: \hat{m}^{-1}(k)=\emptyset\}$\\
    Let $\Tilde{C}^{(i)}_{\hat{m}(i),k}=\min_{i\in[M],k\in \mathcal{B}_1^{(i)}\cup \UMA_{\hat{m}}}\Tilde{C}^{(i)}_{\hat{m}(i),k}(\hat{\mu})$,\\
    \phantom{Let }$\Tilde{C}^{(i,\hat{m}^{-1}(k))}_{k}=\min_{i\in[M],k\in \mathcal{B}_2^{(i)}}\Tilde{C}^{(i,\hat{m}^{-1}(k))}_{k}(\hat{\eta})$, \\
    \phantom{Let }$ \Tilde{C}^{(i,\hat{m}^{-1}(k))}_{m(i),k}=\min_{i\in[M],m(j)\in \mathcal{B}_3^{(i)}}  \Tilde{C}^{(i,\hat{m}^{-1}(k))}_{m(i),k}(\hat{\mu},\hat{\eta})$\\

    Let $\Tilde{C}=\min\left\{\Tilde{C}^{(i)}_{\hat{m}(i),k},\Tilde{C}'_{i,\hat{m}^{-1}(k),k} \Tilde{C}''_{i,\hat{m}(i),\hat{m}^{-1}(k),k}\right\}$
     
     \lIf{$\Tilde{C}=\Tilde{C}^{(i)}_{\hat{m}(i),k}$}{Call Subroutine 1}
\lIf{$\Tilde{C}=\Tilde{C}^{(i,\hat{m}^{-1}(k))}_{k}$}{Call Subroutine 2}

\lIf{$\Tilde{C}= \Tilde{C}^{(i,\hat{m}^{-1}(k))}_{m(i),k}$}{
Call Subroutine 3
}}

    \uIf{$\texttt{DA}_{\texttt{Arm}}(\hat{\mu},\hat{\eta})=\texttt{DA}_{\texttt{Player}}(\hat{\mu},\hat{\eta})$ and $\Tilde{C}>\beta(t,\delta)$ }{Recommend the matching $\texttt{DA}_{\texttt{Arm}}(\hat{\mu},\hat{\eta})$}}
\end{algorithm}

 \setlength{\BoxH}{2.4cm}  
 \vspace{-10pt}
\begin{figure}[H]
    \centering
    \begin{minipage}[t]{0.6\textwidth}
        \centering
       {\setcounter{algocf}{1}
\SetAlgorithmName{Subroutine}{subroutine}{List of Subroutines}
\begin{minipage}[t]{\textwidth}
            \begin{minipage}[t]{0.45\textwidth}
                \begin{algorithm}[H]
                \small
                \uIf{$ \beta^{(i)}<\mathcal{U}[0,1]$}{
                \textcolor{red}{Match $p_{i}$ with $\hat{m}(i)$}
                }
                \Else{
                \textcolor{orange}{Match $p_{i}$ with $a_{k}$}
                }
                \caption*{}
                \end{algorithm}
            \end{minipage}
            \hfill
            \begin{minipage}[t]{0.5\textwidth}
               {\setcounter{algocf}{2}
\SetAlgorithmName{Subroutine}{subroutine}{List of Subroutines}
\begin{algorithm}[H]
                \small
                \uIf{$  \alpha_k<\mathcal{U}[0,1]$}{
                \textcolor{blue}{Match $\hat{m}^{-1}(k)$ with $a_k$}
                }
                \Else{
                \textcolor{orange}{Match $p_{i}$ with $a_{k}$}
                }
                \caption*{}
                \end{algorithm}}
            \end{minipage}
        \end{minipage}}
    \end{minipage}
    \hfill
    \begin{minipage}[t]{0.38\textwidth}
     {\setcounter{algocf}{3}
\SetAlgorithmName{Subroutine}{subroutine}{List of Subroutines}
        \begin{algorithm}[H]
        \small
        \uIf{$\delta^{(i)}_k<\mathcal{U}[0,1]$}
        { Follow Subroutine 1}
        \Else{
        Follow Subroutine 2
        }
        \caption*{}
        \end{algorithm}}
    \end{minipage}
\end{figure}


\section{Experiments}
\label{app:add-exps}

Preference profiles used are as follows, they were randomly found such that they satisfy the specified property of preference profile.
\begin{itemize}[leftmargin=*, nosep,noitemsep, topsep=0pt]
    \item \texttt{Distinct}: this is a preference profile in which every player and every arm prefers distinct agent the most. 
    \begin{align*}
    p_1&: a_1 \succ a_2 \succ a_3 \succ a_4 \succ a_5 \quad & a_1&: p_1 \succ p_5 \succ p_3 \succ p_4 \succ p_2 \\
    p_2&: a_2 \succ a_3 \succ a_4 \succ a_5 \succ a_1 \quad & a_2&: p_2 \succ p_3 \succ p_1 \succ p_5 \succ p_4 \\
    p_3&: a_3 \succ a_4 \succ a_5 \succ a_1 \succ a_2 \quad & a_3&: p_3 \succ p_4 \succ p_2 \succ p_1 \succ p_5 \\
    p_4&: a_4 \succ a_5 \succ a_1 \succ a_2 \succ a_3 \quad & a_4&: p_4 \succ p_5 \succ p_3 \succ p_2 \succ p_1 \\
    p_5&: a_5 \succ a_1 \succ a_2 \succ a_3 \succ a_4 \quad & a_5&: p_5 \succ p_1 \succ p_4 \succ p_3 \succ p_2
\end{align*}
\item \texttt{Serial}: this is a preference profile in which every arm has same preference profile i.e. $p_1\succ \ldots p_{|\mathcal{P}|}$ and for players, the preference profile can be anything.
\begin{align*}
    p_1&: a_3 \succ a_2 \succ a_4 \succ a_1 \succ a_5 \quad & a_1&: p_1 \succ p_2 \succ p_3 \succ p_4 \succ p_5 \\
    p_2&: a_1 \succ a_3 \succ a_2 \succ a_4 \succ a_5 \quad & a_2&: p_1 \succ p_2 \succ p_3 \succ p_4 \succ p_5 \\
    p_3&: a_3 \succ a_4 \succ a_2 \succ a_5 \succ a_1 \quad & a_3&: p_1 \succ p_2 \succ p_3 \succ p_4 \succ p_5 \\
    p_4&: a_2 \succ a_5 \succ a_3 \succ a_1 \succ a_4 \quad & a_4&: p_1 \succ p_2 \succ p_3 \succ p_4 \succ p_5 \\
    p_5&: a_1 \succ a_5 \succ a_2 \succ a_3 \succ a_4 \quad & a_5&: p_1 \succ p_2 \succ p_3 \succ p_4 \succ p_5
\end{align*}
\item \texttt{SPC}: this is a preference profile such that there exists a ordering of players and arms in which $p_i$ prefers $a_i$ over $a_{i+1},\ldots,a_K$ and $a_j$ prefers $p_j$ over $p_{j+1},\ldots,p_N$. In the following example the ordering is the natural ordering i.e. $\{1,2,3,4,5\}$. 
\begin{align*}
    p_1&: a_1 \succ a_3 \succ a_2 \succ a_5 \succ a_4 \quad & a_1&: p_1 \succ p_4 \succ p_2 \succ p_3 \succ p_5 \\
    p_2&: a_1 \succ a_2 \succ a_4 \succ a_3 \succ a_5 \quad & a_2&: p_1 \succ p_2 \succ p_5 \succ p_3 \succ p_4 \\
    p_3&: a_1 \succ a_3 \succ a_5 \succ a_2 \succ a_4 \quad & a_3&: p_1 \succ p_3 \succ p_2 \succ p_4 \succ p_5 \\
    p_4&: a_2 \succ a_4 \succ a_5 \succ a_1 \succ a_3 \quad & a_4&: p_2 \succ p_4 \succ p_5 \succ p_1 \succ p_3 \\
    p_5&: a_1 \succ a_5 \succ a_4 \succ a_2 \succ a_3 \quad & a_5&: p_3 \succ p_1 \succ p_2 \succ p_4 \succ p_5
\end{align*}
\end{itemize}
As discussed in the main paper, the $\mu,\eta$ matrices are generated such that each $\mu^{(i)}:=\{\mu^{(i)}_1,\ldots,\mu^{(i)}_{|\mathcal{A}|}\}$ and $\eta_k:=\{\eta^{(1)}_k,\ldots,\eta^{(|\mathcal{P}|)}_{k}\}$ belong to $\{2,2.5,3.5,5,7\}$  and are ordered according to the above preference profiles.

\begin{figure}[H]
    \centering
    \begin{subfigure}[t]{0.3\textwidth}
        \includegraphics[width=\linewidth]{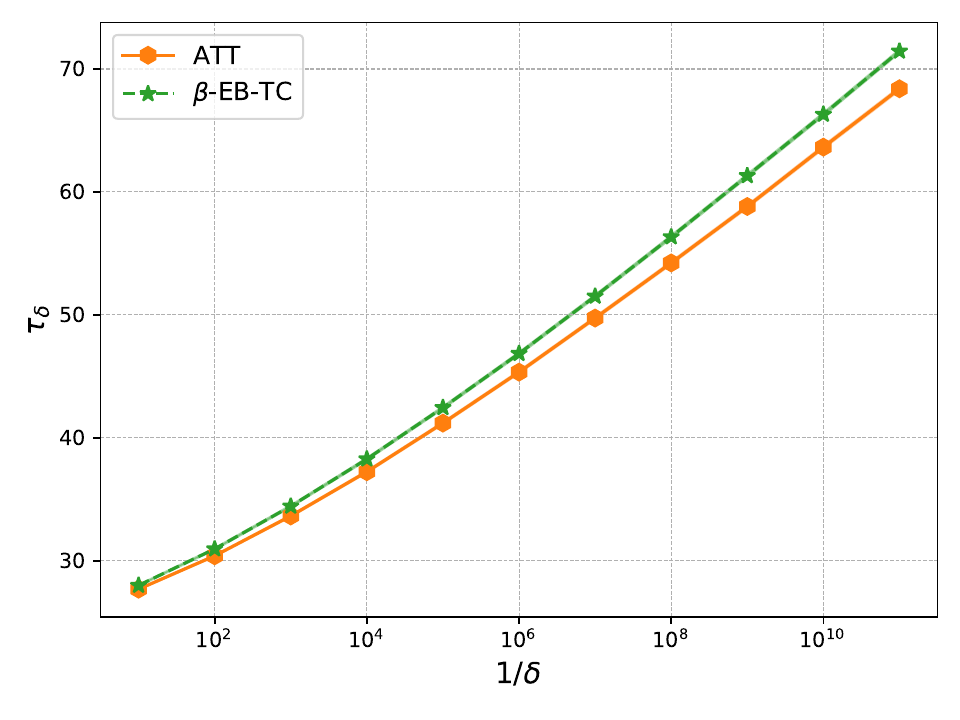}
        \caption{\texttt{Distinct}}
    \end{subfigure}%
    \hfill
    \begin{subfigure}[t]{0.3\textwidth}
        \includegraphics[width=\linewidth]{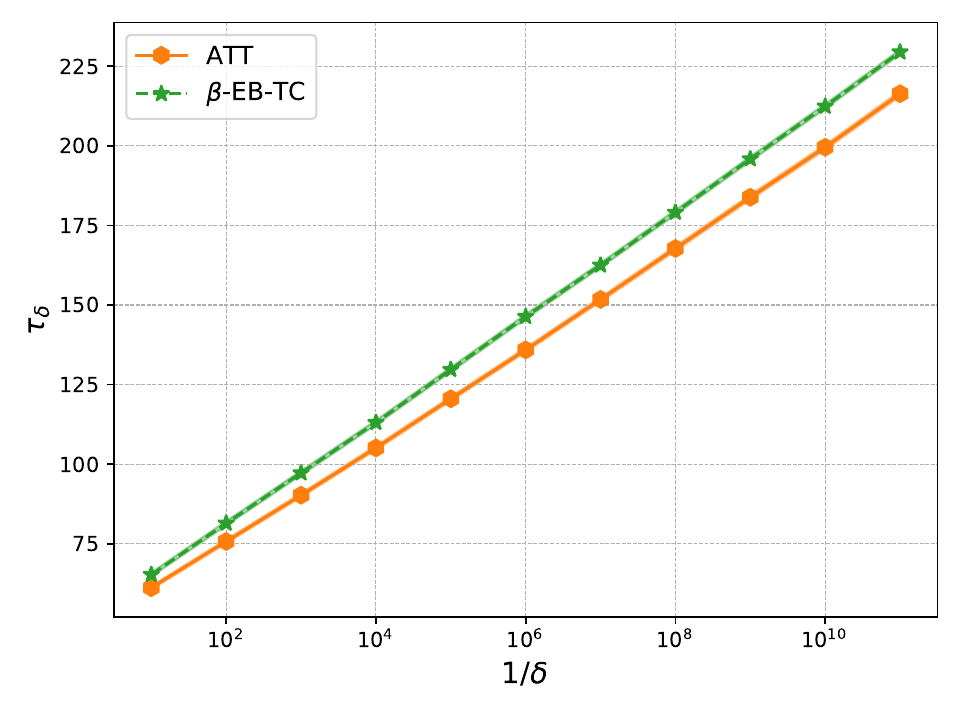}
        \caption{\texttt{Serial Dictorship}}
    \end{subfigure}%
    \hfill
    \begin{subfigure}[t]{0.3\textwidth}
        \includegraphics[width=\linewidth]{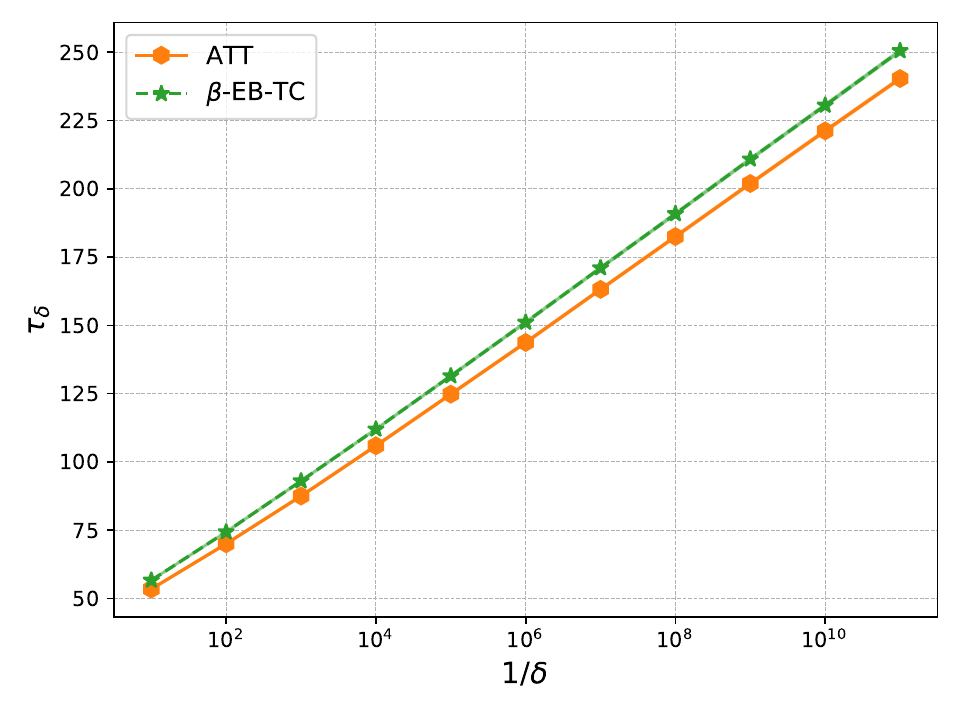}
        \caption{\texttt{SPC}}
    \end{subfigure}
    \caption{Sample complexity $\tau_{\delta}$ v/s $1/\delta$ for Two-sided Learning}
    \label{fig:sample/delta}
\end{figure}

In Fig.~\ref{fig:sample/delta} we aim to verify the asymptotic optimality of \texttt{ATT2} for two-sided learning by plotting the stopping time averaged over 5000 runs vs $1/\delta$ on $\log$-scale. We consider three preference profiles as described above and keep $\gamma=0.0$ with a smaller threshold $\beta=\log((1+\log t)/\delta)$. This verifies that the constant i.e. $\E[\tau_{\delta}]/\log(1/\delta)$ for $\beta$-algorithms are larger.

\section{Multiple Stable Matching}
\label{app:multistable}

For multiple-stable matching, there can be the following two main objectives:
\begin{enumerate}[leftmargin=*, nosep,noitemsep, topsep=0pt]
    \item Announced matching should be stable with probability at least $1-\delta$ i.e. $\mathbb{P}(\hat{m}_{\tau_{\delta}}\not\in\mathcal{M}_{\zeta})\leq \delta$
    \item Announced matching should be equal to a specific stable matching $m_{\zeta}$ w.p. at least $1-\delta$ i.e $\mathbb{P}(\hat{m}_{\tau_{\delta}}\neq m_{\zeta})\leq \delta$, where $m_{\zeta}$ can be the player-optimal, arm-optimal or some fair matching.
\end{enumerate}
We highlight the difficulty in both the objectives, which lies mainly in the difficulty in simplifying the lower bound. \\
The possible alternative instance for objective 1. is $\texttt{Alt}_1:=\{\lambda:\mathcal{M}_{\zeta}\subset \mathcal{M}_{\lambda}\}$, equivalently, this says that $\exists$ is a matching which is stable in $\lambda$ but not in $\zeta$. Recall that the lower bound requires finding the instance $\lambda$ ``closest'' to the original instance $\zeta$. This can be closely related to the objective of multiple correct answers \cite{degenne2019pure}, which requires finding any of the correct answers with high confidence. For example, the objective of the \textit{Any-low} arm requires finding the index of any arm with mean less than the pre-specified threshold, if it exists, and answering no, it does not. The lower bound is now given as follows.
\begin{theorem}
    Any $\delta$-correct algorithm satisfies 
    \begin{align*}
        \lim\inf_{\delta\to 0}\dfrac{\E_{\zeta}[\tau_{\delta}]}{\log(1/\delta)}\geq T^*(\zeta):=D(\zeta)^{-1}\ \ \ \text{where}\ D(\zeta) = \max_{m\in \mathcal{M}_{\zeta}}\max_{w\in\Delta_{M\times K}}\inf_{\lambda\in\neg m}\sum_{i=1}^M\sum_{k=1}^K w^{(i)}_k D\left(\zeta^{(i)}_k,\lambda^{(i)}_k\right)
    \end{align*}
    where $\neg m:=\{\lambda:m\not\in\mathcal{M}_{\lambda}\}$, and $\Delta_{M\times K}=\{w:\sum_{i=1}^M\sum_{k=1}^Kw^{(i)}_k=1\}$ where $w^{(i)}_k\geq 0$ is the proportion of samples given to the pair $(p_i,a_k)$.
\end{theorem}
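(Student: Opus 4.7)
The plan is to follow the classical change-of-measure/transportation-inequality approach of Garivier-Kaufmann, extended to the multiple-correct-answers setting of Degenne-Koolen. First I would invoke the data-processing inequality: for any event $E\in\mathcal{F}_{\tau_\delta}$ and any alternative parameter $\lambda$,
$$\sum_{(p_i,a_k)} \E_\zeta[N^{(i)}_k]\cdot D\!\left(\zeta^{(i)}_k,\lambda^{(i)}_k\right) \;\geq\; D\!\left(\mathbb{P}_\zeta(E),\,\mathbb{P}_\lambda(E)\right),$$
where the right-hand $D$ denotes binary KL. The key specialization is to use events of the form $E_m:=\{\hat{m}_{\tau_\delta}=m\}$ indexed by the candidate stable matchings $m\in\mathcal{M}_\zeta$.

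For each $m\in\mathcal{M}_\zeta$, let $p_m:=\mathbb{P}_\zeta(E_m)$, so that $\sum_{m\in\mathcal{M}_\zeta} p_m \geq 1-\delta$ by $\delta$-correctness. For any $\lambda\in\neg m$ we have $m\notin\mathcal{M}_\lambda$, and hence $E_m\subseteq\{\hat{m}_{\tau_\delta}\notin\mathcal{M}_\lambda\}$ giving $\mathbb{P}_\lambda(E_m)\leq\delta$. Substituting into DPI and dividing by $\E_\zeta[\tau_\delta]=\sum \E_\zeta[N^{(i)}_k]$, writing $w^{(i)}_k:=\E_\zeta[N^{(i)}_k]/\E_\zeta[\tau_\delta]$, and then taking $\inf$ over $\lambda\in\neg m$ followed by $\max$ over $w\in\Delta_{M\times K}$ on the left yields, for every $m$,
$$\E_\zeta[\tau_\delta]\cdot D^{(m)}(\zeta) \;\geq\; D(p_m,\delta),\qquad D^{(m)}(\zeta):=\max_{w\in\Delta_{M\times K}}\inf_{\lambda\in\neg m}\sum_{i,k} w^{(i)}_k\,D\!\left(\zeta^{(i)}_k,\lambda^{(i)}_k\right).$$
Since these inequalities hold simultaneously for all $m\in\mathcal{M}_\zeta$, I would select $m^\star$ so as to lower bound $\E_\zeta[\tau_\delta]$ most tightly. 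Using the asymptotic $D(p,\delta)=p\log(1/\delta)-H(p)+o(1)$ as $\delta\to 0$ together with $D^{(m^\star)}(\zeta)\leq\max_m D^{(m)}(\zeta)=D(\zeta)$ produces $\liminf_\delta\E_\zeta[\tau_\delta]/\log(1/\delta)\geq p_{m^\star}/D(\zeta)$.

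The main obstacle is to select $m^\star$ so that $p_{m^\star}$ is close to $1$: the naive pigeonhole applied to $\sum_m p_m\geq 1-\delta$ only guarantees $p_{m^\star}\geq (1-\delta)/|\mathcal{M}_\zeta|$, producing the weaker lower bound $(|\mathcal{M}_\zeta|\,D(\zeta))^{-1}$ rather than the stated $D(\zeta)^{-1}$. Closing this gap requires an additional argument showing that a $\delta$-correct algorithm cannot meaningfully hedge its announced matching across several elements of $\mathcal{M}_\zeta$ without paying the per-answer sampling cost; equivalently, one shows that asymptotically the algorithm's output concentrates on a single $m^\circ\in\argmax_m D^{(m)}(\zeta)$, driving $p_{m^\circ}\to 1$. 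This is precisely the multiple-answer subtlety flagged in the paper's appendix discussion. For the second objective (recovering a specified $m_\zeta\in\mathcal{M}_\zeta$ with probability $\geq 1-\delta$) the argument collapses to the single-answer case: DPI applied directly with $E=\{\hat{m}_{\tau_\delta}=m_\zeta\}$ yields $\liminf_\delta\E_\zeta[\tau_\delta]/\log(1/\delta)\geq 1/D^{(m_\zeta)}(\zeta)$ without any hedging step, mirroring Theorem~2.
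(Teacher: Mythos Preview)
The paper offers no proof for this statement; it is stated in the appendix on multiple stable matchings and attributed to the multiple-correct-answers lower bound of Degenne and Koolen~(2019). Your DPI/transportation route is exactly that approach, so you are aligned with what the paper invokes.

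You have correctly isolated the one genuine difficulty. The per-answer inequality $\E_\zeta[\tau_\delta]\cdot D^{(m)}(\zeta)\ge D(p_m,\delta)$ holds for each $m\in\mathcal M_\zeta$, but combining them through pigeonhole yields only
\[
\liminf_{\delta\to 0}\frac{\E_\zeta[\tau_\delta]}{\log(1/\delta)}\;\ge\;\frac{T^*(\zeta)}{|\mathcal M_\zeta|},
\]
because $D(p_{m^\star},\delta)/\log(1/\delta)\to p_{m^\star}$ and nothing in the argument forces $p_{m^\star}\to 1$. This loss of the $|\mathcal M_\zeta|$ factor is a known subtlety of the multiple-correct-answers setting rather than a slip in your derivation; the cited reference confronts the same obstacle (its non-asymptotic bound carries $\mathrm{kl}\bigl((1-\delta)/|\mathcal M_\zeta|,\delta\bigr)$ rather than $\mathrm{kl}(\delta,1-\delta)$). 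One correction to your proposed fix: the claim that the output ``concentrates on a single $m^\circ\in\argmax_m D^{(m)}(\zeta)$'' is a property of asymptotically \emph{optimal} strategies, not a lower-bound argument valid for \emph{every} $\delta$-correct rule, so it cannot by itself close the gap --- a hedging algorithm that splits mass across several $m$'s is still $\delta$-correct and must be handled. Neither your sketch nor the paper fills in this step.

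Your closing remark on Objective~2 is correct and complete: when a specific $m_\zeta$ is prescribed, $\mathbb P_\zeta(\hat m_{\tau_\delta}=m_\zeta)\ge 1-\delta$ holds by definition of $\delta$-correctness, and the single-answer DPI delivers the sharp constant $1/D^{(m_\zeta)}(\zeta)$ directly.
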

Although, simplifying the objective for specific $m\in\mathcal{M}_{\zeta}$ is easy as it requires optimizing over instances which induce blocking pair in the matching $m$. However, finding such a matching $m$, which maximizes the objective, is combinatorially difficult as it requires traversing the entire stable matching lattice of $\zeta$.

For Objective 2, the alternate instance is complex. If $m_{\zeta}$ in the objective is player-optimal, then the alternate instance can be designed such that $m_{\zeta}$ is not the player-optimal stable match. This can be further decomposed into two alternate sets $\texttt{Alt}_{\stable}$ which are instances in which $m_{\zeta}$ is not stable and $\texttt{Alt}_{\texttt{unstable}}$ instances in which $m_{\zeta}$ is stable but is not the player-optimal stable matching. Optimizing the lower bound over $\texttt{Alt}_{\stable}$ is objective 1. However, the optimizing instance over $\texttt{Alt}_{\texttt{unstable}}$ is harder to find, as it also requires optimizing over the entire stable matching lattice of $\zeta$.

Note that the Track-and-Stop algorithms from \cite{kaufmann2020contributions} can still be applied; however, the computation requirement can be exponential. 
Recall that our primary objective is to devise \textit{computationally efficient} algorithms, however, unless we restrict $\zeta$ to have specific property with knowledge to the decision maker, the problem seems impossible due to the combinatorial nature in the lower bound optimization problem. Thus, one may want an approximate solution, possibly with a different objective than $\delta$-correctness. 

\end{document}